\newtheorem{cond}{Condition}
\newtheorem{theorem}{Theorem}
\newtheorem{lemma}{Lemma}
\newtheorem{corollary}{Corollary}
\newtheorem{assumption}{Assumption}
\def\AA{\mathbf A}
\def\hat{\widehat}
\def\tilde{\widetilde}
\newcommand{\ttt}{\boldsymbol \theta}
\newcommand{\ee}{\boldsymbol e}
\newcommand{\pp}{{\boldsymbol p}}
\newcommand{\RR}{\mathbf R}
\newcommand{\XX}{\mathbf X}
\newcommand{\ZZ}{\mathbf Z}
\newcommand{\DD}{\mathbf D}
\newcommand{\zz}{\mathbf z}
\newcommand{\BT}{\mathbf T}
\newcommand{\UU}{\mathbf U}
\newcommand{\VV}{\mathbf V}
\newcommand{\MM}{\mathbf M}
\newcommand{\WW}{\mathbf W}
\newcommand{\II}{\mathbf I}
\newcommand{\TT}{\boldsymbol \theta}
\newcommand{\mmu}{\boldsymbol \mu}
\newcommand{\uu}{\boldsymbol u}
\newcommand{\vv}{\boldsymbol v}
\newcommand{\ME}{\mathbb E}
\newcommand{\MP}{\mathbb P}
\DeclareMathOperator*{\argmax}{arg\,max}
\def\spacingset#1{\renewcommand{\baselinestretch}%
{#1}\small\normalsize} \spacingset{1}
\begin{document}



\title{A Tensor-EM Method for Large-Scale Latent Class Analysis with Binary Responses}
\author{Zhenghao Zeng}
\affil{Carnegie Mellon University}
\author{Yuqi Gu}
\affil{Columbia University}
\author{Gongjun Xu}
\affil{University of Michigan, Ann Arbor}
\date{}

\maketitle

\begin{abstract}
Latent class models are powerful statistical modeling tools widely used in psychological, behavioral, and social sciences. In the modern era of data science, researchers often have access to response data collected from large-scale surveys or assessments, featuring many items (large $J$) and many subjects (large $N$).
This is in contrary to the traditional regime with fixed $J$ and large $N$. 
To analyze such large-scale data, it is important to develop methods that are both   computationally efficient and theoretically valid.
In terms of computation, the conventional EM algorithm for latent class models tends to have a slow algorithmic convergence rate for large-scale data and may converge to some local optima instead of the maximum likelihood estimator (MLE). 
Motivated by this, we introduce the tensor decomposition perspective into latent class analysis with binary responses. Methodologically, we propose to use a moment-based tensor power method in the first step, and then use the obtained estimates as initialization for the EM algorithm in the second step. 
Theoretically, we establish the clustering consistency of the MLE in assigning subjects into latent classes when $N$ and $J$ both go to infinity. 
Simulation studies suggest that the proposed tensor-EM pipeline enjoys both good accuracy and computational efficiency for large-scale data with binary responses. 
We also apply the proposed method to an educational assessment dataset as an illustration.
\end{abstract}

\noindent 
\textit{Keywords}: Large-scale latent class analysis; Tensor decomposition; Tensor power method; EM algorithm; Clustering consistency.

\newpage
\def\spacingset#1{\renewcommand{\baselinestretch}%
{#1}\small\normalsize} \spacingset{1}
\spacingset{1.7}

\section{Introduction} \label{Intro}

Latent class models (LCMs) \citep{lazarsfeld1968latent, goodman1974} are powerful statistical modeling tools widely used in psychological, behavioral, and social sciences. LCMs use a categorical latent variable to model the unobserved heterogeneity of multivariate categorical data and identify meaningful latent subgroups of subjects. 
LCMs have seen broad applications in a variety of scientific fields, including psychology and psychiatry \citep{bucholz2000latent, keel2004application},
sociology and organizational research \citep{vermunt2003applications, wang2011latent}, and
biomedical and epidemiological studies \citep{bandeen1997, dean2010, kongsted2017latent}.
For instance, \cite{bucholz2000latent} explored the existence of potential subtypes of Antisocial Personality Disorder via latent class analysis. \cite{keel2004application} applied LCMs to empirically define four eating disorder phenotypes and identified features that differentiate between phenotypes. \cite{wang2011latent} summarized several areas in organizational research where LCMs are particularly useful, such as  
identifying unobserved subpopulations  and recognizing the unobserved heterogeneity in measurement functioning. \cite{vermunt2003applications} provided an overview on applications of LCM and its extensions in social science research. \cite{dean2010} considered variable selection in LCMs and identified meaningful group structures in single nucleotide polymorphism data. \cite{kongsted2017latent} introduced and illustrated the applications of LCMs in health research.
There are also various extensions and generalizations building upon LCMs, including LCMs with covariates or distal outcomes \citep{vermunt2010lcmcov, lanza2013subgroup,ouyang2022identifiability}, longitudinal LCMs and latent transition analysis \citep{dunn2006characterizing, collins2009latent},  factor mixture models \citep{lubke2005investigating,muthen1999finite},  and also restricted LCMs known as diagnostic classification models that involve multiple categorical latent variables \citep{rupp2008dcm, xu2017identifiability, von2019handbook}.
LCMs may also serve as initial modeling step before fitting a more delicate cognitive diagnostic model \citep{ma2022learning}. For general introductions and applications of LCMs, see \cite{hagenaars2002applied} and \cite{collins2009latent}.

In this paper, we focus on LCMs with binary responses for large-scale data, which are typically collected in modern educational assessments (correct/wrong responses) and psychological or social science surveys (yes/no responses).
Such data are characterized by many test takers with large $N$ and also many items with large $J$. 
This is in contrary to the traditional regime with large $N$ and fixed $J$, a relatively well understood setting.
Such a large scope of data poses challenges to classical statistical analysis methods and calls for new developments for LCMs.
In the following, we summarize the two main questions motivating our study.

The first question of large-scale latent class analysis is how to perform computations efficiently. A conventional estimation method is the Expectation-Maximization \citep[EM;][]{dempster1977em} algorithm to maximize the marginal likelihood. EM algorithms have two potential drawbacks, the slow algorithmic convergence rate in high-dimensional problems and a tendency to converge to some local optima when the initial values are poorly chosen \citep{balakrishnan2017em}. In practice, researchers often run EM with many random initializations and select the one that gives the largest log-likelihood value. This procedure can be very time-consuming, especially for large-scale data. It is thus desirable to develop more efficient computational tools for large-scale latent class analysis.
Motivated by this, we introduce the tensor decomposition perspective into latent class analysis.
%


There has been active research on tensor decompositions since they were introduced in \cite{hitchcock1927expression}. 
The concept of tensors appeared in the literature of psychometrics dating back to 1960s-1970s \citep{tucker1964extension, tucker1966some, harshman1970foundations, kruskal1976psych}.
The interest of tensor decompositions has  expanded to other areas, including chemometrics \citep{smilde2005multi}, signal processing \citep{de1998matrix}, and data mining \citep{mccullagh2018tensor}. In particular, tensor methods have also been used in learning latent variable models. 
{\cite{anandkumar2012spectral} derived tensor structures for low-order moments of latent Dirichlet allocation and applied tensor power method to learning the parameters. 
\cite{anandkumar2012mixture} used the method of moments for mixture models and hidden Markov models as a viable alternative to EM algorithms.
\cite{hsu2013learning} derived similar tensor structure in mixtures of spherical Gaussian models.}
\cite{anandkumar2014tensor} summarized the common structures in several different latent variable models and used tensor power method to learn the parameters under a unified framework. 
Generally speaking, these tensor methods are all based on lower-order moments of observed variables rather than the entire likelihood function. 
As a result, an advantage of using moment-based tensor decomposition algorithms for learning latent variable models is the provable consistency guarantee; see \cite{anandkumar2014tensor} for more details.


Besides the computational challenge, the second question of large-scale latent class analysis is how to ensure the estimators in the large-$N$ and large-$J$ regime are theoretically valid and meaningful.
Traditionally, the subjects' latent class indicators in an LCM are often treated as random variables and marginalized out to obtain the marginal likelihood; we call the resulting model a random-effect LCM.
On the other hand, an alternative approach is to treat the subjects' latent class indicators as fixed unknown parameters and directly incorporate them into the likelihood;  we call the resulting model a fixed-effect LCM.
In the classical scenario with sample size $N$ going to infinity and the number of items $J$ held fixed, the fixed-effect LCMs are known to be inconsistent for estimating the subject-level latent class indicators \citep[e.g., see][]{neyman1948consistent}.
However, for data featuring large $N$ and large $J$, with an increasing amount of information collected per subject, an interesting theoretical question is whether we can obtain consistency in clustering the subjects into their corresponding true latent class in fixed-effect LCMs?

In this paper, in the regime where both $N$ and $J$ go to infinity, we propose an efficient computational pipeline and develop the theory of clustering consistency for LCMs with binary responses. 
It is known that the method of moments can be viewed as good complementary to the maximum likelihood approach \citep{chaganty2013spectral,zhang2014spectral,balakrishnan2017em}. {\cite{balakrishnan2017em} theoretically examined the properties of two-stage estimators where a suitable initial estimator is refined with the EM algorithm. \cite{chaganty2013spectral} and \cite{zhang2014spectral} considered mixed linear models and multi-class crowd labeling problem, respectively. They showed that in both problems the tensor estimator based on moments serves as an effective initialization for EM algorithm. Inspired by their work, we introduce the two-step estimator for LCM. }
On the computational side, we propose an efficient two-step estimation pipeline integrating the moment-based tensor decomposition method and the EM algorithm.
{In the first step, we apply the tensor power method in \cite{anandkumar2014tensor} for LCMs to quickly and reliably find roughly accurate parameter estimates.
In the second step, we propose to use the tensor estimates as initialization for the EM algorithm to refine the parameter estimation.
With good initialization, EM algorithms typically converge in very few iterations. 
Therefore, such an estimation pipeline combines the advantages of both the tensor decomposition algorithm and the EM algorithm for latent class analysis.
Our extensive simulation studies empirically show that such an estimation pipeline enjoys both computational efficiency and estimation accuracy.
Further, on the theoretical side, we prove the clustering consistency of the joint maximum likelihood estimator (joint MLE) for fixed-effect LCMs. 
That is, we prove that the joint MLE is consistent in estimating the subjects' latent class memberships under certain mild assumptions when $N$ and $J$ both go to infinity. 
We also derive a bound on the rate of convergence of the joint MLE's clustering performance. The consistency of item parameters is established as a corollary of clustering consistency.}

The rest of this paper is organized as follows. The setups of random-effect and fixed-effect LCMs are introduced in Section \ref{LCM-model}. The proposed estimation procedures of large-scale LCM are presented in Section \ref{estimation}. Some preliminaries about tensor are also provided in Section \ref{estimation} to make this section self-contained.
Section \ref{consistency} presents our theoretical results on clustering consistency of joint MLE. Section \ref{simulation} presents simulation studies that evaluate the proposed  estimation procedures and assess the empirical behavior of clustering consistency. A real data example is shown in Section \ref{real-data}, and we conclude this paper with some discussion in Section \ref{discussion}.
All the proofs and additional simulation results are presented in Supplementary Material. 
	



\section{Latent Class Models with Binary Responses}\label{LCM-model}

In this section we introduce two perspectives of LCM. In random-effect LCMs, the latent class indicators are random variables; 
while in fixed-effect LCM, the latent class indicators are fixed and treated as unknown parameters. 
These two models share common assumptions on how the observed variables depend on the latent ones. 

\subsection{Latent Variables as Random Effects}

We first introduce random-effect LCM. Consider a binary-outcome latent class model with $J$ items and $L$ classes. {Throughout this paper we will use boldface type to denote vectors, matrices and tensors while standard type is used to denote scalars.} There are two types of individual-specific variables in the model, that is a binary response vector $\RR_i \in \{0,1\}^J$ and a latent variable $z_i \in [L]$. Here $[L] = \{1,2,\dots,L\}$ is the set of positive integers smaller than or equal to $L$. The  response vector $\RR_i=(R_{i,1},\ldots,R_{i,J})$ contains the observed responses to the $J$ items of $i$-th subject. The $j$-th component of $\RR_i$ will be 1 if this subject gives a positive response to the $j$-th item and will be 0 otherwise. For instance, in a test with $J$ items, if a student answers the $j$-th item correctly, then $R_{i,j}$, the $j$-th component of $\RR_i$, will be 1. If the student fails to give a right answer then $R_{i,j} = 0$. The latent variable $z_i$ is introduced to categorize different observations and explain the dependence among items. 

The generative process for a response vector $\RR_i$ of an observation is as follows: first the class of this observation $z_i$ is drawn from a discrete distribution specified by the probability vector $\mathbf{p} = (p_1,p_2,\dots,p_L)$, where $p_k\geq 0$ and $\sum_{k=1}^L p_k=1$. So we have
\[
P(z_i=\ell) = p_{\ell}, \; \ell \in [L],
\]
where $p_\ell$ is the proportion of subjects belonging to $\ell$-th class in the population. Then given the latent class $z_i=\ell$, the responses to $J$ items are drawn conditionally independently from a Bernoulli distribution with parameter $\theta_{j,\ell}$ for each item $j$. That is
\[
P(R_{i,j} = 1 | z_i=\ell) = \theta_{j,\ell}.
\]
So $\theta_{j,\ell}$ measures the ability of subjects from $\ell$-th class to give a positive response on item $j$ and is also known as item parameters. 
Like many other latent variables, local independence is assumed here, implying the dependence of item responses is fully explained by the latent classes. We collect all the item parameters for the $L$ classes in the matrix $\boldsymbol{\TT} = (\theta_{j,\ell}) \in [0,1]^{J\times L}$ whose rows are indexed by the $J$ items and columns indexed by the $L$ classes. All the response vectors are collected in a $N \times J$ matrix $\mathbf{R}$, and the corresponding log-likelihood function under the random-effect LCM is 
\[
\ell(\mathbf{R}; \, \mathbf{p}, \, \TT) = \log\left\{ \prod_{i=1}^{N}\left[\sum_{\ell=1}^{L}p_{\ell} \prod_{j=1}^{J}\theta_{j,\ell}^{R_{i,j}}(1-\theta_{j,\ell})^{1-R_{i,j}}\right] \right\},
\]
with  $(\mathbf{p},\, \TT)$ the parameters to be estimated.

\subsection{Latent Variables as Fixed Effects}

Another way to model latent classes is to view latent class assignment as fixed unknown parameters. For a fixed-effect LCM, denote the $i$-th subject's latent class membership by a vector of binary entries $\ZZ_{i,\cdot} = (Z_{i,1},\ldots,Z_{i,L})$, with $Z_{i,\ell} = 1$ if subject $i$ belongs to the latent class $\ell$. We also introduce another notation for the latent class membership $z_i\in\{1,2,\ldots,L\}$ and $z_i = \ell$ corresponds to $Z_{i,l} = 1$. Given a sample of size $N$, collect all the $\ZZ_{1,\cdot},\ldots, \ZZ_{N,\cdot}$ in a $N\times L$ matrix $\mathbf Z$, then each row of $\mathbf Z$ contains only one entry of ``1" and the remaining entries are zeros. We will use the two equivalent notations $\ZZ$ and $\zz=(z_1,\ldots,z_N)$ interchangeably. The components of response vector $\RR_i$ are independent Bernoulli variables with parameters specified by $\TT$. So we have $P(R_{i,j} = 1) = \theta_{j, z_i}$.
The log-likelihood for $(\mathbf Z, \, \TT)$ takes the following form
\begin{align}\label{eq-loglike}
	\ell(\RR;\,\mathbf Z, \, \TT)
	=&~\log \left\{\prod_{i=1}^N \prod_{l=1}^L \left[\prod_{j=1}^J 
	(\theta_{j,\ell}) ^{R_{i,j}} (1-\theta_{j,\ell})^{1-R_{i,j}}
	\right]^{Z_{i,\ell}}
	 \right\}\\ \notag
	=&~\sum_i \sum_j \Big\{ R_{i,j}\Big[\sum_{\ell=1}^L Z_{i,\ell}\log(\theta_{j,\ell})\Big] + (1-R_{i,j})\Big[\sum_{\ell=1}^L Z_{i,\ell}\log(1-\theta_{j,\ell})\Big] \Big\} \\ \notag
	=&~\sum_i \sum_j  \Big\{ R_{i,j}\log(\theta_{j,z_i}) + (1-R_{i,j})\log(1-\theta_{j,z_i}) \Big\}.
\end{align}
The parameters to estimate are $(\mathbf Z, \, \TT)$. The above display is also called the complete data likelihood in the literature.
In the next section we will discuss how to apply tensor method to efficiently estimate the parameters in these two types of latent class models. 


\section{Estimation Procedures }\label{estimation}

The EM algorithm is a popular method to maximize likelihood and estimate parameters in LCM by iterating between E-step and M-step. In E-step the probability of each subject belonging to each class is updated by current estimates of item parameters, and in M-step item parameters are updated given the probabilities of each subject's latent class membership. However, the likelihood function under LCM is nonconcave due to the mixture model formulation. Hence, EM algorithm may suffer from convergence to local optima and slow convergence rate under poor initializations.
Good initial values are critical to the success of the EM algorithm. In this section we introduce tensor method in \cite{anandkumar2014tensor} to find good initializations and hence improve the performance of EM algorithm. {We first introduce some basics about tensor in Section \ref{est-preliminaries} and show the tensor structure in random-effect LCM in Section \ref{est-structure}. In Section \ref{sec-sub-tpm}, we introduce the tensor power method, which is central to recovering the parameters $(\TT,\pp)$ from the tensor structure. The tensor-EM method, which uses the tensor estimates of $(\TT, \pp)$ as initializations for EM algorithm, is given in Section \ref{sec-est-tem}. In Section \ref{SelectL} we discuss how to select the number of latent classes $L$. }

\subsection{Preliminaries about Tensor}\label{est-preliminaries}

We will follow the discussions of \cite{anandkumar2014tensor} and be succinct and self-contained. First we introduce some notations borrowed from \cite{anandkumar2014tensor}. A real $p$-th order tensor  $\BT \in \otimes_{i=1}^p \mathbb{R}^{n_i}$ is a $p$-way array of real numbers where $[\BT]_{i_1,\dots,i_p}$ is the $(i_1,\dots,i_p)$-th entry in the array. We will mostly consider the case where $n_i = n$ for all $i \in [p]$.  Vectors and matrices are special cases of tensors where $p=1$ and $p=2$, respectively. 
Another view of tensor is that it is a multilinear map from a set of matrices $\{\VV_i \in \mathbb{R}^{n\times m_i} : i \in [p]\}$ to a $p$-th order tensor $\BT(\VV_1,\dots,\VV_p) \in \mathbb{R}^{m_1\times \dots \times m_p}$, {where $m_1,\ldots,m_p$ are positive integers,} defined as
\begin{equation}\label{multilinear}
	[\BT(\VV_1,\dots,\VV_p)]_{i_1,\dots,i_p}  :=\sum_{j_1,j_2,\dots,j_p} [\BT]_{j_1,j_2,\dots,j_p}[\VV_1]_{j_1,i_1}\dots [\VV_p]_{j_p,i_p}.
\end{equation}

In this paper we will mainly consider three-way tensor and third-order case of this multilinear map. For a third-order tensor $\BT \in \otimes^3 \mathbb{R}^{d}$ and a vector $\uu \in \mathbb{R}^d$, we will make use of the following vector-valued map in the iteration of tensor power methods
\begin{equation}\label{multi2}
	\BT(\II,\uu,\uu) = \sum_{i=1}^{d} \sum_{1\leq j,l\leq d} [\BT]_{i,j,l} (\ee_j^\mathrm{T}\uu)(\ee_l^\mathrm{T}\uu) \ee_i,
\end{equation}
{where $\II$ is the $d$-dimensional identity matrix} and $\ee_1,\dots,\ee_d$ are the canonical basis vectors of $\mathbb{R}^d$; that is, each $\ee_k$ is a $d$-dimensional vector with only the $k$-th entry being one and the other entries being zero. To obtain \eqref{multi2} from \eqref{multilinear}, we note $\BT(\II,\uu,\uu)$ is a $d$-dimensional vector and
\[
[\BT(\II,\uu,\uu)]_k = \sum_{i=1}^{d} \sum_{1\leq j,l\leq d} [\BT]_{i,j,l} \II_{i,k}u_j u_l = \sum_{i=1}^{d} \sum_{1\leq j,l\leq d} [\BT]_{i,j,l} (\ee_j^\mathrm{T}\uu)(\ee_l^\mathrm{T}\uu) \ee_{i,k}.
\]
We will also use the following map in the iteration
\begin{equation}\label{multi3}
	\BT(\uu,\uu,\uu) = \sum_{i,j,k} [\BT]_{i,j,k}(\ee_i^\mathrm{T}\uu)(\ee_j^\mathrm{T}\uu)(\ee_k^\mathrm{T}\uu).
\end{equation}
These maps are all special cases of \eqref{multilinear}.

Most tensors we consider in this paper are symmetric tensors, which means that an element of a tensor is invariant to permutations of its coordinates. If $\BT \in \otimes^p \mathbb{R}^d$ is a symmetric tensor, then we have $[\BT]_{i_1,\dots,i_p} = [\BT]_{i_{\pi(1)},\dots,i_{\pi(p)}}$ for all permutations $\pi$ on $[p]$. This concept is a generalization of symmetric matrices.

A simple case of a tensor is called rank-one tensor. A rank-one tensor $\BT \in \otimes^p \mathbb{R}^d$ can be expressed as tensor product of $p$ vectors: $\BT = \vv_1 \otimes \vv_2 \otimes \dots \otimes \vv_p$ for some vectors $\vv_1,\ldots,\vv_p \in \mathbb{R}^d$, where $[\BT]_{i_1,\dots,i_p} = \prod_{k=1}^{p} [\vv_{k}]_{i_k}$ and $[\vv_{k}]_{i_k}$ is the $i_k$th component of $\vv_k$. When $\vv_k=\vv$ for all $k$, we can get a symmetric tensor. More detailed discussion and introductions about tensor can be found in \cite{kolda2009tensor}.

\subsection{Tensor Structure in Random-Effect LCM}\label{est-structure}

\cite{anandkumar2014tensor} showed that for some latent variable models, their low-order moments can be expressed as a sum of rank-one tensors. Once this structure of cross moments is obtained for a particular model, one can apply the orthogonal tensor decomposition to learn the parameters of the model. For random-effect LCM, we show that there is also useful tensor structure in low-order moments by examining Theorem 3.6 in \cite{anandkumar2014tensor}, which studies the multi-view models. {Although LCM can be viewed as a special case of multi-view models there, we believe it is still inspiring to introduce tensor method to estimate the parameters in LCM from many perspectives. 
First, the tensor method is based on lower-order moments of responses (2nd and 3rd orders) and has consistency guarantees. It is also computationally efficient based on our simulations. Moreover, in psychometrics, researchers usually use likelihood to study the identification and estimation of parameters. And we will show that with appropriate manipulations on second and third order cross-moments, we can also uniquely recover the parameters in a random-effect LCM.
Hence, tensor method provides a new insight into identifying and estimating parameters in latent variable models widely used in psychometrics.} 
On the other hand, we would also like to clarify that the EM algorithm employed in our estimation method is not considered in \cite{anandkumar2014tensor}. To our best knowledge, the proposed 
tensor-EM method of combining tensor decomposition and the EM algorithm for latent class analysis is new, and moreover, the  established consistency theory of our final estimators in Section \ref{consistency} is not previously studied.

Recall that we use $\RR_i$ to generally denote subject $i$'s response vector of length $J$.
We consider response vector on a population level and divide the items into three disjoint parts (so we assume $J\geq 3$) $\RR_i^1$, $\RR_i^2$, $\RR_i^3$ with each $\RR_i^t \in \mathbb{R}^{J_t}$ and $J_1 + J_2 + J_3 = J$. {The goal is to relate the cross-moments of these three parts with the parameters we want to estimate.} The item paramters of $\RR_i^t$ are denoted by $\TT_t \in \mathbb{R}^{J_t\times L}$, which is a sub-matrix of $\TT$, with rows corresponding to rows in $\RR_i^t$. We need the following assumption to derive the tensor structure.

\begin{cond} \label{cond1}
	Each $\TT_t$ has full column rank $L$ for t = 1,2,3. 
\end{cond}

Note that the partition of items can be arbitrary as long as the item parameters for each part satisfy Condition \ref{cond1}. So we can try different partitions to estimate the parameters and take average to obtain the final estimates. 

We denote the $i$-th column of $\TT_t$ to be $\TT_{t,i}$. The following theorem restates Theorem 3.6 in \cite{anandkumar2014tensor} in our setting and characterizes the tensor structure in random-effect LCM. 

\begin{theorem}\label{tensorstr}
	Assume that Condition \ref{cond1} holds and $p_1,\dots,p_L>0$. Define
	\begin{equation}
		\begin{aligned}
		\tilde{\RR}_i^{2} :=&\, \mathbb{E} [\RR_i^1\otimes \RR_i^3] \mathbb{E} [\RR_i^2 \otimes \RR_i^3]^+\RR_i^2\\
		\tilde{\RR}_i^{3} :=&\, \mathbb{E} [\RR_i^1\otimes \RR_i^2] \mathbb{E} [\RR_i^3 \otimes \RR_i^2]^+\RR_i^3\\
		\MM_2 :=& \, \mathbb{E} [\RR_i^1\otimes \tilde{\RR}_i^{2}]\\
		\MM_3 : = & \, \mathbb{E} [\RR_i^1 \otimes \tilde{\RR}_i^{2} \otimes \tilde{\RR}_i^{3 }]
		\end{aligned}
	\end{equation}
	where $\AA^+$ denotes the Moore-Penrose pseudoinverse of matrix $\AA$. Then we have
	\begin{equation} \label{moment}
	\begin{aligned}
		\MM_2 = & \, \sum_{k=1}^{L} p_k\, \TT_{1,k} \otimes \TT_{1,k},\\
		\MM_3 = & \, \sum_{k=1}^{L} p_k \,\TT_{1,k} \otimes \TT_{1,k} \otimes \TT_{1,k}.
		\end{aligned}
	\end{equation}
\end{theorem}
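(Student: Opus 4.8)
The plan is to exploit two structural features of the random-effect LCM. First, conditioning on the latent class $z=\ell$, each view has conditional mean equal to the corresponding column of the item-parameter matrix, namely $\mathbb{E}[R^t \mid z=\ell] = \TT_{t,\ell}$, because each entry $R^t_j$ is a $\mathrm{Bernoulli}(\theta_{j,\ell})$ variable. Second, local independence makes $R^1, R^2, R^3$ conditionally independent given $z$. From these two facts the pairwise cross-moments factorize: using the law of total expectation and conditional independence, for any two distinct views $s \neq t$,
\[
\mathbb{E}[R^s \otimes R^t] = \sum_{\ell=1}^L p_\ell\, \TT_{s,\ell} \otimes \TT_{t,\ell} = \TT_s\, D\, \TT_t^{\mathrm{T}},
\]
where $D = \mathrm{diag}(p_1,\dots,p_L)$. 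Condition 1 guarantees each $\TT_t$ has full column rank $L$, and the hypothesis $p_\ell>0$ makes $D$ invertible; these are precisely the rank conditions the argument needs.

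The heart of the proof is to show that the symmetrized views $R^{2\,\prime}$ and $R^{3\,\prime}$ share the conditional mean of $R^1$, i.e.\ $\mathbb{E}[R^{2\,\prime}\mid z=\ell] = \mathbb{E}[R^{3\,\prime}\mid z=\ell] = \TT_{1,\ell}$. Since $R^{2\,\prime} = (\TT_1 D \TT_3^{\mathrm{T}})(\TT_2 D \TT_3^{\mathrm{T}})^{+} R^2$ is a \emph{fixed} linear map applied to the random vector $R^2$, taking conditional expectations reduces the claim to the matrix identity
\[
(\TT_1 D \TT_3^{\mathrm{T}})(\TT_2 D \TT_3^{\mathrm{T}})^{+}\, \TT_2 = \TT_1 .
\]
To establish it I would factor $\TT_2 D \TT_3^{\mathrm{T}} = \TT_2 \cdot (D\TT_3^{\mathrm{T}})$, where $\TT_2$ has full column rank and $D\TT_3^{\mathrm{T}}$ has full row rank, invoke the product rule $(\TT_2\,D\TT_3^{\mathrm{T}})^{+} = (D\TT_3^{\mathrm{T}})^{+}\,\TT_2^{+}$ valid under exactly these rank conditions, and then collapse the expression using $D\TT_3^{\mathrm{T}}(D\TT_3^{\mathrm{T}})^{+} = I_L$ (full row rank) together with $\TT_2^{+}\TT_2 = I_L$ (full column rank), obtaining $\TT_1 \TT_2^{+}\TT_2 = \TT_1$. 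Applying the identity column by column yields $\mathbb{E}[R^{2\,\prime}\mid z=\ell] = \TT_{1,\ell}$, and the symmetric argument with the roles of views $2$ and $3$ interchanged handles $R^{3\,\prime}$.

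Finally, I would assemble $M_2$ and $M_3$. Because $R^{2\,\prime}$ and $R^{3\,\prime}$ are deterministic linear functions of $R^2$ and $R^3$ respectively, they remain conditionally independent of $R^1$ and of each other given $z$. Conditioning on $z$ and using that each of $R^1, R^{2\,\prime}, R^{3\,\prime}$ has conditional mean $\TT_{1,\ell}$ under $z=\ell$, the conditional expectation of a tensor product of independent factors splits into the tensor product of their means, giving
\[
M_2 = \sum_{\ell=1}^L p_\ell\, \TT_{1,\ell}\otimes\TT_{1,\ell}, \qquad
M_3 = \sum_{\ell=1}^L p_\ell\, \TT_{1,\ell}\otimes\TT_{1,\ell}\otimes\TT_{1,\ell},
\]
as claimed.

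I expect the main obstacle to be the pseudoinverse manipulation in the middle step. The product rule $(BC)^{+} = C^{+}B^{+}$ fails for general matrices and must be justified carefully from the full-column-rank of $\TT_2$ and the full-row-rank of $D\TT_3^{\mathrm{T}}$; this is exactly where Condition 1 and $p_\ell>0$ enter, and neither hypothesis can be dropped. The remaining steps—conditional factorization of moments and the bookkeeping of which view serves as the pseudoinverse ``anchor'' in $R^{2\,\prime}$ versus $R^{3\,\prime}$—are routine once this identity is in hand.
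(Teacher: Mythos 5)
Your proposal is correct and follows essentially the same route as the paper's proof: factor the pairwise cross-moments as $\TT_s D \TT_t^{\mathrm{T}}$, show $\mathbb{E}[R^{2\,\prime}\mid z=\ell]=\mathbb{E}[R^{3\,\prime}\mid z=\ell]=\TT_{1,\ell}$ by a pseudoinverse identity relying on Condition 1 and $p_\ell>0$, and then assemble $M_2$ and $M_3$ via conditional independence. The only cosmetic difference is that you group the product as $\TT_2\cdot(D\TT_3^{\mathrm{T}})$ and apply the two-factor rule $(BC)^{+}=C^{+}B^{+}$, whereas the paper splits $(\TT_2 D \TT_3^{\mathrm{T}})^{+}=(\TT_3^{\mathrm{T}})^{+}D^{-1}\TT_2^{+}$ into three factors; both are justified by the same rank conditions.
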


\begin{proof}
	First we compute the cross moment. For $t \neq t'$, $\RR_i^t$ and $\RR_i^{t'}$ are conditionally independent, we have
	\[
	\mathbb{E} [\RR_i^t \otimes \RR_i^{t'}] = \sum_{k=1}^{L}p_k \mathbb{E}[\RR_k^t \otimes \RR_k^{t'}|z_i=k] = \sum_{k=1}^{L} p_k \mathbb{E}[\RR_i^t|z_i=k] \otimes \mathbb{E} [\RR_i^{t'}|z_i=k]=\sum_{k=1}^{L} p_k\TT_{t,k} \otimes \TT_{t',k}.
	\]
	If we denote $\DD=\mathrm{diag}\{p_1,\dots,p_L\}$, then we have $\mathbb{E} [\RR_i^t \otimes \RR_i^{t'}]=\TT_t \DD \TT_{t'}^\mathrm{T}$.
	{In the following calculations we need to use the Moore–Penrose inverse of $\mathbb{E} [\RR_i^t \otimes \RR_i^{t'}]$ and we first compute it. The following fact is useful: To compute the Moore-Penrose inverse of $AB$, if  $A$ has linearly independent columns and $B$ has linearly independent rows, then $(AB)^+ = B^+ A^+$. Now by condition \ref{cond1}, $\TT_t \DD$ has linearly independent columns and $\TT_{t'}^\mathrm{T}$ has linearly independent rows. So we can write $(\TT_t \DD \TT_{t'}^\mathrm{T})^+ = (\TT_{t'}^\mathrm{T})^+ (\TT_t \DD)^+ $. Apply the fact again on $(\TT_t \DD)^+$ we have $(\TT_t \DD \TT_{t'}^\mathrm{T})^+ = (\TT_{t'}^\mathrm{T})^+ \DD^{-1} \TT_t^+$.}
	
	Then we calculate the conditional mean
	\[
	\mathbb{E}[\tilde{\RR}_i^{2}|z_i=k] = \mathbb{E} [\RR_i^1\otimes \RR_i^3] \mathbb{E} [\RR_i^2 \otimes \RR_i^3]^+\mathbb{E}[\RR_i^2|z_i=k].
	\]
	According to the model setting $\mathbb{E}[\RR_i^2|z_i=k]=\TT_2 \ee_k$, then we have
	\[
		\mathbb{E}[\tilde{\RR}_i^{2}|z_i=k] = \TT_1 \DD \TT_{3}^\mathrm{T} (\TT_2 \DD \TT_{3}^\mathrm{T})^+\TT_2 \ee_k=\TT_1 \DD (\TT_{3}^+\TT_{3})^\mathrm{T} \DD^{-1} \TT_2^+\TT_2\ee_k.
	\]
	By condition \ref{cond1}, $\TT_t^+\TT_t = I_L$ for all $t$, thus $\mathbb{E}[\tilde{\RR}_i^{2}|z_i=k]  = \TT_{1,k}$. Similarly, $\mathbb{E}[\tilde{\RR}_i^{3}|z_i=k]  = \TT_{1,k}$. So we have
	\[
	\MM_2 = \sum_{k=1}^{L}p_k \mathbb{E}[\RR_i^1 \otimes \tilde{\RR}_i^{2}|z_i=k] =  \sum_{k=1}^{L}p_k \mathbb{E}[\RR_i^1 |z_i=k]  \otimes\mathbb{E}[ \tilde{\RR}_i^{2}|z_i=k]  = \sum_{k=1}^{L} p_k\, \TT_{1,k} \otimes \TT_{1,k}.
	\] 
	Similarly one can get the decomposition for $\MM_3$ in \eqref{moment}.
\end{proof}

In applications we only have finite samples and the moments in Theorem \ref{tensorstr} should be approximated by empirical moments. In particular, once we have samples $\RR_1, \dots, \RR_N \in \mathbb{R}^J$, we partition each sample and obtain $\RR_i^{t} \in \mathbb{R}^{J_t}$ corresponding to the partition on population level. Then the transformed response and estimated moments can be computed by
\begin{equation}\label{moments-est}
    \begin{array}{l}
    \hat{\mathbb{E}} [\RR_i^t \otimes \RR_i^{t^{\prime}}] := \frac{1}{N} \sum_{j=1}^{N} \RR_{j}^{t} \otimes \RR_{j}^{t^{\prime}} \\
    {\tilde{\RR}_{i,e}^{2}}:=\hat{\mathbb{E}} [\RR_i^1 \otimes \RR_i^{3}] \left(\hat{\mathbb{E}} [\RR_i^2 \otimes \RR_i^{3}]\right)^{+} \RR_{i}^2 \\
    \tilde{\RR}_{i,e}^{3}:=\hat{\mathbb{E}} [\RR_i^1 \otimes \RR_i^{2}]\left(\hat{\mathbb{E}} [\RR_i^3 \otimes \RR_i^{2}]\right)^{+} \RR_i^{3} \\
    \widehat{\MM}_{2}:=\frac{1}{N} \sum_{j=1}^{N} {\RR}_{j}^{1} \otimes \tilde{\RR}_{j,e}^{2}, \\
    \widehat{\MM}_{3}:=\frac{1}{N} \sum_{j=1}^{N} \RR_{j}^{1} \otimes \tilde{\RR}_{j,e}^{2} \otimes \tilde{\RR}_{j,e}^{3}.
    \end{array}
\end{equation}

Due to the randomness of sample, it is possible that $r=:$ rank$(\hat{\mathbb{E}} [\RR_i^2 \otimes \RR_i^3]) > L$ and $\hat{\mathbb{E}} [\RR_i^2 \otimes \RR_i^3]$ has $(r-L)$ extra non-zero singular values. These singular values will be small since they equal to 0 in $\hat{\mathbb{E}} [\RR_i^2 \otimes \RR_i^3]$'s population counterpart ${\mathbb{E}} [\RR_i^2 \otimes \RR_i^3]$. In this case we should discard these extra singular values and only use first $
L$ sigular values when calculating $\hat{\mathbb{E}} [\RR_i^2 \otimes \RR_i^3]^+$, otherwise one has to compute the inverse of these small singular values, which will incur large error.

After learning $\TT_1$ from data by the tensor power method to be introduced in Section \ref{sec-sub-tpm}, we can obtain $\TT_2$ and $\TT_3$ by setting $\TT_2 = \mathbb{E} [\RR_i^2\otimes \RR_i^3] \mathbb{E} [\RR_i^1 \otimes \RR_i^3]^+\TT_1$  and $\TT_3 = \mathbb{E} [\RR_i^3\otimes \RR_i^2] \mathbb{E} [\RR_i^1 \otimes \RR_i^2]^+\TT_1$. This can be derived in a same way as Theorem \ref{tensorstr}.  So the main problem is to estimate $\TT_1$ from moments $\MM_2$ and $\MM_3$. 

Although this structure only holds for random-effect LCMs,  in fixed-effect LCMs we can view $z_1,\dots,z_N$ as random with some prior distribution. For instance, they are sampled from some discrete distributions on $[L]$ independently from $\RR$. Then the data generation process of random-effect and fixed-effect LCMs are the same and the estimation procedures for random-effect LCMs also apply to fixed-effect LCMs. 

\subsection{Tensor Method to Learn the Parameters}\label{sec-sub-tpm}

In this section we briefly describe the procedures in \cite{anandkumar2014tensor} to recover the parameters in \eqref{moment}. That is, given
\begin{equation}\label{moments}
	\begin{aligned}
	\MM_2 = & \, \sum_{i=1}^{L} w_i\, \mmu_i \otimes \mmu_i,\\
	\MM_3 = & \, \sum_{i=1}^{L} w_i \,\mmu_i \otimes \mmu_i \otimes \mmu_i\\
	\end{aligned}
\end{equation}
where $\mmu_i \in \mathbb{R}^d$, we want to obtain the elements of decomposition $(w_i,\mmu_i)$'s from $\MM_2$ and $\MM_3$. Condition \ref{cond1} now becomes $\{\mmu_1,\dots,\mmu_L\}$ are linearly independent. 
First we introduce the orthogonal decomposition of a tensor. Then we see how we can use tensor power method to recover the orthogonal decomposition of a tensor and estimate the parameters.

\subsubsection{Orthogonal Decomposition}

Since the moments structures in \eqref{moments} are about at most a third-order tensor, we only consider the case $p=3$ (third-order tensor).

A symmetric tensor $\BT \in \otimes^3 \mathbb{R}^d$ has an orthogonal decomposition if there exists a collection of orthonormal unit vectors $\{\vv_1,\dots \vv_L \}$ and positive scalers $\lambda_i > 0$ such that 
\begin{equation} \label{orthogonal}
	\BT = \sum_{i=1}^{L} \lambda_i \vv_i \otimes \vv_i \otimes \vv_i.
\end{equation}

Without loss of generality we assume $\lambda_i > 0$ because for third-order tensor we have $-\lambda_i \vv_i^{\otimes 3} = \lambda_i (-\vv_i)^{\otimes 3}$. However we do not assume $\lambda_i$'s are ordered. {In fact, according to Theorem \ref{tp-converge} in Section \ref{tensorpower}, the eigenvector that tensor power method converges to depends on the magnitude of elements in \{$|\lambda_i\vv_i^\mathrm{T}\uu_0|$,$1\leq i\leq L$\} instead of the magnitude of $\lambda_i$'s. Here $\uu_0$ is the initial point for tensor power method.} This definition is a generalization of spectral decomposition for a symmetric matrix. We can also generalize the concept of eigenvalue and eigenvectors.

Recall the definition of the multilinear map induced by a tensor in \eqref{multilinear}.
A unit vector $\uu \in \mathbb{R}^d$ is an eigenvector of $\BT$ with corresponding eigenvalue $\lambda \in \mathbb{R}$ if $\BT(\II,\uu,\uu) = \lambda \uu$. For an orthogonally decomposable tensor $\BT = \sum_{i=1}^{L} \lambda_i \vv_i \otimes \vv_i \otimes \vv_i$, one can check operation \eqref{multi2} is
\[
\BT(\II,\uu,\uu) = \sum_{i=1}^{L} \lambda_i (\uu^\mathrm{T}\vv_i)^2 \vv_i
\]
and operation \eqref{multi3} reduces to
\[
\BT(\uu,\uu,\uu) = \sum_{i=1}^{L} \lambda_i (\uu^\mathrm{T}\vv_i)^3.
\]
By the orthogonality of $\vv_i$'s, $\BT(\II,\vv_i,\vv_i) = \lambda_i \vv_i$ and $\BT(\vv_i,\vv_i,\vv_i) = \lambda_i$ for all $i \in [L]$. Thus $(\lambda_i,\vv_i)$ is an eigenvector/eigenvalue pair of $\BT$.

The eigenvalues and eigenvectors of a tensor are more complicated than those of a matrix, and there are some subtle points. {For example, unlike matrices, orthogonal
decompositions do not necessarily exist for each symmetric tensor. Moreover, if a tensor $\BT$ admits an orthogonal decomposition in \eqref{orthogonal}, then this is the unique orthogonal decomposition of $\BT$. This is very different from the spectral decomposition of a matrix. In fact, $\{\vv_1, \dots, \vv_L\}$ are the set of robust eigenvectors for $\BT$. See \cite{anandkumar2014tensor} for more discussion.}  

\subsubsection{Whitening Process}\label{sec-whitening}

Comparing the third-order moment structure $\MM_3$ in \eqref{moments} with the orthogonal decomposition form \eqref{orthogonal}, we find they have almost the same form except that the vectors $\mmu_i$'s in  \eqref{moments} may not necessarily be orthogonal to each other. So we need to whiten the tensor $\MM_3$ to $\tilde{\MM}_3$, which has an orthogonal decomposition. In the whitening process we will make use of $\MM_2$ in \eqref{moments}.

Let $\WW \in \mathbb{R}^{d \times L}$ satisfy $\MM_2(\WW,\WW) = \WW^\mathrm{T}\MM_2\WW = \II_L$. We can take $\WW = \UU\DD^{-1/2}$, where $\DD$ is the diagnoal matrix containing all positive eigenvalues of $\MM_2$ and $\UU \in \mathbb{R}^{d\times L}$ is the matrix of corresponding orthogonal eigenvectors of $\MM_2$. $\DD^{-1/2}$ is well-defined since we assume $\mmu_i$'s are linearly independent and thus $\MM_2$ is of rank $L$.

{Suppose $\MM_2$ and $\MM_3$ admit the decomposition as in \eqref{moments},} define $\tilde{\mmu}_i :=\sqrt{\omega_i} \,\WW^\mathrm{T}\mmu_i$ and observe that
\[
\II_L = \MM_2(\WW,\WW) = \sum_{i=1}^{L}\WW^\mathrm{T}(\sqrt{\omega_i}\mmu_i)(\sqrt{\omega_i}\mmu_i)^\mathrm{T}\WW = \sum_{i=1}^{L} \tilde{\mmu}_i \tilde{\mmu}_i^\mathrm{T} .
\]
So $\tilde{\mmu}_i\, 's$ are orthonormal vectors.

Define 
\[
\tilde{\MM}_3 := \MM_3(\WW,\WW,\WW) = \sum_{i=1}^{L}\omega_i (\WW^\mathrm{T}\mmu_i)^{\otimes 3} =  \sum_{i=1}^{L}\frac{1}{\sqrt{\omega_i}} \tilde{\mmu}_i^{\otimes 3}.
\]
Since $\tilde{\mmu}_i\, 's$ are orthonormal vectors, this is the orthogonal decomposition of $\tilde{\MM}_3$. We can use tensor power method described in Section \ref{tensorpower} to obtain the eigenvalue/eigenvector pairs $(\lambda_i,\vv_i) = (1/\sqrt{\omega_i},\tilde{\mmu}_i)$. Then we can recover the parameters  $\omega_i$'s and $\mmu_i$'s as $(\omega_i,\mmu_i) = (\frac{1}{\lambda_i^2},\lambda_i(\WW^\mathrm{T})^+ \tilde{\mmu}_i)$, where $(\WW^\mathrm{T})^+$ is the Moore-Penrose pseudoinverse of $\WW^\mathrm{T}$.

\subsubsection{Tensor Power Method}\label{tensorpower}

Now we show how to recover the parameters $(\lambda_i,\vv_i)$'s in \eqref{orthogonal} from a tensor $\BT$. In analogy to matrix power method, here we use the tensor power method of \cite{de2000best} to obtain the eigenvalue/eigenvector pairs $(\lambda_i,\vv_i)$ in \eqref{orthogonal}. First suppose a third-order tensor has an exact orthogonal decomposition. We have the following result on the algorithmic convergence of tensor power method (Lemma 5.1 in \cite{anandkumar2014tensor}).

\begin{theorem}\label{tp-converge}
	Let $\BT \in \otimes^3 \mathbb{R}^d$ have an orthogonal decomposition as given in \eqref{orthogonal}. For a vector $\uu_0 \in \mathbb{R}^d$, suppose that the set of numbers \{$|\lambda_i\vv_i^\mathrm{T}\uu_0|$,$1\leq i\leq L$\} has a unique largest value. Without loss of generality, say $|\lambda_1\vv_1^\mathrm{T}\uu_0|$ is this largest value and $|\lambda_2\vv_2^\mathrm{T}\uu_0|$ is the
second largest value. For $t=1,2\dots$, let
	\[
	\uu_t := \frac{\BT(\II,\uu_{t-1},\uu_{t-1})}{||\BT(\II,\uu_{t-1},\uu_{t-1})||}.
	\]
	Then 
	\[
	||\vv_1-\uu_t||^2 \leq (2\lambda_1^2 \sum_{i=2}^{K} \lambda_i^{-2})\left|\frac{\lambda_2 \vv_2^\mathrm{T}\uu_0}{\lambda_1 \vv_1^\mathrm{T}\uu_0}\right|^{2^{t+1}},
	\]
    where $\|\cdot \|$ is the $\ell_2$ norm.
\end{theorem}

The result shows that the repeated iteration starting from $\uu_0$ converges to $\vv_1$ at a quadratic rate. The reason why the tensor power method enjoys a quadratic convergence rate in Theorem \ref{tp-converge} while the usual matrix power method has a (relatively) slower linear convergence rate is that the iteration step in the tensor case is quadratic while that step is linear in the matrix case. Specifically, the (unnormalized) iteration in the tensor power method is
\[
\bar{\uu}_{t+1}=\BT\left(\II, \bar{\uu}_{t}, \bar{\uu}_{t}\right)=\sum_{i=1}^{k} \lambda_{i}\left(\vv_{i}^{\top} \bar{\uu}_{t}\right)^{2} \vv_{i},
\]
and the (unnormalized) iteration in the matrix power method is
\[
\bar{\uu}_{t+1}=\BT \bar{\uu}_t=\sum_{i=1}^{k} \lambda_{i}\left(\vv_{i}^{\top} \bar{\uu}_{t}\right) \vv_{i},
\]
where $(\lambda_i, \vv_i)$'s are the eigenvalues and eigenvectors of the tensor/matrix $\BT$. In the tensor case, $\bar{\uu}_{t+1}$ depends on $\vv_{i}^{\top} \bar{\uu}_{t}$ via $\left(\vv_{i}^{\top} \bar{\uu}_{t}\right)^2$, and by induction, one can show $\bar{\uu}_{t}=\sum_{i=1}^{k} \lambda_{i}^{2^{t}-1} c_{i}^{2^{t}} \vv_{i}$ for $c_i = \vv_{i}^{\top} \bar{\uu}_{0}$. Then the quadratic rate $O\left(\rho^{2^{t}}\right)$ for $\rho = \left|\frac{\lambda_2 c_2}{\lambda_1 c_1}\right|$ arises from the normalization of $\bar{\uu}_t$. In the matrix case, $\bar{\uu}_{t}=\sum_{i=1}^{k} \lambda_{i}^{t} c_{i} \vv_{i}$, which is different from the form in the tensor case, and after normalization, the convergence rate is $O(\rho^t)$ for $\rho = \left|\frac{\lambda_2}{\lambda_1} \right|$. Therefore, the tensor power method enjoys a faster convergence rate than matrix power method. To obtain all the eigenvalue/eigenvector pairs, we use ``deflation'' after getting an eigenvalue/eigenvector pair $(\lambda_i,\vv_i)$. That is, to obtain the $j$-th eigenvalue/eigenvector pair, we subtract the previous $j-1$ rank-one structures from $\BT$ and then execute the power method on $\BT-\sum_{i=1}^{j-1} \lambda_i \vv_i\otimes \vv_i\otimes \vv_i$.

However, in practice we plug-in the empirical estimate of $\MM_2$ and $\MM_3$ in \eqref{moment} and the estimate of whitened tensor $\tilde{\MM}_3$ may not have an exact orthogonal decomposition. So we should consider the case where we only have an approximation $\hat{\BT}$ of $\BT$ and need a more robust algorithm to use an orthogonal decomposition to approximate $\hat{\BT}$. Following \cite{anandkumar2014tensor}, we present Algorithm \ref{robust-tp} as a more robust method.
 Multiple starting points are used in Algorithm \ref{robust-tp} to ensure approximate convergence at first stage. Intuitively, by restarting from different points we can start from a point from which the initial $n$ iterations from it dominates the error $\hat{\BT}-\BT$.  {To be concrete, define operator norm for a tensor $\BT$ as follows
 $$
\|\BT\|_{\text{op}}:=\sup _{\|\uu \|=1}|\BT(\uu, \uu, \uu)| .
$$
Perturbation analysis (Theorem 5.1 in \cite{anandkumar2014tensor}) shows that under some conditions if $\|\hat{\BT} - \BT\|_{\text{op}}$ is small (in our setting, this requires the cross moments are estimated accurately), then the estimated eigenvalue/eigenvector pairs returned by Algorithm \ref{robust-tp} are close to the true eigenvalue/eigenvector pairs. Note that in contrast to Davis-Kahan's theorem (which holds for all symmetric matrix), Theorem 5.1 in \cite{anandkumar2014tensor} is an algorithm-dependent perturbation analysis and only applies to Algorithm \ref{robust-tp} since in general $\hat{\BT}$ may not even have an orthogonal decomposition. Furthermore, Theorem 5.1 in \cite{anandkumar2014tensor} shows we can obtain good estimates of eigenvalue/eigenvector pairs with high probability with $K = \mathrm{poly}(L)$ trials. When $L$ is large the required number of initial trials $K$ is close to linear in $L$. Hence, the robust tensor power method (Algorithm \ref{robust-tp}) can recover the eigenvalue/eigenvector pairs of $\BT$ efficiently from an estimator $\hat{\BT}$. }
 

\begin{algorithm}[h]
	\caption {Robust tensor power method}
	\label{robust-tp}
	\begin{algorithmic}[1]
		\REQUIRE  symmetric tensor $\tilde{\BT} \in \mathbb{R}^{d\times d \times d}$, number of iterations $K$, $n$.
		\ENSURE  estimates of one of eigenvalue/eigenvector pairs; the deflated tensor 
		\FOR{$\tau = 1 \;\mathrm{to}\; K$}
			\STATE Draw $\uu_0^{(\tau)}$ uniformly from unit sphere in $\mathbb{R}^d$.
			\FOR{$t = 1  \;\mathrm{to}\; n$}
				\STATE Compute power iteration and re-normalization 
				\[
				\uu_t^{(\tau)} = \frac{\tilde{\BT}(\II,\uu_{t-1}^{(\tau)},\uu_{t-1}^{(\tau)})}{||\tilde{\BT}(\II,\uu_{t-1}^{(\tau)},\uu_{t-1}^{(\tau)})||}
				\]
			\ENDFOR
		\ENDFOR
		\STATE Let $\tau^* = \mathrm{argmax}_{\tau \in [K]} \{ \tilde{\BT}(\uu_{t-1}^{(\tau)},\uu_{t-1}^{(\tau)},\uu_{t-1}^{(\tau)})\}$
		\STATE Do $n$ power iteration updates further starting from $\uu_n^{(\tau^*)}$ to obtain $\hat{\uu}$, and set $\hat{\lambda} = \tilde{\BT}(\hat{\uu},\hat{\uu},\hat{\uu})$.
		\RETURN the estimated eigenvalue/eigenvector pair $(\hat{\lambda},\hat{\uu}) $; the deflated tensor $\BT - \hat{\lambda}\hat{\uu}\otimes \hat{\uu}\otimes \hat{\uu}.$
		
\end{algorithmic}
\end{algorithm}

 In conclusion, the procedures to learn $(\omega_i,\mmu_i)$ from $\MM_2$ and $\MM_3$ in \eqref{moments} are: (1) Use the information of $\MM_2$ to whiten $\MM_3$ and get $\tilde{\MM}_3$; (2) Apply tensor power method to $\tilde{\MM}_3$ and learn the orthogonal decomposition of it; (3) Obtain the original parameters with an inverse transformation of whitening. {We then elaborate on how to use the tensor method to estimate paprameters in LCM.}
 
\subsection{Tensor-EM Method}\label{sec-est-tem}

{In Section \ref{sec-sub-tpm} we introduce how to recover parameters $(w_i, \mmu_i)$'s from $\MM_2$ and $\MM_3$ in \eqref{moments}. Since the tensor structure in random-effect LCM in Theorem \ref{tensorstr} is in the exact form of \eqref{moments}, we can apply the methods in Section \ref{sec-sub-tpm} to the empirical estimates of $\MM_2$ and $\MM_3$ shown in \eqref{moments-est} and obtain the tensor estimator for $\TT_1$ and $\pp$. The relations $\TT_2 = \mathbb{E} [\RR_i^2\otimes \RR_i^3] \mathbb{E} [\RR_i^1 \otimes \RR_i^3]^+\TT_1$  and $\TT_3 = \mathbb{E} [\RR_i^3\otimes \RR_i^2] \mathbb{E} [\RR_i^1 \otimes \RR_i^2]^+\TT_1$ then give us the tensor estimates of $\TT_2$ and $\TT_3$. We denote the tensor estimates as $\hat{\TT}_{T}$ and $\hat{\pp}_T$. When the sample size is large enough, this method alone can yield an estimator close to the true parameters. However if we do not have so many samples, estimates based on tensor are not so accurate since in tensor method we only take advantage of low-order moments and ignore other information of sampling distributions. So after obtaining the tensor estimates, we use them as initial values for EM algorithms to improve the accuracy. We call this two-step estimation procedure the \textit{tensor-EM method}. We further derive the EM algorithm for random-effect LCM here. Consider the complete log-likelihood}

\begin{equation*}
    \ell_{comp} (\TT, \pp| \RR, \ZZ) = \sum_{i=1}^N \sum_{l=1}^L Z_{i,l} \log p_l + \sum_{i=1}^N \sum_{l=1}^L\sum_{j=1}^J Z_{i,l} [R_{i,j}\theta_{j,l}+(1-R_{i,j})(1-\theta_{j,l})],
\end{equation*}
{where we use the same notation $Z_{i,l}$ as in the fixed-effect LCM to denote the indicator $I$(subject $i$ is in class $l$). Given $(\TT^{(t)}, \pp^{(t)}),$ in the E-step we compute $\varphi_{i,l}^{(t+1)} = \ME[Z_{i,l} \mid \TT^{(t)}, \pp^{(t)}, \RR]$ by the posterior probability,} 
 
\begin{equation}\label{post-prob}
\varphi_{i,l}^{(t+1)} = \frac{p_l^{(t)} \prod_{j} {\theta_{j,l}^{(t)}}^{R_{i,j}} (1-\theta_{j,l}^{(t)})^{1-R_{i,j}}}{\sum_{l}p_l^{(t)} \prod_{j} {\theta_{j,l}^{(t)}}^{R_{i,j}} (1-\theta_{j,l}^{(t)})^{1-R_{i,j}}}
\end{equation}
{In the M-step, we replace $Z_{i,l}$ with $\varphi_{i,l}^{(t+1)}$ and obtain}
\[
Q(\TT, \pp | \TT^{(t)}, \pp^{(t)}) = \sum_{i=1}^N \sum_{l=1}^L \varphi_{i,l}^{(t+1)} \log p_l + \sum_{i=1}^N \sum_{l=1}^L\sum_{j=1}^J \varphi_{i,l}^{(t+1)} [R_{i,j}\theta_{j,l}+(1-R_{i,j})(1-\theta_{j,l})].
\]
{After maximizing over $\TT, \pp$ in $Q(\TT, \pp | \TT^{(t)}, \pp^{(t)})$ we arrive at the updated parameters}
\begin{equation*}
    \theta_{j,l}^{(t+1)} = \frac{\sum_{i} \varphi_{i,l}^{(t+1)}R_{i,j}}{\sum_{i}\varphi_{i,l}^{(t+1)}}, \quad p_{l}^{(t+1)} = \frac{\sum_{i=1} \varphi_{i,l}^{(t+1)}}{\sum_{l}\sum_{i=1} \varphi_{i,l}^{t+1}}.
\end{equation*}
{We keep iterating until some convergence criterion is met (e.g. the log-likelihood improves little after one iteration). When no prior knowledge is available, people may try many random initial values $(\TT^0, \pp^0)$ and take the one that has the maximum log-likelihood after the algorithm converges. In tensor-EM method we set the initial values $(\TT^0, \pp^0)$ to be $(\hat{\TT}_{T},\hat{\pp}_T)$. }
 
For fixed-effect LCM, we first learn $p_i$'s and $\TT$ from \eqref{moment} as a initialization step and apply Classification-EM (CEM) algorithm proposed in \cite{celeux1992cem} to obtain the final estimator. The main difference between CEM and EM algorithm summarized above is that in CEM algorithm we want to estimate latent class membership $z_i^{(t+1)}$, hence we need to find the index that maximizes posterior probability $\{\varphi_{i,l}^{(t+1)}, l\in [L]\}$ for each $i$. Formally, given $(\TT^{(t)}, \pp^{(t)}),$ in the E-step we compute $\varphi_{i,l}^{(t+1)}$ in \eqref{post-prob}. Then we let the estimated latent class membership for subject $i$ at step $t+1$ be
\[
z_i^{(t+1)} = \argmax_{l\in\{1,\ldots,L\}} \varphi_{i,l}^{(t+1)}
\]
{and correspondingly set $\ZZ^{(t+1)}$. In the M-step the parameters $(\TT, \pp)$ are updated using $\ZZ^{(t+1)}$ instead of $\varphi_{i,l}^{(t+1)}$ as follows}

\begin{equation*}
    \theta_{j,l}^{(t+1)} = \frac{\sum_{i} Z_{i,l}^{(t+1)}R_{i,j}}{\sum_{i}Z_{i,l}^{(t+1)}}, \quad p_{l}^{(t+1)} = \frac{\sum_{i=1} Z_{i,l}^{(t+1)}}{N}.
\end{equation*}
When the algorithm converges, we output $(\hat{\TT}, \hat{\ZZ})$ (recall the parameters in fixed-effect LCM are $\TT$ and $\ZZ$). Since the CEM algorithm only requires $(\TT,\pp)$ to obtain the next-step updates, we can set the initial values to be $(\hat{\TT}_{T},\hat{\pp}_T)$. When there are some components of $\TT$ outside the range $[0,1]$, we can set the negative
components to be a small positive number (e.g. 0.001) and those over one to be a number close to 1
(e.g. 0.999).

Empirically, we find that the accuracy of tensor-EM method is comparable to estimates obtained with EM algorithm starting from true parameters, indicating that the tensor-EM method can give the MLE of latent class model when the model is correctly specified. 
Moreover, it is more computationally efficient than EM algorithm with several random initial values, especially in large-scale data. See our simulation study for more details.


\subsection{Selecting the Number of Classes} \label{SelectL} 
In the discussion above we assume the number of classes $L$ is known. Next we discuss the selection of $L$.
There exists a rich literature in selecting number of classes.  \cite{nylund2007deciding} performed a Monte Carlo simulation study on several commonly used methods and found BIC proposed in \cite{schwarz1978estimating} and likelihood ratio test based on bootstrap in \cite{mclachlan2004finite} have a better performance. They recommend BIC and likelihood ratio test based on bootstrap to select the number of classes. 
Since we focus on large-scale datasets containing many items and samples, we  follow the discussion of \cite{chen2017regularized} and apply generalized information criterion proposed in \cite{nishii1984asymptotic} to selecting the number of classes.

  Specifically, for a candidate set $\mathcal{L}$ and any $L \in \mathcal{L}$ , we apply the tensor-EM algorithm to learning the parameters and compute the generalized information criterion for random-effect and fixed-effect LCM as follows:
\[
\mathrm{GIC}_R(L) = -2\,\ell(\mathbf{R}; \, \hat{\mathbf{p}}^L, \, \hat{\TT}^L) + a_N \text{dim}(\pp, \TT),
\]
\[
\mathrm{GIC}_F(L) = -2\,\ell(\mathbf{R}; \, \hat{\mathbf{\ZZ}}^L, \, \hat{\TT}^L) + a_N \text{dim}(\ZZ, \TT)
\]
where $\text{dim}(\pp, \TT)$ is the dimension of parameters to estimate and measures the model complexity. We have $\text{dim}(\pp, \TT) = JL + L-1$ in random-effect LCM and $\text{dim}(\ZZ, \TT) = JL +N$ in fixed-effect LCM. Sample size dependent quantity $a_N$ measures the level of penalty on model complexity. Here we consider two choice of $a_N$.
\begin{itemize}
	\item[{\textbullet}] $\mathrm{GIC}_1$: $a_N$ = $\mathrm{log}(N)$. This case corresponds to BIC and it enjoys some consistent results shown in \cite{nishii1984asymptotic} when the model has low complexity (i.e. the dimension of parameters is fixed). 
	\item[{\textbullet}] $\mathrm{GIC}_2$: $a_N$ = $\mathrm{log}[\mathrm{log}(N)]\mathrm{log}N$. This choice is considered in \cite{fan2013tuning} in generalized linear model to address the case where the dimension of parameter space $d$ increases at a polynomial order of $N$, that is, $d = O(N^c)\; \mathrm{for\; some }\; c > 0$. The large-scale latent class model we consider tends to have many items. The dimension of parameters $\text{dim}(\pp, \TT) = J \times L + L-1$ can be large and should not be treated as fixed. For instance, in the simulation study a random-effect LCM we consider has ten classes and one hundred items. This model has dimension $d=1009$ while the sample size is $N=1000$. So it is more appropriate to adopt this choice in this setting. See \cite{fan2013tuning} for discussion about theoretical results of this choice of $a_N$.
\end{itemize} 

After calculating the GIC for different models, we select the number of classes to minimize GIC($L$):
\[
L^* = \mathrm{arg} \min_{L \in \mathcal{L}} \mathrm{GIC}(L).
\]

Although our simulation results show that the tensor-EM method gives the MLE of the model when the number of classes $L$ is correctly selected, it may converge to some local optima when $L$ is incorrect. It is also likely that the tensor method yields inaccurate results when $L$ is misspecified as $\tilde{L}$. For instance, some item parameters may be negative or over one, which happens when the sample size is small. In this case, we propose to revise the tensor estimates as follows: we set the negative components to be a small positive number (e.g. 0.001) and those over one to be a number close to 1 (e.g. 0.999). 
Such a procedure will help us select the  number of classes because we know as long as the true number of classes is specified, the tensor-EM method can yield an ideal estimate (MLE) with a small GIC value. 
On the other hand, a poor estimate based on a misspecified model will give a larger GIC value, indicating misfit of the model. 
Hence, GIC values computed by tensor-EM method can provide useful information to select the number of classes. According to our simulation study, the proposed method can select the right model most of the time.

{After proposing the computational methods to select number of classes and to find the MLE, we next examine the theoretical properties of MLE in the large-$N$ and large-$J$ scenario. For random-effect LCM with fixed number of items $J$, the MLE is known to be consistent. However, 
the joint MLE for fixed-effect LCM may not be consistent \citep{neyman1948consistent} when $J$ is fixed. Intuitively, one cannot hope to recover each subject's latent class membership accurately with only a finite number of items observed for each subject. So in the next section, we will consider the consistency of joint MLE when $J$ also goes to infinity in fixed-effect LCMs.}

\section{Clustering Consistency of the Joint MLE}\label{consistency}

In this section we consider large-scale fixed-effect LCMs and characterize the behavior of latent class assignment estimator $\hat{\ZZ}$ under suitable conditions, where $(\hat\TT,\hat\ZZ)$ is the joint maximum likelihood estimator (MLE).
We use a similar proof technique as in \cite{gu2020joint} to establish the clustering consistency of the joint MLE for fixed-effect LCMs.

First we need to define some notations. 
Denote the true parameters by $(\TT^0,\ZZ^0)$.
Define
\begin{align*}
	P_{i,j}=&~\mathbb P(R_{i,j}=1)=\theta^0_{j,z_i^0},\\
	M =&~ \frac{1}{NJ}\sum_{i=1}^N\sum_{j=1}^J \MP(R_{i,j}=1).
\end{align*}
The $M\in[0,1]$ above measures the average positive response rate over all subjects and items.
Denote the expectation of log-likelihood $\ell(\RR;\,\mathbf Z, \, \TT)$ in \eqref{eq-loglike}  by 
\begin{align*}
	\bar\ell(\mathbf Z, \TT) 
	= \ME [\ell(\RR;\,\mathbf Z, \, \TT)]
	=&~ \sum_{i=1}^N \sum_{j=1}^N \Big\{ P_{i,j}\log(\theta_{j,z_i}) + (1-P_{i,j})\log(1-\theta_{j,z_i}) \Big\},
\end{align*}
where the expectation is taken with respect to the distribution of $\RR$.

Given arbitrary $\mathbf \ZZ$, denote 
\begin{align*}
\ell(\RR;\, \ZZ) =&~ \sup_{\TT} \ell(\RR;\,\mathbf Z, \, \TT) = \ell(\RR;\, \ZZ, \hat\TT^{(\ZZ)}),\\ \notag
\quad
\bar\ell(\ZZ) =&~ \sup_{\TT} \bar\ell(\mathbf Z, \, \TT) = \bar\ell(\ZZ, \bar\TT^{(\ZZ)}),
\end{align*}
where $\hat\TT^{(\ZZ)} = \arg\max_{\TT} \ell(\RR;\,\mathbf Z, \, \TT)$ and $\bar \TT^{(\ZZ)} = \arg\max_{\TT}\bar\ell(\mathbf Z,\TT)$. Then under any realization of $\mathbf Z$, the following holds for any latent class $a\in[L]$,
\begin{equation}\label{eq-zmle}
	\hat\theta^{(z)}_{j,a} = \frac{\sum_{i}Z_{i,a} R_{i,j}}{\sum_{i}Z_{i,a}},\quad
	\bar\theta^{(z)}_{j,a} = \frac{\sum_{i}Z_{i,a} P_{i,j}}{\sum_{i}Z_{i,a}}.
\end{equation}

We consider the joint maximum likelihood estimator $(\hat\TT,\hat\ZZ)$ subject to fitting a $L$-class fixed-effect LCM with true parameters $(\TT^0,\ZZ^0)$,
\[
(\hat\TT,\hat\ZZ) = \argmax_{(\TT,\ZZ)} \ell(\RR;\,\mathbf Z, \, \TT) .
\]
Note that $\hat\ZZ = \argmax_{\ZZ} \ell(\RR;\,\ZZ,\hat\TT^{(\ZZ)})= \argmax_{\ZZ} \ell(\RR;\,\ZZ)$, where $\hat\TT^{(\ZZ)}$ maximizes the profile likelihood $\ell(\RR;\,\mathbf Z, \, \TT)$ given a particular realization $\ZZ$. One can apply the procedures in Section \ref{estimation} to compute the joint MLE efficiently.

We impose the following assumptions on the true parameters.

\begin{assumption}
	\label{assume-pij}
There exists a constant $\gamma>0$ such that
	\begin{equation}\label{eq-pijb}
		\frac{1}{J^\gamma} \leq 
	\min_{1\leq j\leq J,\atop 1\leq a\leq L} \theta^0_{j,a}
	\leq 
	\max_{1\leq j\leq J,\atop 1\leq a\leq L} \theta^0_{j,a}
	\leq 1-\frac{1}{J^\gamma}.	
	\end{equation}
\end{assumption}

\begin{assumption}
	\label{cond-gap}
	There exists a positive sequence $\{\beta_{J}\}$ such that
	\begin{align}
\label{eq-as-gen2}
      \frac{1}{J}\min_{1\leq a\neq b\leq L} \|\TT^0_{\cdot,a} - \TT^0_{\cdot,b}\|^2 \geq &~\beta_J,
	\end{align}
	where $\|\cdot\|$ denotes the $\ell_2$ norm. 
\end{assumption}

Assumption \ref{assume-pij} guarantees that the components of $\TT$ are bounded away from 0 and 1 but allowed to become very close to 0 or 1 as $J$ becomes larger. It is a quite mild technical assumption. 
Assumption \ref{cond-gap} is an identification condition for latent classes and guarantees that the item parameters of different classes are different enough. 
Note that we allow different classes to have same probability to answer a single item correctly (i.e. $\theta_{j,a} = \theta_{j,b}$ for some $a\neq b$). But their average performance on the $J$ items should be different.


In fitting the latent class model, we are interested in controlling the number of incorrect latent class assignments. 
 {Formally, after obtaining some estimator $\hat{\zz}$, let $\hat{C}_{1}, \dots, \hat{C}_{L}$ be clusters from our estimator $\hat{\zz}$ such that subjects sharing same estimated membership are in one cluster. For instance, suppose there are eight subjects whose latent class memberships are $\zz^0 = (2,2,2,2,1,2,1,1)$. Here ``1" and ``2" represents the true class index for each subject. The estimates are $\hat{\zz} = (1,1,1,1,1,2,2,2)$ so we have $\hat{C}_1=\{\text{subject } 1,2,3,4,5\}$ and $\hat{C}_2=\{\text{subject } 6,7,8\}$.  Let $m_l = \argmax_{a \in [L]} \, \sum_{i\in \hat{C}_l}Z_{i,a}^{0} $ be the majority of true class membership among subjects in $\hat{C}_l$. In our example, $\hat{C}_1$ has four subjects whose true class indices under $\zz^0$ are ``2" and hence $m_1 = 2$ and similarly $m_2 = 1$. Since the latent classes are identified up to permutations of class index, $m_l$ should be viewed as the class index of $\hat{C}_l$ corresponding to true class assignments $\zz^0$. In our example, although the estimates indicate that subjects in $\hat{C}_1$ are in the latent class ``1", we should obtain the latent class index of $\hat{C}_1$ under $\zz^0$, which is $m_1 = 2$. The number of correct latent class assignments under $\hat{\zz}$ in $\hat{C}_l$ is then $ \left|\{i\in \hat{C}_l: z_i^0=m_l \}\right|$. In the above example we have $m_1 = 2, \left|\{i\in \hat{C}_1: z_i^0=2 \}\right|=4 $, indicating that in $\hat{C}_1$ four subjects are correctly assigned to their true class membership. Similarly $m_2 = 1, \left|\{i\in \hat{C}_1: z_i^0=2 \}\right|=2$. The total number of correctly assigned subjects is then $\sum_{l}\left|\{i\in \hat{C}_l: z_i^0=m_l \}\right|$. In our example this number is $6$. The number of incorrect class assignments under estimated latent class assignments $N_e(\hat{\zz})$ is defined as: }
\begin{equation}
    \label{error}
    N_e(\hat{\zz}) = N - \sum_{l=1}^{L} \left|\{i\in \hat{C}_l: z_i^0=m_l \}\right|
\end{equation}
So every subject $i\in\{1,\ldots,N\}$
whose true class under $\zz^0$ is not in the majority within its estimated class under $\hat{\zz}$ is counted. In the example above $N_e(\hat{\zz}) = 2$. 

We have the following main theorem on the clustering consistency of joint MLE for fixed-effect LCM, which characterizes the asymptotic behavior of error rate $N(\hat{\zz})/N$.

\begin{theorem}
    \label{thm-joint} 
Under assumptions \ref{assume-pij} and \ref{cond-gap}, assume the following when $N,J\to\infty$,
\begin{equation}\label{thm-joint-scaling}
\frac{MJ}{\log L} \to \infty \;, \frac{N}{L} \to \infty,
\end{equation}
\[
\sqrt{\frac{M}{J}} \left( \frac{N}{L}\right)^{1-\xi} \rightarrow \infty \; \text{for some small $\xi>0$},
\]
for the joint maximum likelihood estimator $\hat{\zz}$, we have 
\begin{align}
\frac{N_e(\hat{\zz})}{N} = o_P\left(\frac{(\log J)^{1+\eta}\cdot\sqrt{M \log L} }{\sqrt{J}\beta_J}\right)	
\end{align}
for any $\eta>0$.
	
\end{theorem}
Assigning each subject to latent class resembles the process of clustering and hence we name our results as ``clustering consistency". In Theorem \ref{thm-joint}, we allow $L \rightarrow \infty$ or $L = O(1)$ and as long as the scaling conditions hold, our results hold. In particular, if $J$ remains bounded as $N,J \rightarrow \infty$, we have the following corollary. 

\begin{corollary}
    \label{cor-joint} 
Under assumptions \ref{assume-pij} and \ref{cond-gap}, assume $N,J\to\infty$ and $L = O(1)$,
\begin{equation}\label{cor-joint-scaling}
MJ \to \infty \;, 
\end{equation}
\[
\sqrt{\frac{M}{J}} N^{1-\xi}  \rightarrow \infty \; \text{for some small $\xi>0$},
\]
for the joint maximum likelihood estimator $\hat{\zz}$, we have 
\begin{align}
\frac{N_e(\hat{\zz})}{N} = o_P\left(\frac{(\log J)^{1+\eta}\cdot\sqrt{M } }{\sqrt{J}\beta_J}\right)	
\end{align}
for any $\eta>0$.
\end{corollary}

In particular if $M=\frac{1}{NJ}\sum_{i=1}^N\sum_{j=1}^J \MP(R_{i,j}=1)$ is of the constant order (denoted by $M =\Theta(1)$), the only scaling condition would become $J = o(N^{2(1-\xi)})$ for some small $\xi > 0$ and this is very mild. The rate depends on $\beta_J$ specified in \eqref{eq-as-gen2}. If the item parameters between different classes differ by a constant then we have $\beta_J = \Theta(1)$ and the final error rate $N_e(\hat{\zz})/{N} = o_P\left((\log J)^{1+\eta}/{\sqrt{J}}\right)$ decays towards zero as $N, J$ increases.

The following corollary shows the item parameters can be consistently estimated via joint MLE under some conditions as $N, J \rightarrow \infty.$

\begin{corollary}\label{item-consistency}

    Under assumptions \ref{assume-pij} and \ref{cond-gap} and the scaling conditions in Theorem \ref{thm-joint}, if we further assume clustering consistency holds:
    \[
    \frac{N_e(\hat{\zz})}{N} \stackrel{P}{\longrightarrow} 0, \text{  as } N, J \rightarrow \infty
    \]
    and there exists some positive constant $\tau$ such that $n_{l}^0/N \geq \tau $ for all $l \in [L]$ where $n_{l}^0 = \sum_{i}Z_{i,l}^0$ is the number of samples in latent class $l$. Then as $N,J \rightarrow \infty$, with probability approaching 1, for any $l \in [L]$ there exists a unique $a \in [L]$ such that $m_a = l$, i.e. the $a$-th cluster represents $l$-th class. Furthermore we have for any $l \in [L]$
    \[
    \max_{j} |\hat{\theta}_{j,a} - \theta_{j,l}^0| \stackrel{P}{\longrightarrow} 0 .
    \]
\end{corollary}

 The proof of Corollary \ref{item-consistency} can be found in the appendix. The parameter estimation consistency relies on clustering consistency established in Theorem \ref{thm-joint}. The condition $n_l^0/N \geq \tau$ for all $l$ guarantees that there are enough samples to estimate the item parameters for each class. 
 According to the theory presented above, both latent class membership and item parameters can be consistently estimated under mild conditions, which provides theoretical guarantees for real-world applications of large-scale latent class analysis.

 It is interesting to mention some pioneer work in high dimensional item factor analysis model \citep{chen2019joint, chen2019structured}. The fixed-effect LCM can be viewed as a special case of multidimensional IRT model in \cite{chen2019joint} where the person parameters correspond to latent class membership in this work and factor loadings correspond to item parameters. Then one may use constrained joint MLE approach in \cite{chen2019joint} to obtain estimates for person parameters and factor loadings with consistency guarantees. The main difference in our work and \cite{chen2019joint} is that to obtain the joint MLE, we maximize $\ZZ$ over matrices with one-hot rows (exactly one component in each row of $\ZZ$ can be 1) while the person parameters are optimized over any real number (satisfying certain constraints on the norm) in \cite{chen2019joint}. Since the joint MLEs are obtained in a different way, it is unclear how the joint MLE in LCM and constrained joint MLE in \cite{chen2019joint} are correlated. The continuous estimates for person parameters cannot translate to discrete latent class membership directly, making it hard to examine the relations between two estimators. Hence different techniques to establish the consistency of joint MLE are applied in our work. We will leave the connections between IRT models and LCM models for future explorations. In applications, both IRT models and latent class models can be fit to have different interpretations on the data.  

We further discuss the connection and difference between our results and those in \cite{gu2020joint}. \cite{gu2020joint} considered large-scale structured latent attribute models and established consistency of the joint MLE in their models. They also treated the latent part in their model (latent attribute profile) as fixed and derived the consistency of estimating the latent attribute profiles. Our work differs from \cite{gu2020joint} in the following respects: First, the assumption 2 in \cite{gu2020joint} requires each component of item parameters to be quite different for respondents with different latent attribute profiles. However the assumption \ref{cond-gap} in our work only requires the $L_2$ distance between item parameters for different latent classes to be quite different. The model structures considered in \cite{gu2020joint} are more delicate and may require stronger assumptions on the item parameters. The other main difference lies in the technical proof. After proving a bound on $\bar{\ell}\left(\mathbf{Z}^{0}\right)-\bar{\ell}(\widehat{\mathbf{Z}})$, we obtain the clustering consistency by a refined partition argument while \cite{gu2020joint} considered the structures implied by Q-matrix and identification assumptions to prove the consistency of estimating the Q-matrix vectors and latent attribute profiles. See the proof in the appendix for details.


\section{Simulation Study}\label{simulation}

In this section, we perform simulation studies to assess the performance of the tensor-EM method. Specifically, in Section \ref{simu-EM-tensor}, we examine the estimation accuracy and speed of tensor-EM method under LCMs. In Section  \ref{simu-local-dependence}, we consider the setting where local independence is violated and evaluate the robustness of the tensor-EM method. The clustering consistency is verified empirically together with comparisons to several other clustering methods in Section \ref{simu-cluster}. We also empirically evaluate the performance of GIC in selecting the number of classes in Section \ref{simu-GIC}.

\subsection{Performance of tensor-EM method under local independence}\label{simu-EM-tensor}

 We consider 24 different settings: $\{N = 1000,10000,20000\} \otimes \{J=100,200\}\otimes \{L = 5,10\}\otimes \{\text{item parameters} \in \{0.1,0.2,0.8,0.9\}$ or $\{0.2,0.4,0.6,0.8\}$\}. By item parameters $\in \{0.1,0.2,0.8,0.9\}$ we mean we generate the true $\TT$'s elements $\theta_{j,a}$ independently and uniformly from $\{ 0.1,0.2,0.8,0.9\}$. Note that under the considered large-scale LCM with many items, the generic identifiability conditions stated in Corollary 5 in \cite{allman2009identifiability} is guaranteed. 

We compare the performance of the  proposed tensor-EM method in Section \ref{sec-est-tem} with three other methods:\begin{enumerate}
    \item[(1)] EM-true, which is the EM algorithm starting from the true parameters as initial values;
    \item[(2)]  {EM-random, where we randomly generate the initial values for the EM algorithm. In random-effect LCM, we keep trying different initial values until we find the EM algorithm converges in 1000 iterations on five initial values and then we select the estimators corresponding to maximum log-likelihood. In fixed-effect LCM, we generate five initial values and run CEM algorithm on them until it converges or the number of iterations exceeds 1000, then we select the estimators corresponding to maximum log-likelihood. The first mechanism can hopefully find better solutions but the second one can save more time. We use EM-random algorithm with these two mechanisms and compare the results with tensor-EM to show the good performance of tensor-EM;}
    \item[(3)] the tensor method alone. 
    In this third competitor, we permute the items randomly and obtain a tensor estimate for each permutation.  Let $\pi$ be a permutation of $[J]$ and $\RR_i^{\pi}$ be subject response vector corresponding to permutation $\pi$ (i.e. $[\RR_i^{\pi}]_k = R_{i,\pi(k)}$). We obtain tensor estimates $\hat{\TT}^{\pi}$ from cross moments of $\RR_i^{\pi}$ and set the tensor estimates of original item parameters as $\hat{\theta}_{j,a} = \hat{\theta}_{\pi^{-1}(j),a}^{\pi}$. We then repeat this procedure five times and finally take average of them. This repetition can reduce the MSE of item parameters a little but will remain the same magnitude; 
    \item[(4)] {EM-tensor. This is the tensor-EM algorithm we detail in Section \ref{sec-est-tem}. We first apply the tensor method and obtain the tensor estimates. We then use tensor estimates as initializations for EM and CEM algorithm in random-effect LCM and fixed-effect LCM, respectively. }
\end{enumerate}   
We emphasize that in the proposed tensor-EM method, we do not repeat the tensor power method or take any average. 
Empirically, just one implementation of the tensor power method gives good initial values for the EM algorithm.

The running time and MSE are reported, where MSE = $\sum_{j=1}^J \sum_{l=1}^L (\theta_{j,l} - \hat \theta_{j,l})^2/(JL)$. 
In some settings, the EM-random estimates have too large MSEs, so we also present the plots excluding the EM-random estimates to better visualize the MSEs of the other three methods. 
The results are based on 100 replications in each simulation setting. For the random-effect LCM, the population proportion vector $\pp$ and item parameters $\TT$ are first generated and the process of generating samples and estimation is repeated. The proportion vector $\pp$ is generated randomly to guarantee each class has enough samples (in settings with five classes we have $p_l \geq 0.1$ and in settings with ten classes we have $p_l \geq 0.08$ for all $l$). For the fixed-effect LCM, the latent class membership $\ZZ$ and $\TT$ are generated and the process of sampling and estimation is repeated (so unlike in random-effect LCM, $\ZZ$ is the same for each replication). The convergence criterion for EM(CEM) algorithm is set as when the improvement in likelihood is less than 0.1 (which has a relative tolerance smaller than $10^{-5}$ in the considered simulation settings).

Due to the space limitation, we present here two representative figures for each type of LCM (i.e., random-effect LCM and fixed-effect LCM)  in Figures \ref{fig-rand-small}--\ref{fig-fix-big}, and provide the rest simulation results in Supplementary Material. 

\begin{figure}[H]
	\centering
	\subfigure[MSE of item parameters]{
		\begin{minipage}[t]{0.45\linewidth}
			\centering
			\includegraphics[width=2in]{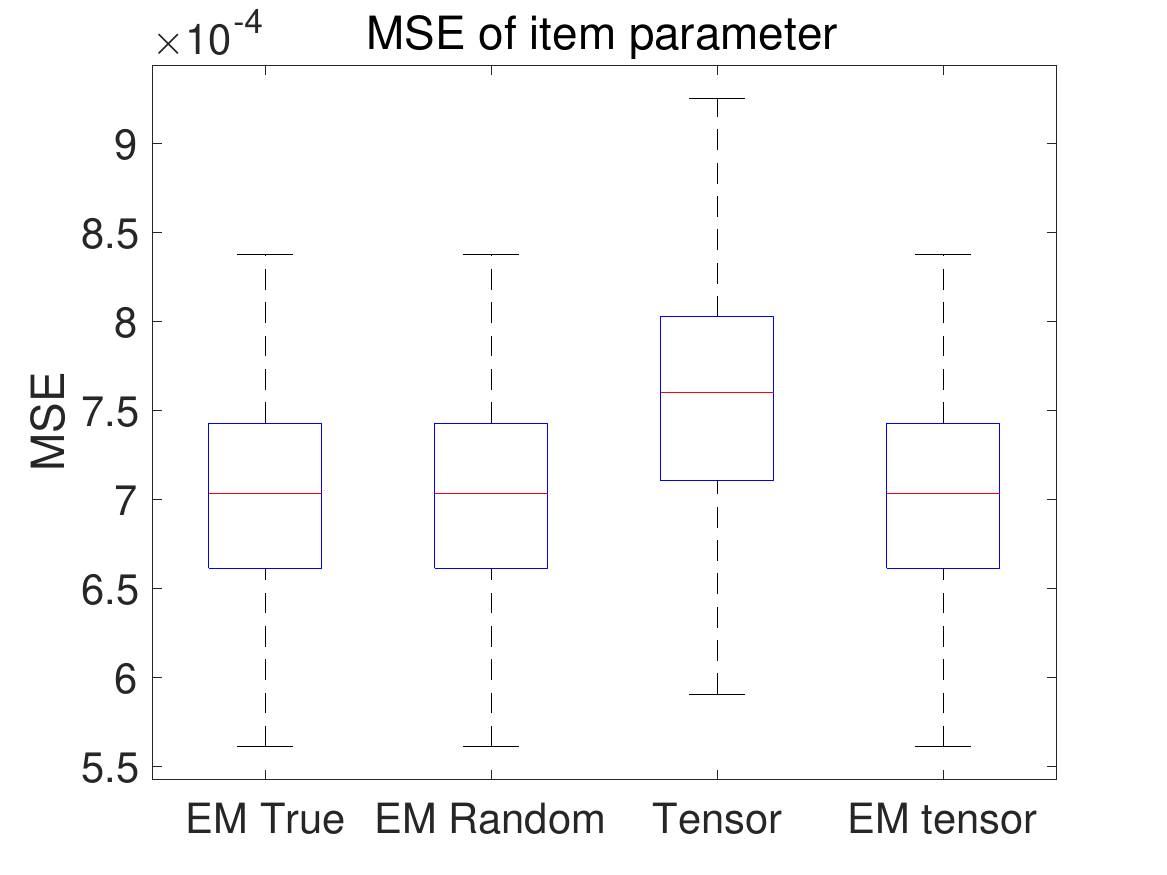}
		\end{minipage}%
	}%
	\subfigure[Running time of the algorithms]{
		\begin{minipage}[t]{0.45\linewidth}
			\centering
			\includegraphics[width=2in]{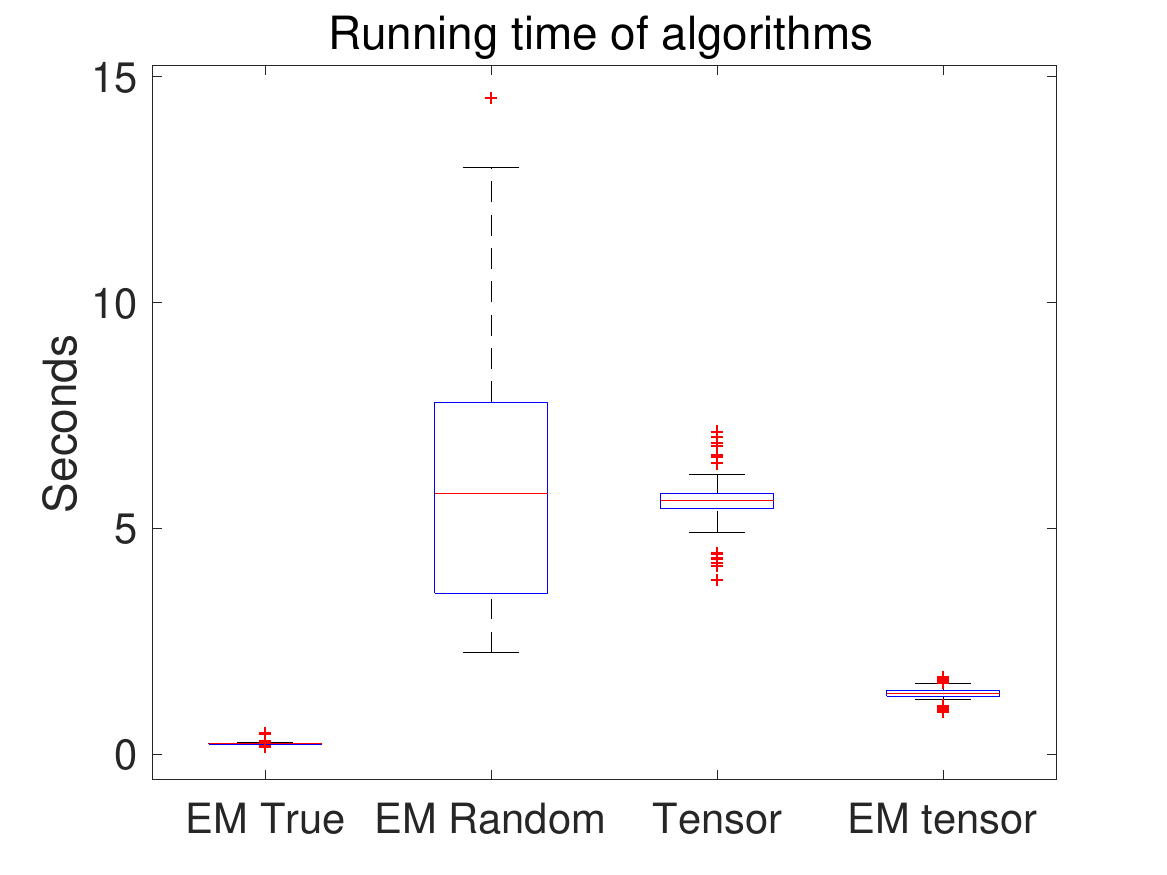}
		\end{minipage}%
	}%
	\centering
	\caption{Random-effect LCM, $N = 1000, J= 100, L=5, \theta_{j,a}\in \{0.1,0.2,0.8,0.9\}$}
	\label{fig-rand-small}
\end{figure}

\begin{figure}[H]
	\centering
	\subfigure[MSE of item parameters]{
		\begin{minipage}[t]{0.33\linewidth}
			\centering
			\includegraphics[width=2in]{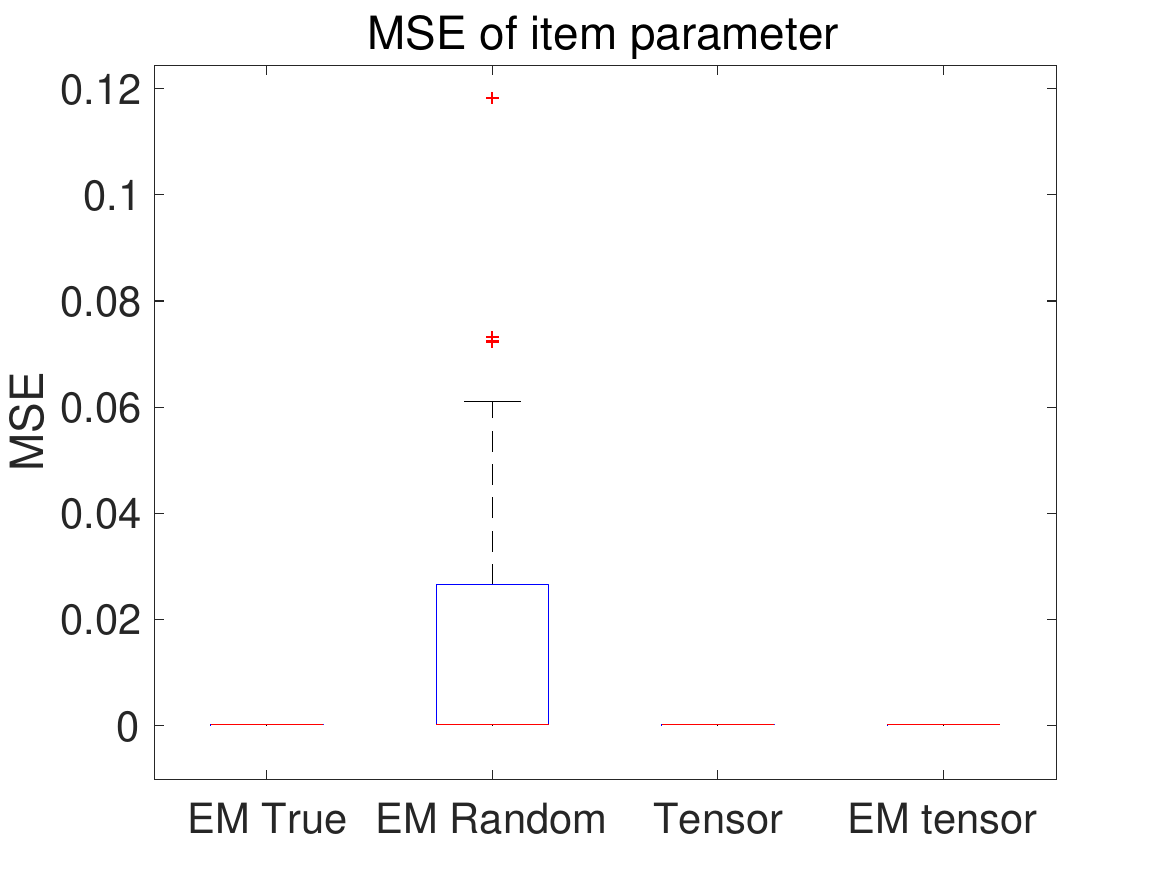}
		\end{minipage}%
	}%
		\subfigure[MSE without EM-random]{
		\begin{minipage}[t]{0.33\linewidth}
			\centering
			\includegraphics[width=2in]{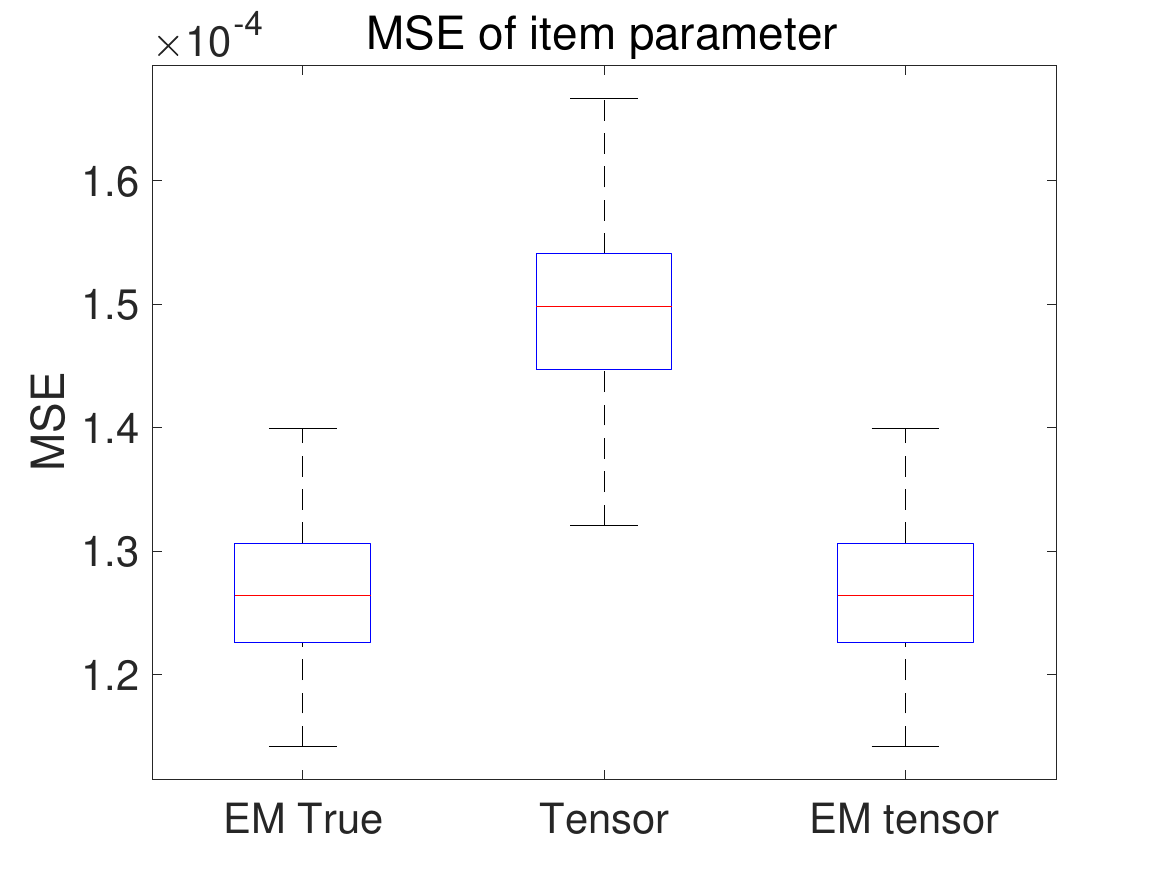}
		\end{minipage}%
	}%
	\subfigure[Running time of the algorithms]{
		\begin{minipage}[t]{0.33\linewidth}
			\centering
			\includegraphics[width=2in]{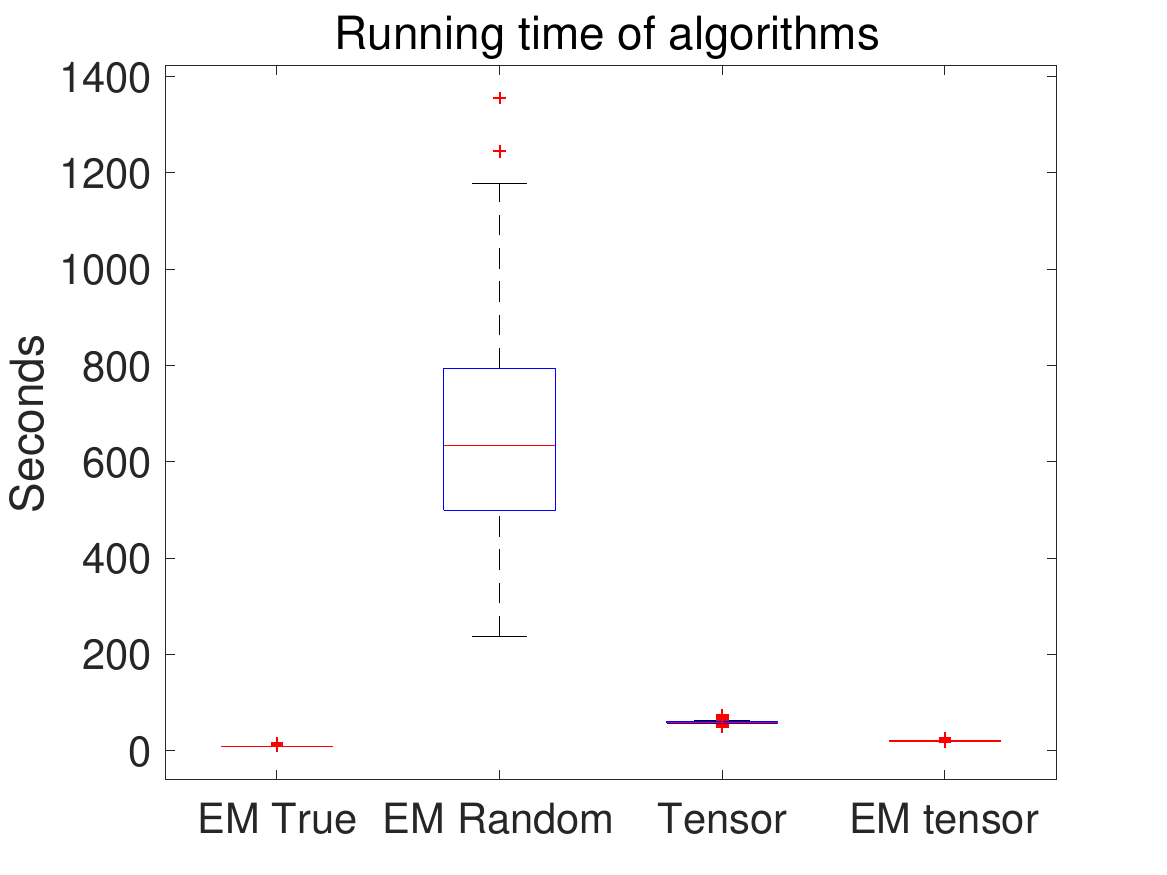}
		\end{minipage}%
	}%
	\centering
	\caption{Random-effect LCM, $N = 10000, J= 100, L=10,\theta_{j,a}\in \{0.1,0.2,0.8,0.9\}$}
	\label{fig-rand-big}
\end{figure}

\begin{figure}[H]
	\centering
	\subfigure[MSE of item parameters]{
		\begin{minipage}[t]{0.33\linewidth}
			\centering
			\includegraphics[width=2in]{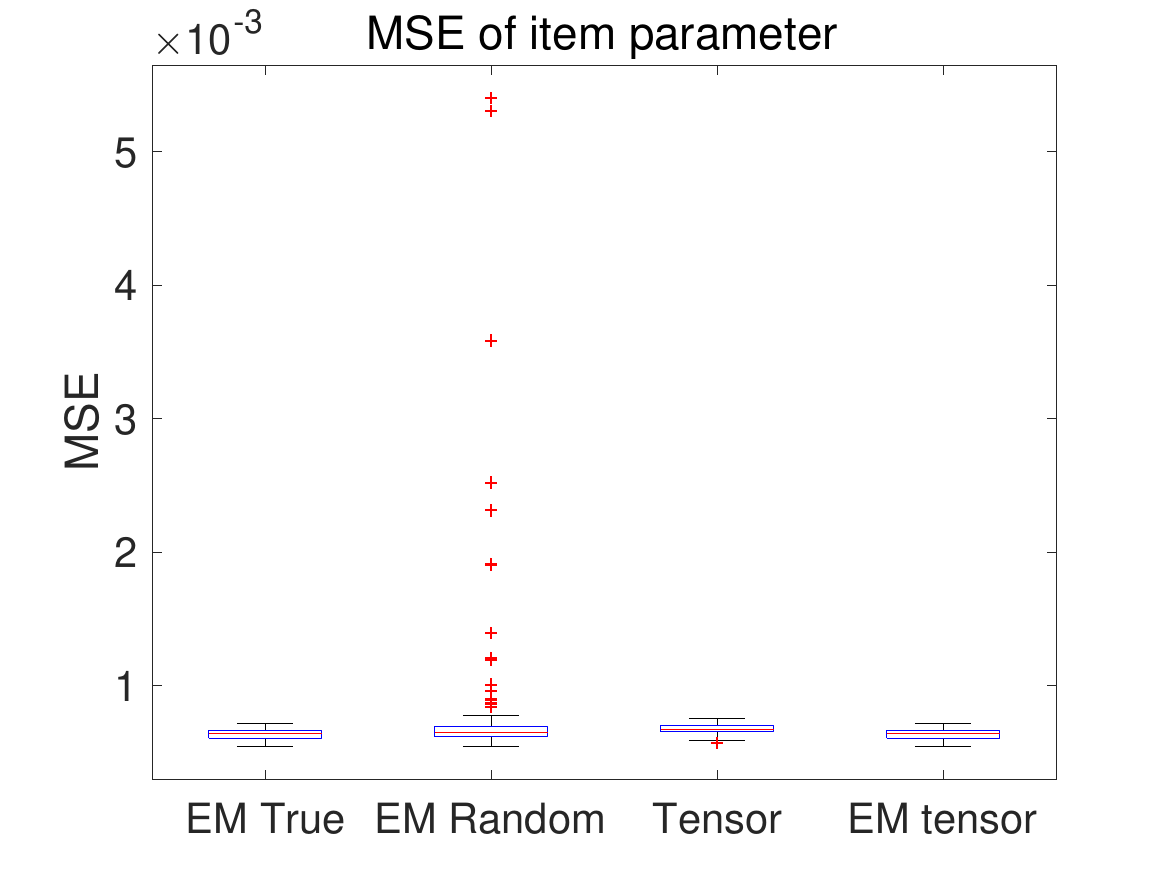}
		\end{minipage}%
	}%
	\subfigure[MSE without EM-random]{
		\begin{minipage}[t]{0.33\linewidth}
			\centering
			\includegraphics[width=2in]{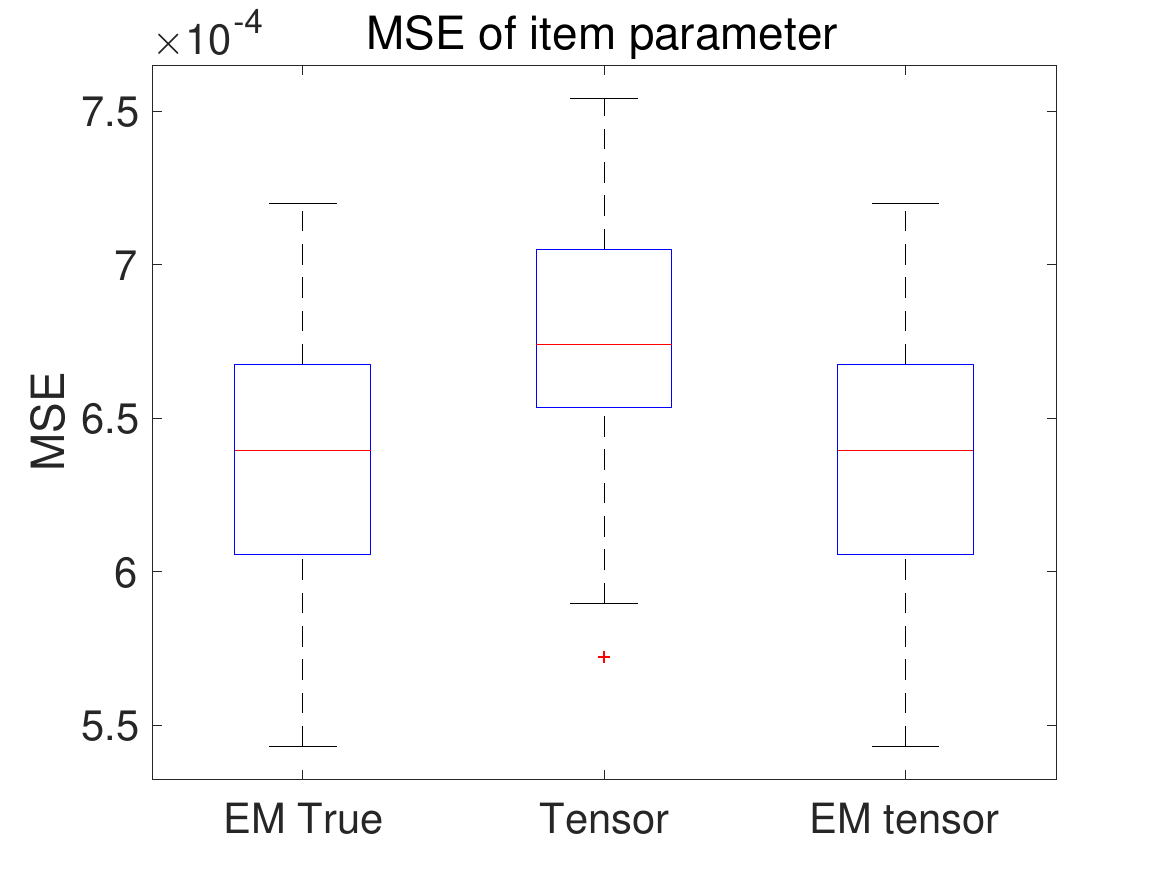}
		\end{minipage}%
	}%
	\subfigure[Running time of the algorithms]{
		\begin{minipage}[t]{0.33\linewidth}
			\centering
			\includegraphics[width=2in]{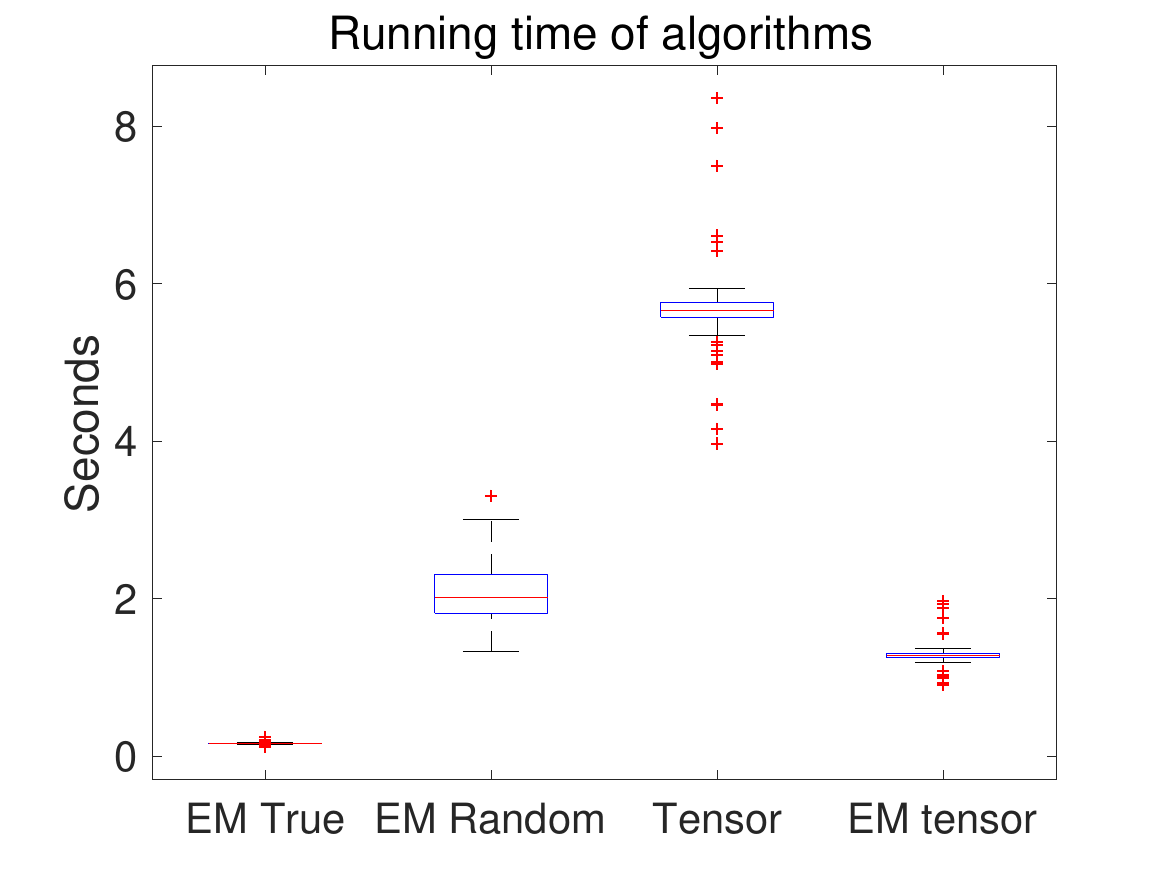}
		\end{minipage}%
	}%
	\centering
	\caption{Fixed-effect LCM, $N = 1000, J= 100, L=5,\theta_{j,a}\in \{0.1,0.2,0.8,0.9\}$}
	\label{fig-fix-small}
\end{figure}

\begin{figure}[H]
	\centering
	\subfigure[MSE of item parameters]{
		\begin{minipage}[t]{0.33\linewidth}
			\centering
			\includegraphics[width=2in]{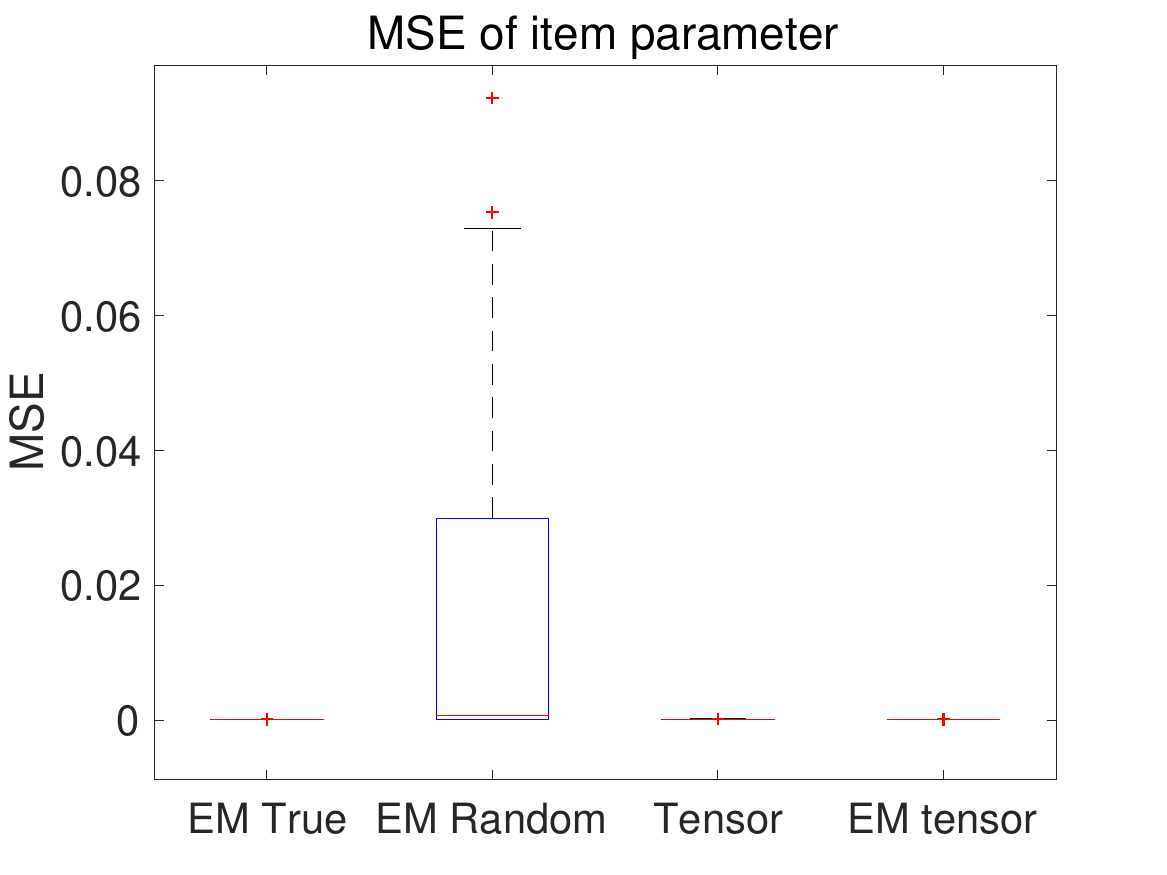}
		\end{minipage}%
	}%
	\subfigure[MSE without EM-random]{
		\begin{minipage}[t]{0.33\linewidth}
			\centering
			\includegraphics[width=2in]{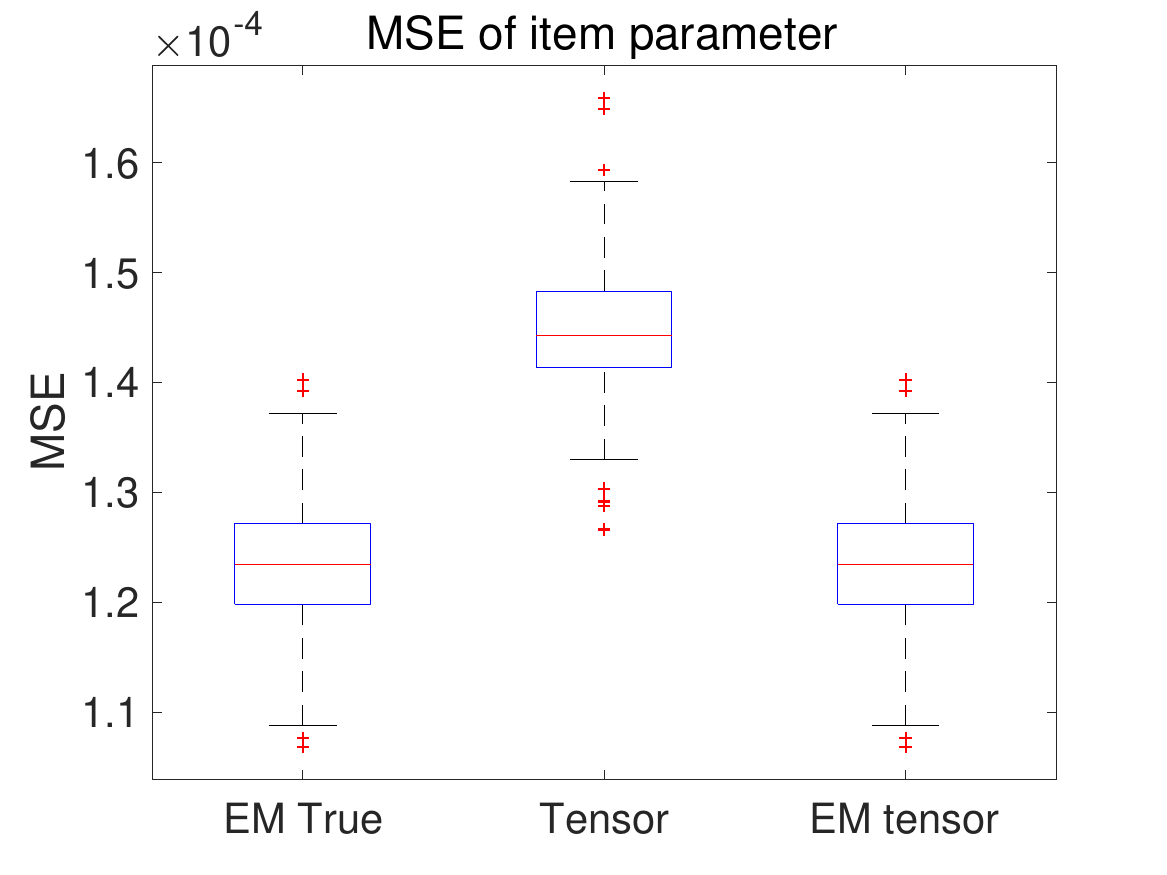}
		\end{minipage}%
	}%
	\subfigure[Running time of the algorithms]{
		\begin{minipage}[t]{0.33\linewidth}
			\centering
			\includegraphics[width=2in]{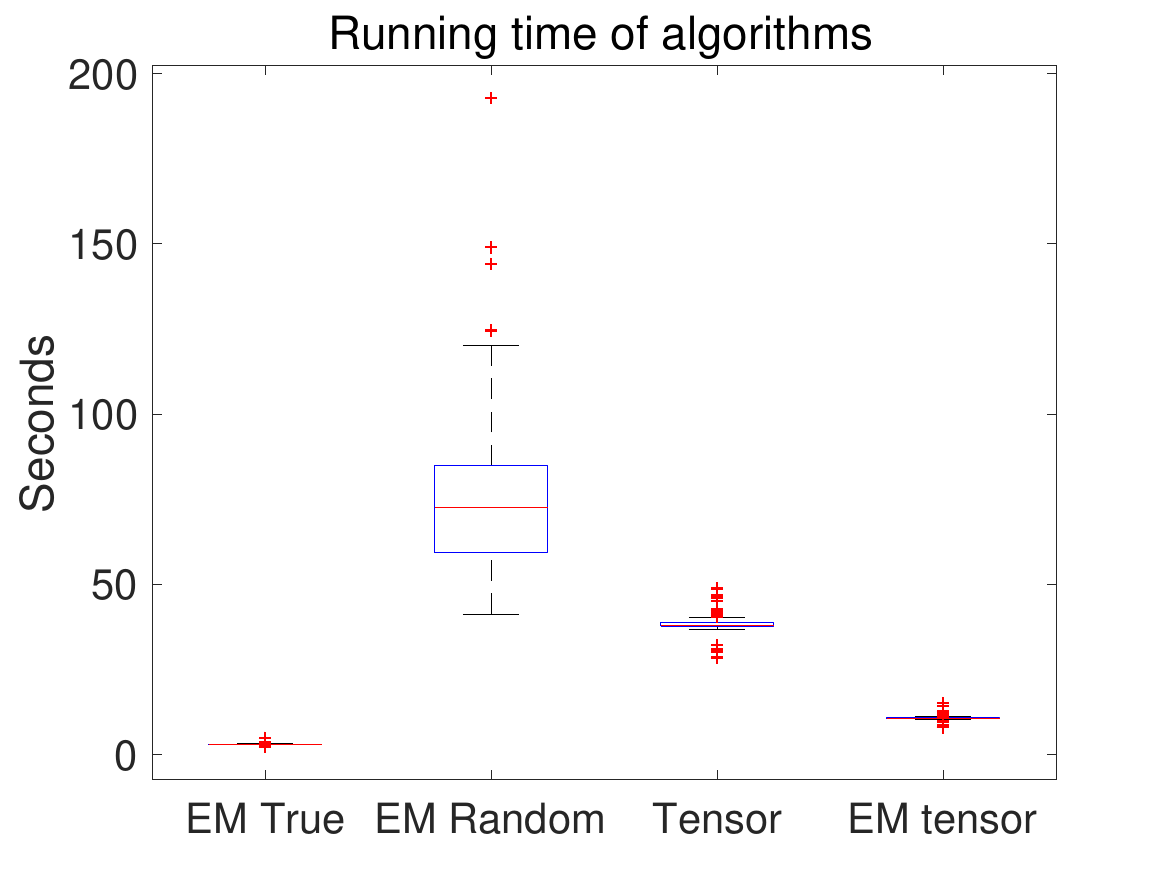}
		\end{minipage}%
	}%
	\centering
	\caption{Fixed-effect LCM, $N = 10000, J= 100, L=10,\theta_{j,a}\in \{0.1,0.2,0.8,0.9\}$}
	\label{fig-fix-big}
\end{figure}

From the boxplots in Figures \ref{fig-rand-small}--\ref{fig-fix-big} and those in Supplementary Material, we can see that for each setting, the MSE of the tensor-EM method is  almost the same as that of the EM-true method. The  EM-random method   sometimes yields local maximizer of log-likelihood function and thus its estimates have a large MSE. The tensor estimates alone have a larger MSE compared with the tensor-EM estimates but are more stable than the EM-random estimates.
Comparing the running time of different algorithms, we can find that the tensor-EM method is computationally efficient, only second to the performance of EM-true with true parameters as initials values. On the other hand, the EM-random method can be computationally intensive because it needs more steps to converge.

The sample size $N$, number of classes $L$ and number of items $J$ all affect the accuracy and running time of the methods examined. As the sample size $N$ increases, the accuracy of the methods is improved while they all need more time. As the number of classes $L$ increases, the MSE of EM-tensor estimates also becomes larger because we have more parameters to estimate. As the number of items $J$ increases, the accuracy of EM-tensor remains comparable to EM-true while the accuracy of tensor method alone is improved. The running time increases as $J, L$ becomes large. When the item parameters are generated in $ \{ 0.1,0.2,0.8,0.9\}$, the signal strength is strong and the estimates have smaller MSE compared with cases where item parameters are generated from $\{0.2,0.4,0.6,0.8\}$ where the signal strength is weaker. We also note that the random-effect and fixed-effect LCMs with the same $N, J, L$ and item parameters share similar orders of MSEs.

To further show the advantage of tensor-EM method over EM-random, we perform more simulations. First, we let them start from the same initial points (under some transformations) and evaluate their estimation accuracy and running time. Recall we need to use second-order moments $\MM_2$ to whiten $\MM_3$ and get orthogonal decomposable tensor $\tilde{\MM}_3$. To ensure they start from the same initial points, we mimic the whitening process and take the following strategy.

We first divide the item parameters $\TT$ into three parts as described in Section \ref{estimation}. Then we randomly generate initial values $\TT_1^0$ for $\TT_1$ from $U(0,1)$ . For EM-random we also need initial values for $\TT_2$ and $\TT_3$. From the relations between $\TT_i$'s in Section 3.2, we set $\TT_2^0 = \hat{\mathbb{E}} [\RR_i^2\otimes \RR_i^3] \hat{\mathbb{E}} [\RR_i^1 \otimes \RR_i^3]^+\TT_1^0$  and $\TT_3^0 = \hat{\mathbb{E}} [\RR_i^3\otimes \RR_i^2] \hat{\mathbb{E}} [\RR_i^1 \otimes \RR_i^2]^+\TT_1^0$ and concatenate them to form $\TT^0$ and feed to EM-random algorithm. For Tensor-EM method, we need to transform $\TT_1^0$. Recall the columns of $\TT_1$ play the role of $\mmu_i$'s in \eqref{moments} and after we define $\tilde{\mmu}_i = \sqrt{w_i}\WW^{\top} \mmu_i$, then $\tilde{\mmu}_i$'s are sets of eigenvectors for an orthogonal decomposable tensor $\tilde{\MM}_3$, which we perform tensor power method on. Now let $\tilde{\mmu}_i^0 = \sqrt{p_i} \WW^{\top} \TT_{1,i}^0$ for $i \in [L]$ and we use $\tilde{\mmu}_i^0$ as the initial value in Algorithm \ref{robust-tp} to obtain the $i$-th eigenvalue/eigenvector pair of the estimated $\MM_3$ from the samples (i.e. we do not perform $K$ random initializations as shown in Algorithm \ref{robust-tp}). After we obtain estimates of $\tilde{\mmu}_i$, we use the relation shown in the last paragraph of Section \ref{sec-whitening} to get the tensor estimates of ($\mmu_i,\TT_{1,i}$) and hence obtain the tensor estimates for $\TT$. We then implement EM algorithms starting from it and evaluate its performance. 

 {We also implement a ``smarter" version of EM-random. The idea is to use a large tolerance to run EM with multiple random starting points and then run EM with a refined tolerance with the solution that gives largest likelihood in the first stage. We first run EM with 10 random initial values for 20 iterations and then find the solution that yields the largest likelihood. Then we run EM starting from that solution until convergence. The above three algorithms together with EM-true are run with 100 replications under settings: $\{N = 1000,10000\} \otimes \{J=100,200\}\otimes \{L = 5,10\}\otimes \{\text{item parameters} \in \{0.1,0.2,0.8,0.9\}$ or $\{0.2,0.4,0.6,0.8\}$\}.  Some results are shown in Figures \ref{fig:moresimu_smallN} and \ref{fig:moresimu_largeN}, where EM-random and EM-tensor use the same starting values as detailed above and EM-random(refined) uses the ``smarter" refined-tolerance version of EM-random. More simulation results can be found in the appendix.}

\begin{figure}[H]
	\centering
	\subfigure[MSE of item parameters]{
		\begin{minipage}[t]{0.45\linewidth}
			\centering
			\includegraphics[width=2in]{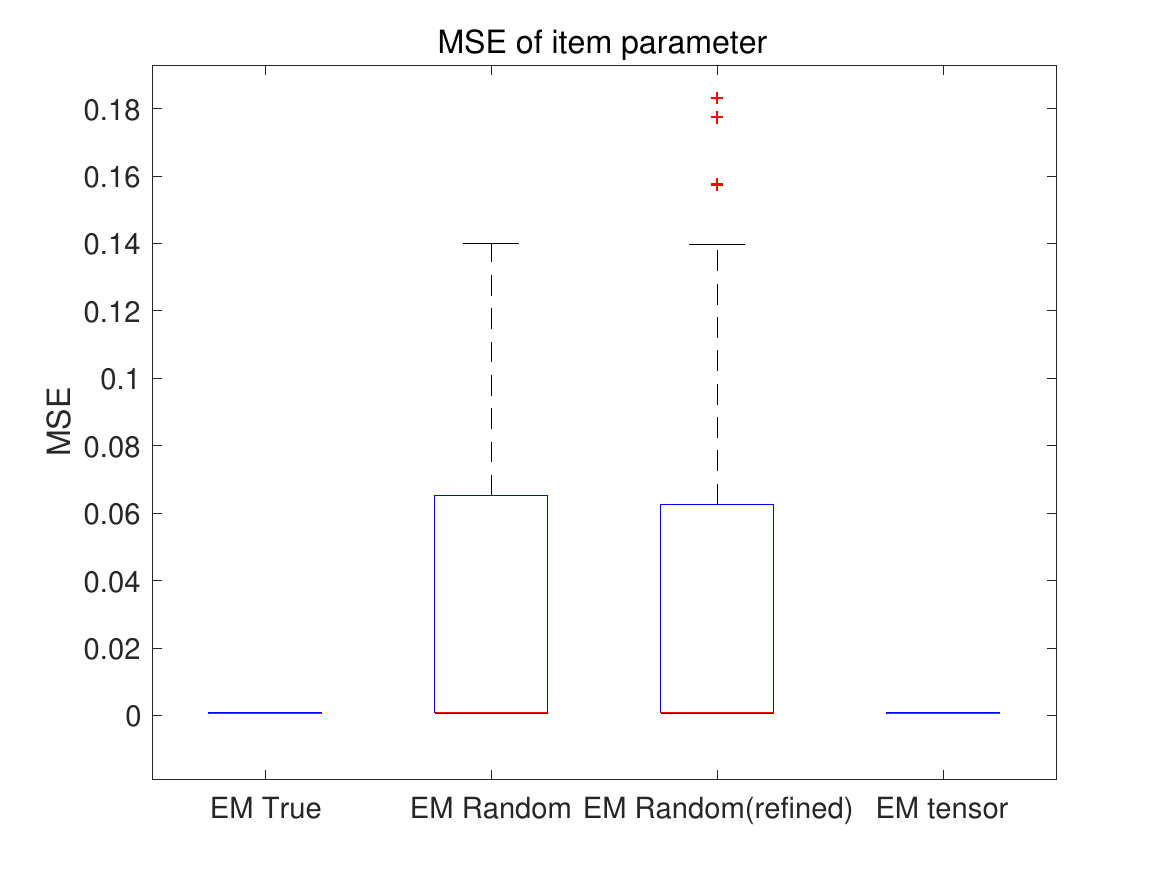}
		\end{minipage}%
	}%
	\subfigure[Running time of the algorithms]{
		\begin{minipage}[t]{0.45\linewidth}
			\centering
			\includegraphics[width=2in]{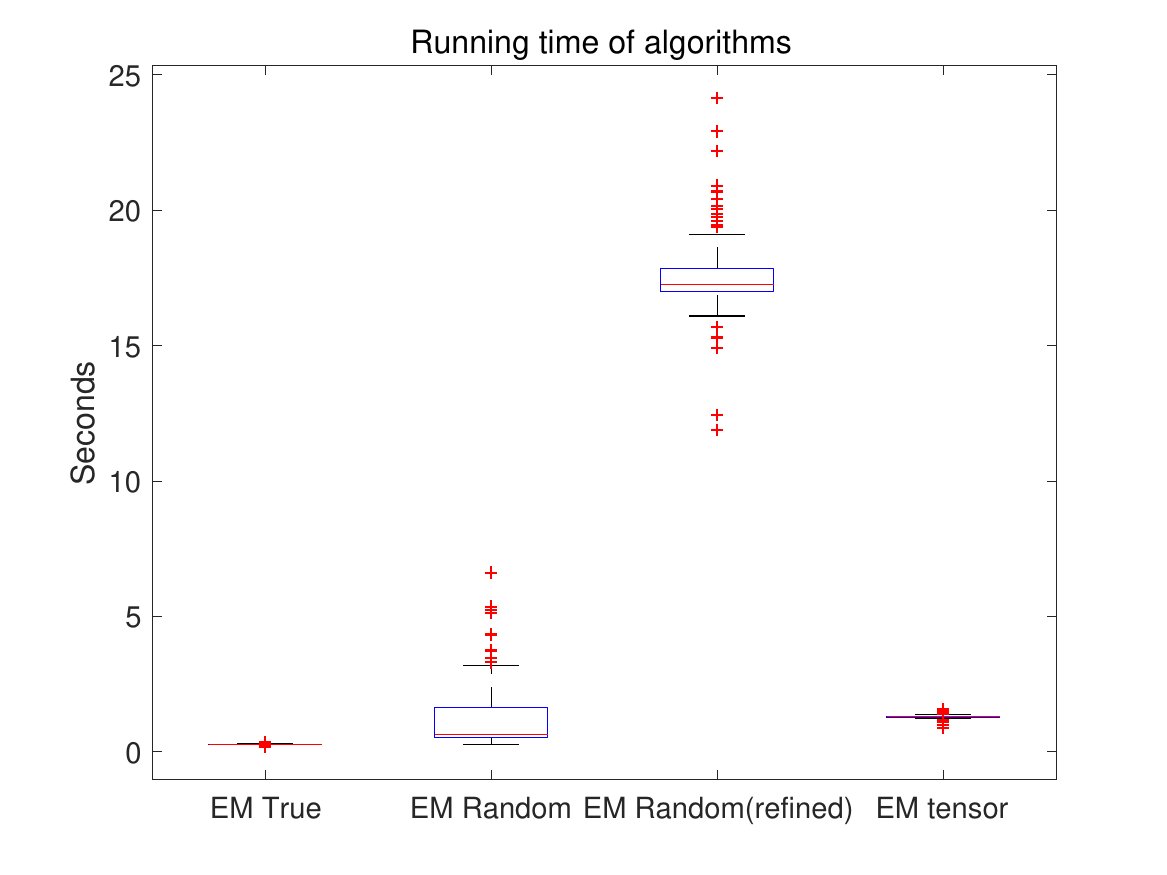}
		\end{minipage}%
	}%
	\centering
	\caption{Random-effect LCM, $N = 1000, J= 100, L=5, \theta_{j,a}\in \{0.1,0.2,0.8,0.9\}$}
	\label{fig:moresimu_smallN}
\end{figure}

\begin{figure}[H]
		\centering
		\subfigure[MSE of item parameters]{
			\begin{minipage}[t]{0.45\linewidth}
				\centering
				\includegraphics[width=2in]{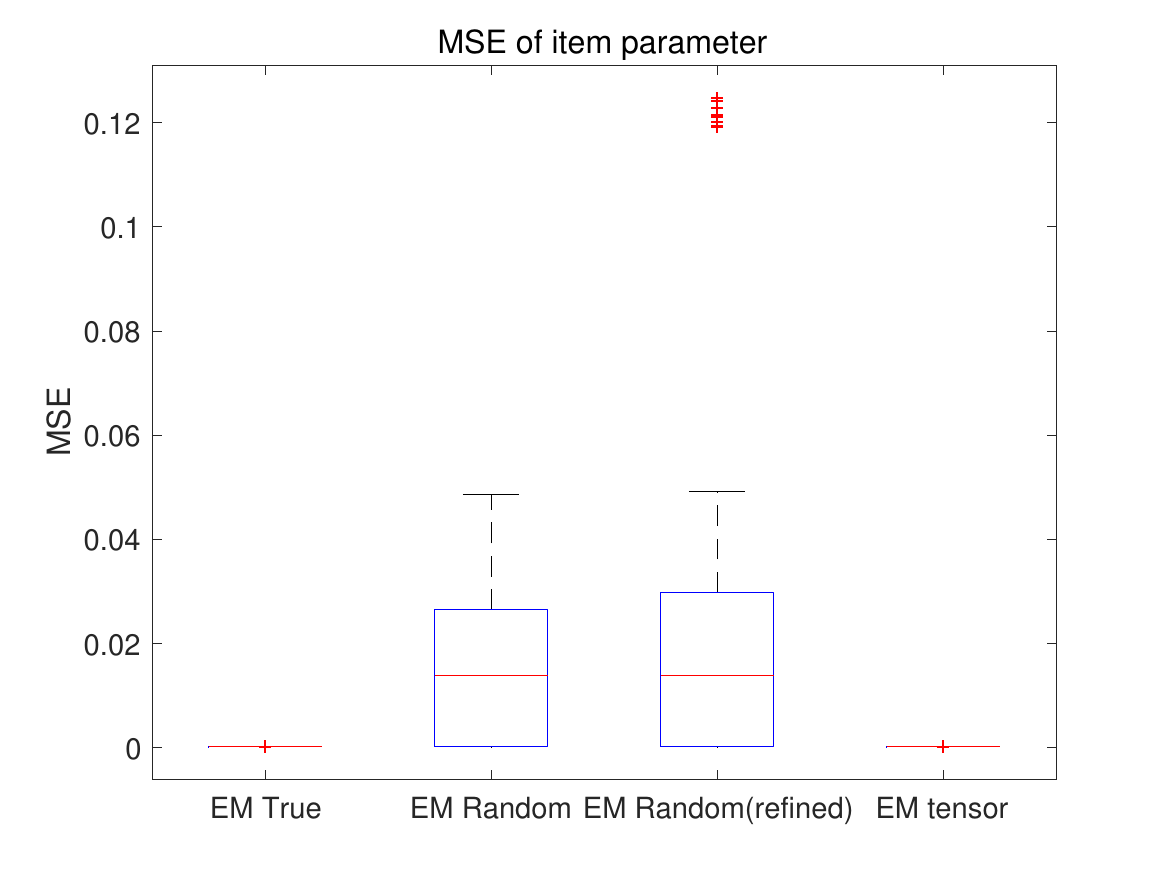}
			\end{minipage}%
		}%
		\subfigure[Running time of the algorithms]{
			\begin{minipage}[t]{0.45\linewidth}
				\centering
				\includegraphics[width=2in]{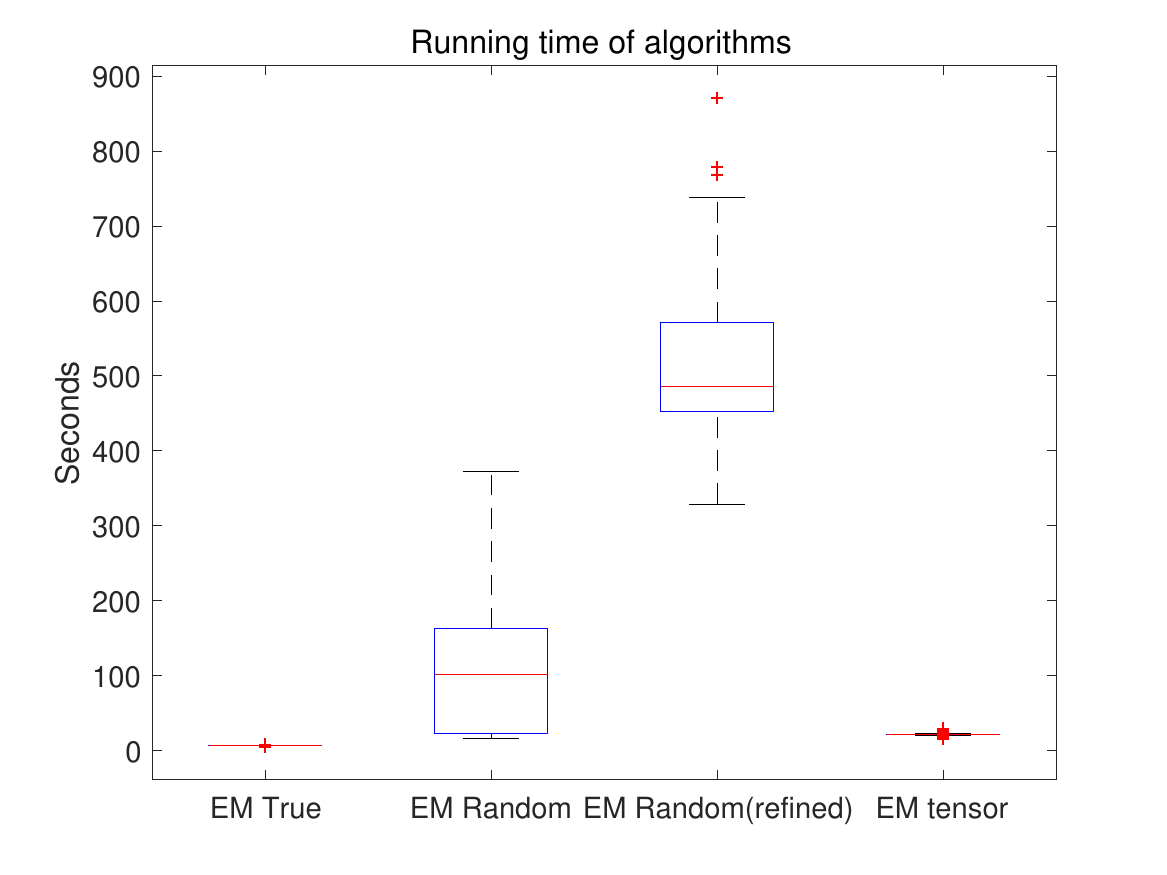}
			\end{minipage}%
		}%
		\centering
		\caption{Random-effect LCM, $N = 10000, J= 200, L=10,$ $\theta_{j,a} \in \{0.2,0.4,0.6,0.8\}$}
		\label{fig:moresimu_largeN}
\end{figure}

 {As shown in Figures \ref{fig:moresimu_smallN} and \ref{fig:moresimu_largeN}, the tensor-EM method has similar MSE with EM-true and outperforms the two EM-random algorithms. Comparing EM-random with tensor-EM, we see the tensor-EM has smaller MSE when they use the same initialization, indicating the better performance of tensor-EM. Comparing EM-random using refined tolerance with tensor-EM, the tensor-EM can yield more accurate results efficiently. One possible explanation is that the EM-random method using refined tolerance is a greedy method since it chooses the best solution only based on the first few iterations, which may not be very reliable.  }

\subsection{Performance of tensor-EM method under local dependence}\label{simu-local-dependence}

 {We further investigate the performance of EM algorithms under local dependence. The data generating process is as follows: After we generate $z_i$ from the proportion vector $\pp$, given $z_i = l$, we first obtain one sample $\XX_i$ from $J$-dimensional multivariate normal distribution $\mathcal{N}(\mathbf{0}, \mathbf{\Sigma}_\rho)$, where $\mathbf{\Sigma}_\rho$ is the covariance matrix of an auto-regression model, i.e. $[\mathbf{\Sigma}_\rho]_{i,j} = \rho^{|i-j|}$ for some $0 < \rho < 1$. Then we let $R_{i,j} = I(X_{i,j} < q(\theta_{j,l}))$ where $q(\alpha)$ is the $\alpha$-quantile of standard normal distribution. This guarantees marginally we have $\MP(R_{i,j}=1 | z_i=l) = \theta_{j,l}$ but conditioning on $z_i$, the components of $\RR_i$ are now correlated. The value of $\rho$ controls the extent to which they are correlated. $\rho=0$ corresponds to the conditional independent case. We run EM-random, tensor-EM and EM-true under the following settings: $\{N = 1000\} \otimes \{\rho=0.3, 0.7 \}\otimes \{J=100,200\}\otimes \{L = 5,10\}\otimes \{\text{item parameters} \in \{0.1,0.2,0.8,0.9\}$ or $\{0.2,0.4,0.6,0.8\}$\}. These three algorithms are run in the same way as what we did in Figures \ref{fig-rand-small}--\ref{fig-fix-big}. Some results are presented in Figures \ref{fig:localdepend1} and \ref{fig:localdepend2} and more results can be found in appendix.}

\begin{figure}[H]
	\centering
	\subfigure[MSE of item parameters $\rho=0.3$]{
		\begin{minipage}[t]{0.45\linewidth}
			\centering
			\includegraphics[width=2in]{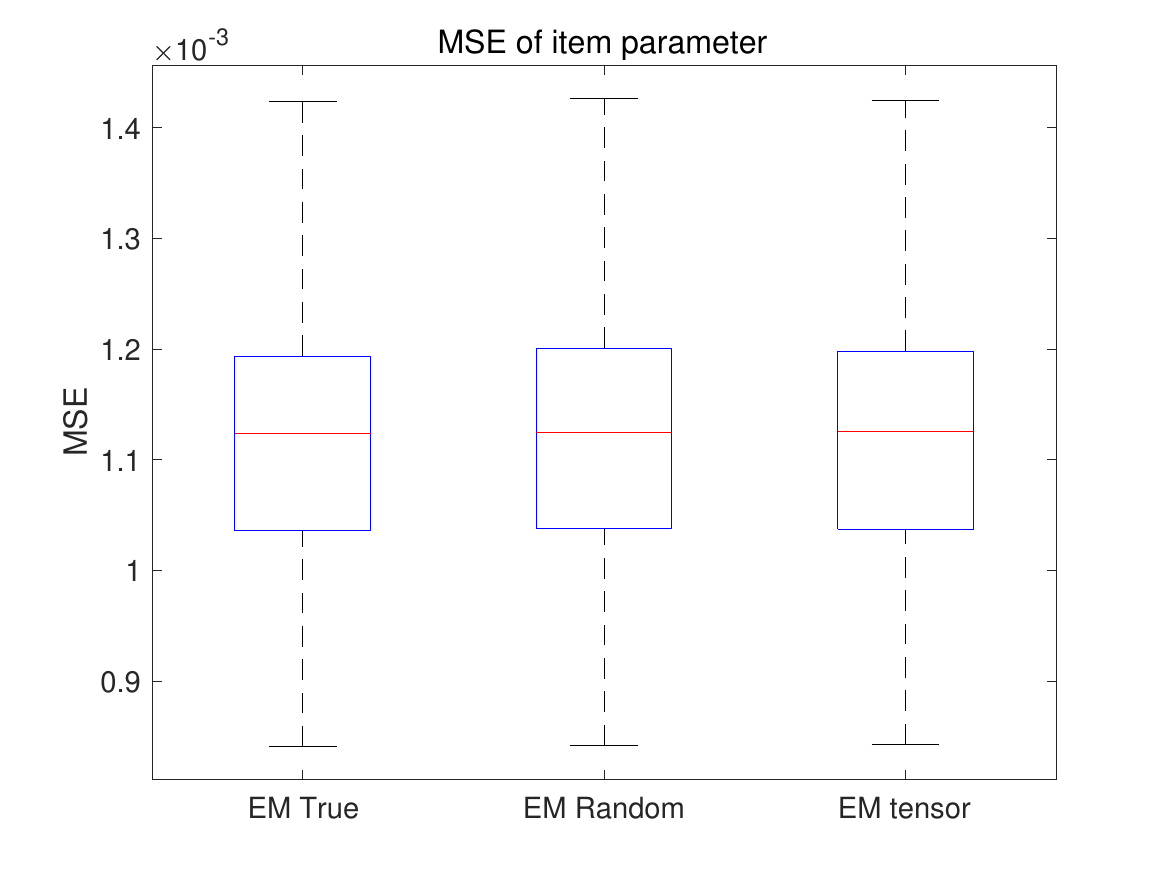}
		\end{minipage}%
	}%
	\subfigure[MSE of item parameters $\rho=0.7$]{
		\begin{minipage}[t]{0.45\linewidth}
			\centering
			\includegraphics[width=2in]{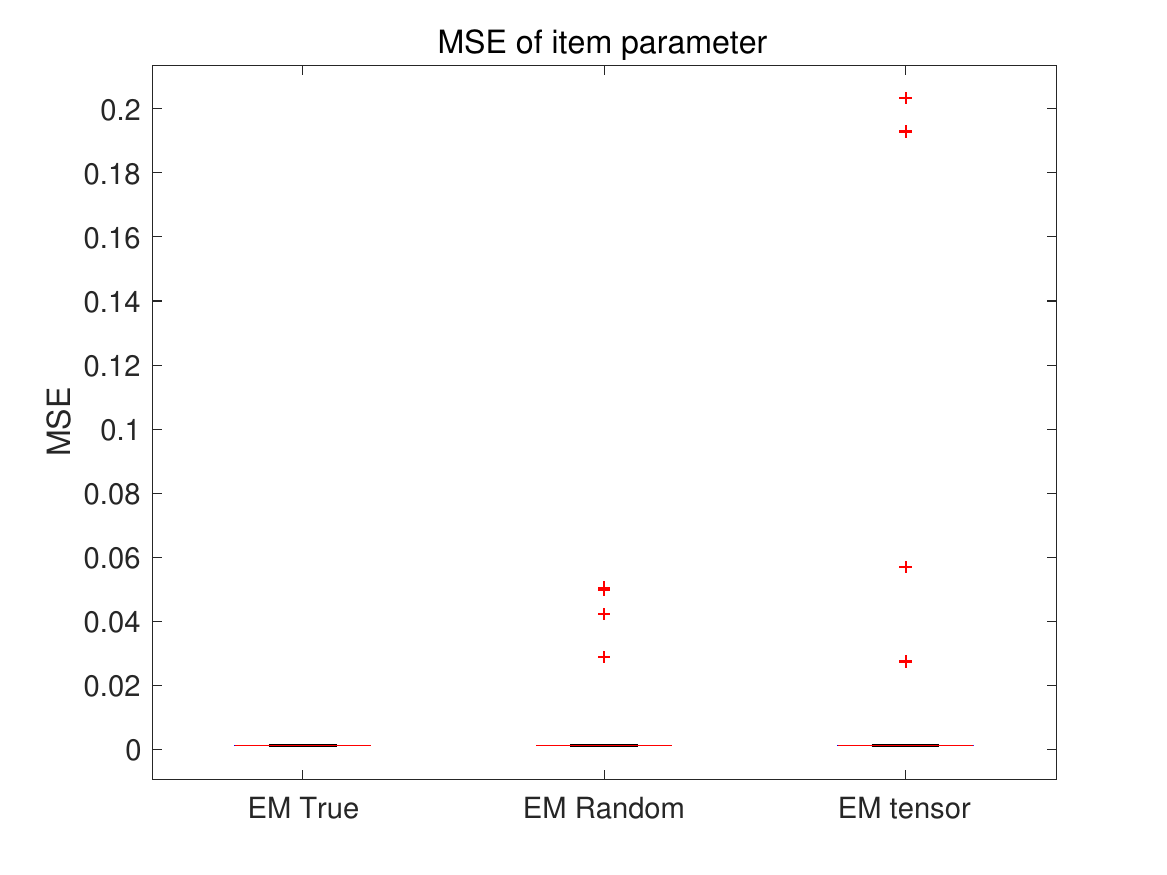}
		\end{minipage}%
	}%
	\centering
	\caption{Random-effect LCM, $N = 1000, J= 100, L=5, \theta_{j,a}\in \{0.2,0.4,0.6,0.8\}$}
	\label{fig:localdepend1}
\end{figure}

\begin{figure}[H]
	\centering
	\subfigure[MSE of item parameters $\rho=0.3$]{
		\begin{minipage}[t]{0.45\linewidth}
			\centering
			\includegraphics[width=2in]{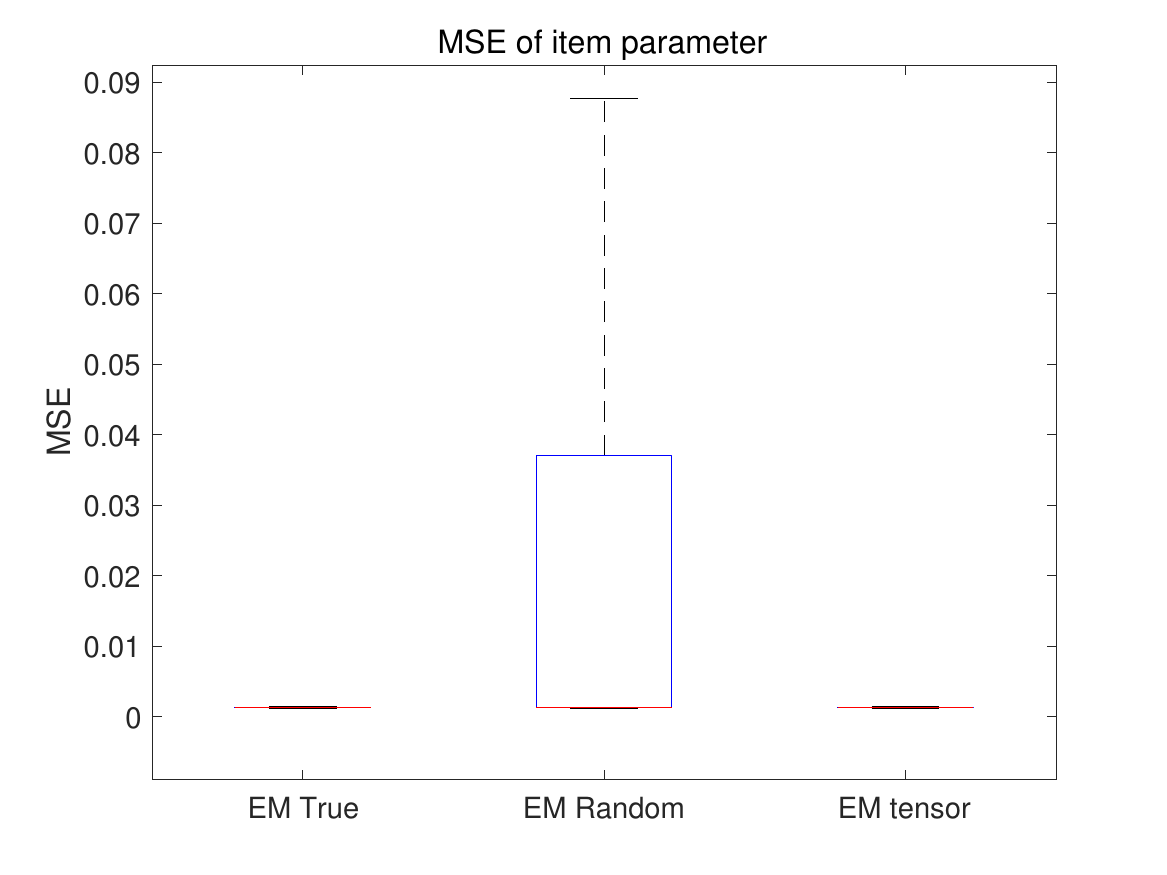}
		\end{minipage}%
	}%
	\subfigure[MSE of item parameters $\rho=0.7$]{
		\begin{minipage}[t]{0.45\linewidth}
			\centering
			\includegraphics[width=2in]{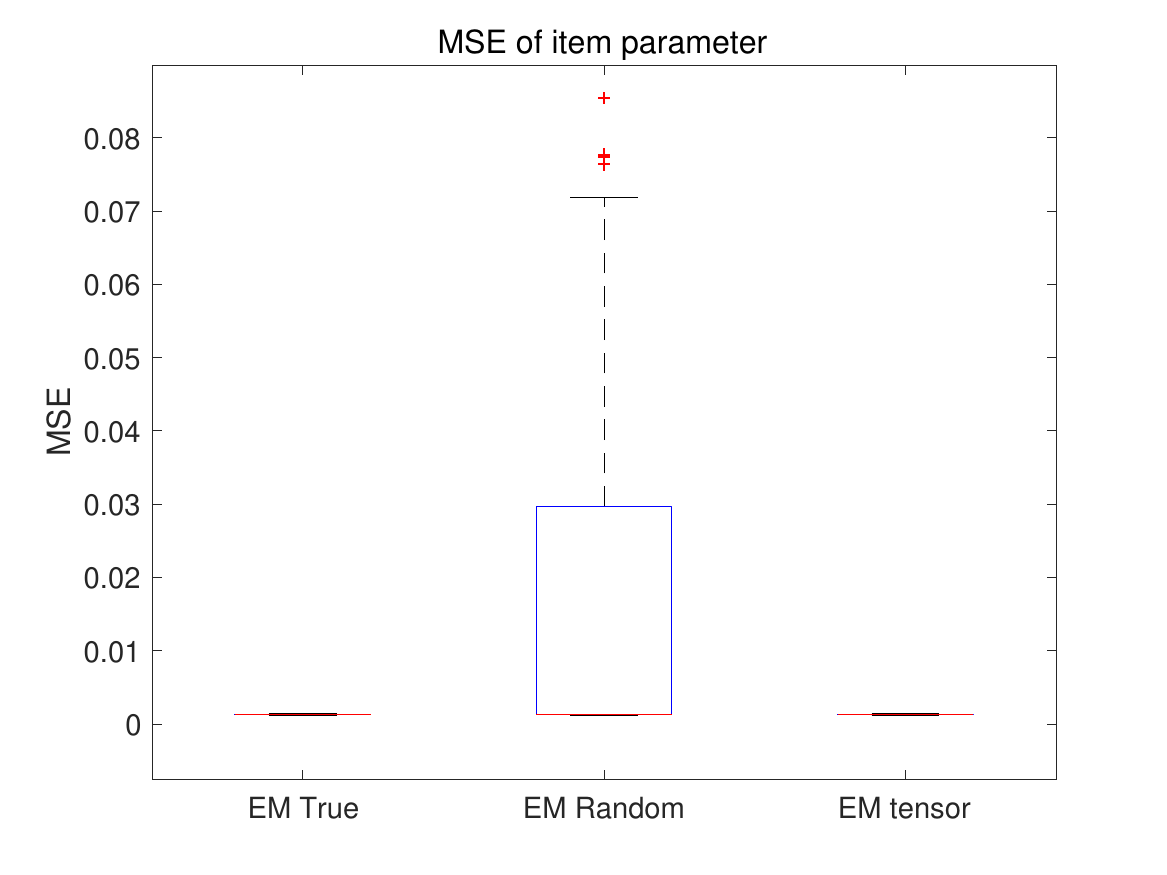}
		\end{minipage}%
	}%
	\centering
	\caption{Random-effect LCM, $N = 1000, J= 100, L=10, \theta_{j,a}\in \{0.1,0.2,0.8,0.9\}$}
	\label{fig:localdepend2}
\end{figure}

 {We note that in the local dependent setting, the log-likelihood is no longer valid and hence there is no guarantee on the EM-true method, which is also based on the log-likelihood. However, EM-true can still yield accurate estimates. The proposed tensor-EM method has similar performance with EM-true and is also robust against violations of local independence. In contrast, the EM-random method may not work well under local dependence settings.}

\subsection{Verification of Clustering Consistency}\label{simu-cluster}

In this subsection, we empirically verify Theorem \ref{thm-joint} in fixed-effect LCMs with diverging $N$ and $J$. Specifically, we consider fixed-effect LCM with $L = 5$ classes. We let $J$ increase from 30 to 100 by 10 and set $N = 10J$ in all the simulations. {The only purpose of setting $N=10J$ is that we can visualize the error rate $N_e(\hat{\zz})/N$ as a function of $J$ in a plot to see the trend.} Item parameters are generated uniformly from either $\{0.1,0.2,0.8,0.9\}$ or $\{0.2,0.4,0.6,0.8\}$ and latent class assignments $\zz$ are sampled uniformly over $[L]$. After item parameters and latent class assignments are generated, we generate response $\RR$ accordingly. 
Since we have shown that tensor-EM method has good performance in Section \ref{simu-EM-tensor}, we then apply tensor-EM method to obtain the joint MLE $(\hat{\TT}, \hat{\ZZ})$. 
The error of the estimated latent class assignments $\hat{\ZZ}$ is evaluated and the number of incorrect assignments $N_e(\hat{\zz})$ defined in \eqref{error} is computed for each replication. This process of generating $\RR$, estimating $\ZZ$ and evaluating error is replicated 100 times and the boxplots of error rate $N_e(\hat{\zz})/N$ are shown in Figure \ref{error-rate}. According to these plots, the error rate of the estimated latent class membership $\zz$ decays to zero as $N, J$ increases. Again in the strong signal setting where $\theta_{j,a} \in \{0.1,0.2,0.8,0.9 \}$ the error rate converges to zero faster.

\begin{figure}[H]
	\centering
	\subfigure[Item parameters $\in \{0.1, 0.2, 0.8, 0.9\}$]{
		\begin{minipage}[t]{0.45\linewidth}
			\centering
			\includegraphics[width=\textwidth]{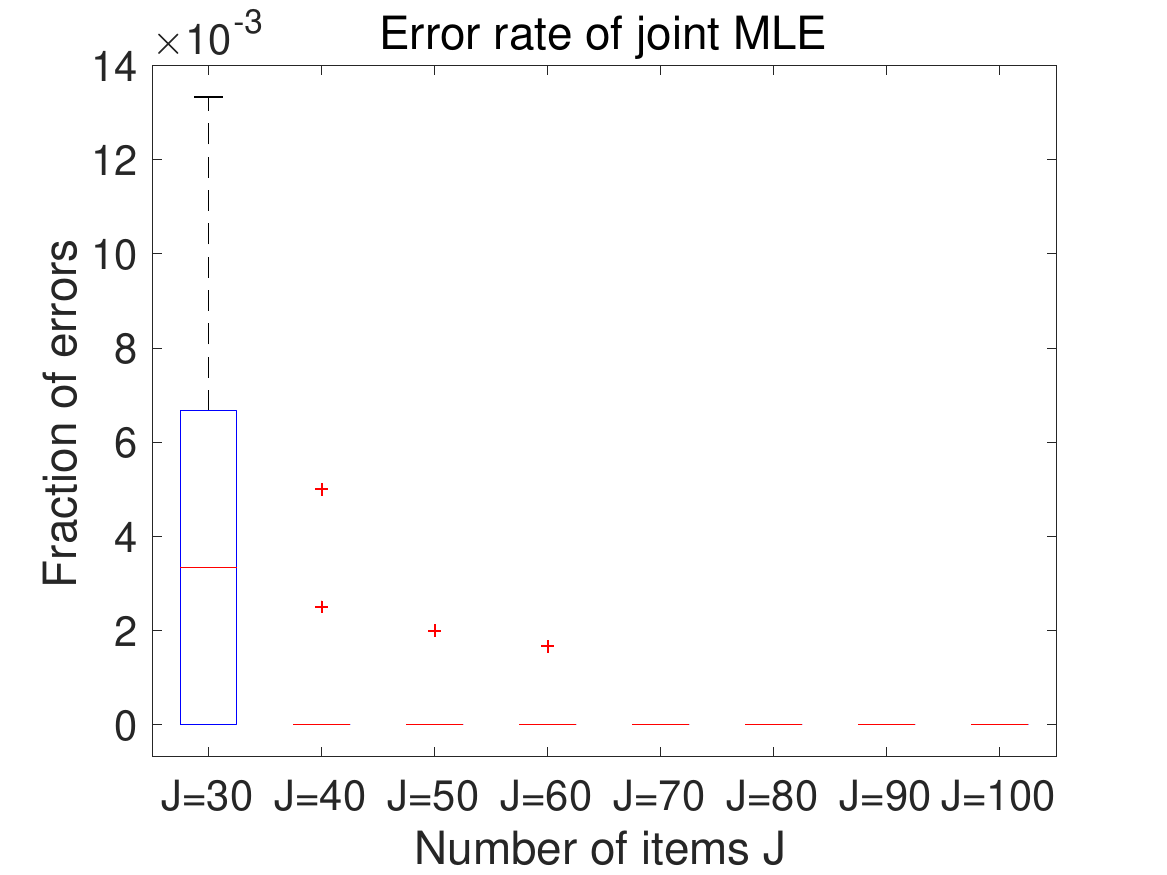}
		\end{minipage}%
	}%
	\subfigure[Item parameters $\in \{0.2, 0.4, 0.6, 0.8\}$]{
		\begin{minipage}[t]{0.45\linewidth}
			\centering
			\includegraphics[width=\textwidth]{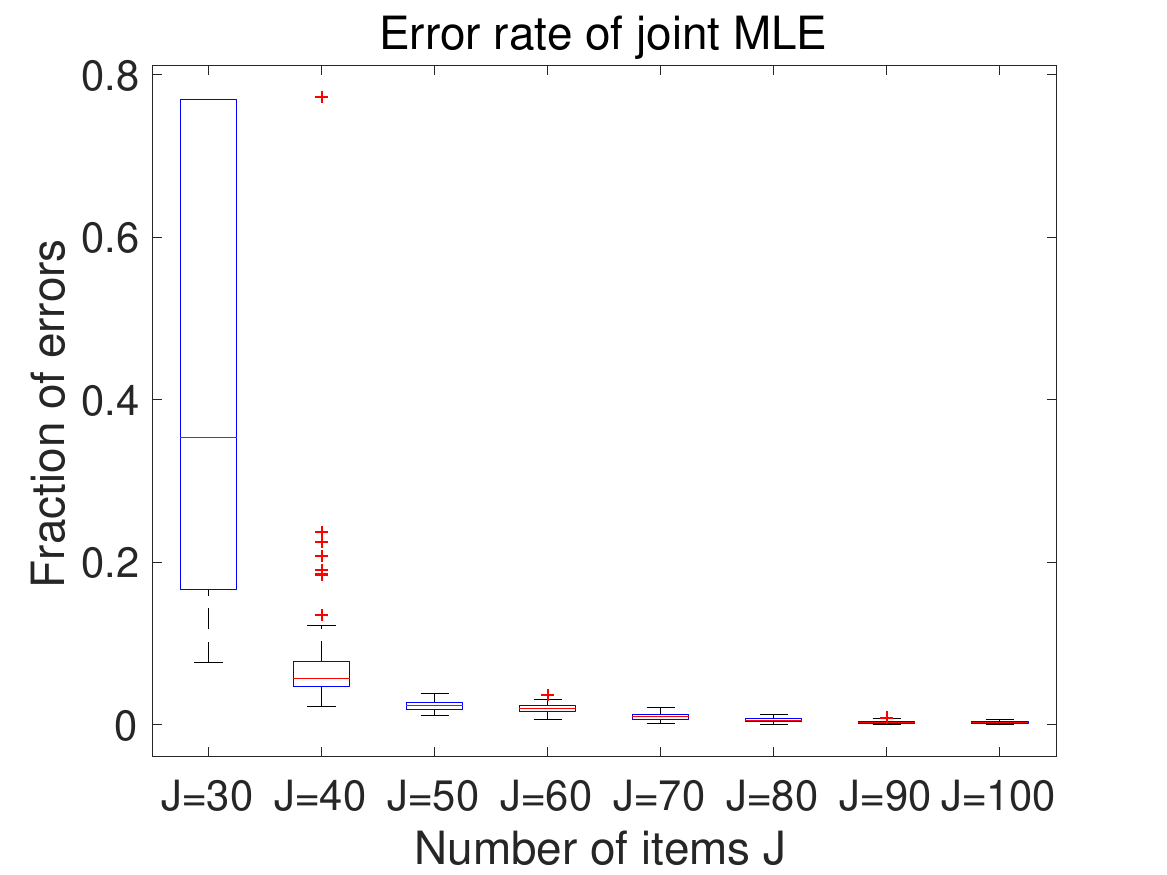}
		\end{minipage}%
	}%
	\centering
	\caption{Error rate of joint MLE in latent class assignments versus number of items $J$}
	\label{error-rate}
\end{figure}

We also compare the clustering performance of the proposed tensor-EM method with the performance of several other commonly used clustering algorithms. To be concrete, we consider the following clustering algorithms.
\begin{enumerate}
    \item Max linkage clustering. In our simulation, max linkage is found to have  a better performance compared with single linkage and average linkage, and hence here we only present the results of the max linkage clustering.
    \item K-medoids. Considering the binary response in our setting, Hamming distance is applied as a metric. For each replication, the algorithm is repeated ten times with different initial cluster centroid positions and the best is chosen as the final result.
    \item K-means. Similarly to K-medoids,  Hamming distance is used and for each replication, the algorithm is performed with ten   different initial values.
    \item Spectral clustering with normalization. Hamming distance is used to compute the similarity between data points.
\end{enumerate}
We use functions from {\it Statistics and Machine Learning Toolbox} in Matlab to implement the first three methods. Spectral clustering is implemented using the normalized random-walk Laplacian matrix (see \cite{shi2000normalized} for details). Under the same settings as in  Figure \ref{error-rate}, we generate samples from LCM and apply the above algorithms and tensor-EM to cluster the data. 
The average error rates of all the algorithms over 100 replications are computed for each $J$. The trend of error rate versus the number of items $J$ is shown in Figure \ref{compare-cluster}. We can see all the algorithms have small error rates as $J$ becomes large and the tensor-EM method has the best performance when $J$ is moderately large. One possible explanation is that the tensor-EM method is more tailored for LCMs  and may have advantage over other methods in the LCM setting.

\begin{figure}[H]
	\centering
	\subfigure[Item parameters $\in \{0.1, 0.2, 0.8, 0.9\}$]{
		\begin{minipage}[t]{0.45\linewidth}
			\centering
			\includegraphics[width=\textwidth]{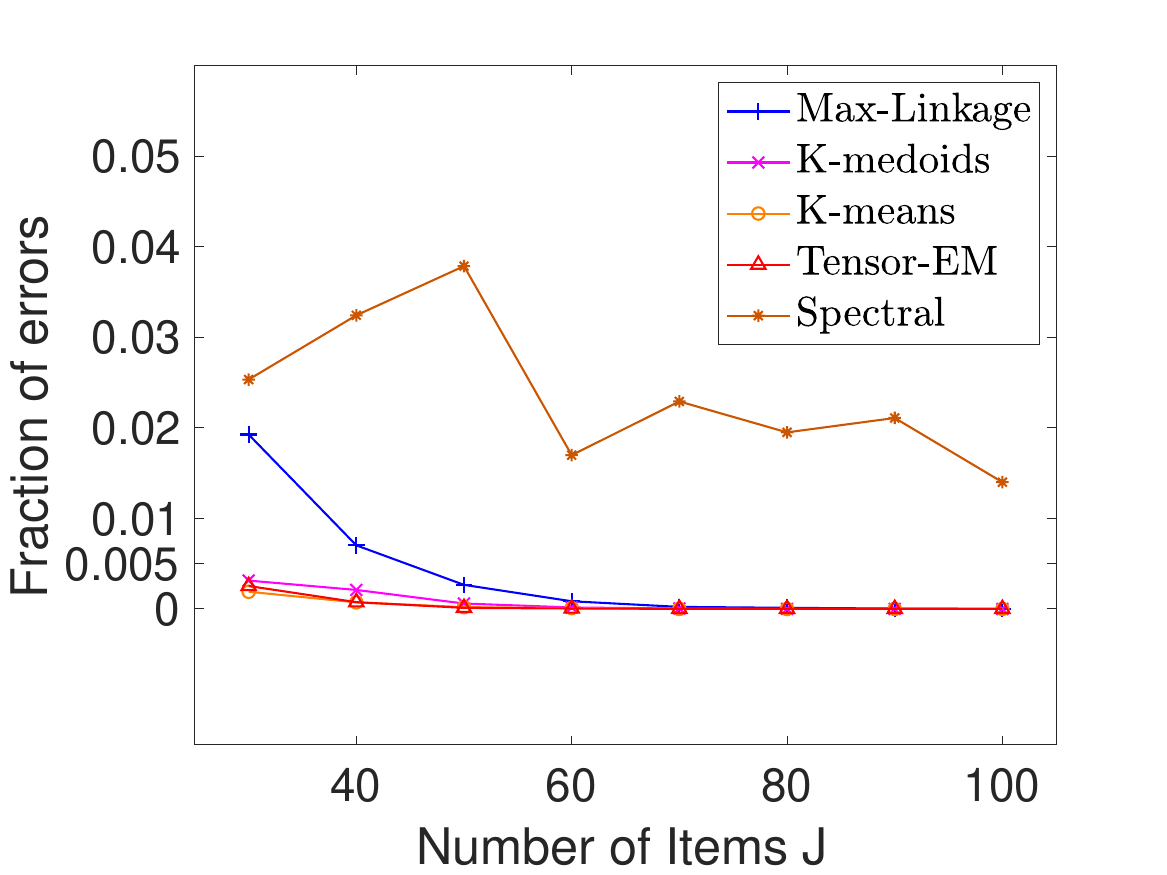}
		\end{minipage}%
	}%
	\subfigure[Item parameters $\in \{0.2, 0.4, 0.6, 0.8\}$]{
		\begin{minipage}[t]{0.45\linewidth}
			\centering
			\includegraphics[width=\textwidth]{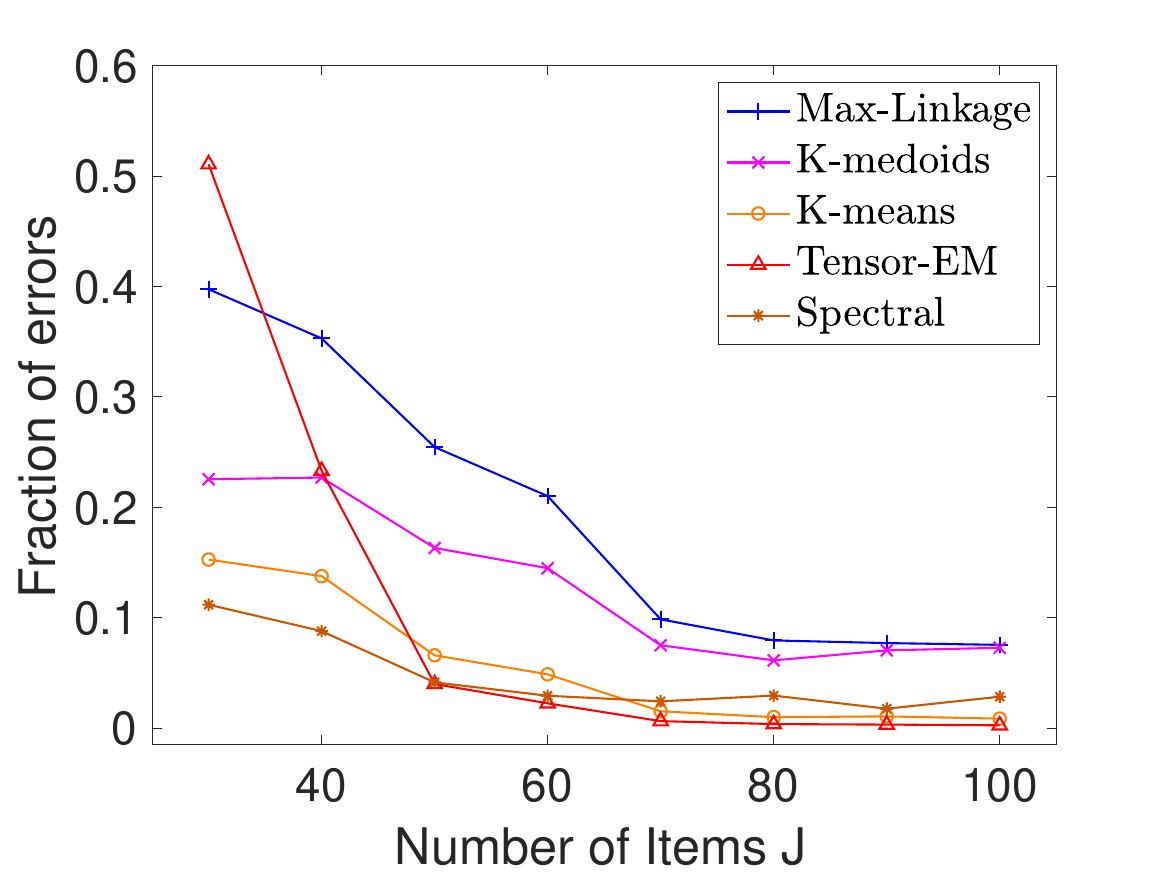}
		\end{minipage}%
	}%
	\centering
	\caption{Error rate of different clustering algorithms}
	\label{compare-cluster}
\end{figure}

\subsection{Performance of GIC in Selecting Number of Classes}\label{simu-GIC}

We consider the accuracy of GIC in selecting $L$, which needs to be estimated in practice.
The settings we consider are the same as in Section \ref{simu-EM-tensor}. The performance of GIC to select the number of classes in the random-effect LCM and the fixed-effect LCM are reported in Table \ref{tab:GIC}. For settings with the true number of classes $L$ being five, we let the candidate set of $L$ be $\{2,3,4,5,6,7\}$; while for settings with the true $L$ being ten, we let the candidate set of $L$ be $\{7,8,9,10,11,12\}$. We see that for all these settings $\mathrm{GIC}_1$ can always select the correct number of classes. And for most of the settings $\mathrm{GIC}_2$ can choose the right model. The only setting where $\mathrm{GIC}_2$ performs not so well is random-effect LCM with $N = 1000, J=100, L=10,\theta_{j,a} \in \{ 0.2,0.4,0.6,0.8\}$. In general, both $\mathrm{GIC}_1$ and $\mathrm{GIC}_2$ enjoy desirable performance in selecting the correct number of classes.

\renewcommand\arraystretch{0.5}
\begin{table}[h!]
    \centering
    \linespread{3}
    \begin{tabular}{@{\extracolsep{4pt}}cccccccc}
    \toprule
    \multirow{2}{*}{Signal strength} & \multirow{2}{*}{$N$} & \multirow{2}{*}{$J$} & \multirow{2}{*}{$L$} & \multicolumn{2}{c}{Random-effect} & \multicolumn{2}{c}{Fixed-effect} \\ 
    \cmidrule{5-6} 
    \cmidrule{7-8} 
    & & & & $\mathrm{GIC}_1$ & $\mathrm{GIC}_2$ & $\mathrm{GIC}_1$ & $\mathrm{GIC}_1$ \\
    \midrule
    \multirow{12}{*}[-20pt]{$\theta_{j,a} \in \{ 0.1,0.2,0.8,0.9\} $} & \multirow{4}{*}{1000} &  \multirow{2}{*}{100} & 5 & 1.00 & 1.00 & 1.00 & 1.00\\
 	\cmidrule{4-8}
 	&  & & 10 & 1.00 & 1.00 & 1.00 & 1.00\\
 	\cmidrule{3-8}
 	 &  & \multirow{2}{*}{200} & 5 & 1.00 & 1.00 & 1.00 & 1.00\\
 	\cmidrule{4-8}
	&  &  & 10 & 1.00 & 1.00 & 1.00 & 1.00\\
    \cmidrule{2-8}
 	& \multirow{4}{*}{10000} & \multirow{2}{*}{100} & 5 & 1.00 & 1.00 & 1.00 & 1.00\\
   \cmidrule{4-8}
 	&  &  & 10 & 1.00 & 1.00 & 1.00 & 1.00\\
    \cmidrule{3-8}
 	&  & \multirow{2}{*}{200} & 5 & 1.00 & 1.00 & 1.00 & 1.00\\
 	\cmidrule{4-8}
 	&  & & 10 & 1.00 & 1.00 & 1.00 & 1.00\\
 	\cmidrule{2-8}
 	& \multirow{4}{*}{20000} & \multirow{2}{*}{100} & 5 & 1.00 & 1.00 & 1.00 & 1.00\\
	 \cmidrule{4-8}
 	&  &  & 10 & 1.00 & 1.00 & 1.00 & 1.00\\
 	\cmidrule{3-8}
 	& & \multirow{2}{*}{200} & 5 & 1.00 & 1.00 & 1.00 & 1.00\\
 	\cmidrule{4-8}
 	&  & & 10 & 1.00 & 1.00 & 1.00 & 1.00\\
 	\midrule
 	\multirow{12}{*}[-20pt]{$\theta_{j,a} \in \{ 0.2,0.4,0.6,0.8\} $} &  \multirow{4}{*}{1000} & \multirow{2}{*}{100} & 5 & 1.00 & 1.00 & 1.00 & 1.00\\
 	\cmidrule{4-8}
  	&  &  & 10 & 1.00 & 0.77 & 1.00 & 1.00\\
 	\cmidrule{3-8}
 	&  & \multirow{2}{*}{200} & 5 & 1.00 & 1.00 & 1.00 & 1.00\\
 	\cmidrule{4-8}
 	&  &  & 10 & 1.00 & 0.99 & 1.00 & 1.00\\
 	\cmidrule{2-8}
 	& \multirow{4}{*}{10000} & \multirow{2}{*}{100} & 5 & 1.00 & 1.00 & 1.00 & 1.00\\
		 \cmidrule{4-8}
		&  &  & 10 & 1.00 & 1.00 & 1.00 & 1.00\\
		\cmidrule{3-8}
		 &  & \multirow{2}{*}{200} & 5 & 1.00 & 1.00 & 1.00 & 1.00\\
		 \cmidrule{4-8}
		 & & & 10 & 1.00 & 1.00 & 1.00 & 1.00\\
		 \cmidrule{2-8}
		 & \multirow{4}{*}{20000} & \multirow{2}{*}{100} & 5 & 1.00 & 1.00 & 1.00 & 1.00\\
		 \cmidrule{4-8}
		 &  & & 10 & 1.00 & 1.00 & 1.00 & 1.00\\
		 \cmidrule{3-8}
		 &  & \multirow{2}{*}{200} & 5 & 1.00 & 1.00 & 1.00 & 1.00\\
		 \cmidrule{4-8}
		 &  &  & 10 & 1.00 & 1.00 & 1.00 & 1.00\\
    \bottomrule
    
    \end{tabular}
    \caption{The fraction of correctly selecting the number of classes}
    \label{tab:GIC}
\end{table}

\section{Real Data Analysis}\label{real-data}

In this section we apply the proposed method to real data from Trends in International Mathematics and Science Study (TIMSS). We use a subset of TIMSS 2011 Austrian data \citep{george2015cognitive,sedat2015diagnostic} in R package \textbf{CDM}. 47 items are available to measure students’ abilities in 9 mathematical sub-competences, including  (DA) Data and Applying, (DK) Data and Knowing, (DR) Data and Reasoning, (GA) Geometry and Applying, (GK) Geometry and Knowing, (GR) Geometry and Reasoning, (NA) Numbers and Applying, (NK) Numbers and Knowing, and (NR) Numbers and Reasoning. The first Q-matrix in the R package \textbf{CDM} indicates the relations on the items and sub-competences measured, which are summarized in Table \ref{tab:comptence-item}. 

\renewcommand\arraystretch{0.6}
\begin{table}[h!]
    \centering
    \linespread{3}
    \begin{tabular}{@{\extracolsep{4pt}}c|c}
     Sub-competences    & Item index that measures the sub-competences \\
     \hline
     DA    &  46,47\\
     \hline
     DK   & 20,34 \\
     \hline
     DR & 21,35 \\
     \hline
     GA & 17,18,30,31,32,42,44 \\
     \hline
     GK & 7,8,16,19,28,29,43 \\
     \hline
     GR & 33,45\\
     \hline
     NA & 1,6,10,15,23,24,37,38,40\\
     \hline
     NK & 11,14,22,25,26,27,36 \\
     \hline
     NR & 2,3,4,5,9,12,13,39,41 
    \end{tabular}
    \caption{Relations between sub-competences and items}
    \label{tab:comptence-item}
\end{table}

One feature of  large scale education assessment data is that we only have response on a subset of items for each student. Here $48 \%$ of the components in the response matrix $\RR$ are missing. Under the missing at random (MAR) assumption, we use the multiple imputation (MI) to obtain five complete datasets. Same analysis is performed on each of the dataset and the final results (GIC and item parameters) presented are the average of results obtained from the five datasets. This average pooling strategy is often used in analyzing missing data with MI to account for the randomness of MI. We refer to \cite{gelman2006data} and \cite{van2011mice} for more details on missing data and MI.

After completing the data with MI, we fit a random-effect LCM on each of the dataset. We apply the GIC method to selecting the number of latent classes. According to Figure \ref{fig:GIC_timss}, $L=3$ and $L=6$ are plausible options and we fit two LCMs with $L=3$ and $L=6$. In the case $L=6$, one component in the estimated proportion vector $\hat{\pp}$ is fairly small (smaller than 0.005). The corresponding latent class can be neglected for better interpretability and parsimony. So we instead fit two models with $L=3$ and $L=5$. The estimated proportion vectors are $\hat{\pp}_3 = (0.25, 0.46, 0.29)$ and $\hat{\pp}_5 = (0.29, 0.37, 0.2, 0.02, 0.12)$ and the item parameters $\hat{\TT}$ are visualized in Figure \ref{fig:timss_item}.

\begin{figure}
    \centering
    \includegraphics[width=2.7in]{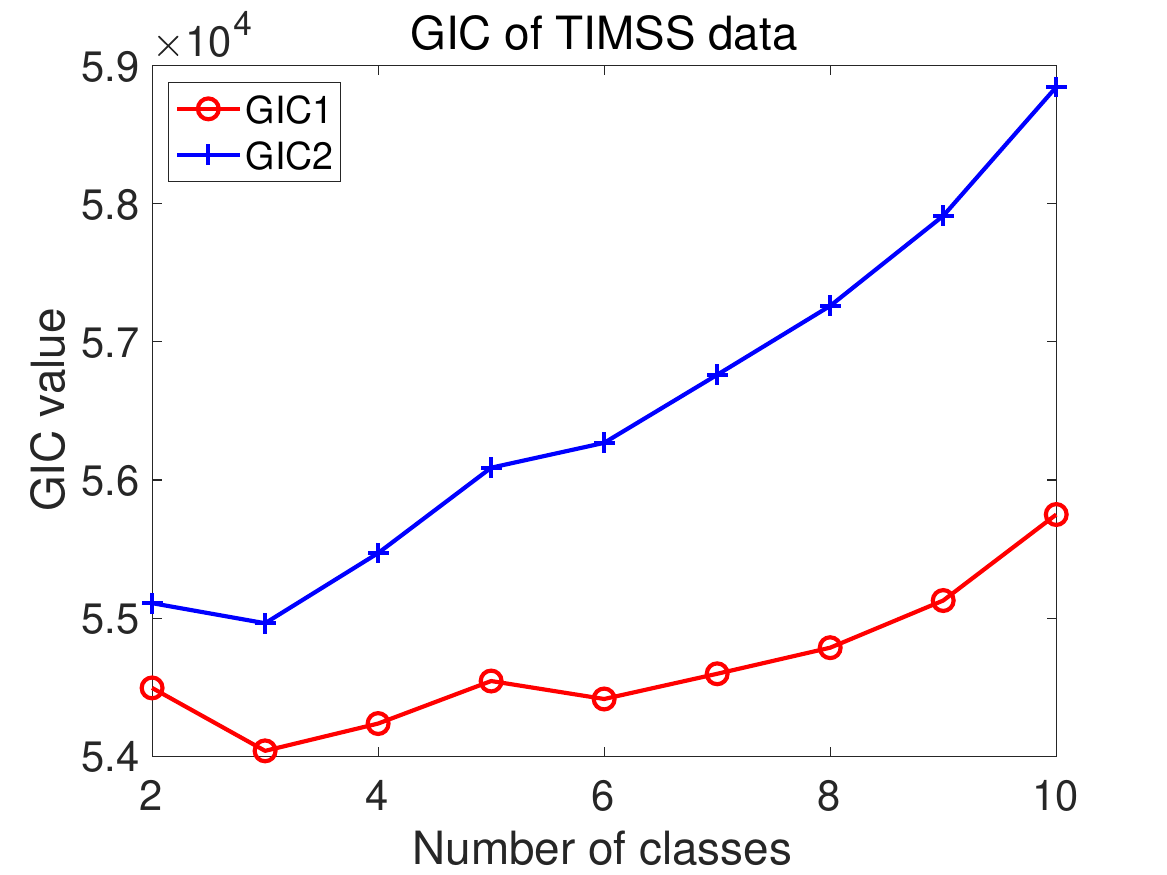}
    \caption{GIC of models with different number of classes}
    \label{fig:GIC_timss}
\end{figure}

\begin{figure}[H]
	\centering
	\subfigure[$\hat{\TT}_3$]{
		\begin{minipage}[t]{0.45\linewidth}
			\centering
			\includegraphics[width=\textwidth]{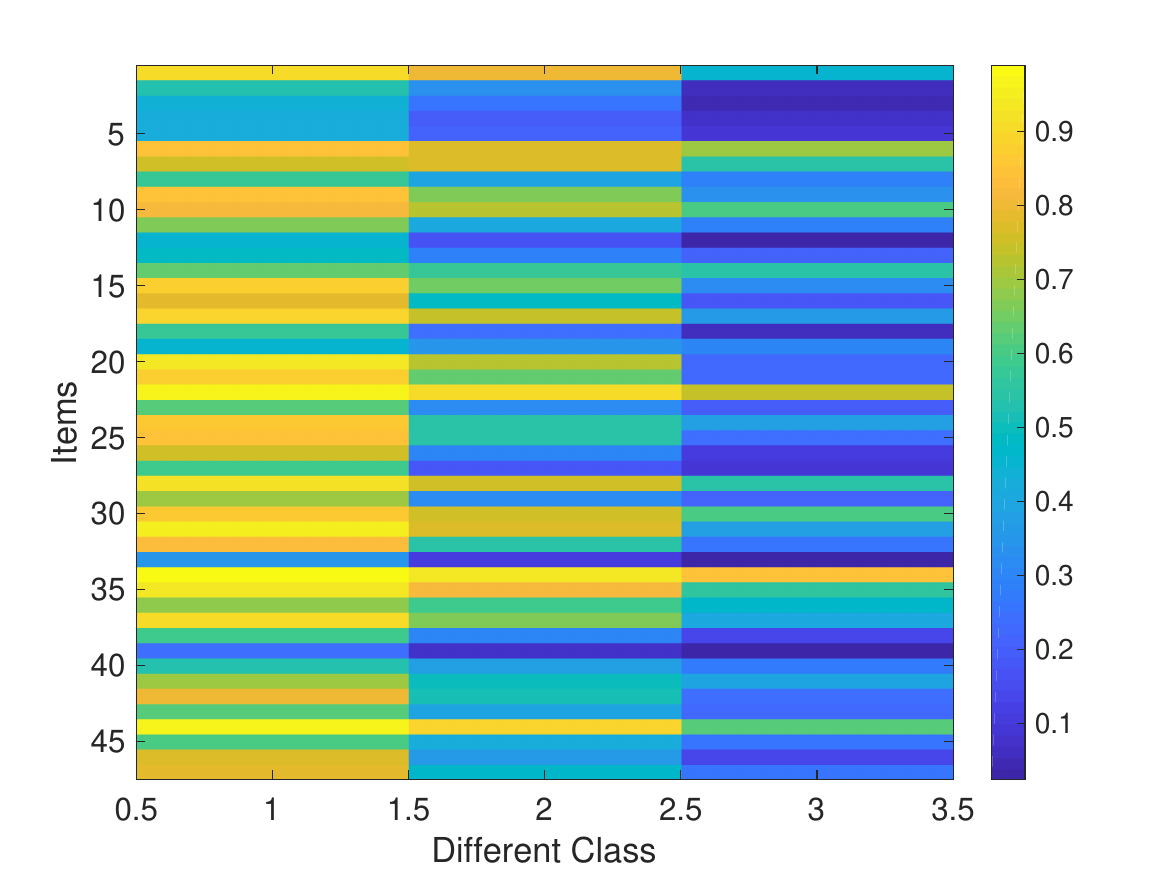}
		\end{minipage}%
	}%
	\subfigure[$\hat{\TT}_5$]{
		\begin{minipage}[t]{0.45\linewidth}
			\centering
			\includegraphics[width=\textwidth]{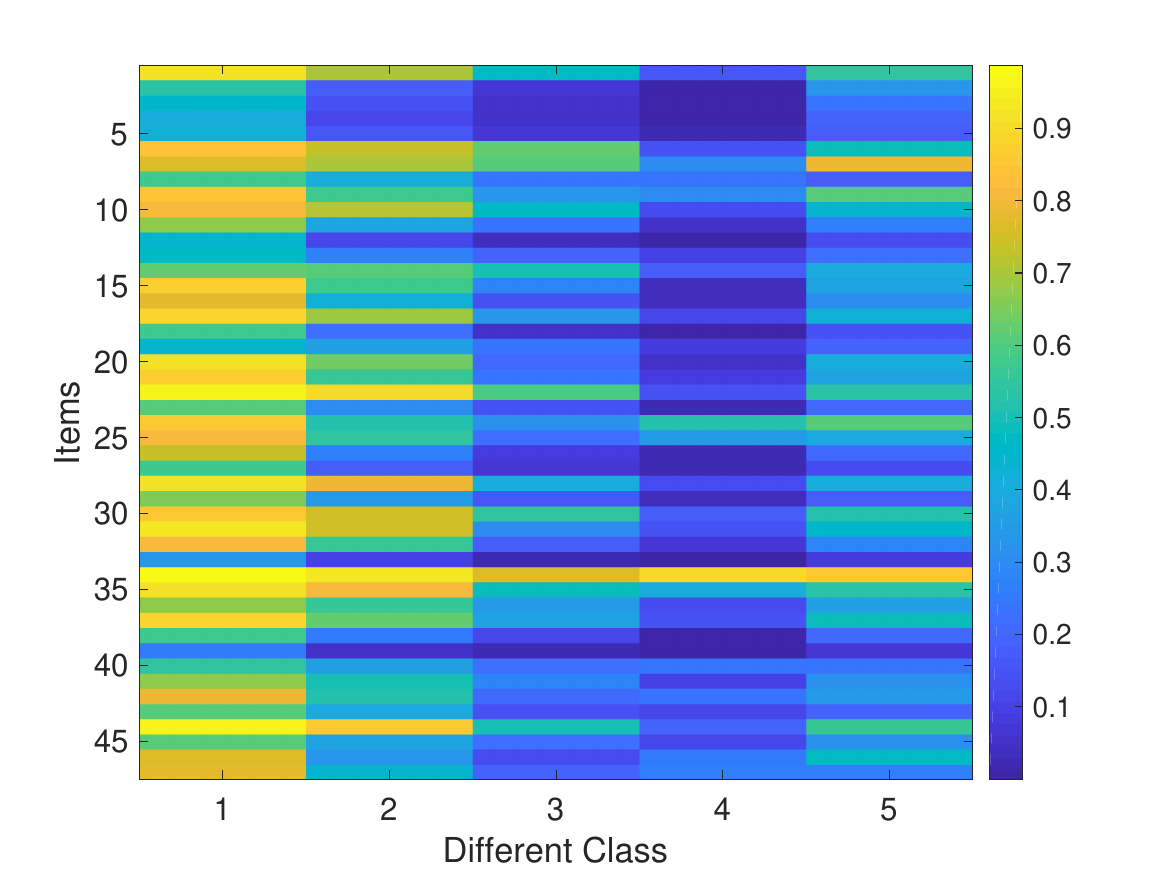}
		\end{minipage}%
	}%
	\centering
	\caption{Estimated item parameters for $L=3$ and $L=5$. A brighter color indicates a larger item parameter and hence larger probability to answer the corresponding item correctly.}
	\label{fig:timss_item}
\end{figure}

In the $L=3$ case, we see students in the three classes behave differently on the 47 items. Students in the first class top three classes and display best competence in all the sub-competences while students in the third class do not perform well on the test and need to improve their sub-competences. The competence of students in the second class is in the middle and may indicate students' average competence. Comparing two figures for $L=3$ and $L=5$, we see the first two classes roughly have the same performance on the items. Thus the $L=5$ case can be viewed as a refined analysis on $L=3$, where the third class is further divided into three parts. 
For $L=5$, according to the relations between items and sub-competences summarized in Table \ref{tab:comptence-item}, we can compare the sub-competences of different classes. For most of the sub-competences (DK, DR, GA, GK, GR, NA, NK), the first class performs best, followed by the second class, which further outperforms third and fifth classes. The fourth class has the worst performance. Note that the estimated proportion for the fourth class is 0.02 and hence only a few students have such unsatisfactory performance. For sub-competences DA and NR, the first class remains the top. The second and fifth classes follow the first class and have similar performance. The third and fourth classes have an unsatisfactory performance compared with the others.

We can also assess the difficulty of items. One goal of latent class analysis is to find items that best distinguish different classes. From the plots we see different classes have different item parameters on most of the items, indicating these items can distinguish between classes. However, there exist some items that are not ideal in this respect. For instance, all classes have relatively good performance on item 33 (M041335 in original index of the tests). This question presents four bar plots of three colors (red, green and blue) and asks students which bar plot has the smallest value for blue. Since the frequency is clearly shown in the plots, this question may be easy for fourth graders and may not be an ideal item to distinguish between different classes. Furthermore, all classes have relatively poor performance on item 39 (M051006 Cost of ice cream), indicating item 39 is of high difficulty level. 

We further explore the latent hierarchical structures of the latent classes and their interpretations. Following the idea in Section 4.2 of \cite{ma2022learning}, we first define
\[
\boldsymbol{\Gamma}=\left(\mathbb{I}\left\{ \left|\hat{\theta}_{j, l}-\max _{m \in[L]} \hat{\theta}_{j, m}\right| \leq \tau \right\}: j \in[J], l \in[L]\right) \in\{0,1\}^{J \times L}
\]
for a small $\tau>0$. Note that $\Gamma_{j,l} = 1$ indicates that $\hat{\theta}_{j, l}$ is close to $\max _{m \in[L]} \hat{\theta}_{j, m}$ and hence latent class $l$ possesses relatively high level of item parameter for item $j$. We say a latent class $l_1$ is more capable than a latent class $l_2$ and represent it as $\boldsymbol{\Gamma}_{\cdot,l_2} \rightarrow \boldsymbol{\Gamma}_{\cdot,l_1}$ if $\boldsymbol{\Gamma}_{\cdot,l_1} \succeq \boldsymbol{\Gamma}_{\cdot,l_2}$ , where for two vectors $\vv_1$ and $\vv_2$, we write $\vv_1 \succeq \vv_2$ if $v_{1,i} \geq v_{2,i}$ for each $i$. With this definition we can get partial orders among latent classes. In our real data analysis, we let $\tau = 0.25$ (roughly the standard error of all item parameters) and relax the definition of $\boldsymbol{\Gamma}_{\cdot,l_1} \succeq \boldsymbol{\Gamma}_{\cdot,l_2}$ to $\Gamma_{j, l_1} \geq \Gamma_{j, l_2}$ for $90\%$ of items when $j$ varies in $\{1,\ldots,J\}$. The resulting partial orders based on item parameters can be represented as directed acyclic graphs (DAGs) in Figure \ref{fig:partial_orders}. 

\begin{figure}[h!]
	\centering
	\subfigure[$L=3$]{
		\begin{minipage}[t]{0.12\linewidth}
			\centering
			\includegraphics[width=\textwidth]{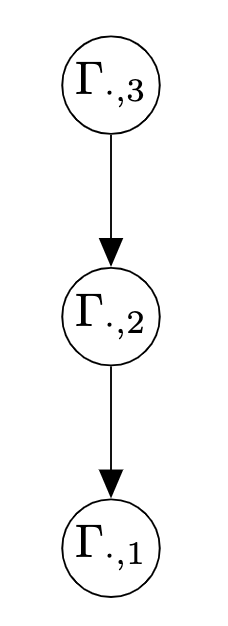}
		\end{minipage}%
	}%
	\subfigure[$L=5$]{
		\begin{minipage}[t]{0.3\linewidth}
			\centering
			\includegraphics[width=\textwidth]{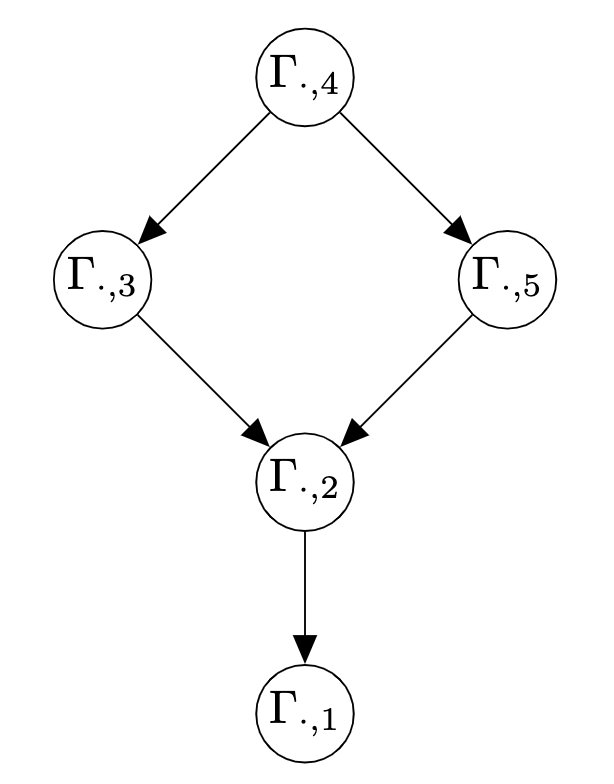}
		\end{minipage}%
	}%
	\centering
	\caption{Partial orders among latent classes}
	\label{fig:partial_orders}
\end{figure}

With the obtained partial orders, we can apply the latent hierarchy recovering algorithm in \cite{ma2022learning} to obtain the latent attribute representations of the latent classes under the cognitive diagnostic modeling framework. In particular, for $L=3$, the three latent classes can be represented as $\{ (1,1), (1,0), (0,0)\}$, and for $L=5$, the five latent classes can be represented as $\{ (1,1,1), (1,1,0), (1,0,0), (0,0,0), (0,1,0)\}$. The hierarchies among the learned latent attributes can be represented as in Figure \ref{fig:hierarchies}. Specifically,  in the case $L=3$, $\alpha_1$ is a more basic prerequisite for $\alpha_2$;
similarly in the case $L=5$, $\alpha_1$ and $\alpha_2$ may be more basic prerequisites for attribute $\alpha_3$. 

\begin{figure}[H]
	\centering
	\subfigure[$L=3$]{
		\begin{minipage}[t]{0.1\linewidth}
			\centering
			\includegraphics[width=\textwidth]{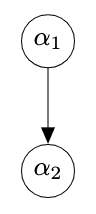}
		\end{minipage}%
	}%
	\subfigure[$L=5$]{
		\begin{minipage}[t]{0.32\linewidth}
			\centering
			\includegraphics[width=\textwidth]{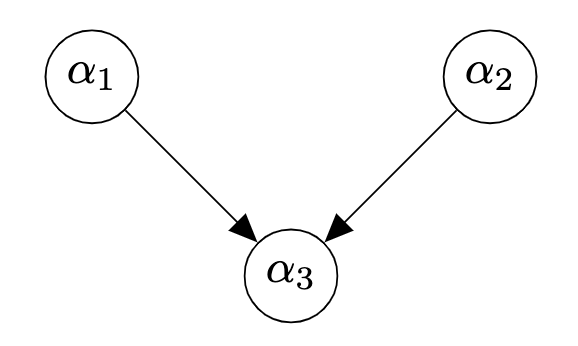}
		\end{minipage}%
	}%
	\centering
	\caption{Latent hierarchical structures of attributes}
	\label{fig:hierarchies}
\end{figure}

The estimated item parameters may help us identify which sub-competences each learned attribute $\alpha_i$ corresponds to. In the case $L=3$, we find $\hat{\theta}_{j, 1} - \hat{\theta}_{j, 2}$ is fairly large (greater than 0.2) for most items $j$ corresponding to DA, GR, NA, NK, which may be the sub-competences represented by attribute $\alpha_2$. Similarly students in class 2 greatly improve their competences in DK, DR and GA compared with students in class 3. Hence the more basic attribute $\alpha_1$ may represent DK, DR and GA. Similarly in the case $L=5$, we find $\alpha_1$ corresponds to sub-competences NA, GK, NK and GA, $\alpha_2$ corresponds to NA, GA, DK and DR, $\alpha_3$ corresponds to DA, NA, NK and NR. According to the hierarchical structure in Figure \ref{fig:hierarchies}, DA and number-related skills NA, NK and NR may be more advanced sub-competences than others.


We conclude this section by discussing the connection and difference between latent class analysis and more delicate cognitive diagnostic models in real data analysis. CDMs can be viewed as special cases of LCMs with more delicate structures on the item parameters; indeed, CDMs belong to a family of \emph{restricted} latent class models \citep{xu2017identifiability,xu2018identifying}. With the structure specified by Q-matrix, CDMs can provide more fine-grained analysis than LCMs (e.g. we can estimate the latent profile of each individual). On the other hand, LCMs do not require the prior knowledge on the Q-matrix and may serve the purpose of exploratory analysis before modeling the data with more delicate CDMs, as shown in \cite{ma2022learning} and explored in our data analysis. Hence LCMs and CDMs are closely related and both can be useful in various applications.

\section{Discussion}\label{discussion}

This paper investigates the computation and theory for large-scale latent class models.  In terms of computation, commonly used likelihood-based methods (e.g. EM algorithm)  suffer from slow convergence rate and potential convergence to local optima under poor initializations. Recent developments in tensor decomposition and its applications provide a computationally efficient moment-based method to estimate the parameters. However, such tensor method is based on low-order moments of the observed variables rather than the entire likelihood function. Hence it is not statistically efficient and generally requires a large number of samples to ensure the accuracy of estimates. 
In this work, we propose a two-stage tensor-EM estimation pipeline which combines these two methods. 
Simulation studies empirically show the proposed procedure is both computationally efficient and statistically accurate. {Moreover, based on our simulations in tensor methods the moments in \eqref{moments-est} need to be estimated accurately. When the sample size is not very large (e.g. $N=100$), the tensor method may not yield good estimates since the estimation of moments is not accurate. In applications ideally we need at least $N=500$ samples to ensure the good performance. Note that Condition \ref{cond1} implicitly constrains the number of classes $L$. If we want to fit a LCM with $L$ classes, then condition \ref{cond1} requires $L \leq \min \{J_1,J_2,J_3\} \leq J/3$. In applications we should always fit LCM with $L \leq J/3$ classes so that the Condition \ref{cond1} is satisfied and tensor method can be applied.}
Furthermore, we theoretically establish the clustering consistency (consistency of latent class membership) of large-scale fixed-effect latent class analysis where sample size $N$ and number of items $J$ both go to infinity. {Consistency of item parameters is proved as a corollary of clustering consistency. In terms of consistency under random-effect LCM, we empirically verify it in simulations. However, in the likelihood function of random-effect LCM, the latent variable is marginalized out, making the log-likelihood more challenging to analyze the consistency of parameters.  We will leave it as future work.}
 

{We also note that the response variables do not have to be binary for applying the tensor method; the tensor power method can also apply to models with polytomous or continuous responses. This is because the low-order moments constructed in and exploited by the tensor power method here \citep[also see][]{anandkumar2014tensor} essentially result from the local independence in LCMs; that is, the observed variables are conditionally independent given the latent variable.} Regarding the theoretical results for polytomous response, we believe similar proof techniques can be applied to establish the clustering consistency. The details are left as future work. 
It is also possible to find similar tensor structures in more complicated latent structure models, for example, the diagnostic classification models or cognitive diagnostic models (CDMs) \citep{rupp2008dcm, von2019handbook}. Since CDMs can be viewed as extensions of LCMs where the local independence assumption also holds, the same tensor power method used here may also be used to find rough estimates of parameters in a CDM. However, CDMs also have a unique feature, the Q-matrix \citep{tatsuoka1983rule}, which induces complicated parameter constraints on top of a latent class model.
Thus a plain tensor power method for unrestricted LCMs would not give estimates that satisfy the equality constraints under a Q-matrix.
How to develop an efficient estimation procedure for CDMs by integrating the tensor method is an interesting question left for future investigation. 
 




\bibliographystyle{apalike}
\bibliography{LCMrefer}

\newpage
\appendixpage

In this supplementary material, the proof of Theorem 3 (clustering consistency) is presented in Appendix 1. Then Corollary 2 (consistency of item parameters) is proved in Appendix 2. In Appendix 3, more simulation results are provided to show the good performance of the proposed EM-tensor method.

\section*{Appendix 1: Proof of Theorem 3}

\textbf{Outline of proof idea.}
The proof follows the following 8 steps. 

\textbf{Step 1}: Express $\ell(\RR;\, \ZZ) - \bar \ell(\ZZ)$ in terms of $\sum_{a}n_{a}\sum_{j}D(\hat \theta_{j,a}\| \bar \theta_{j,a}) + X - \mathbb E(X)$, where $X$ is a random variable depending on $\RR$ and $\bar\TT^{(\ZZ)}$ under $\ZZ$, and $n_a = \sum_{i=1}^N Z_{i,a}$.

\textbf{Step 2}: Bound the first term  $\sum_{j}\sum_{a}n_{j,a}D(\hat \theta_{j,a}\| \bar \theta_{j,a})$ in the above display uniformly over all possible $\ZZ$.

\textbf{Step 3}: Bound the second term $X-\ME(X)$ using Bernstein type inequality. Combine this and Step 2  to obtain a bound for $\sup_{\ZZ} |\ell(\RR;\, \ZZ) - \bar \ell(\ZZ)|$.

\textbf{Step 4}: (Denote the true latent class memberships by $\ZZ^0$ and joint MLE by $\hat \ZZ$.) Establish $\bar \ell(\ZZ^0) \geq \bar\ell(\ZZ)$ for all $\ZZ$. Use triangle inequality to upper-bound the non-negative quantity $\bar \ell(\ZZ^0) - \bar \ell(\hat \ZZ)$.
\begin{align*}
	0\leq \bar \ell(\ZZ^0) - \bar \ell(\hat \ZZ)
     \leq [\bar \ell(\ZZ^0) - \ell(\RR;\,\ZZ^0)] + 
     [\ell(\RR;\,\ZZ^0) - \ell(\RR;\,\hat\ZZ)]
     + [\ell(\RR;\,\hat\ZZ) - \bar \ell(\hat \ZZ)]
\end{align*}
Since in the above display the middle group of terms $[\ell(R;\,\ZZ^0) - \ell(R;\,\hat\ZZ)]\leq 0$, we have $0\leq \bar \ell(\ZZ^0) - \bar \ell(\hat \ZZ) \leq 2 \sup_{\ZZ} |\ell(\RR;\, \ZZ) - \bar \ell(\ZZ)|$.

\textbf{Step 5}: Introduce the notion of partitions and generalize $\bar\ell(\ZZ)$ to $\bar\ell(\Pi)$.

\textbf{Step 6}: Show that a refined partition increases $\bar\ell(\cdot)$. To be concrete, let $\Pi^*$ be a refined partition of $\Pi$, then we have $\bar\ell(\Pi^*) \geq \bar\ell(\Pi)$.

\textbf{Step 7}: Show that for any latent class assignment $\ZZ$, we can find a partition $\Pi^*$ that refines $\Pi^{\ZZ}$ and $\bar\ell(\ZZ^0) - \bar\ell(\Pi^*) \geq \frac{1}{2} J \beta_J N_e(\zz)$.

\textbf{Step 8}: Apply results in step 6 and step 7 to MLE $\hat{\zz}$, we have
\[
\text{bound in step 4} \geq \bar\ell(\ZZ^0) - \bar\ell(\Pi^{\hat{\ZZ}}) \geq \bar\ell(\ZZ^0) - \bar\ell(\Pi^*) \geq \frac{1}{2} J \beta_J N_e(\hat{\zz}).
\]


\bigskip
Now we formally begin the proof of Theorem 3 in the above several steps. In derivations below we abbreviate $\bar\TT^{(\ZZ)}$ as $\bar\TT$ and $\hat\TT^{(\ZZ)}$ as $\hat\TT$ to simplify notations.

\textbf{Step 1.} Define $D(p \| q) = p\log(p/q) + (1-p)\log((1-p)/(1-q))$, the Kullback-Leibler divergence of a Bernoulli distribution with parameter $p$ from that with parameter $q$. In this step we prove a lemma as follows.
\begin{lemma}
	\label{lem-express}
	Let $(R_{i,j}; \, 1\leq N, 1\leq J)$ denote independent Bernoulli trials with parameters $(P_{i,j}; \, 1\leq N, 1\leq J)$. Under a general latent class model, given an arbitrary $\ZZ$, there is 
	\begin{align}\label{eq-lemma}
		&~\sup_{\TT} \ell(\RR;\,\mathbf Z, \, \TT) - \sup_{\TT} \mathbb E[\ell(\RR;\,\mathbf Z, \, \TT)] \\ \notag
	=&~ \sum_{a=1}^L n_a \sum_j D(\hat\theta_{j,a} \| \bar \theta_{j,a})  
    + \sum_{i}\sum_j (R_{i,j} - P_{i,j})\log\Big( \frac{\bar\theta_{j,z_i}}{1-\bar\theta_{j,z_i}} \Big) \\ \notag
    = &~ \sum_{a=1}^L n_a \sum_j D(\hat\theta_{j,a} \| \bar \theta_{j,a})  
    + X-\ME X,
	\end{align}
where $X= \sum_{i}\sum_j R_{i,j}\log\Big( \frac{\bar\theta_{j,z_i}}{1-\bar\theta_{j,z_i}} \Big)$ is random variable depending on $\ZZ$ and 
\begin{equation}
\label{mle-theta}
\widehat{\theta}_{j, a}=\frac{\sum_{i} Z_{i, a} R_{i, j}}{\sum_{i} Z_{i, a}}, \quad \bar{\theta}_{j, a}=\frac{\sum_{i} Z_{i, a} P_{i, j}}{\sum_{i} Z_{i, a}}
\end{equation}
\end{lemma}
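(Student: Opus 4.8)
The plan is to show that both suprema over $\TT$ are attained in closed form, and then to split the resulting difference into exactly two pieces, one being the Kullback--Leibler term and the other the centered random quantity $X-\ME X$. Throughout I abbreviate $\hat\TT^{(\ZZ)}$ as $\hat\TT$ and $\bar\TT^{(\ZZ)}$ as $\bar\TT$.

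First I would observe that, for a fixed $\ZZ$, the complete-data log-likelihood in \eqref{eq-loglike} separates across the pairs $(j,a)$: collecting all subjects with $z_i=a$, the contribution of item $j$ to class $a$ depends on $\TT$ only through $\theta_{j,a}$ and equals $\big(\sum_i Z_{i,a}R_{i,j}\big)\log\theta_{j,a} + \big(n_a-\sum_i Z_{i,a}R_{i,j}\big)\log(1-\theta_{j,a})$, with $n_a=\sum_i Z_{i,a}$. Maximizing each such univariate concave function over $\theta_{j,a}\in[0,1]$ (with the usual convention $0\log 0=0$ at the boundary) yields the empirical within-class proportion, i.e.\ $\hat\theta_{j,a}$ in \eqref{mle-theta}; the identical computation applied to $\bar\ell(\ZZ,\TT)$, with $R_{i,j}$ replaced by its mean $P_{i,j}$, gives $\bar\theta_{j,a}$. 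Hence $\sup_\TT\ell(\RR;\ZZ,\TT)=\ell(\RR;\ZZ,\hat\TT)$ and $\sup_\TT\bar\ell(\ZZ,\TT)=\bar\ell(\ZZ,\bar\TT)$, and crucially $\bar\TT$ is a deterministic function of $\ZZ$ alone (not of $\RR$); Assumption \ref{assume-pij} guarantees each $\bar\theta_{j,a}\in[J^{-d},1-J^{-d}]$, so the logarithms appearing below are well defined.

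Next I would insert the intermediate quantity $\ell(\RR;\ZZ,\bar\TT)$ and write
\[
\ell(\RR;\ZZ,\hat\TT)-\bar\ell(\ZZ,\bar\TT)=\big[\ell(\RR;\ZZ,\hat\TT)-\ell(\RR;\ZZ,\bar\TT)\big]+\big[\ell(\RR;\ZZ,\bar\TT)-\bar\ell(\ZZ,\bar\TT)\big].
\]
For the second bracket, subtracting the two sums term by term cancels everything except the coefficient $R_{i,j}-P_{i,j}$, leaving $\sum_i\sum_j(R_{i,j}-P_{i,j})\log\frac{\bar\theta_{j,z_i}}{1-\bar\theta_{j,z_i}}$; since $\bar\TT$ carries no randomness, this is exactly $X-\ME X$ for the $X$ defined in the statement. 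For the first bracket, I would group subjects by their class label $a$ and use $\sum_{i:z_i=a}R_{i,j}=n_a\hat\theta_{j,a}$, so that the $(j,a)$-contribution collapses to
\[
n_a\Big[\hat\theta_{j,a}\log\tfrac{\hat\theta_{j,a}}{\bar\theta_{j,a}}+(1-\hat\theta_{j,a})\log\tfrac{1-\hat\theta_{j,a}}{1-\bar\theta_{j,a}}\Big]=n_a\,D(\hat\theta_{j,a}\|\bar\theta_{j,a}).
\]
Summing over $j$ and $a$ gives $\sum_{a}n_a\sum_j D(\hat\theta_{j,a}\|\bar\theta_{j,a})$, which together with the second bracket establishes \eqref{eq-lemma}.

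I do not anticipate a serious obstacle: the result is an exact algebraic identity rather than an inequality or limit. The only points requiring genuine care are verifying that the within-class empirical means indeed maximize the separated univariate objectives (so that $\hat\TT,\bar\TT$ have the stated closed forms), and confirming that $\bar\TT$ is nonrandom so that $\ME X$ matches the centering produced by the second bracket. The most error-prone part is the bookkeeping in regrouping the double sum $\sum_i\sum_j$ into $\sum_a\sum_j$ through the indicator $Z_{i,a}$, but this is purely mechanical once the factorization across $(j,a)$ is made explicit.
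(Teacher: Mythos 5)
Your proposal is correct and follows essentially the same route as the paper: the paper's add-and-subtract of the cross term $\sum_j\sum_a n_a\bigl[\hat\theta_{j,a}\log\bar\theta_{j,a}+(1-\hat\theta_{j,a})\log(1-\bar\theta_{j,a})\bigr]$ is exactly your insertion of the intermediate quantity $\ell(\RR;\,\ZZ,\bar\TT)$, with one bracket collapsing to the Kullback--Leibler term and the other to $X-\ME X$. Your explicit verification that the separated univariate objectives are maximized at the within-class means, and that $\bar\TT$ is nonrandom so the centering matches $\ME X$, only makes explicit what the paper takes for granted via \eqref{eq-zmle}.
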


 Given a fixed $\ZZ$, denote $n^{(\ZZ)}_a = \sum_{i=1}^N Z_{i,a}$.The maximizing properties of $\hat \theta_{j,a}$ and $\bar \theta_{j,a}$ in \ref{mle-theta} imply that
\begin{equation}\label{eq-mprop}
	n_a\hat\theta_{j,a} = \sum_i Z_{i,a}R_{i,j},\quad
    n_a\bar\theta_{j,a} = \sum_i Z_{i,a}P_{i,j}.
\end{equation}
Using \eqref{eq-mprop}, we have the following,
\begin{align*}
 &~\ell(\RR;\, \ZZ) - \bar\ell(\ZZ)\\
=&~ \sum_{i}\sum_{j} \sum_{a=1}^L Z_{i,a} [R_{i,j} \log\hat\theta_{j,a} + (1-R_{i,j})\log(1-\hat\theta_{j,a})]
 \\ 
&~ \qquad\qquad 
- \sum_{i}\sum_{j} \sum_{a=1}^L Z_{i,a} [P_{i,j} \log\bar\theta_{j,a} + (1-P_{i,j})\log(1-\bar\theta_{j,a})]\\
=&~ \sum_j \sum_{a=1}^L n_a[\hat\theta_{j,a}\log \hat\theta_{j,a} + (1-\hat\theta_{j,a})\log(1-\hat\theta_{j,a})] - 
   \sum_j \sum_{a=1}^L n_a[\bar\theta_{j,a}\log \bar\theta_{j,a} + (1-\bar\theta_{j,a})\log(1-\bar\theta_{j,a})] \\
=&~ \sum_j \sum_{a=1}^L \Big\{n_a[\hat\theta_{j,a}\log \hat\theta_{j,a} + (1-\hat\theta_{j,a})\log(1-\hat\theta_{j,a})] 
 - n_a[\hat\theta_{j,a}\log \bar\theta_{j,a} + (1-\hat\theta_{j,a})\log(1-\bar\theta_{j,a})]\Big\} \\
 &~ +  \sum_j \sum_{a=1}^L \Big\{n_a[\hat\theta_{j,a}\log \bar\theta_{j,a} + (1-\hat\theta_{j,a})\log(1-\bar\theta_{j,a})] 
 -  n_a[\bar\theta_{j,a}\log \bar\theta_{j,a} + (1-\bar\theta_{j,a})\log(1-\bar\theta_{j,a})] \Big\}\\
  =&~ \sum_{a=1}^L n_a \sum_j D(\hat\theta_{j,a} \| \bar \theta_{j,a}) 
    + \sum_i \sum_j \Big\{[R_{i,j}\log \bar\theta_{j,z_i} + (1-R_{i,j})\log(1-\bar\theta_{j,z_i})] \\
    &\qquad\qquad\qquad\qquad\qquad\qquad\qquad  - [P_{i,j}\log \bar\theta_{j,z_i} + (1-P_{i,j})\log(1-\bar\theta_{j,z_i})] \Big\} \\
 =&~ \sum_{a=1}^L n_a \sum_j D(\hat\theta_{j,a} \| \bar \theta_{j,a}) 
    + \sum_{i}\sum_j R_{i,j} \log\Big( \frac{\bar\theta_{j,z_i}}{1-\bar\theta_{j,z_i}} \Big)
    - \sum_{i}\sum_j P_{i,j} \log\Big( \frac{\bar\theta_{j,z_i}}{1-\bar\theta_{j,z_i}} \Big).
\end{align*}
Define the random variable 
\begin{equation}
\label{eq-defx}	
X=\sum_{i}\sum_j R_{i,j} \log( \bar\theta_{j,z_i}/(1-\bar\theta_{j,z_i} )),
\end{equation}
then $X$  depends on $\ZZ$ and the above display becomes the summation of $\sum_{a=1}^L n_a \sum_j D(\hat\theta_{j,a} \| \bar \theta_{j,a})$ and $X-\mathbb E[X]$.
This establishes \eqref{eq-lemma} in Lemma \ref{lem-express}.
In the following, we  bound the first term $\sum_{a=1}^L n_a \sum_j D(\hat\theta_{j,a} \| \bar \theta_{j,a})$ and the second term $X-\mathbb E[X]$ in the above display uniformly over all possible $\ZZ$, respectively in Step 2 and Step 3.

\bigskip
\textbf{Step 2.} In this step we prove the following theorem.
\begin{theorem}
	\label{thm-kl}
	The following event happens with probability at least $1-\delta$,
\begin{equation*}
\max_{\ZZ}\left\{\sum_j\sum_a n_a D(\hat\theta^{\,\ZZ}_{j,a}  \| \bar \theta^{\,\ZZ}_{j,a}) \right\}
<
 N\log L + JL\log\Big(\frac{N}{L} + 1\Big) - \log\delta.
\end{equation*}

\end{theorem}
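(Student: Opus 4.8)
The plan is to bound the maximum through a two-level union bound: an outer union bound over the $L^N$ possible label assignments $\ZZ$, and an inner ``method-of-types'' count over the finitely many values the empirical item parameters $\hat\TT^{(\ZZ)}$ can take. First I would fix $\ZZ$ and rewrite the target quantity as the supremum of a log-likelihood-ratio process. Writing $C_a=\{i:Z_{i,a}=1\}$ and, for a scalar $v\in[0,1]$,
\[
g_{j,a}(v) := \sum_{i\in C_a}\Big[ R_{i,j}\log\frac{v}{\bar\theta_{j,a}} + (1-R_{i,j})\log\frac{1-v}{1-\bar\theta_{j,a}}\Big],
\]
the maximizing property \eqref{eq-mprop} gives $n_a D(\hat\theta_{j,a}\|\bar\theta_{j,a}) = \sup_v g_{j,a}(v) = g_{j,a}(\hat\theta_{j,a})$, and since $\hat\theta_{j,a}=\frac1{n_a}\sum_{i\in C_a}R_{i,j}$ it always lies on the finite grid $G_a:=\{0,\tfrac1{n_a},\dots,1\}$ of $n_a+1$ points. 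Hence $\sum_{j,a}n_a D(\hat\theta_{j,a}\|\bar\theta_{j,a}) = \sup\{\sum_{j,a}g_{j,a}(v_{j,a})\}$ over all choices $v_{j,a}\in G_a$.

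The key step is a moment-generating-function bound: for any fixed grid point $v$,
\[
\ME\, e^{g_{j,a}(v)} = \prod_{i\in C_a}\Big[ P_{i,j}\frac{v}{\bar\theta_{j,a}} + (1-P_{i,j})\frac{1-v}{1-\bar\theta_{j,a}}\Big]\le 1,
\]
where the inequality follows from AM--GM together with the crucial identity $\bar\theta_{j,a}=\frac1{n_a}\sum_{i\in C_a}P_{i,j}$, which makes the arithmetic mean of the $n_a$ bracketed factors equal $v+(1-v)=1$. Because distinct pairs $(j,a)$ involve disjoint collections of the independent variables $R_{i,j}$ (different items give independent entries, and the sets $C_a$ are disjoint across $a$), the MGF of $\sum_{j,a}g_{j,a}(v_{j,a})$ factorizes and is also $\le 1$ for every fixed choice of grid points. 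A Chernoff bound then gives $\PP(\sum_{j,a}g_{j,a}(v_{j,a})\ge s)\le e^{-s}$, and a union bound over the $\prod_{j,a}(n_a+1)$ grid choices yields, for fixed $\ZZ$,
\[
\PP\Big(\sum_{j,a}n_a D(\hat\theta_{j,a}\|\bar\theta_{j,a})\ge s\Big)\le \Big(\prod_{j,a}(n_a+1)\Big)e^{-s}.
\]
Using concavity of $\log$ (Jensen's inequality) with $\sum_a n_a=N$ gives $\sum_{j,a}\log(n_a+1)=J\sum_a\log(n_a+1)\le JL\log(N/L+1)$.

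Finally I would take a union bound over the $L^N$ choices of $\ZZ$, contributing the factor $e^{N\log L}$. Combining the three pieces, the probability that the maximum exceeds $s$ is at most $\exp\{N\log L + JL\log(N/L+1) - s\}$; setting this equal to $\delta$, i.e.\ choosing $s = N\log L + JL\log(N/L+1) - \log\delta$, yields the claim. I expect the main obstacle to be the MGF bound $\ME e^{g_{j,a}(v)}\le 1$: the within-class responses are independent but \emph{not} identically distributed, so the classical method-of-types argument does not apply verbatim, and the remedy is precisely the AM--GM/Jensen step, which exploits that $\bar\theta_{j,a}$ is defined as the within-class average of the true success probabilities. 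Care is also needed to confirm that $\bar\theta_{j,a}$ stays bounded away from $0$ and $1$ (guaranteed by Assumption~\ref{assume-pij}, since each $P_{i,j}\in[J^{-d},1-J^{-d}]$), so that all logarithms and the grid-value $v=0,1$ cases are well defined.
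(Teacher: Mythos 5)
Your proposal is correct and follows essentially the same route as the paper's proof: for each fixed assignment $\ZZ$, an exponential bound $e^{-s}$ on the probability attached to each possible value of the empirical parameters, a union bound over the at most $\big(\frac{N}{L}+1\big)^{JL}$ grid values that $\hat\TT^{(\ZZ)}$ can take, an outer union bound over the $L^N$ assignments, and then solving $L^N\big(\frac{N}{L}+1\big)^{JL}e^{-s}=\delta$ for $s$. The one difference is that where the paper simply cites the Chernoff--Hoeffding theorem for the pointwise bound $\PP(\hat\theta_{j,a}=\vartheta)\leq e^{-n_a D(\vartheta\|\bar\theta_{j,a})}$, you prove the equivalent MGF bound $\ME\, e^{g_{j,a}(v)}\leq 1$ from scratch via AM--GM, which has the merit of making explicit why the bound survives the fact that the within-class responses are independent but \emph{not} identically distributed (their mean is $\bar\theta_{j,a}$ only on average) --- a subtlety the paper's citation passes over.
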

Given any fixed latent class memberships $\ZZ$, every $\hat\theta_{j,a}$ is an average of $n_a$ independent Bernoulli random variables $R_{1,j},\ldots, R_{N,j}$ with mean $\bar\theta_{j,a}$. We apply the Chernoff-Hoeffding theorem to obtain
\begin{equation}
	\mathbb P(\hat\theta_{j,a} \geq \bar \theta_{j,a} + t) 
	\leq e^{ -n_a D(\bar\theta_{j,a} + t \| \bar \theta_{j,a}) },\quad
	\mathbb P(\hat\theta_{j,a} \leq \bar \theta_{j,a} - t) \leq e^{ -n_a D(\bar\theta_{j,a} - t \| \bar \theta_{j,a}) }.
\end{equation}
Note that given a fixed $\ZZ$, each $\hat\theta_{j,a}$ can take values only in the finite set $\{0, 1/n_a, 2/n_a,\ldots,1\}$ of cardinality $n_a+1$. 
We denote this range of $\hat\theta_{j,a}$ by $\hat\Theta^{j,a}$. Then $\mathbb P(\hat\theta_{j,a} = \vartheta)\leq \exp\{ -n_a D(\vartheta \| \bar \theta_{j,a})\}$ for any $\vartheta \in \hat\Theta^{j,a}$. 
Now consider the cardinality of the set $\hat\Theta$ given $\ZZ$. Since for each of the $J\times L$ entries in $\hat\TT$, $\hat\theta_{j,a}$ can independently take on $n_a+1$ different values, there is $|\hat\Theta| = [\prod_a(n_{a}+1)]^J$. Considering the natural constraint $\sum_{a=1}^L n_a = N$, we have 
\begin{equation}\label{eq-That}
	|\hat\Theta| = \Big[\prod_{a=1}^L(n_{a}+1)\Big]^J
	\leq \Big[\Big(\frac{N}{L}+1\Big)^L\Big]^J.
\end{equation}
Define
	$\hat\Theta_\epsilon = \{\tilde\TT\in\hat\Theta:\, \sum_j\sum_a n_a D(\tilde\theta_{j,a}  \| \bar \theta_{j,a} ) \geq \epsilon\}$, then $\hat\Theta_\epsilon\subseteq \hat\Theta$. Note that the components of $\hat\ttt$ depend on different components of $\{ R_{i,j}, i \in [N], j \in [J]\}$ and thus are independent. We have
\begin{align*}
	&~ \mathbb P\Big(\sum_{j=1}^J\sum_{a=1}^L n_a D(\hat\theta_{j,a}  \| \bar \theta_{j,a} ) \geq \epsilon\Big)\\ 
= &~ \sum_{\tilde\ttt\in\hat\Theta_\epsilon}
    \mathbb P\Big(\hat\ttt=\tilde\ttt\Big)\\
\leq &~ \sum_{\tilde\ttt\in\hat\Theta_\epsilon} \prod_{j} \prod_{a}
    \exp\{ -n_a D(\tilde\theta_{j,a} \| \bar \theta_{j,a})\}
    \\
\leq &~ \sum_{\tilde\ttt\in\hat\Theta_\epsilon} \exp\{ -n_a  \sum_{j} \sum_{a} D(\tilde\theta_{j,a} \| \bar \theta_{j,a})\}
    \\
\leq &~ 
    \sum_{\tilde\ttt\in\hat\Theta_\epsilon}
    \exp\{-\epsilon\}
    \\
\leq &~ |\hat\Theta_\epsilon | e^{-\epsilon}
\leq  |\hat\Theta | e^{-\epsilon}
\leq \Big(\frac{N}{L}+1\Big)^{JL} e^{-\epsilon}.
\end{align*}
The above result holds for fixed $\ZZ$, so applying a union bound over all the $L^N$ possible assignment $\ZZ$, there is 
$$\mathbb P\Big(\max_{\ZZ} \left\{\sum_j\sum_a n_a D(\hat\theta_{j,a}  \| \bar \theta_{j,a} ) \right\} \geq \epsilon \Big) \leq L^N \Big(\frac{N}{L}+1\Big)^{JL} e^{-\epsilon}.$$
Now take $\delta = L^N\Big(\frac{N}{L}+1\Big)^{JL} e^{-\epsilon}$, then $\epsilon = N\log L + JL\log(\frac{N}{L} + 1) - \log\delta$. Therefore the following event happens with probability at least $1-\delta$,
\begin{equation*}
\max_{\ZZ} \left\{ \sum_j\sum_a n_a D(\hat\theta_{j,a}  \| \bar \theta_{j,a}) \right\}< \epsilon = N\log L + JL\log\Big(\frac{N}{L} + 1\Big) - \log\delta.
\end{equation*}
This concludes the proof of Theorem \ref{thm-kl}.

\bigskip
\textbf{Step 3.} In this step we bound $|X-\mathbb E[X]|$, with $X$ defined in \eqref{eq-defx}. 
Denote $X_{i,j}=R_{i,j}\log (\bar\theta_{j,z_i}/(1-\bar\theta_{j,z_i}))$, then $X=\sum_{i}\sum_j X_{i,j}$. Under Assumption 1, 
we have $|X_{i,j}|\leq \gamma\log J$. 
Then we have 
{$\sum_i\sum_j \ME[X_{i,j}^2] \leq \sum_{i}\sum_{j}\MP(R_{i,j}=1) \gamma^2 (\log J)^2 = \gamma^2 \sum_{i}\sum_{j} P_{i,j} (\log J)^2 = \gamma^2 MNJ (\log J)^2$.}
Applying the Bernstein's inequality to the sum of independent bounded random variables, we have the following holds for any fixed $\ZZ$,
\begin{align*}
	\mathbb P (|X-\mathbb E[X]| \geq \epsilon)
\leq &~2\exp\left\{ -\frac{(1/2)\epsilon^2}{\sum_i\sum_j \ME[X_{i,j}^2]+ (1/3)\gamma\log (J) \epsilon } \right\}\\
\leq &~ 2\exp\left\{ -\frac{(1/2)\epsilon^2}{\gamma^2 MNJ(\log J)^2 + (1/3)\gamma\log (J) \epsilon} \right\}.
\end{align*}

We next prove the following theorem.
\begin{theorem}
	\label{thm-scale}
	Under the following scaling (as $N, J \rightarrow \infty$),
\begin{equation}\label{eq-scale1}
\frac{MJ}{\log L} \to \infty \;, \frac{N}{L} \to \infty,
\end{equation}
\[
\sqrt{\frac{M}{J}} \left( \frac{N}{L}\right)^{1-\xi} \rightarrow \infty \; \text{for some small $\xi>0$},
\]
we have $$\frac{1}{NJ}\max_{\ZZ} |\ell(\RR;\,\ZZ) -  \bar\ell(\ZZ)| 
	= o_P\left(\frac{\sqrt{M \log L}}{\sqrt{J}} (\log J)^{1+\eta}\right)
	$$
for any $\eta >0$.
\end{theorem}
We need to bound $|\ell(\RR;\,\ZZ) -  \bar\ell(\ZZ)|$ uniformly over all the $\ZZ$. Combining the results of Step 2 and Step 3, since there are $L^N$ possible assignments of $\ZZ$, we apply the union bound to obtain 
\begin{align}\label{eq-deltanj}
	&~\MP (\max_{\ZZ} |\ell(\RR;\,\ZZ) - \bar\ell(\ZZ)| \geq 2\epsilon \delta_{NJ})\\ \notag
\leq &~ L^N \MP \left[
\left\{\sum_j\sum_a n_a D(\hat\theta_{j,a} \| \bar \theta_{j,a})\geq \epsilon \delta_{NJ}\right\}
\cup
\left\{|X - \ME[X]| \geq \epsilon \delta_{NJ}\right\}
\right]\\ \notag
\leq &~  \exp\Big\{N\log L + JL\log\Big(\frac{N}{L} + 1\Big) - \epsilon \delta_{NJ}\Big\} \\ \notag
&~  + 2\exp\Big\{N\log L -\frac{(1/2)\epsilon^2\delta_{NJ}^2 }{ \gamma^2 MNJ(\log J)^2 + (1/3)\gamma\log (J) \epsilon\delta_{NJ}} \Big\}.
\end{align}
In order for the term on the right hand side of the above display to go to zero, the following of $\delta_{NJ}$ would suffice,
\begin{equation}
	\label{eq-scale}
	\delta_{NJ} = N\sqrt{MJ\log L} (\log J)^{1+\eta}.
\end{equation}
for a small positive constant $\eta$.  
Then the right hand side of \eqref{eq-deltanj} goes to zero as $N, J$ go large and 
hence the scaling of $J$ described in the theorem
 yields $\MP (\max_{\ZZ} |\ell(\RR;\,\ZZ) - \bar\ell(\ZZ)| \geq 2\epsilon \delta_{NJ}) = o(1)$, which implies
\begin{equation}
	\label{eq-llrate}
	\frac{1}{NJ}\max_{\ZZ} |\ell(\RR;\,\ZZ) - \bar\ell(\ZZ)| 
	=o_P\left(\frac{\sqrt{M\log L}}{\sqrt{J}} (\log J)^{1+\eta}\right).
\end{equation}
This proves Theorem \ref{thm-scale}.

\bigskip
\textbf{Step 4.}
Denote the true class assignments by $\ZZ^0$. We first establish 
\begin{equation}
	\label{eq-tr}
	\bar \ell(\ZZ^0) \geq \bar\ell(\ZZ), \quad\text{for all }\ZZ.
\end{equation}
First note that $\theta^0_{j,z_i^0} = P_{i,j}$, and 
$$
\bar\theta_{j,z_i^0} = \frac{\sum_{m=1}^N Z_{m, z_i^0}^0 P_{m,j}}{\sum_{m=1}^N Z^0_{m, z_i^0}} 
= \frac{\sum_{m=1}^N Z_{m, z_i^0}^0 P_{i,j}}{\sum_{m=1}^N Z^0_{m, z_i^0}} = P_{i,j}.
$$
The difference $\bar \ell(\ZZ^0) - \bar\ell(\ZZ)
$ can be written as
\begin{align*}
	\bar \ell(\ZZ^0) - \bar\ell(\ZZ)
=&~ \sum_{j}\sum_i [ P_{i,j} \log\Big(\frac{\bar\theta^0_{j,z_i^0}}{\bar\theta^{\ZZ}_{j,z_i}}\Big) + (1-P_{i,j})\log\Big(\frac{1-\bar\theta^0_{j,z_i^0} }{1- \bar\theta^{\ZZ}_{j,z_i}}\Big)] \\
=&~\sum_{j}\sum_i [ P_{i,j} \log\Big(\frac{P_{i,j}}{\bar\theta^{\ZZ}_{j,z_i}}\Big) + (1-P_{i,j})\log\Big(\frac{1-P_{i,j} }{1- \bar\theta^{\ZZ}_{j,z_i}}\Big)]=\sum_i\sum_j D(P_{i,j} \| \bar\theta^{\ZZ}_{j,z_i})\geq 0,
\end{align*}
therefore establishing \eqref{eq-tr}. Since the above holds for every $\ZZ$, it also holds for the maximum likelihood estimator $\hat\ZZ$. We further upper bound $\bar \ell(\ZZ^0) - \bar\ell(\ZZ)$ from above as follows,
\begin{align*}
	0\leq \bar \ell(\ZZ^0) - \bar \ell(\hat \ZZ)
     \leq [\bar \ell(\ZZ^0) - \ell(\RR;\,\ZZ^0)] + 
     \underbrace{[\ell(\RR;\,\ZZ^0) - \ell(\RR;\,\hat\ZZ)]}_{\leq 0}
     + [\ell(\RR;\,\hat\ZZ) - \bar \ell(\hat \ZZ)],
\end{align*}
where $[\ell(\RR;\,\ZZ^0) - \ell(\RR;\,\hat\ZZ)] \leq 0$ results from the definition of $\hat\ZZ$ as the MLE, that is $\ZZ$ maximizes the $\ell(\RR;\,\ZZ,\hat\TT^{\ZZ})$. Therefore
\begin{align*}
	0\leq \bar \ell(\ZZ^0) - \bar \ell(\hat \ZZ)
	 \leq &~ [\bar \ell(\ZZ^0) - \ell(\RR;\,\ZZ^0)] +  [\ell(\RR;\,\hat\ZZ) - \bar \ell(\hat \ZZ)]
	 \\
	 \leq &~ 2\sup_{\ZZ} |\bar \ell(\ZZ) - \ell(\RR;\,\ZZ)|
	  \\
	 =&~ o_p(\delta_{NJ}).
\end{align*}

\bigskip

\textbf{Step 5.} To establish the consistency of MLE in clustering subjects into latent classes, we need to introduce the notion of partitions. First we observe that any latent class assignment $\ZZ$ defines a partition on $[N]$ into $T$ subsets $(S_1, \dots, S_T)$ via mapping  $\Pi^{\ZZ}$ from $[N]$ to $[T]$ such that for any subject we have $\theta_{j,z_i}^0 = \theta_{j,\Pi_i^{\ZZ}}^0$ for all $j$. We now generalize this notion. For any partition on $[N]$, define 
\[
\bar\theta^{\Pi}_{j,a} = \frac{1}{|S_a|} \sum_{i=1}^N \theta_{j,z_i^0}^0 I(i \in S_a) = \frac{1}{|S_a|} \sum_{i=1}^N P_{i,j} I(i \in S_a)
\]
as the average over all $i$ in the subset $S_a$ indexed by $\Pi_i = a$. We then define generalization of $\bar\ell(\ZZ)$ as
\[
\bar\ell(\Pi) = \sum_i \sum_j [P_{i,j}\log(\bar\theta^{\Pi}_{j,\Pi_i}) + (1-P_{i,j})\log(1-\bar\theta^{\Pi}_{j,\Pi_i})].
\]
Note that $\bar\theta^{\Pi^\ZZ}_{j,a} = \bar\theta^{\ZZ}_{j,a}$ and hence $\bar\ell(\Pi^\ZZ) = \bar\ell(\ZZ)$ when the partition $\Pi^\ZZ$ is induced by latent class assignment $\ZZ$.

We will proceed as follows: in step 6 we show a refined partition increases $\bar\ell(\cdot)$. We then construct a refined partition $\Pi^{*}$ for every partition $\Pi^{\ZZ}$ induced by $\ZZ$ and prove $\bar\ell(\ZZ^0) - \bar\ell(\Pi^*) \geq \frac{1}{2} N_e(\zz) \beta_J$ in step 7. Finally we apply the results to MLE $\hat{\ZZ}$ and obtain the desired results in step 8.

\bigskip

\textbf{Step 6.} We prove the following lemma:
\begin{lemma}
    \label{refine_incr}
    Let $\Pi^*$ be a refinement of any partition $\Pi$ of $[N]$, then we have $\bar\ell(\Pi^*) \geq \bar\ell(\Pi)$.
\end{lemma}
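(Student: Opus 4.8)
The plan is to show that the difference $\bar\ell(\Pi^*) - \bar\ell(\Pi)$ equals a nonnegative weighted sum of Kullback--Leibler divergences, exploiting the fact that each block parameter $\bar\theta^{\Pi}_{j,a}$ is exactly the within-block average of the response probabilities $P_{i,j}$. The underlying intuition is that $\bar\theta^{\Pi}_{j,a}$ is the maximizer of the within-block expected Bernoulli log-likelihood $\sum_{i\in S_a}[P_{i,j}\log\theta + (1-P_{i,j})\log(1-\theta)]$; refining a block only enlarges the family of block-constant parameters one is allowed to fit, so the optimum cannot decrease. I would, however, carry this out through the cleaner KL-divergence identity, which dovetails with the notation $D(p\|q)$ introduced in Step~1.

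First I would fix the refinement bookkeeping. Since $\Pi^*$ refines $\Pi$, every block of $\Pi$ is a disjoint union of blocks of $\Pi^*$; equivalently, there is a map $\pi$ sending each $\Pi^*$-block $S^*_c$ to the unique $\Pi$-block $S_{\pi(c)}$ that contains it. Because $\bar\ell(\Pi)$ and $\bar\ell(\Pi^*)$ are both sums over items $j$ and over subjects $i$, with each subject's contribution determined solely by the block parameter of the block in which it lies, I can organize the comparison item-by-item and $\Pi^*$-block-by-block. Concretely, fixing an item $j$ and a $\Pi^*$-block $S^*_c\subseteq S_{\pi(c)}$, the net contribution of the subjects $i\in S^*_c$ at item $j$ to $\bar\ell(\Pi^*)-\bar\ell(\Pi)$ is
\[
\sum_{i\in S^*_c}\left[P_{i,j}\log\frac{\bar\theta^{\Pi^*}_{j,c}}{\bar\theta^{\Pi}_{j,\pi(c)}} + (1-P_{i,j})\log\frac{1-\bar\theta^{\Pi^*}_{j,c}}{1-\bar\theta^{\Pi}_{j,\pi(c)}}\right].
\]

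Next comes the key computation. Using the defining averaging identities $\sum_{i\in S^*_c}P_{i,j} = |S^*_c|\,\bar\theta^{\Pi^*}_{j,c}$ and $\sum_{i\in S^*_c}(1-P_{i,j}) = |S^*_c|\,(1-\bar\theta^{\Pi^*}_{j,c})$, the display above collapses to $|S^*_c|\,D(\bar\theta^{\Pi^*}_{j,c}\,\|\,\bar\theta^{\Pi}_{j,\pi(c)})$. Summing over all $\Pi^*$-blocks $c$ and all items $j$ then yields
\[
\bar\ell(\Pi^*) - \bar\ell(\Pi) = \sum_j\sum_c |S^*_c|\,D\big(\bar\theta^{\Pi^*}_{j,c}\,\big\|\,\bar\theta^{\Pi}_{j,\pi(c)}\big).
\]
Since every KL term is nonnegative and $|S^*_c|\ge 0$, the right-hand side is nonnegative, giving $\bar\ell(\Pi^*)\ge\bar\ell(\Pi)$ as claimed.

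I do not expect any genuine analytic obstacle here: the entire content reduces to the nonnegativity of $D(\cdot\|\cdot)$ together with the fact that the block parameters are pooled averages. The only point requiring care is the refinement bookkeeping — ensuring each subject is accounted for under its correct $\Pi^*$-sub-block and that the coarser parameter $\bar\theta^{\Pi}_{j,\pi(c)}$ is the common value shared by all sub-blocks of $S_{\pi(c)}$ — so that the telescoping into a clean sum of divergences is valid. Once the grouping is set up correctly, the conclusion is immediate.
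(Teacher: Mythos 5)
Your proof is correct and is essentially the paper's argument in rearranged form: the paper proves the lemma by writing $\bar\ell(\Pi^*)$ in pooled "entropy" form, replacing each fine-block parameter $\bar\theta^{\Pi^*}_{j,a}$ by the coarse one $\bar\theta^{\Pi}_{j,F(a)}$ inside the logarithms (a step whose cost is exactly your weighted KL terms), and re-aggregating to recover $\bar\ell(\Pi)$. Your identity $\bar\ell(\Pi^*)-\bar\ell(\Pi)=\sum_j\sum_c |S^*_c|\,D\big(\bar\theta^{\Pi^*}_{j,c}\,\|\,\bar\theta^{\Pi}_{j,\pi(c)}\big)$ is precisely the difference between the first and second lines of the paper's chain, so the two proofs rest on the same averaging identities and the same appeal to nonnegativity of the Kullback--Leibler divergence.
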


Given $a\in [T^*]$ indexing $S_a^*$ in $\Pi^{*}$, since $S_a^* \subseteq S_b$ for some $S_b$ in $\Pi$, let $F(a)$ denote its index under $\Pi$ (i.e. $b$). We have
$$
\begin{aligned}
\bar{\ell}\left(\Pi^{*}\right) &=\sum_{a=1}^{T^{*}}\left|S_{a}^{*}\right|\sum_{j=1}^J \left\{\bar{\theta}_{j,a}^{\Pi^*} \log \bar{\theta}_{j,a}^{\Pi^*}+\left(1-\bar{\theta}_{j,a}^{\Pi^*}\right) \log \left(1-\bar{\theta}_{j,a}^{\Pi^*}\right)\right\} \\
& \geq \sum_{a=1}^{T^{*}}\left|S_{a}^{*}\right|\sum_{j=1}^J \left\{\bar{\theta}_{j,a}^{\Pi^*} \log \bar{\theta}_{j,F(a)}^{\Pi}+\left(1-\bar{\theta}_{j,a}^{\Pi^*}\right) \log \left(1-\bar{\theta}_{j,F(a)}^{\Pi}\right)\right\} \\
&=\sum_{b=1}^{T}\left|S_{b}\right|\sum_{j=1}^J \left\{\bar{\theta}_{j,b}^{\Pi} \log \bar{\theta}_{j,b}^{\Pi}+\left(1-\bar{\theta}_{j,b}^{\Pi}\right) \log \left(1-\bar{\theta}_{j,b}^{\Pi}\right)\right\} =\bar\ell(\Pi).
\end{aligned}
$$
The first equality is obtained by rewriting $\bar\ell(\Pi)$ in terms of subsets. The inequality follows from non-negativity of K-L distance. Then we combine terms in same class under $\Pi$ and obtain the second equality.

\bigskip

\textbf{Step 7.} Now we prove a result on refinement.
\begin{lemma}
    \label{refine_error}
    For any latent class assignment $\ZZ$, there exists a partition $\Pi^*$ that refines $\Pi^{\ZZ}$ and
    \[
    \bar\ell(\ZZ^0) - \bar\ell(\Pi^*) \geq \frac{1}{2} N_e(\zz) J\beta_J
    \]
\end{lemma}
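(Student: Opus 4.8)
The plan is to reduce the claim to a sum of Bernoulli Kullback--Leibler divergences and then to build a refinement of $\Pi^{\ZZ}$ that forces every misclassified subject to contribute a controlled amount. First I would record the partition analogue of the identity already derived in Step~4: for an arbitrary partition $\Pi$ of $[N]$, exactly the computation that produced $\bar\ell(\ZZ^0)-\bar\ell(\ZZ)=\sum_i\sum_j D(P_{i,j}\,\|\,\bar\theta^{\ZZ}_{j,z_i})$ yields
\[
\bar\ell(\ZZ^0)-\bar\ell(\Pi)=\sum_{i=1}^N\sum_{j=1}^J D\!\left(P_{i,j}\,\big\|\,\bar\theta^{\Pi}_{j,\Pi_i}\right)\ \ge\ 0,
\]
since $\bar\theta^{\ZZ^0}_{j,z_i^0}=\theta^0_{j,z_i^0}=P_{i,j}$. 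Hence it suffices to exhibit a refinement $\Pi^*$ of $\Pi^{\ZZ}$ in which each of the $N_e(\zz)$ erroneous subjects sits in a block whose average response vector is far, in $\ell_2$ distance, from that subject's own true item-parameter column $\TT^0_{\cdot,z_i^0}$.

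The construction of $\Pi^*$ proceeds cluster by cluster. Fix a cluster $S_a$ of $\Pi^{\ZZ}$ with majority (most frequent) true class $m_a$; the error subjects in $S_a$ are precisely those whose true class differs from $m_a$. I would refine $S_a$ into blocks of size at most two in which every size-two block is \emph{heterochromatic} (its two subjects carry distinct true classes) and every error subject lies in such a size-two block. This is always feasible: if the number of class-$m_a$ subjects is at least the number of errors in $S_a$, pair each error with a distinct $m_a$-subject and leave the surplus $m_a$-subjects as pure singletons; otherwise $m_a$ occupies fewer than half of $S_a$, so no true class exceeds half of $S_a$, and a standard pairing argument (repeatedly matching subjects of the two most populous classes) matches all subjects into heterochromatic pairs with at most one leftover $m_a$-subject, which contributes nothing. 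In every case each block lies inside a single $S_a$, so $\Pi^*$ refines $\Pi^{\ZZ}$.

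It then remains to estimate the contribution of each error subject. If an error subject $i$ of true class $c$ lies in a heterochromatic pair with a subject of class $c'\neq c$, the block average is $\bar\theta^{\Pi^*}_{\cdot,\Pi^*_i}=\tfrac12(\TT^0_{\cdot,c}+\TT^0_{\cdot,c'})$, whence
\[
\big\|\TT^0_{\cdot,c}-\bar\theta^{\Pi^*}_{\cdot,\Pi^*_i}\big\|^2=\tfrac14\big\|\TT^0_{\cdot,c}-\TT^0_{\cdot,c'}\big\|^2\ \ge\ \tfrac14\,J\beta_J
\]
by Assumption~\ref{cond-gap}. Applying Pinsker's inequality $D(p\,\|\,q)\ge 2(p-q)^2$ termwise gives $\sum_j D(P_{i,j}\,\|\,\bar\theta^{\Pi^*}_{j,\Pi^*_i})\ge 2\|\TT^0_{\cdot,c}-\bar\theta^{\Pi^*}_{\cdot,\Pi^*_i}\|^2\ge \tfrac12\,J\beta_J$. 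Summing this lower bound over the $N_e(\zz)$ error subjects, while discarding the nonnegative contributions of all remaining subjects and of the leftover singletons, produces $\bar\ell(\ZZ^0)-\bar\ell(\Pi^*)\ge \tfrac12\,N_e(\zz)\,J\beta_J$, which is the assertion.

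The step I expect to be the main obstacle is the design of $\Pi^*$ rather than the analytic estimate. Placing an error subject in a block that mixes three or more true classes could cause cancellation among the gap vectors $\TT^0_{\cdot,c}-\TT^0_{\cdot,d}$ and destroy the lower bound, which is exactly why I insist on heterochromatic pairs; separately, one must accommodate the case where the majority class $m_a$ is only a plurality, so that there are too few $m_a$-subjects to absorb all the errors and the pairing must be carried out among the minority classes themselves. Both difficulties are combinatorial, and once the pairing is in place the Pinsker-plus-Assumption~\ref{cond-gap} bound is routine.
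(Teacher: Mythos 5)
Your proposal is correct and takes essentially the same route as the paper's proof: the paper likewise writes $\bar\ell(\ZZ^0)-\bar\ell(\Pi^*)=\sum_{i}\sum_j D\bigl(P_{i,j}\,\big\|\,\bar\theta^{\Pi^*}_{j,\Pi^*_i}\bigr)$, refines each estimated cluster by pairing every misclassified subject with a same-cluster subject of a different true class (error--error pairs allowed), and applies Pinsker's inequality to $D\bigl(P_{i,j}\,\big\|\,\tfrac12(P_{i,j}+P_{i',j})\bigr)\ge\tfrac12(P_{i,j}-P_{i',j})^2$ together with Assumption 2 to extract $\tfrac12 J\beta_J$ per error subject. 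The only differences are cosmetic: the paper keeps the surplus correctly classified subjects as one block rather than singletons, and it merely asserts the pairing feasibility that you verify explicitly (in your plurality case, set aside one $m_a$-subject before pairing when $|S_a|$ is odd, so that the unpaired singleton is guaranteed not to be an error subject; naive greedy pairing alone does not ensure this).
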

For a given $\ZZ$, partition each latent class assigned by $\ZZ$ into sub-classes according to true assignments $\ZZ^0$ of each sample. For each sample $i_1$ that is incorrectly assigned by $\ZZ$ (by definition this means its true class under $\ZZ^0$ is not in the majority within its estimated class under $\ZZ$), we find another sample $i_2$ assigned to same class under $\ZZ$ but $i_1$ and $i_2$ belong to different class under $\ZZ^0$ and make these two samples $(i_1, i_2)$ a pair. We allow two misclassified samples to form a pair. Note that since incorrectly assigned samples are not in the majority of that class, we can find a pair for each of them.

Here is a simple example. Suppose in one class of $\ZZ$, we have 7 samples and $\ZZ^0$ (true latent class assignments) assigns them as three sub-classes $\{1,2,3,4\}, \{5,6\}, \{7\}$. In this example samples indexed by 5,6 and 7 are misclassified. We can find pairs $(4,5), (6,7)$. 

The refined partition $\Pi^*$ contains all such pairs and remaining correctly assigned samples in all classes assigned by $\ZZ$. So for the above example, the refined subset for that class is $\{1,2,3\}, \{4,5\}, \{6,7\}$. Let $e(\zz)$ by the set of incorrectly assigned sample. Clearly $\Pi^*$ is a refinement and we have
\[
\begin{aligned}
    \bar\ell(\ZZ^0) - \bar\ell(\Pi^*) = &~ \sum_i \sum_j D(P_{i,j} \| \bar\theta_{j, \Pi_i^*}^{\Pi^*}) \\
    \geq &~ \sum_{i \in e(\zz)} \sum_j D(P_{i,j} \| \bar\theta_{j, \Pi_i^*}^{\Pi^*}) \\
    = &~ \sum_{i \in e(\zz)} \sum_j D(P_{i,j} \| \frac{P_{i,j} + P_{i',j}}{2}) 
\end{aligned}
\]
where $i$ and $i'$ are in different classes under $\ZZ^0$ while in same subset under $\Pi^*$ by definition. Apply Pinsker's inequality we have 
\[
\begin{aligned}
    D(P_{i,j} \| \frac{P_{i,j} + P_{i',j}}{2}) \geq &~ \frac{1}{2} \left[|P_{i,j}-\frac{P_{i,j} + P_{i',j}}{2}| + |1-P_{i,j} - (1- \frac{P_{i,j} + P_{i',j}}{2})|\right]^2 \\
    = &~ \frac{1}{2} (P_{i,j}-P_{i',j})^2 \\
    = &~ \frac{1}{2} (\theta_{j,z_i^0}^0- \theta_{j,z_{i'}^0}^0)^2 
\end{aligned}
\]
Hence we have
\[
\begin{aligned}
    \bar\ell(\ZZ^0) - \bar\ell(\Pi^*) \geq &~ \sum_{i \in e(\zz)} \sum_j \frac{1}{2} (\theta_{j,z_i^0}^0- \theta_{j,z_{i'}^0}^0)^2 \\
    \geq &~ \sum_{i \in e(\zz)} \frac{1}{2} \| \TT_{\cdot, z_i^0}^0-\TT_{\cdot, z_{i'}^0}^0 \|^2 \\
    \geq &~ \frac{1}{2} N_e(\zz) J\beta_J
\end{aligned}
\]

\bigskip

\textbf{Step 8.} 
Apply Lemma \ref{refine_error} to MLE $\hat{\zz}$, there exists a refinement of $\Pi^{\hat{\ZZ}}$ denoted as $\Pi^*$ such that 
\[
\bar\ell(\ZZ^0) - \bar\ell(\Pi^*) \geq  \frac{1}{2} N_e(\zz) J\beta_J
\]
By Lemma \ref{refine_incr} we have $\bar\ell({\Pi^*}) \geq \bar\ell(\Pi^{\hat{\ZZ}})$. So we conclude that 
\[
\begin{aligned}
    o_P(\delta_{NJ}) = &~ \bar \ell(\ZZ^0) - \bar \ell(\hat \ZZ) \\
    \geq &~ \bar\ell(\ZZ^0) - \bar\ell(\Pi^*) \\
    \geq &~ \frac{1}{2} N_e(\zz) J\beta_J
\end{aligned}
\]
which completes the proof.

\section*{Appendix 2: Proof of Corollary 2}

Recall $m_a = \argmax_{l \in [L]} \, \sum_{i\in \hat{C}_a}Z_{i,l}^{0}$ is the class index under $\ZZ^0$ for cluster $\hat{C}_a$. For any $0 < \epsilon < \tau$, define the following event 
\[
A_{N}^{\epsilon} = \{ N_e(\hat{\zz}) / N \leq \epsilon\}.
\]
On the event $A_{N}^{\epsilon}$, for any $l \in [L]$, since we assume $n_l^0 / N \geq \tau >0$, we claim that there is exactly one $a \in [L]$ such that $m_a = l$, i.e. the $a-$th cluster represents the $l-$th class. To see the existence of such $a$, assume by contradiction that for some $l$ there is no $a$ such that $m_a = l$, then all subjects in class $l$ are misclassified and we have 
\[
N_e(\hat{\zz})/N \geq n_l^0/N \geq \tau > \epsilon,
\]
a contradiction.
Since for each $l-$th class we can find $a-th$ cluster to represent it and there are exactly $L$ clusters $\hat{C}_1, \dots, \hat{C}_L$, such $a$ must be unique for all the $L$ classes. Note that $\MP (A_N^{\epsilon}) \rightarrow 1,$ the first statement in the corollary is proved.

For any $\epsilon \in (0,\tau)$, from the argument above, on the event $A_{N}^{\epsilon}$ for each $l$ we can find exactly one $a \in [L]$ such that $m_a = l$, then the joint MLE for $\theta_{j,l}^0$ is 
\[
\hat{\theta}_{j,a}=\hat{\theta}_{j,a}^{(\hat{\zz})}=\frac{\sum_{i=1}^{N} \hat{Z}_{i, a} R_{i, j}}{\sum_{i=1}^{N} \hat{Z}_{i, a} }.
\]
Recall we can rewrite $\theta_{j,l}^0$ as
\[
\theta_{j,l}^0=\frac{\sum_{i=1}^{N} {Z}_{i, l}^0 \theta_{j,l}^0}{\sum_{i=1}^{N} {Z}_{i, l}^0 } = \frac{\sum_{i=1}^{N} {Z}_{i, l}^0 P_{i,j}}{\sum_{i=1}^{N} {Z}_{i, l}^0 }.
\]
By triangle inequality we have
\begin{equation*}
    \begin{aligned}
    & \, \max_{j}|\hat{\theta}_{j,a} - \theta_{j,l}^0| \\
    = & \, \max_{j} \left|\frac{\sum_{i=1}^{N} \hat{Z}_{i, a} R_{i, j}}{\sum_{i=1}^{N} \hat{Z}_{i, a} } -  \frac{\sum_{i=1}^{N} {Z}_{i, l}^0 P_{i,j}}{\sum_{i=1}^{N} {Z}_{i, l}^0 }\right| \\
    \leq & \, \max_{j} \left|\frac{\sum_{i=1}^{N} \hat{Z}_{i, a} R_{i, j}}{\sum_{i=1}^{N} \hat{Z}_{i, a} } -  \frac{\sum_{i=1}^{N} \hat{Z}_{i, a} R_{i,j}}{\sum_{i=1}^{N} {Z}_{i, l}^0 }\right| + \max_{j} \left|\frac{\sum_{i=1}^{N} \hat{Z}_{i, a} R_{i, j}}{\sum_{i=1}^{N} {Z}_{i, l}^0 } -  \frac{\sum_{i=1}^{N} {Z}_{i, l}^0 R_{i,j}}{\sum_{i=1}^{N} {Z}_{i, l}^0 }\right|  \\
    + & \, \max_{j} \left|\frac{\sum_{i=1}^{N} {Z}_{i, l}^0 R_{i, j}}{\sum_{i=1}^{N} {Z}_{i, l}^0 } -  \frac{\sum_{i=1}^{N} {Z}_{i, l}^0 P_{i,j}}{\sum_{i=1}^{N} {Z}_{i, l}^0 }\right| \\
    \equiv & \, I_1 + I_2 + I_3.
    \end{aligned}
\end{equation*}
We then analyze these three terms.
\[
I_1 \leq \max_{j} \sum_{i}  \hat{Z}_{i,a} R_{i,j}  \frac{\sum_i |\hat{Z}_{i,a} - Z_{i,l}^0|}{n_l^0 \sum_{i} \hat{Z}_{i,a}} \leq \frac{\sum_i |\hat{Z}_{i,a} - Z_{i,l}^0|}{n_l^0}.
\]
There are two cases in which $|\hat{Z}_{i,a} - Z_{i,l}^0|=1$:
\begin{itemize}
    \item $\hat{Z}_{i,a} = 1, Z_{i,l}^0 = 0$, i.e. subject i is in cluster $a$ but not in class $l$. Since $m_a=l$, subject $i$ is misclassified and counted in $N_e(\hat{\zz})$.
    \item $\hat{Z}_{i,a} = 0, Z_{i,l}^0 = 1$, i.e. subject i is in class $l$ but not in cluster $a$. Since cluster $a$ is the only cluster that represents class $l$, subject $i$ must be misclassified and counted in $N_e(\hat{\zz})$.
\end{itemize}
By clustering consistency we have 
\[
I_1 \leq \frac{N_e(\hat{\zz})}{n_l^0} \leq  \frac{N_e(\hat{\zz})}{\tau N} \stackrel{P}{\longrightarrow} 0.
\]
For the second term we have
\[
I_2 = \frac{\max_j |\sum_i R_{i,j} (\hat{Z}_{i,a} - Z_{i,l}^0)|}{n_l^0} \leq \frac{\sum_i |\hat{Z}_{i,a} - Z_{i,l}^0|}{n_l^0} \leq \frac{N_e(\hat{\zz})}{n_l^0} \stackrel{P}{\longrightarrow} 0.
\]
For the third term, we apply Hoeffding's inequality and obtain 
\[
\MP (I_3\geq \delta) = \MP \left(\max_j \frac{|\sum_i Z_{i,l}^0 (R_{i,j} - P_{i,j})|}{n_l^0} \geq \delta\right) \leq J \exp (-2n_{l}^0 \delta^2) \leq J \exp(-2\tau N \delta^2) \rightarrow 0
\]
where in the last step we use the scaling condition $\sqrt{\frac{M}{J}} \left( \frac{N}{L}\right)^{1-\xi} \rightarrow \infty$. This shows
\[
I_3 \stackrel{P}{\longrightarrow} 0.
\]
Note that on $(A_{N}^{\epsilon})^c$ we may not be able to define $\hat{\theta}_{j,a}$ since the first statement in the corollary may not hold (we may not find the $a-th$ cluster for each $l-$th class). Mathematically we can arbitrarily define any $\hat{\theta}_{j,a}$ as long as it is in $[0,1]$. Since $ \MP(A_{N}^{\epsilon}) \rightarrow 1$, we only need to focus on the situation on $A_{N}^{\epsilon}$. We then have 
\begin{equation*}
\begin{aligned}
    & \, \MP(\max_j |\hat{\theta}_{j,a} -\theta_{j,l}^0| \geq \epsilon) \\
    \leq & \, \MP(\max_j |\hat{\theta}_{j,a} -\theta_{j,l}^0| I_{(A_{N}^{\epsilon})^c} \geq \epsilon/2) + \MP(\max_j |\hat{\theta}_{j,a} -\theta_{j,l}^0| I_{A_{N}^{\epsilon}}\geq \epsilon/2) \\
    \leq & \, \MP((A_{N}^{\epsilon})^c) + \MP(I_1 I_{A_{N}^{\epsilon}} \geq \epsilon/6) +  \MP(I_2 I_{A_{N}^{\epsilon}} \geq \epsilon/6)  + \MP(I_3 I_{A_{N}^{\epsilon}} \geq \epsilon/6) \rightarrow 0.
\end{aligned}
\end{equation*}
This completes the proof.

\section*{Appendix 3: More simulation results}

\subsection*{Random-effect LCM}

We first present more simulation results for random-effect LCM.

\begin{figure}[H]
	\centering
	\subfigure[MSE of item parameters]{
		\begin{minipage}[t]{0.33\linewidth}
			\centering
			\includegraphics[width=2in]{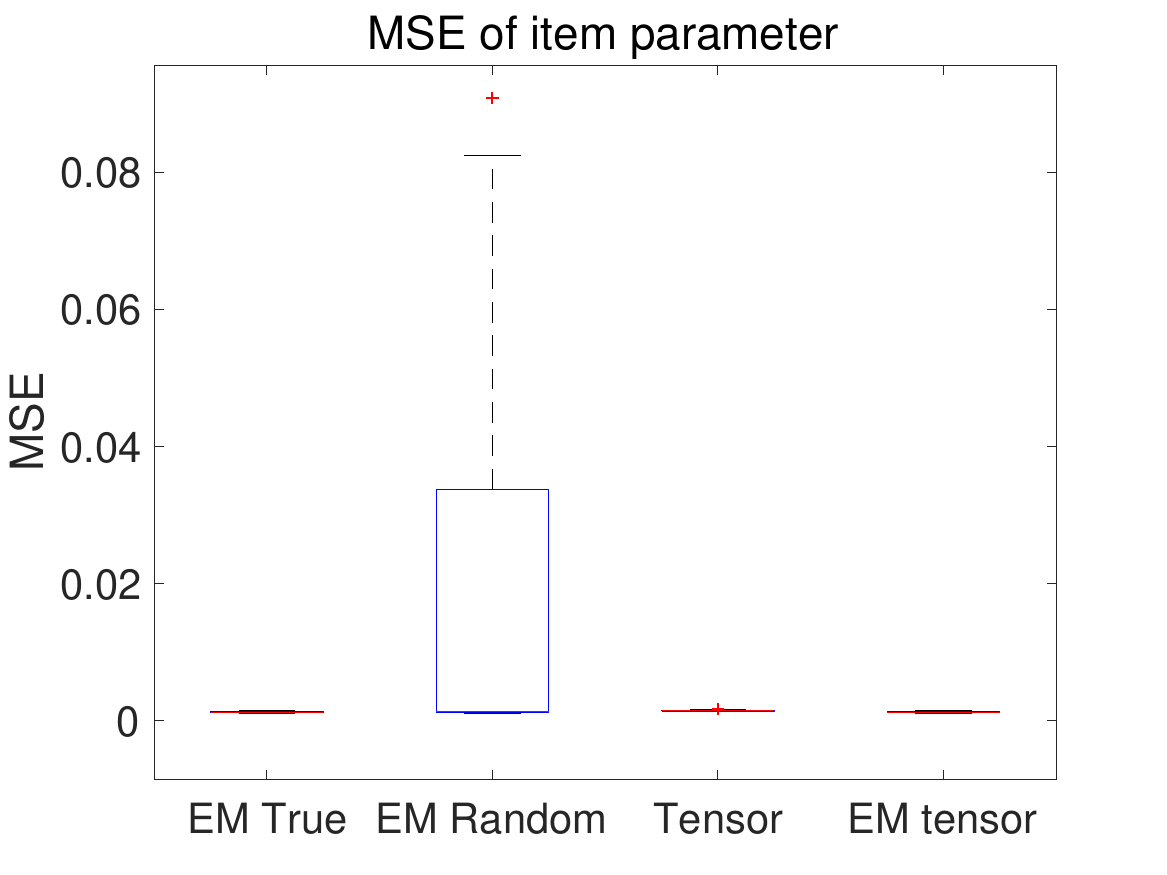}
		\end{minipage}%
	}%
		\subfigure[MSE without EM-random]{
		\begin{minipage}[t]{0.33\linewidth}
			\centering
			\includegraphics[width=2in]{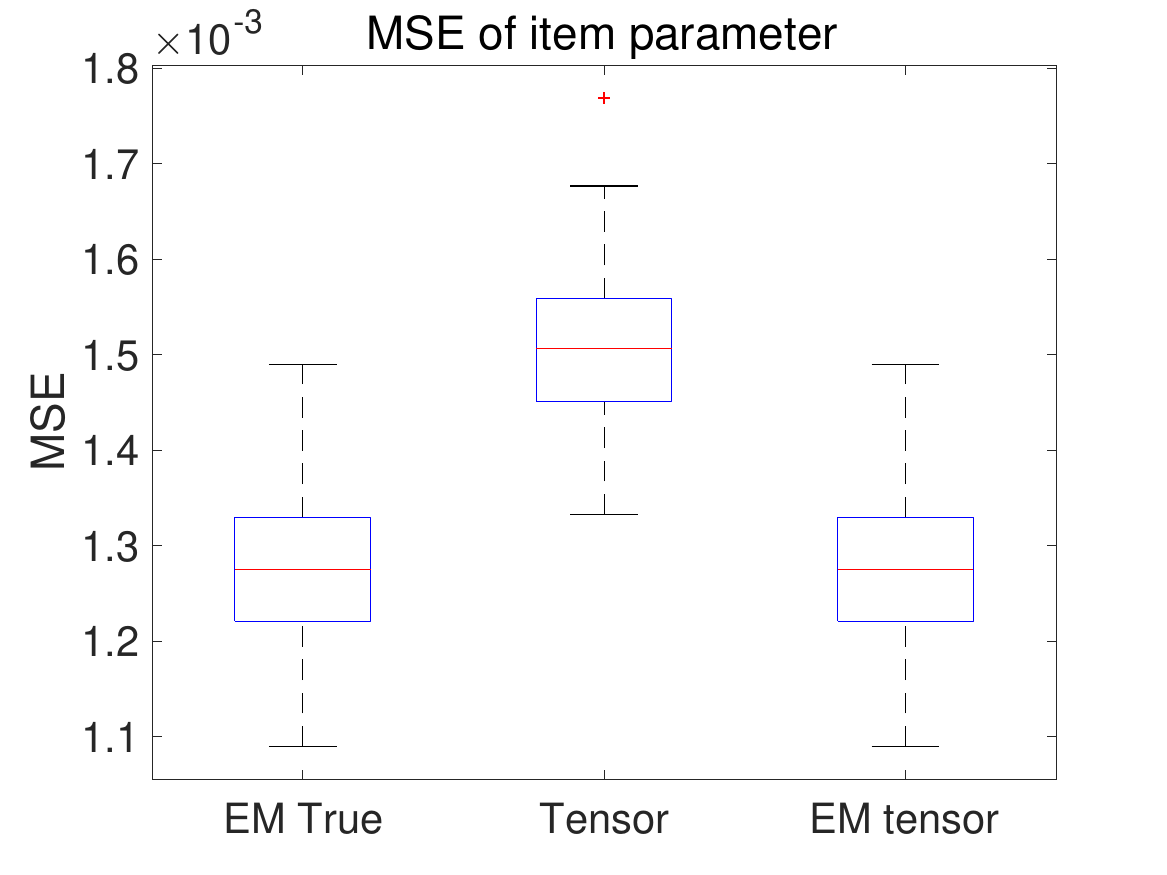}
		\end{minipage}%
	}%
	\subfigure[Running time of the algorithms]{
		\begin{minipage}[t]{0.33\linewidth}
			\centering
			\includegraphics[width=2in]{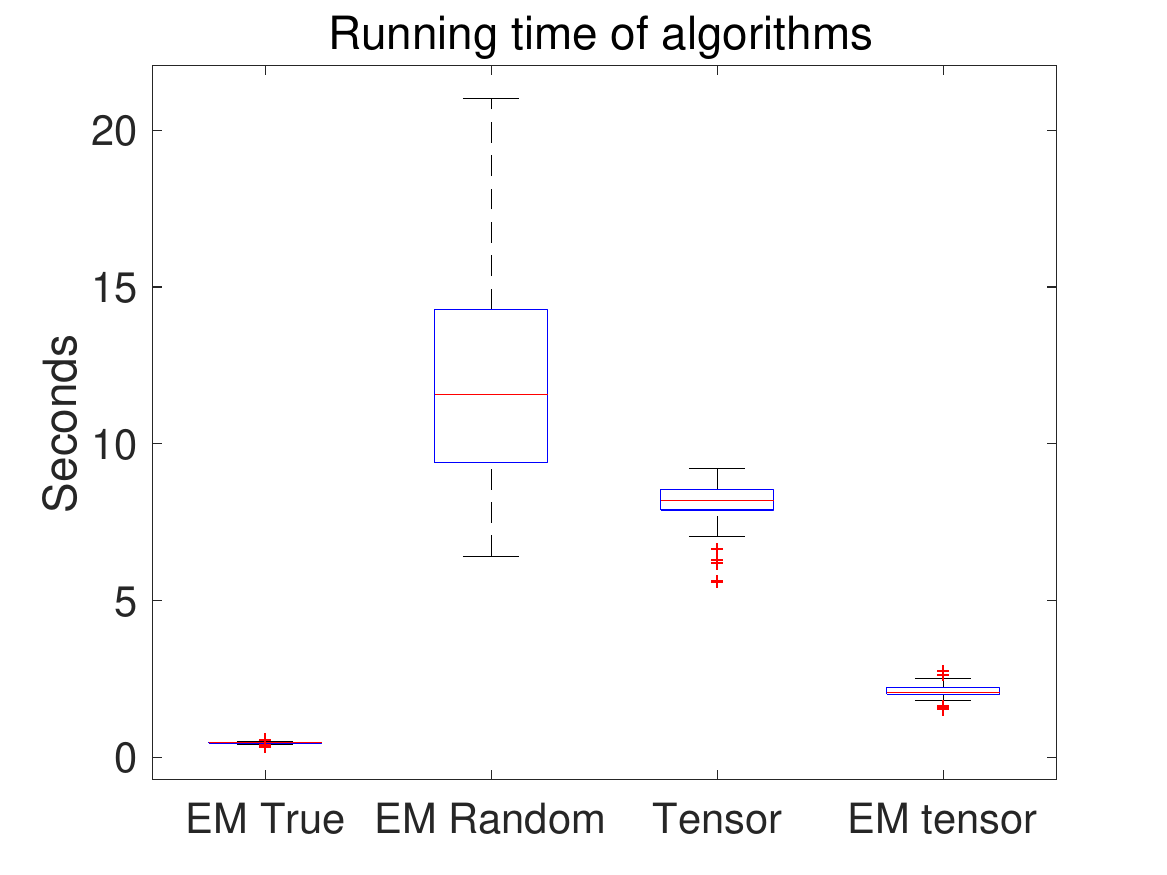}
		\end{minipage}%
	}%
	\centering
	\caption{$N = 1000, J= 100, L=10,$ item parameters $\in \{0.1,0.2,0.8,0.9\}$}
\end{figure}

\begin{figure}[H]
	\centering
	\subfigure[MSE of item parameters]{
		\begin{minipage}[t]{0.4\linewidth}
			\centering
			\includegraphics[width=2in]{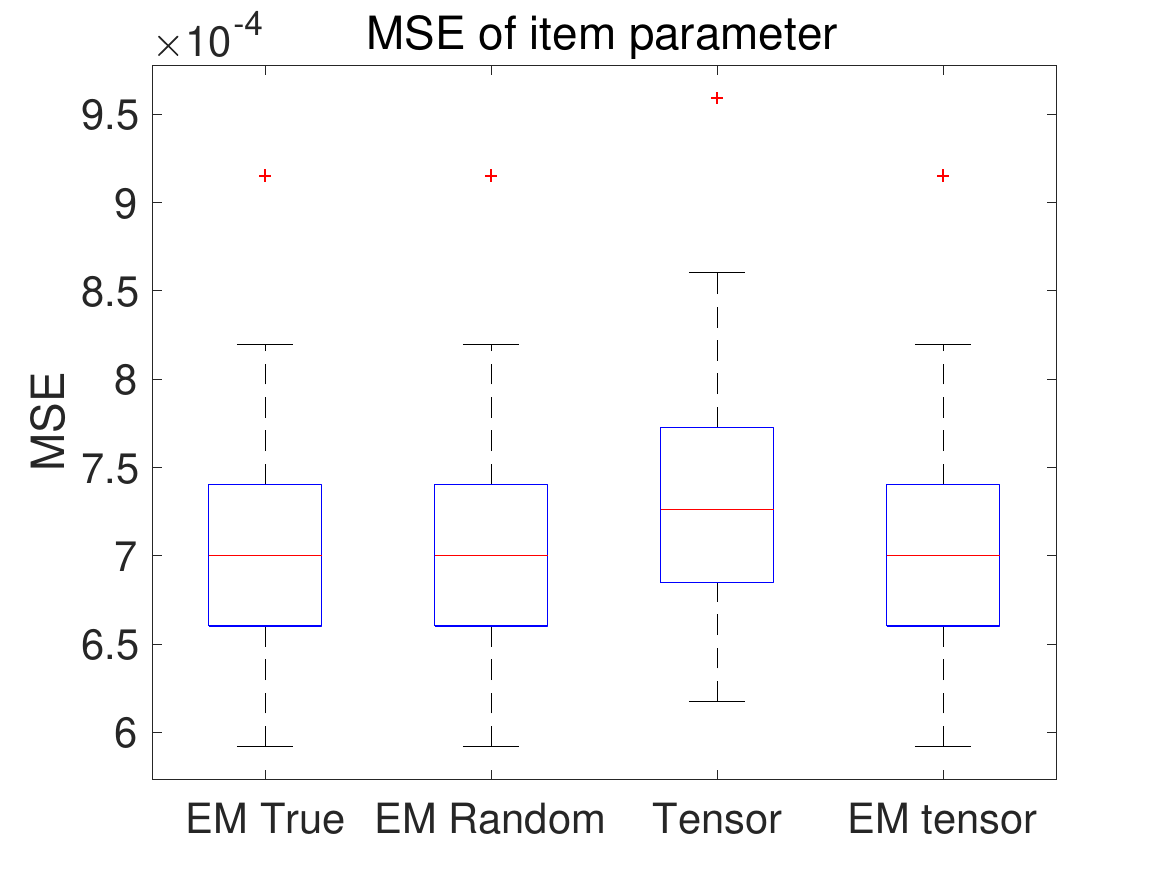}
		\end{minipage}%
	}%
	\subfigure[Running time of the algorithms]{
		\begin{minipage}[t]{0.4\linewidth}
			\centering
			\includegraphics[width=2in]{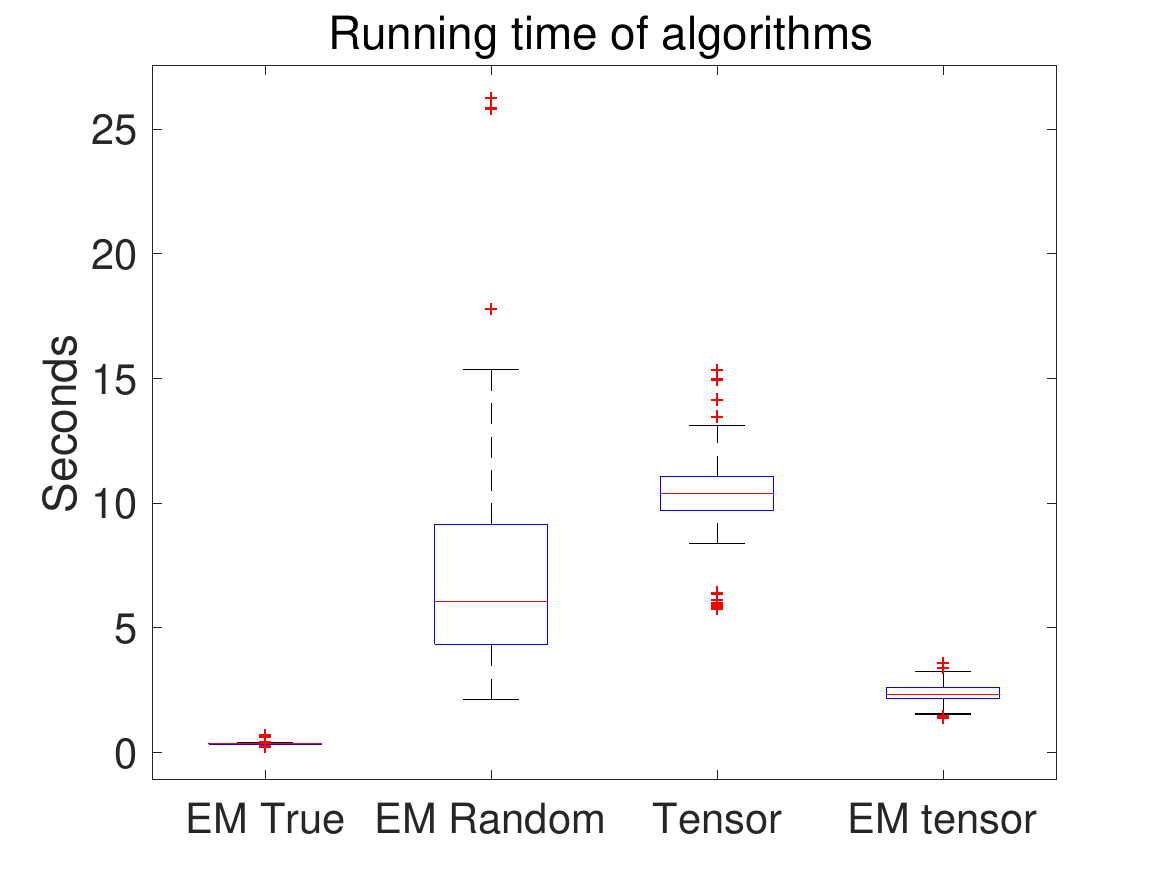}
		\end{minipage}%
	}%
	\centering
	\caption{$N = 1000, J= 200, L=5,$ item parameters $\in \{0.1,0.2,0.8,0.9\}$}
\end{figure}

\begin{figure}[H]
	\centering
	\subfigure[MSE of item parameters]{
		\begin{minipage}[t]{0.33\linewidth}
			\centering
			\includegraphics[width=2in]{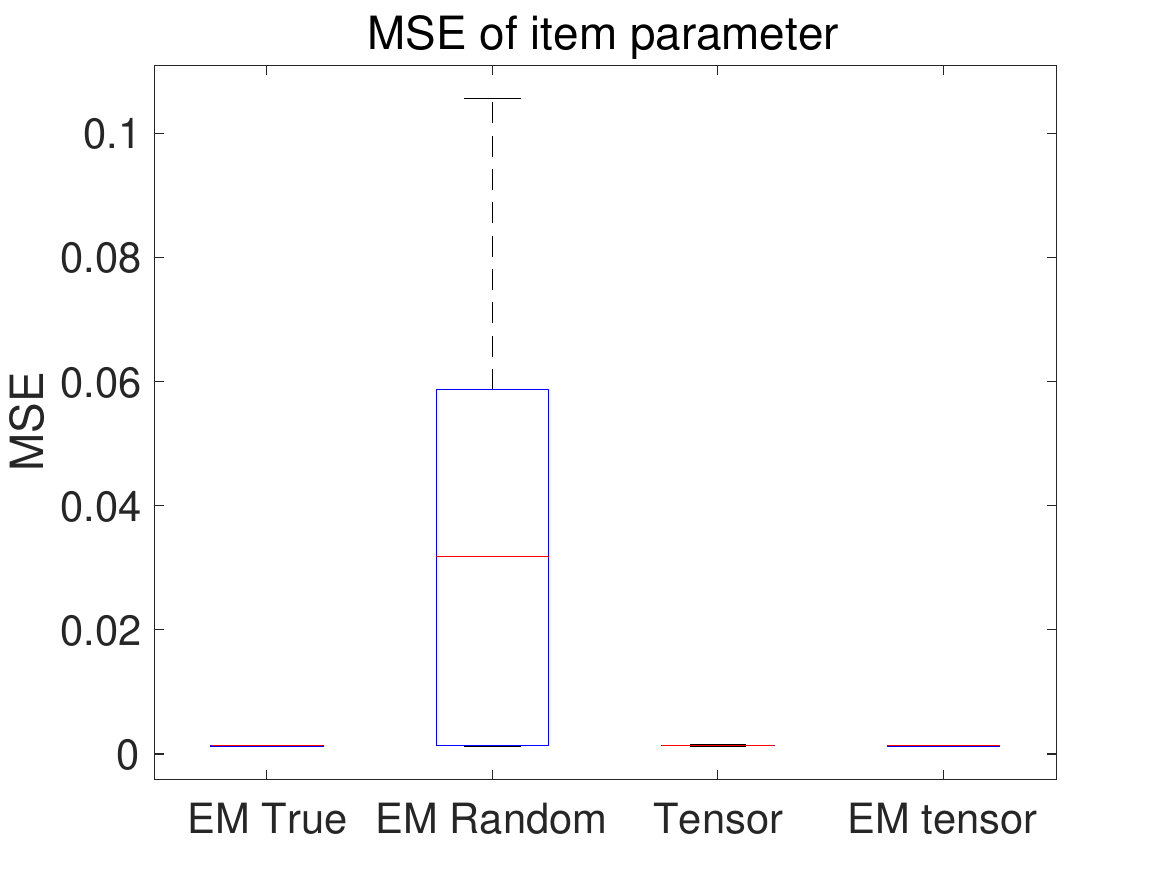}
		\end{minipage}%
	}%
    \subfigure[MSE without EM-random]{
    	\begin{minipage}[t]{0.33\linewidth}
    		\centering
    		\includegraphics[width=2in]{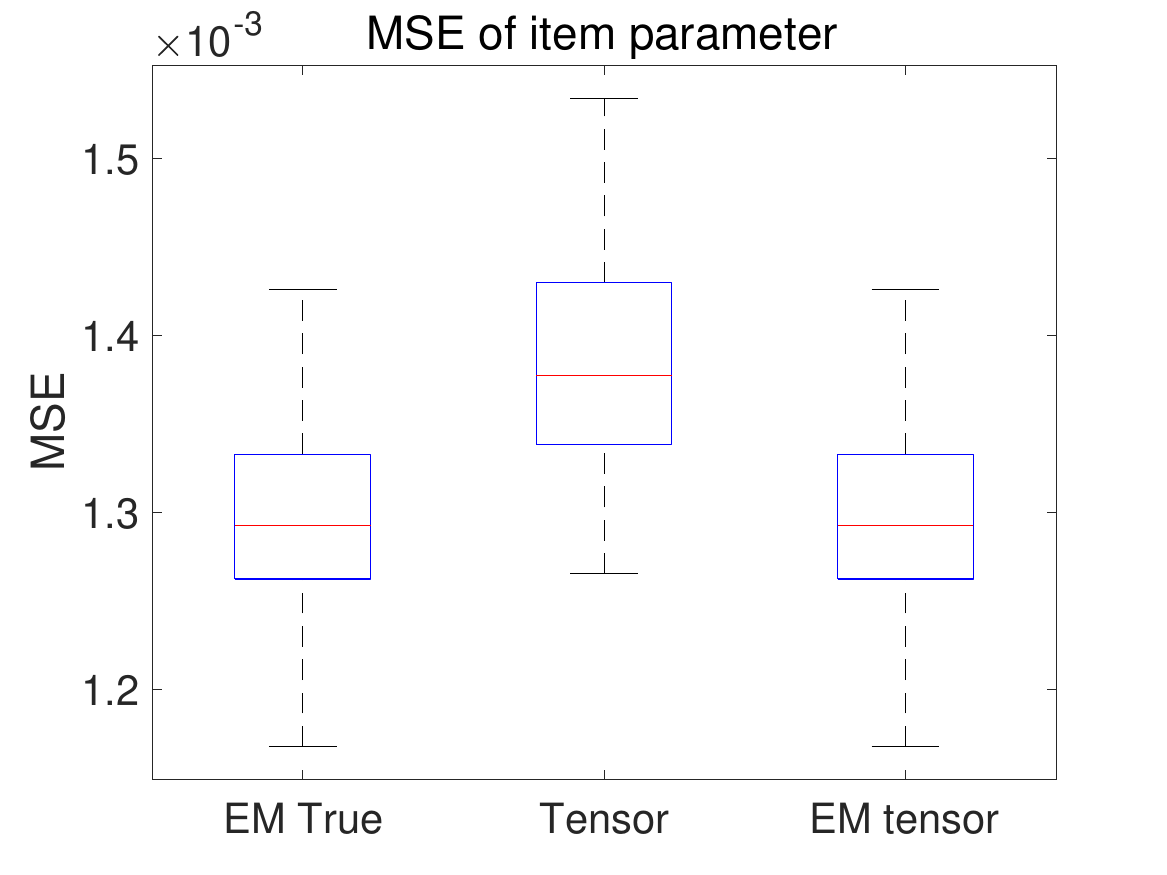}
    	\end{minipage}%
    }%
	\subfigure[Running time of the algorithms]{
		\begin{minipage}[t]{0.33\linewidth}
			\centering
			\includegraphics[width=2in]{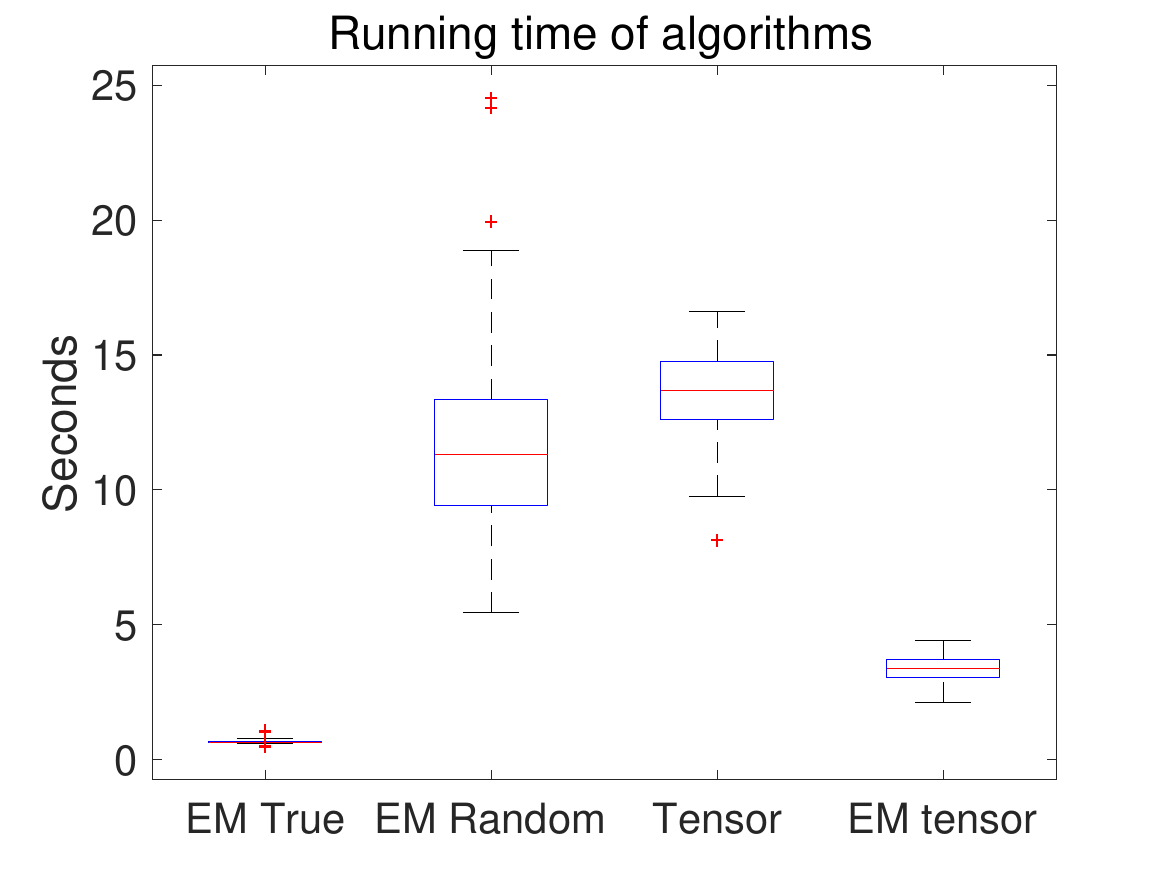}
		\end{minipage}%
	}%
	\centering
	\caption{$N = 1000, J= 200, L=10,$ item parameters $\in \{0.1,0.2,0.8,0.9\}$}
\end{figure}

\begin{figure}[H]
	\centering
	\subfigure[MSE of item parameters]{
		\begin{minipage}[t]{0.4\linewidth}
			\centering
			\includegraphics[width=2in]{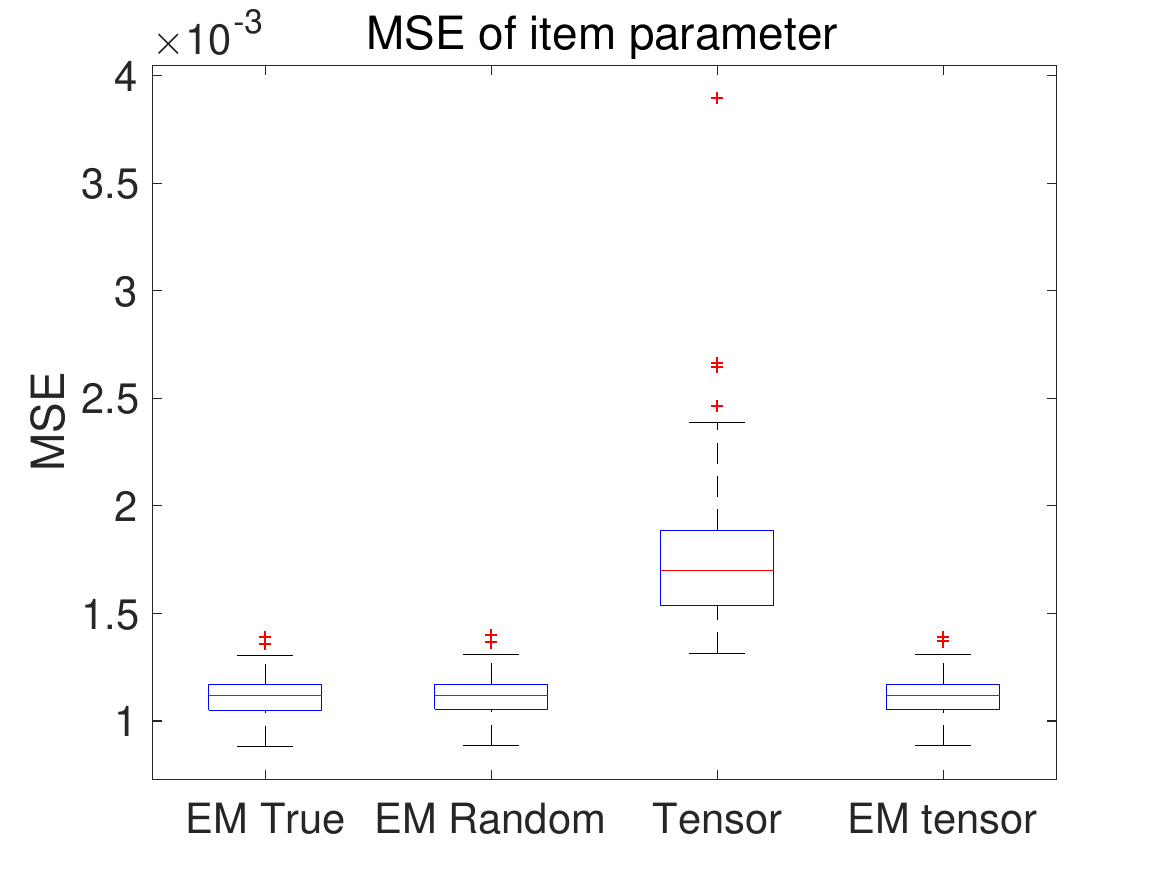}
		\end{minipage}%
	}%
	\subfigure[Running time of the algorithms]{
		\begin{minipage}[t]{0.4\linewidth}
			\centering
			\includegraphics[width=2in]{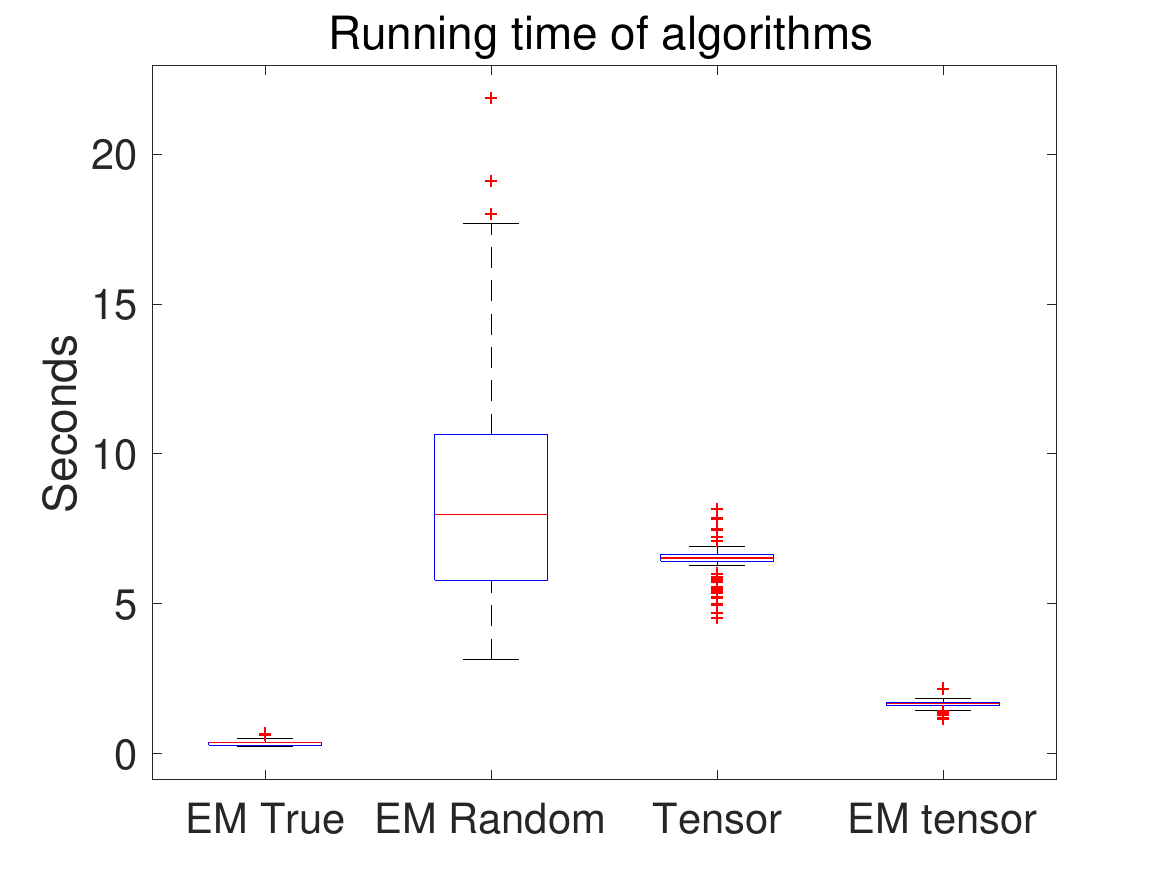}
		\end{minipage}%
	}%
	\centering
	\caption{$N = 1000, J= 100, L=5,$ item parameters $\in \{0.2,0.4,0.6,0.8\}$}
\end{figure}

\begin{figure}[H]
	\centering
	\subfigure[MSE of item parameters]{
		\begin{minipage}[t]{0.4\linewidth}
			\centering
			\includegraphics[width=2in]{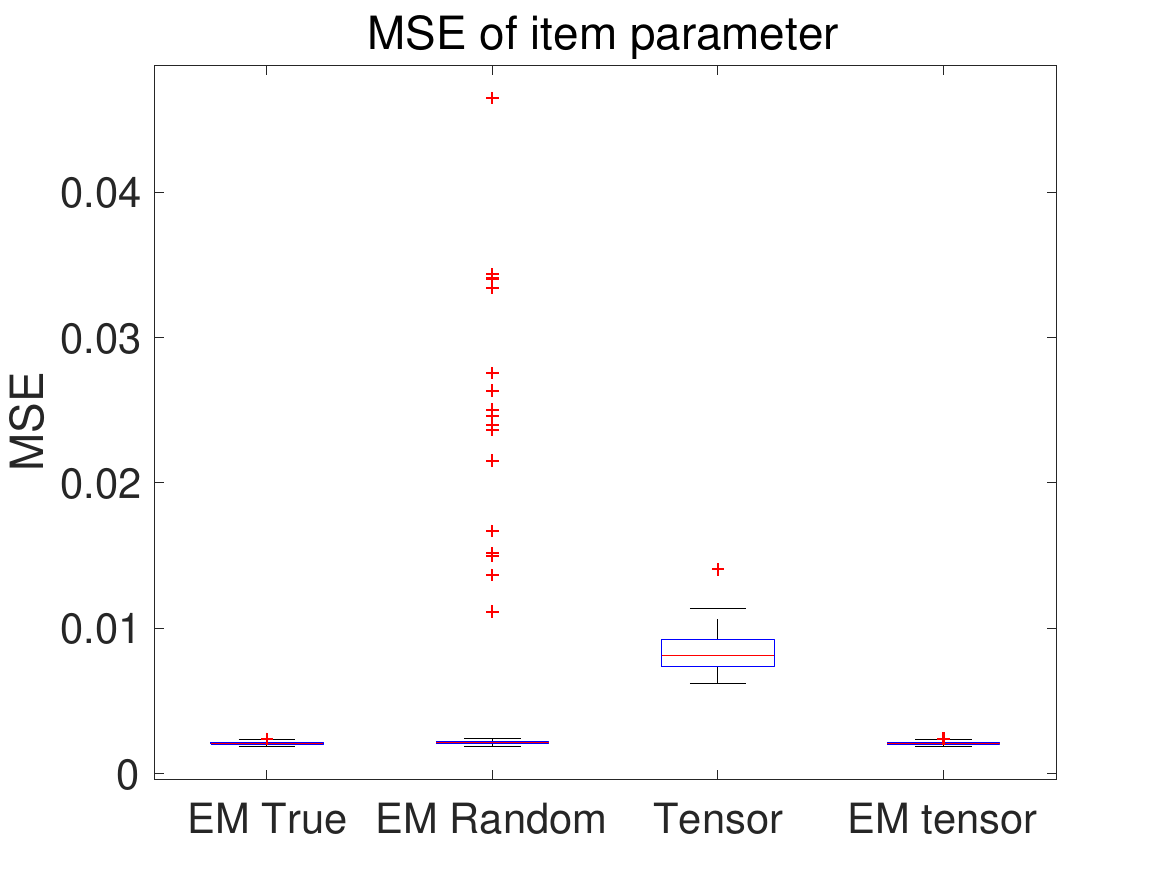}
		\end{minipage}%
	}%
	\subfigure[Running time of the algorithms]{
		\begin{minipage}[t]{0.4\linewidth}
			\centering
			\includegraphics[width=2in]{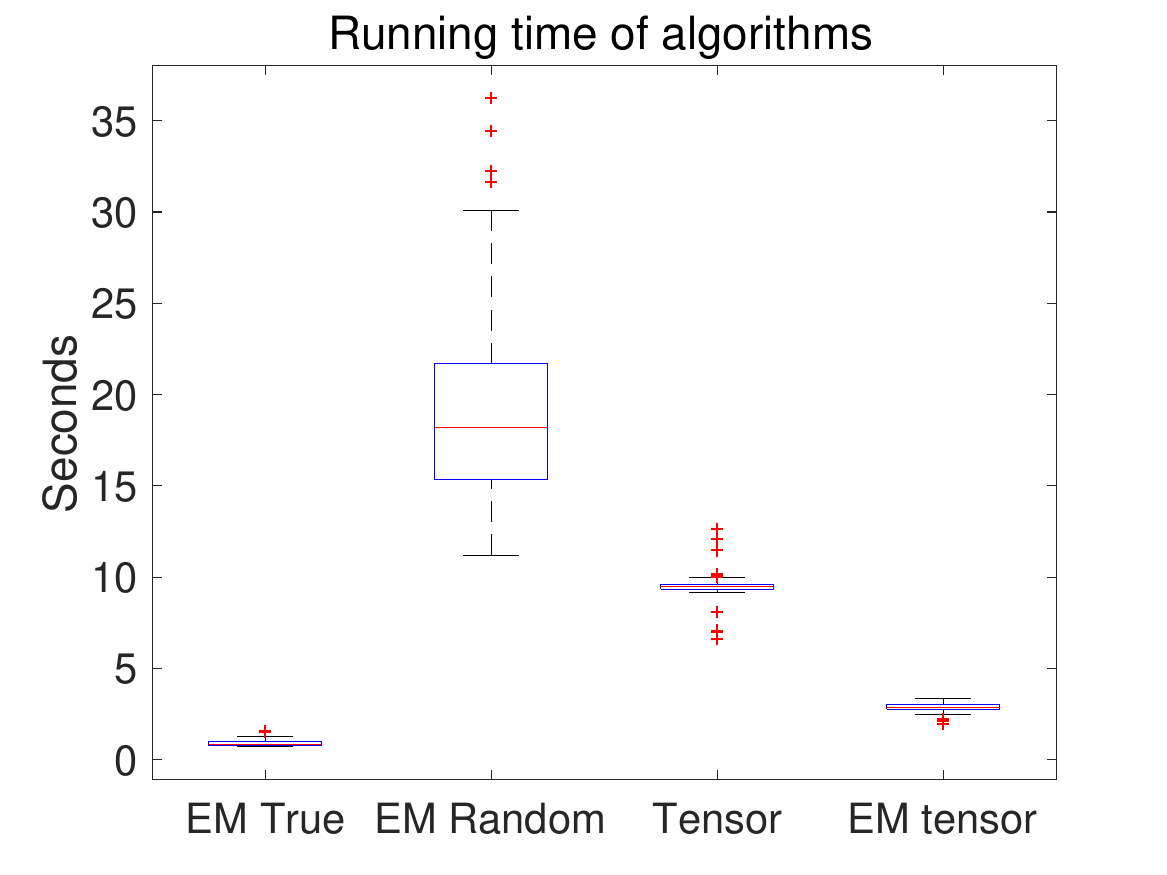}
		\end{minipage}%
	}%
	\centering
	\caption{$N = 1000, J= 100, L=10,$ item parameters $\in \{0.2,0.4,0.6,0.8\}$}
\end{figure}

\begin{figure}[H]
	\centering
	\subfigure[MSE of item parameters]{
		\begin{minipage}[t]{0.4\linewidth}
			\centering
			\includegraphics[width=2in]{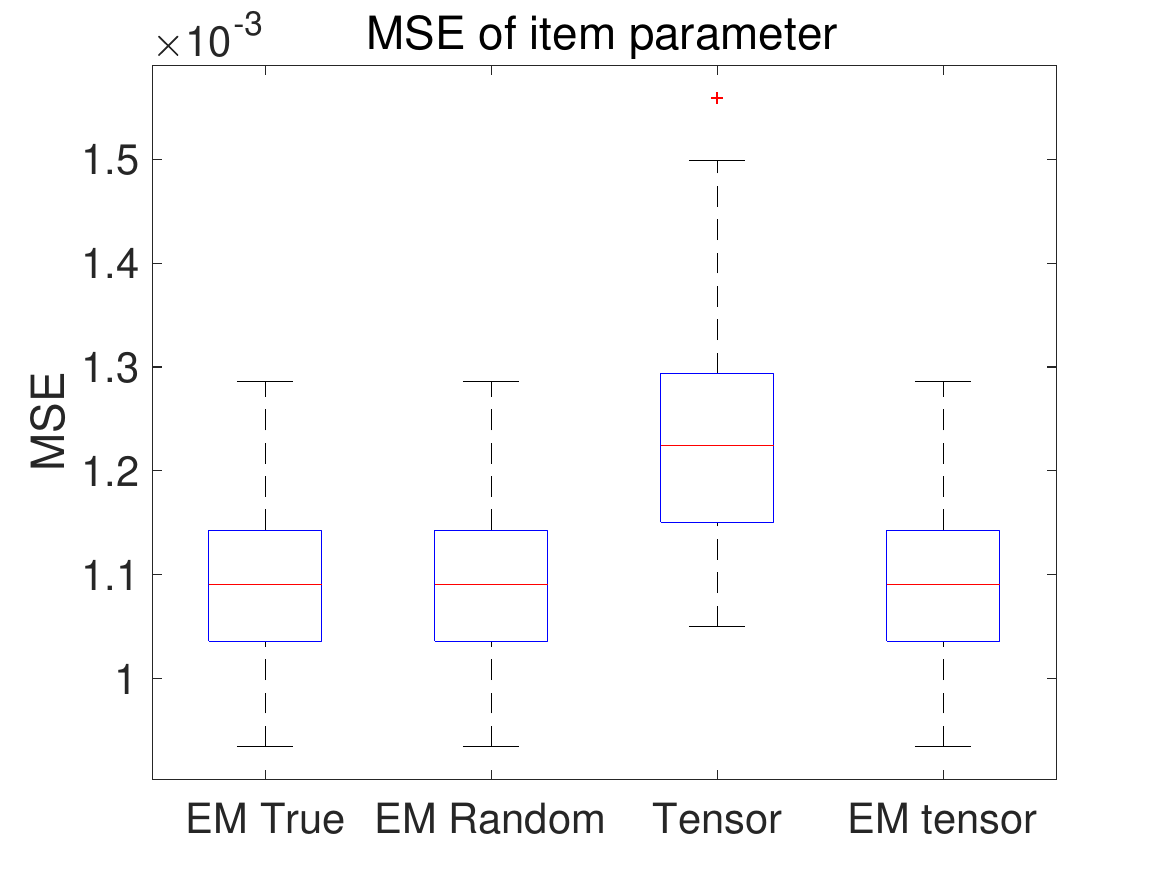}
		\end{minipage}%
	}%
	\subfigure[Running time of the algorithms]{
		\begin{minipage}[t]{0.4\linewidth}
			\centering
			\includegraphics[width=2in]{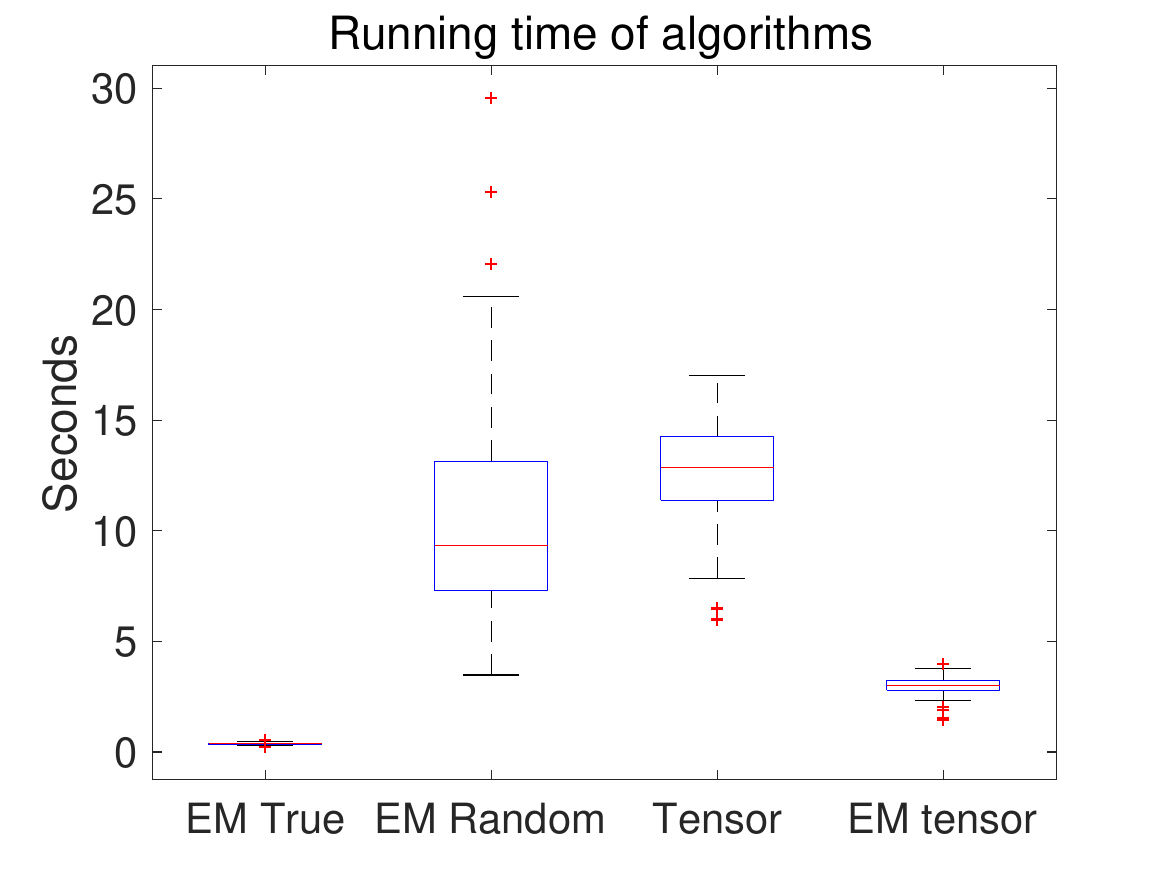}
		\end{minipage}%
	}%
	\centering
	\caption{$N = 1000, J= 200, L=5,$ item parameters $\in \{0.2,0.4,0.6,0.8\}$}
\end{figure}

\begin{figure}[H]
	\centering
	\subfigure[MSE of item parameters]{
		\begin{minipage}[t]{0.33\linewidth}
			\centering
			\includegraphics[width=2in]{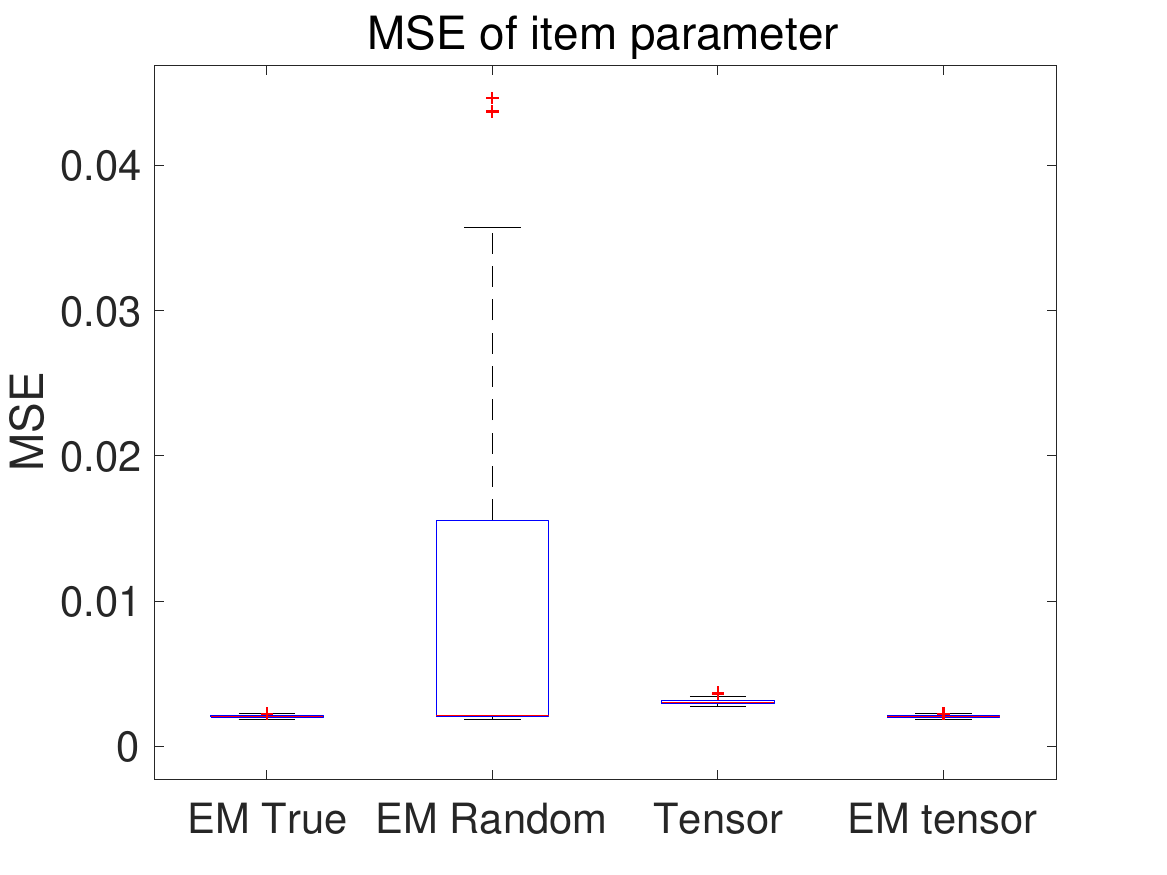}
		\end{minipage}%
	}%
	\subfigure[MSE without EM-random]{
		\begin{minipage}[t]{0.33\linewidth}
			\centering
			\includegraphics[width=2in]{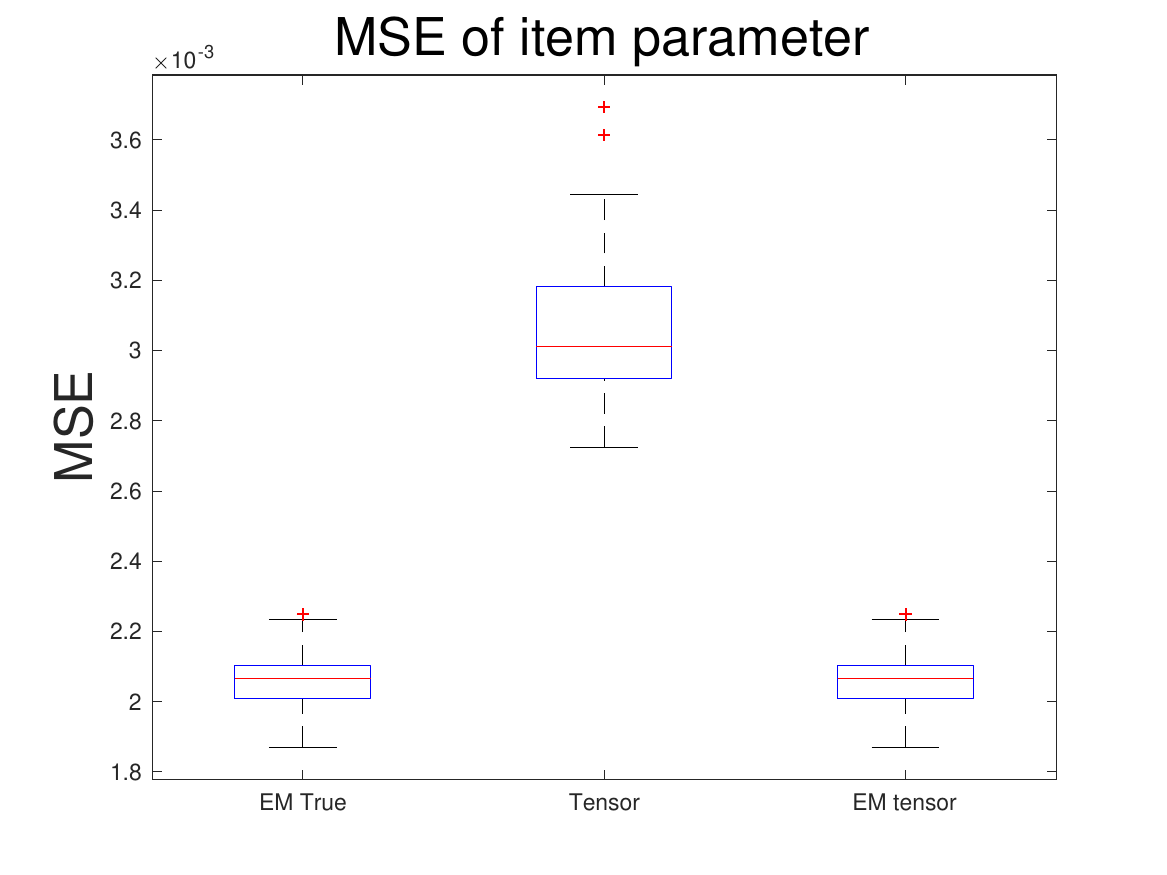}
		\end{minipage}%
	}%
	\subfigure[Running time of the algorithms]{
		\begin{minipage}[t]{0.33\linewidth}
			\centering
			\includegraphics[width=2in]{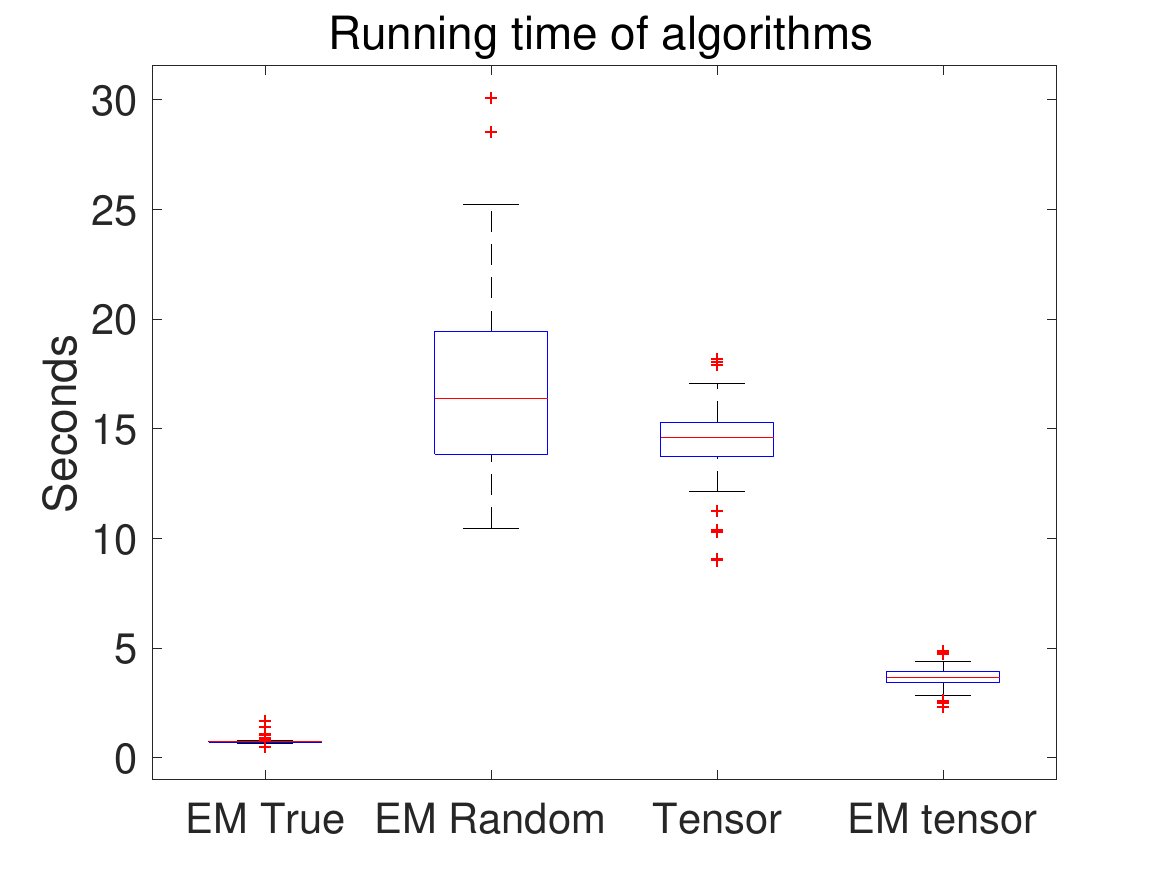}
		\end{minipage}%
	}%
	\centering
	\caption{$N = 1000, J= 200, L=10,$ item parameters $\in \{0.2,0.4,0.6,0.8\}$}
\end{figure}

\begin{figure}[H]
	\centering
	\subfigure[MSE of item parameters]{
		\begin{minipage}[t]{0.4\linewidth}
			\centering
			\includegraphics[width=2in]{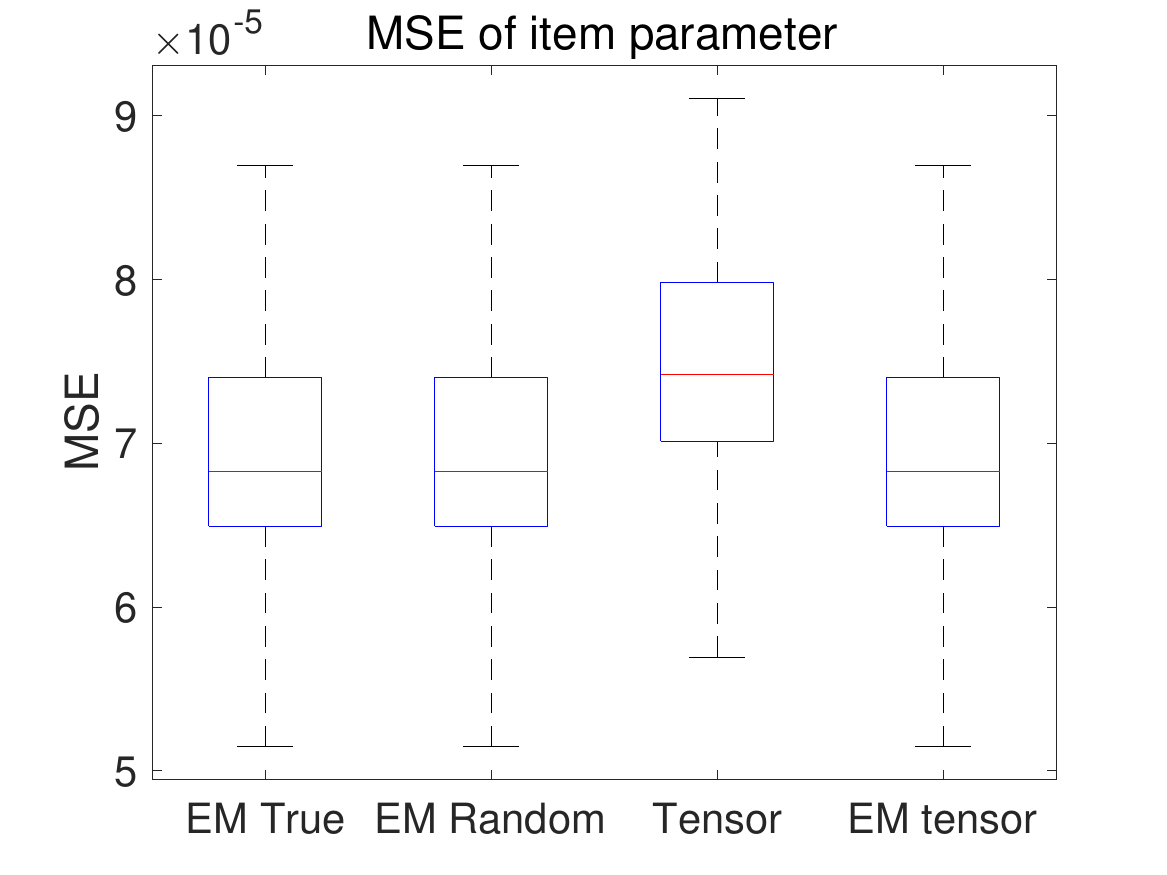}
		\end{minipage}%
	}%
	\subfigure[Running time of the algorithms]{
		\begin{minipage}[t]{0.4\linewidth}
			\centering
			\includegraphics[width=2in]{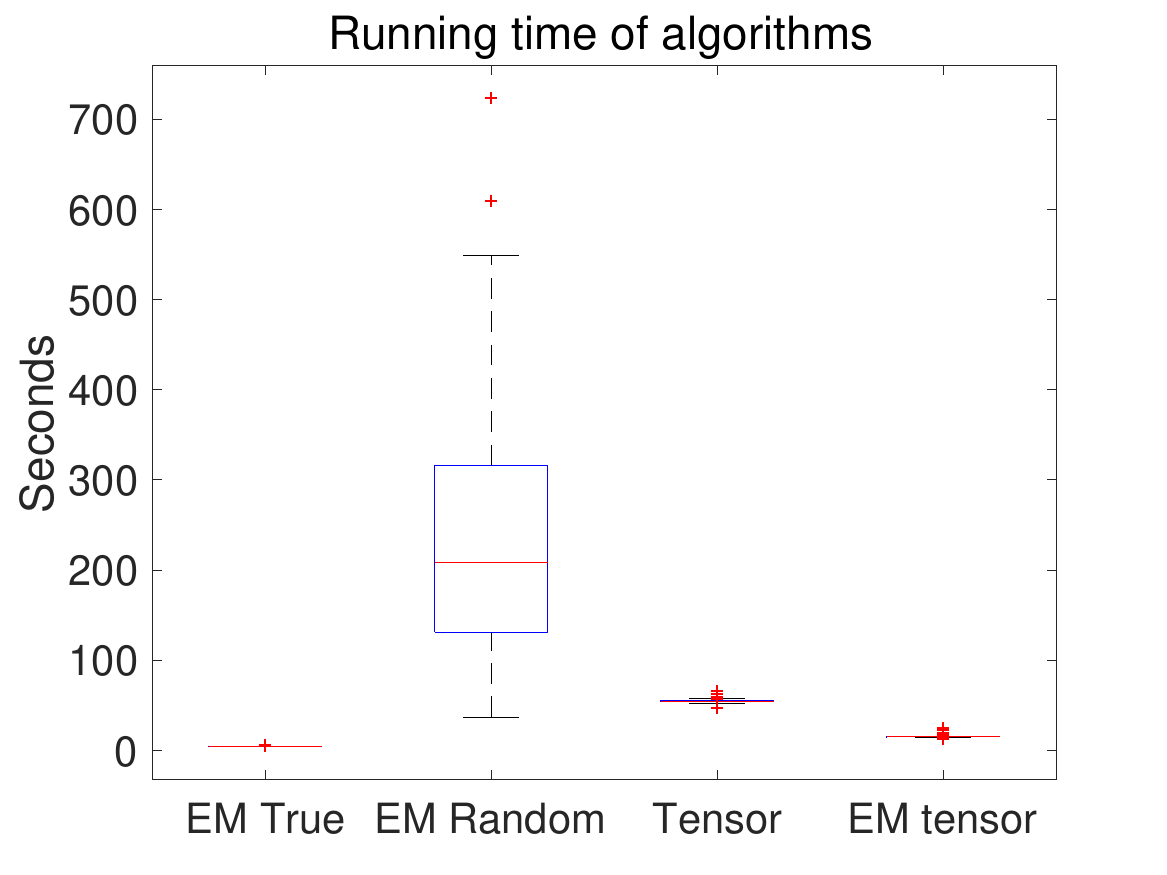}
		\end{minipage}%
	}%
	\centering
	\caption{$N = 10000, J= 100, L=5,$ item parameters $\in \{0.1,0.2,0.8,0.9\}$}
\end{figure}

\begin{figure}[H]
		\centering
		\subfigure[MSE of item parameters]{
			\begin{minipage}[t]{0.33\linewidth}
				\centering
				\includegraphics[width=2in]{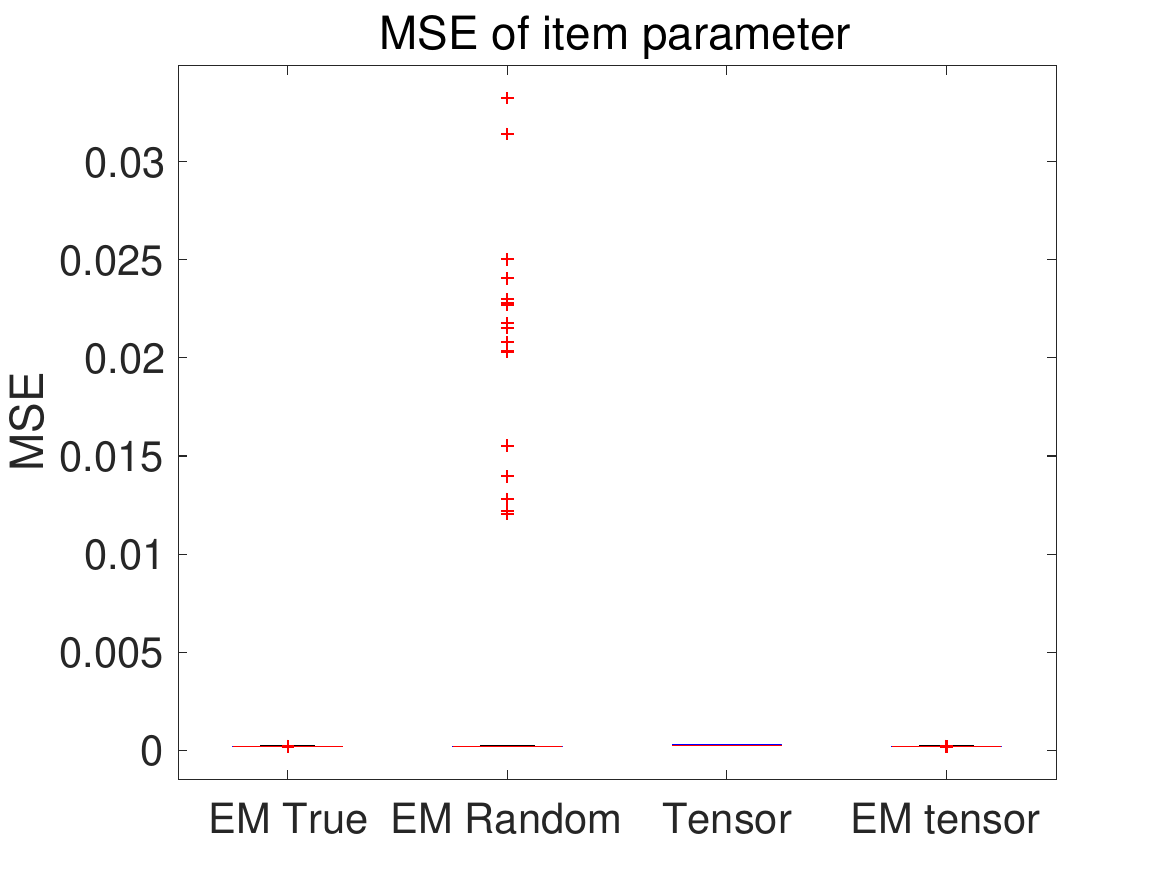}
			\end{minipage}%
		}%
		\subfigure[MSE without EM-random]{
			\begin{minipage}[t]{0.33\linewidth}
				\centering
				\includegraphics[width=2in]{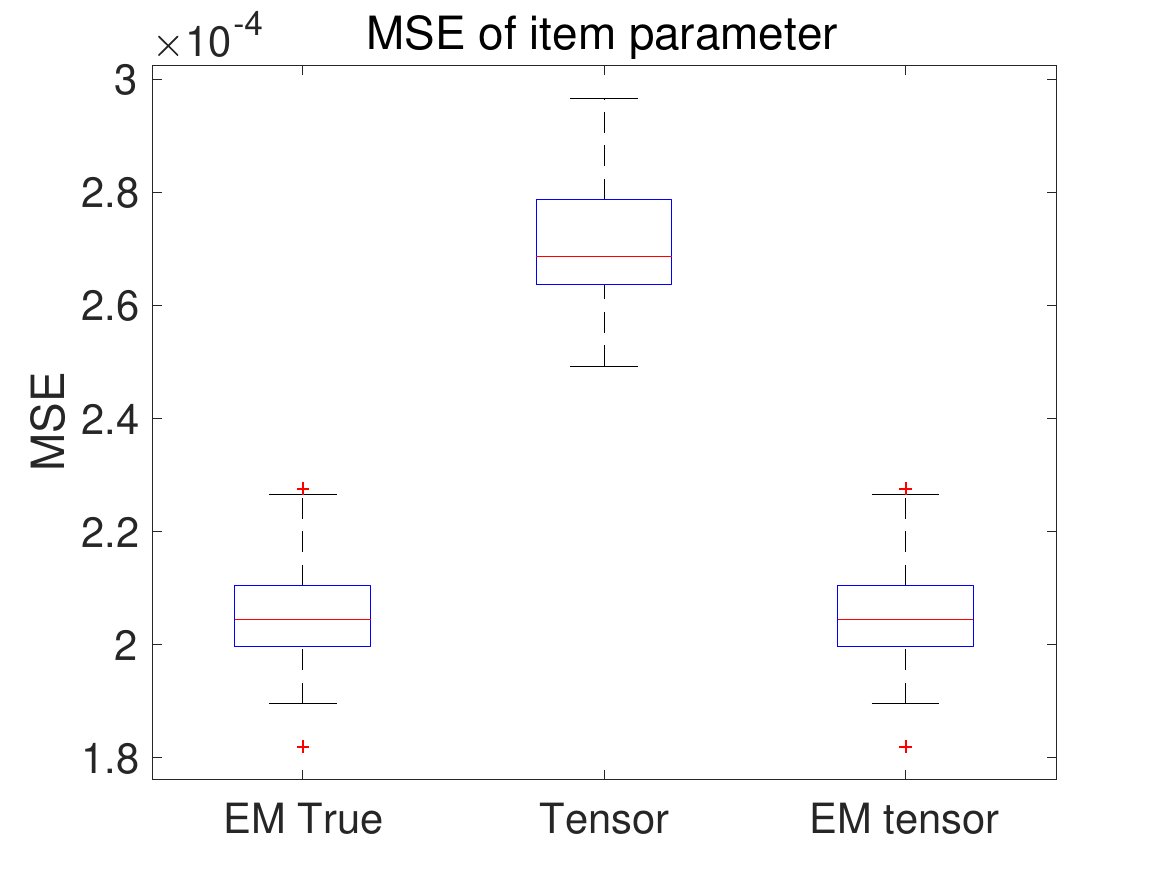}
			\end{minipage}%
		}%
		\subfigure[Running time of the algorithms]{
			\begin{minipage}[t]{0.33\linewidth}
				\centering
				\includegraphics[width=2in]{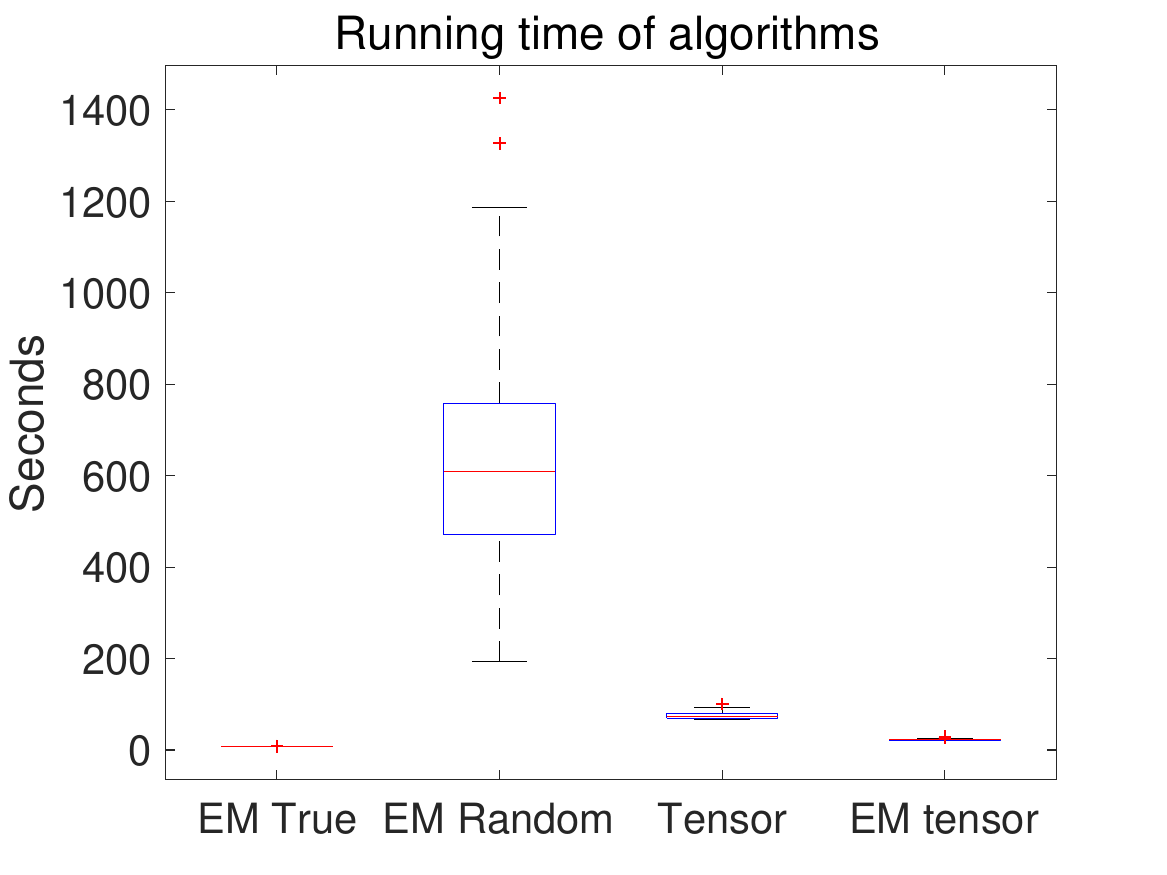}
			\end{minipage}%
		}%
		\centering
		\caption{$N = 10000, J= 200, L=10,$ item parameters $\in \{0.2,0.4,0.6,0.8\}$}
	\end{figure}

\begin{figure}[H]
	\centering
	\subfigure[MSE of item parameters]{
		\begin{minipage}[t]{0.4\linewidth}
			\centering
			\includegraphics[width=2in]{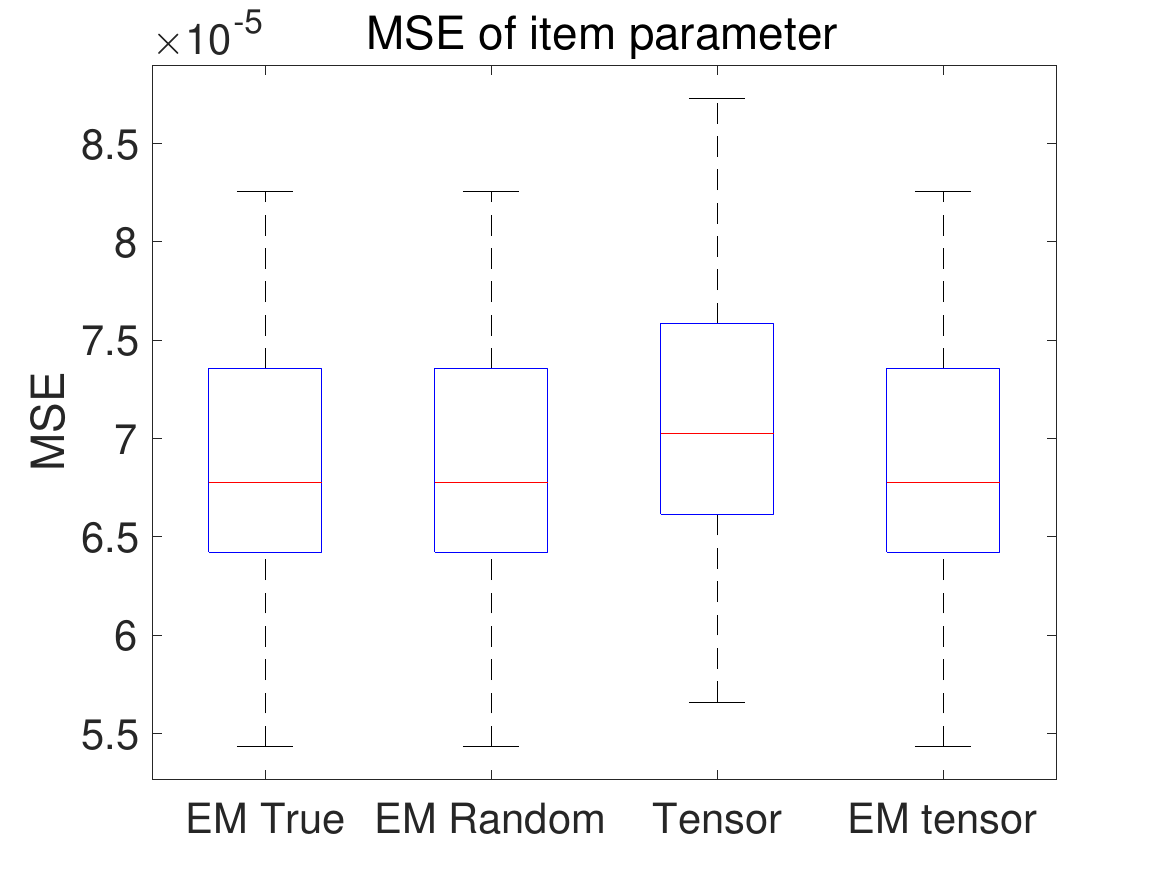}
		\end{minipage}%
	}%
	\subfigure[Running time of the algorithms]{
		\begin{minipage}[t]{0.4\linewidth}
			\centering
			\includegraphics[width=2in]{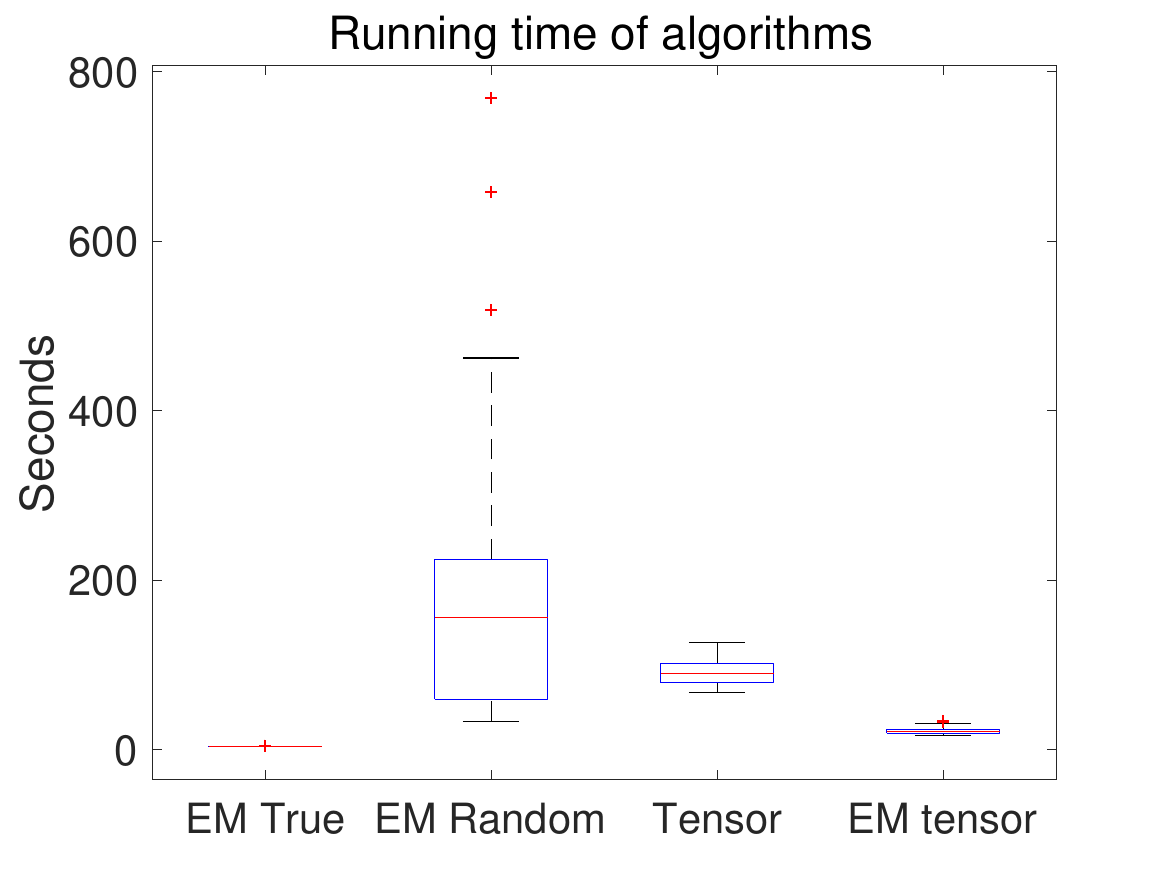}
		\end{minipage}%
	}%
	\centering
	\caption{$N = 10000, J= 200, L=5,$ item parameters $\in \{0.1,0.2,0.8,0.9\}$}
\end{figure}

\begin{figure}[H]
	\centering
	\subfigure[MSE of item parameters]{
		\begin{minipage}[t]{0.33\linewidth}
			\centering
			\includegraphics[width=2in]{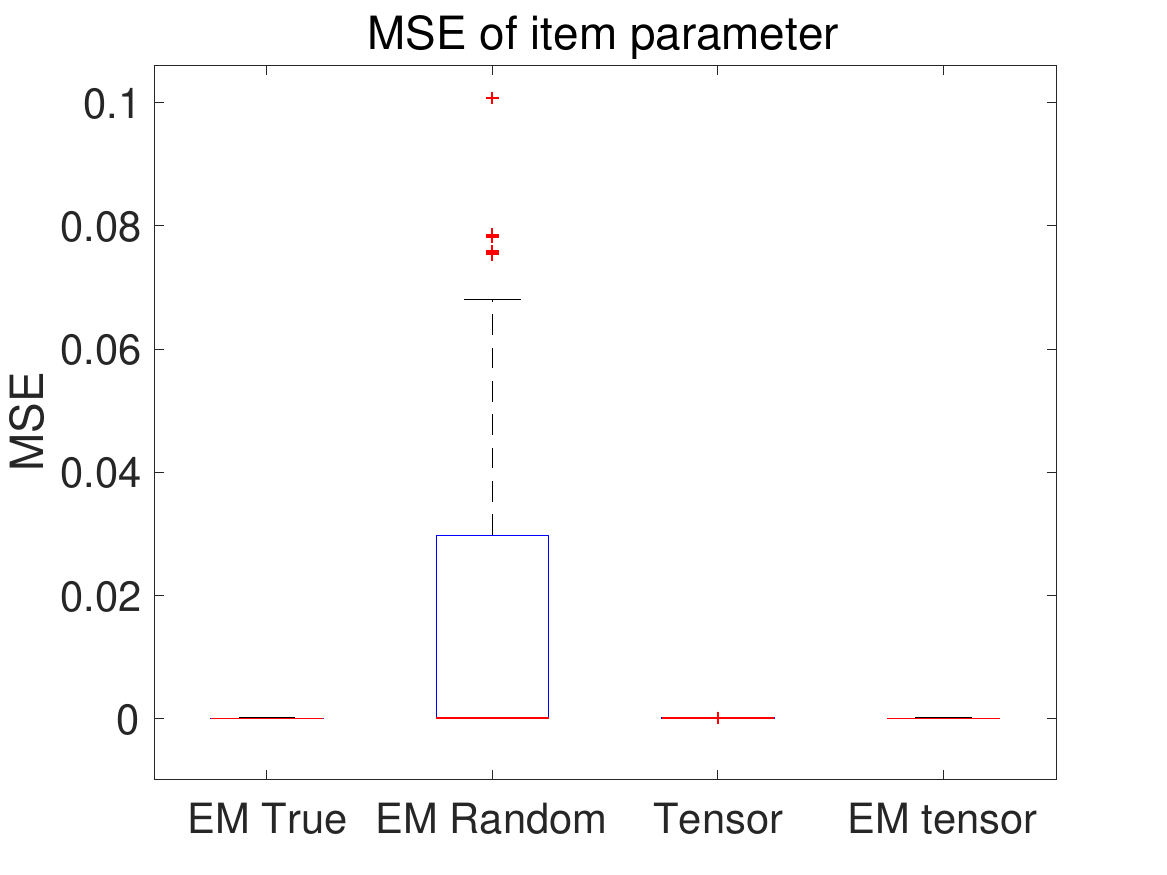}
		\end{minipage}%
	}%
		\subfigure[MSE without EM-random]{
	\begin{minipage}[t]{0.33\linewidth}
		\centering
		\includegraphics[width=2in]{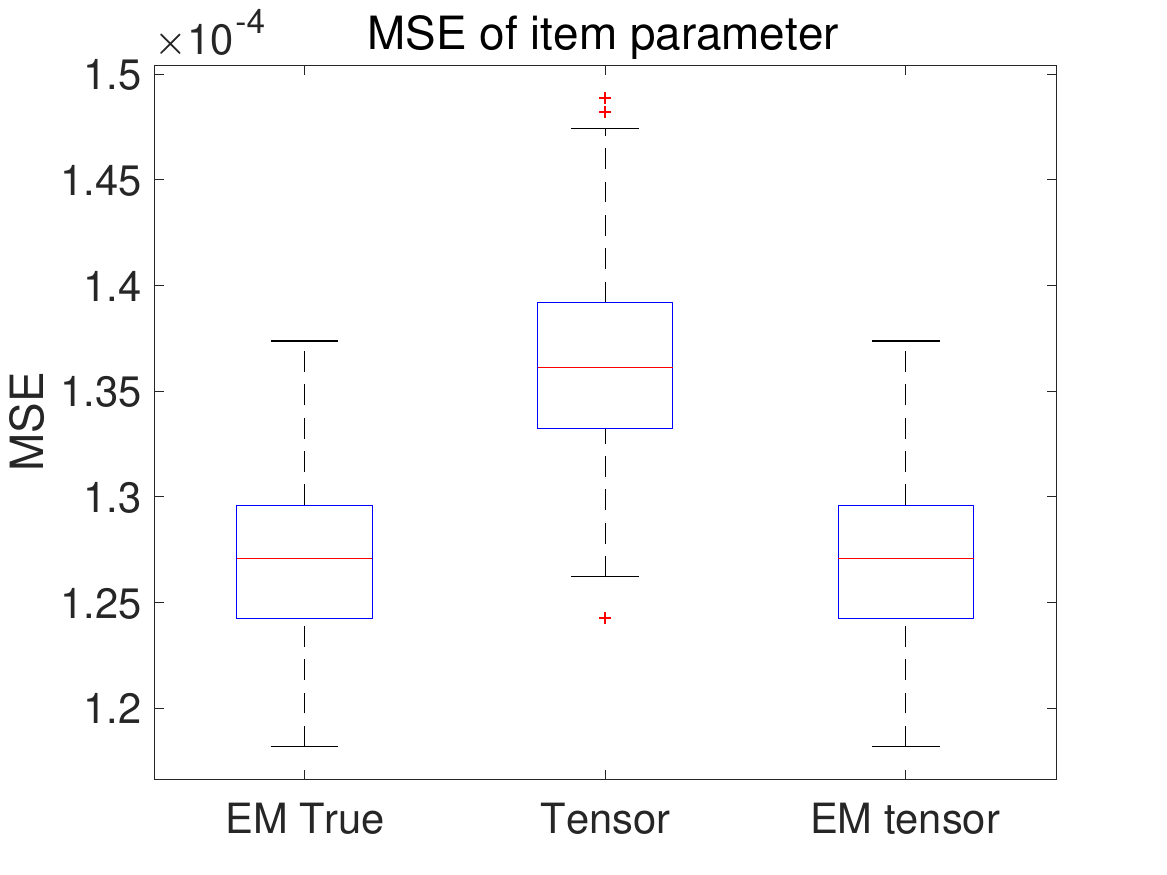}
	\end{minipage}%
}%
	\subfigure[Running time of the algorithms]{
		\begin{minipage}[t]{0.33\linewidth}
			\centering
			\includegraphics[width=2in]{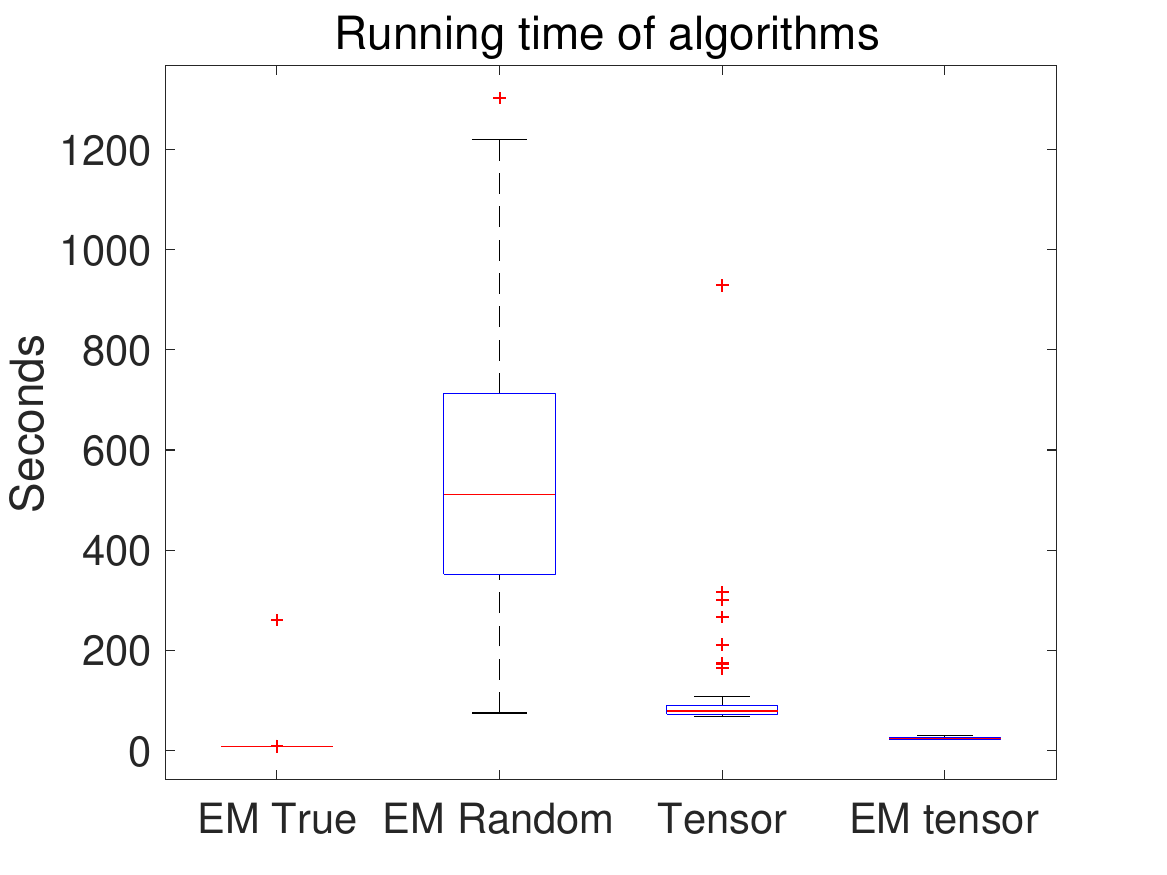}
		\end{minipage}%
	}%
	\centering
	\caption{$N = 10000, J= 200, L=10,$ item parameters $\in \{0.1,0.2,0.8,0.9\}$}
\end{figure}

\begin{figure}[H]
	\centering
	\subfigure[MSE of item parameters]{
		\begin{minipage}[t]{0.4\linewidth}
			\centering
			\includegraphics[width=2in]{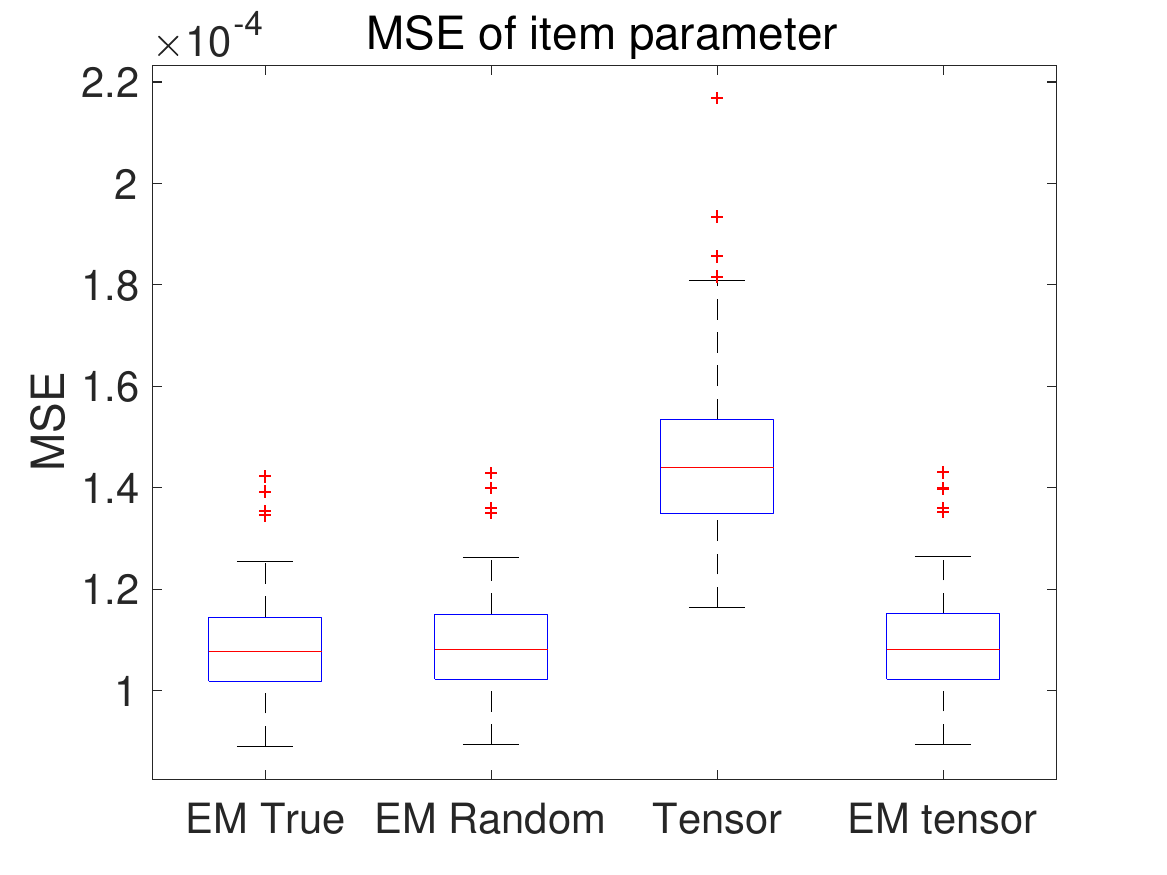}
		\end{minipage}%
	}%
	\subfigure[Running time of the algorithms]{
		\begin{minipage}[t]{0.4\linewidth}
			\centering
			\includegraphics[width=2in]{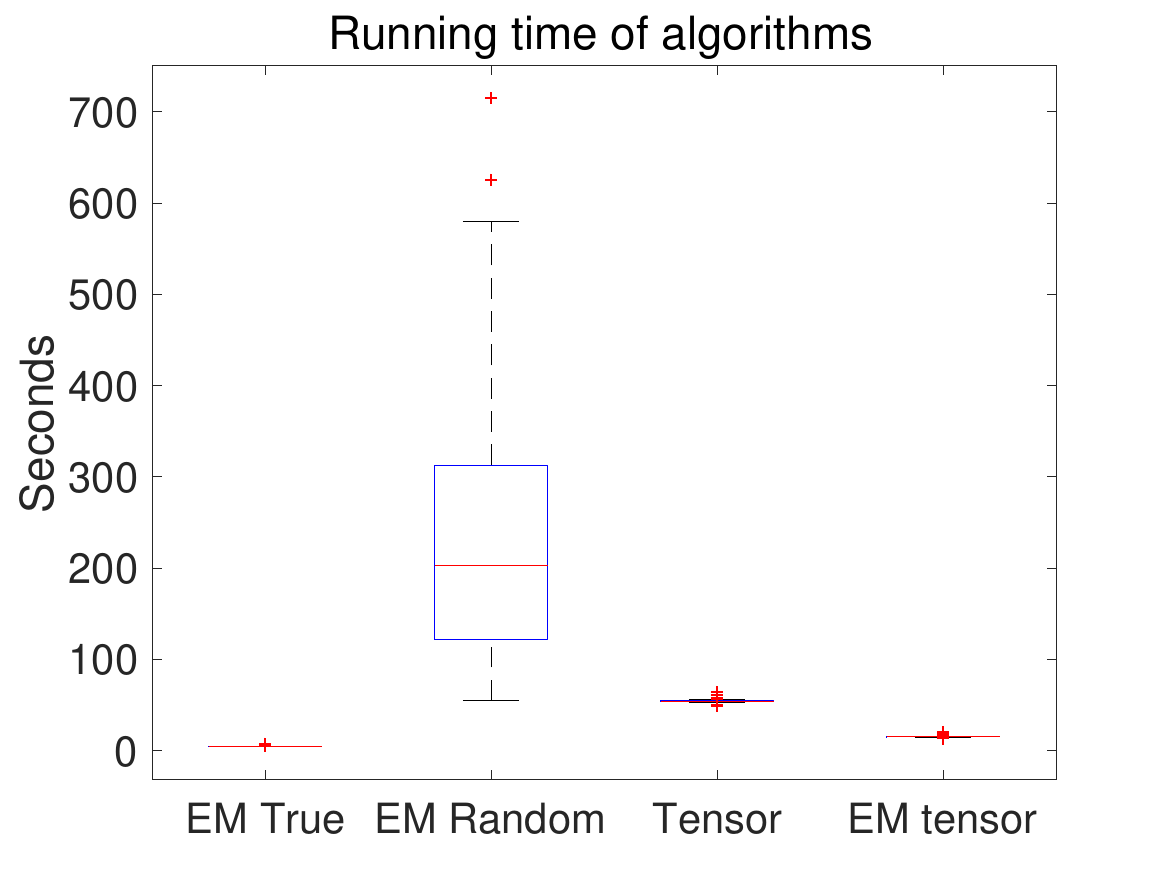}
		\end{minipage}%
	}%
	\centering
	\caption{$N = 10000, J= 100, L=5,$ item parameters $\in \{0.2,0.4,0.6,0.8\}$}
\end{figure}

\begin{figure}[H]
	\centering
	\subfigure[MSE of item parameters]{
		\begin{minipage}[t]{0.33\linewidth}
			\centering
			\includegraphics[width=2in]{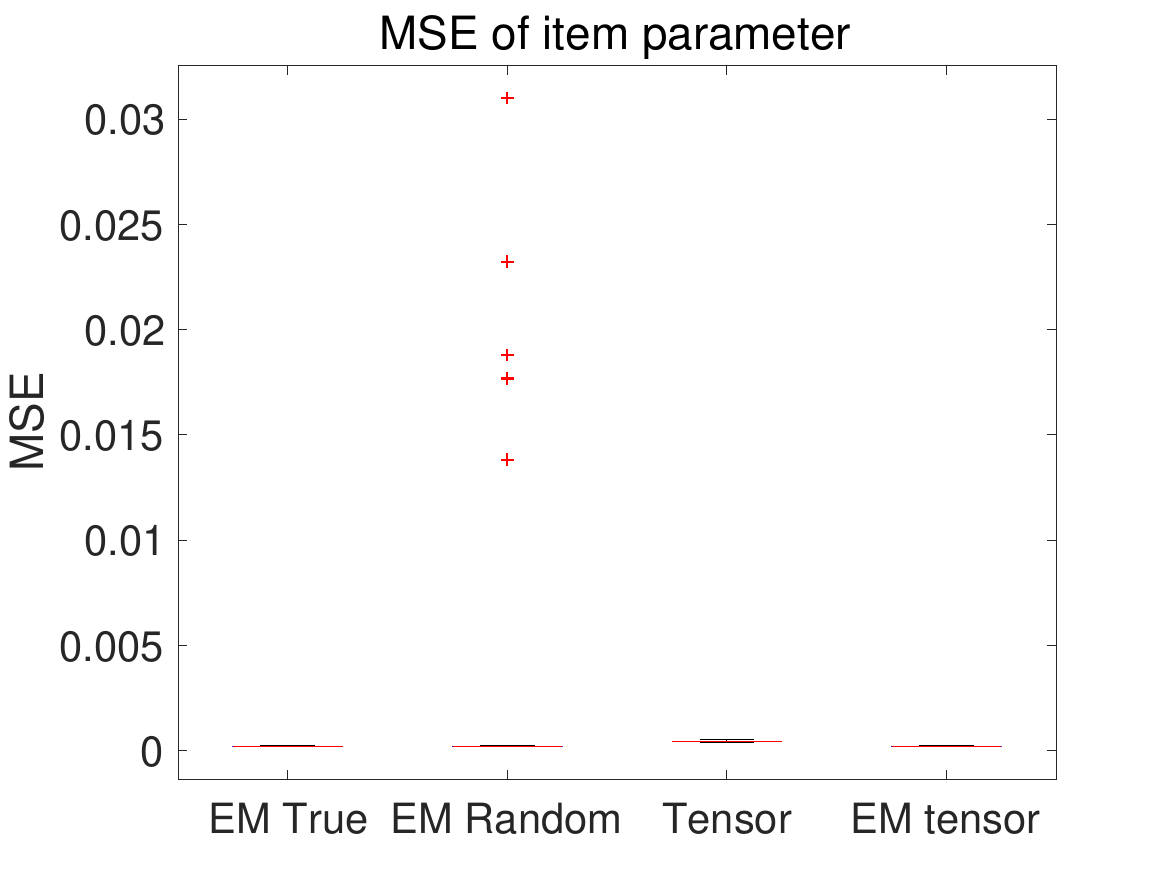}
		\end{minipage}%
	}%
		\subfigure[MSE without EM-random]{
	\begin{minipage}[t]{0.33\linewidth}
		\centering
		\includegraphics[width=2in]{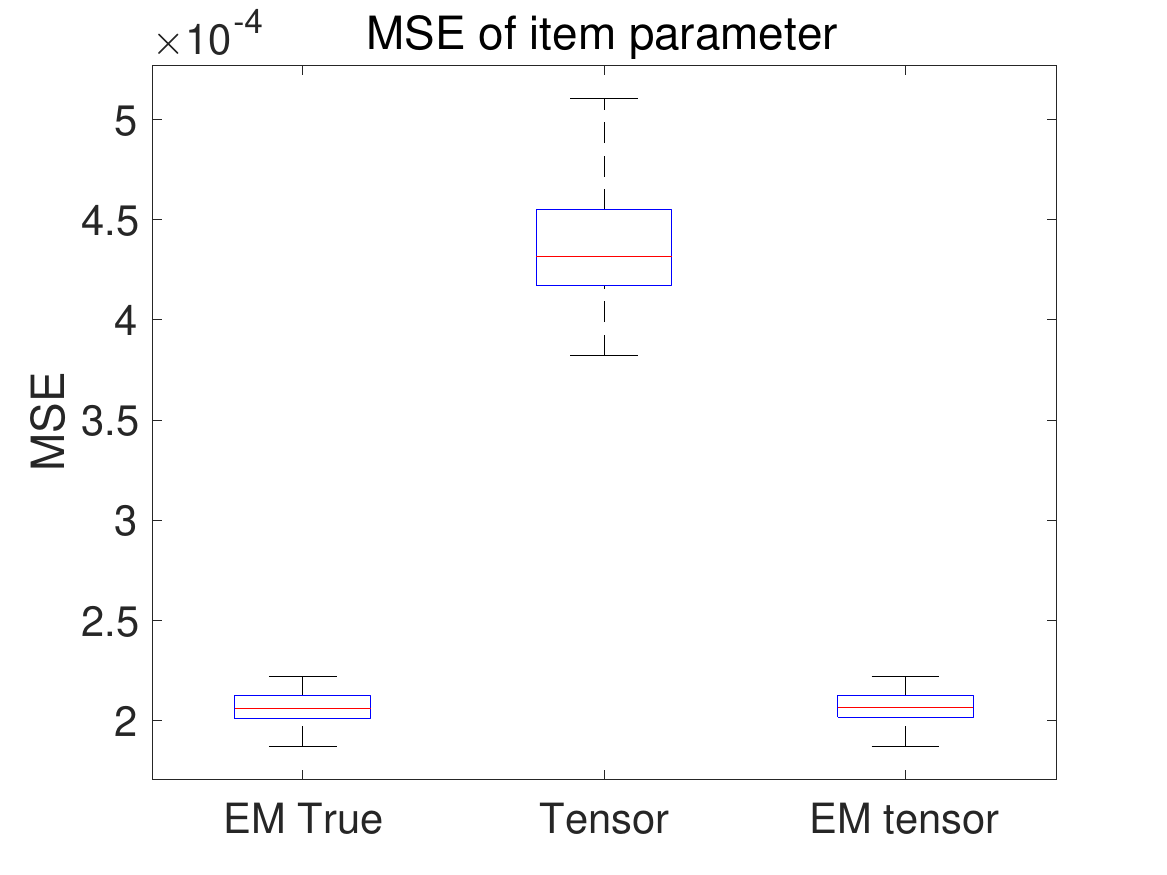}
	\end{minipage}%
}%
	\subfigure[Running time of the algorithms]{
		\begin{minipage}[t]{0.33\linewidth}
			\centering
			\includegraphics[width=2in]{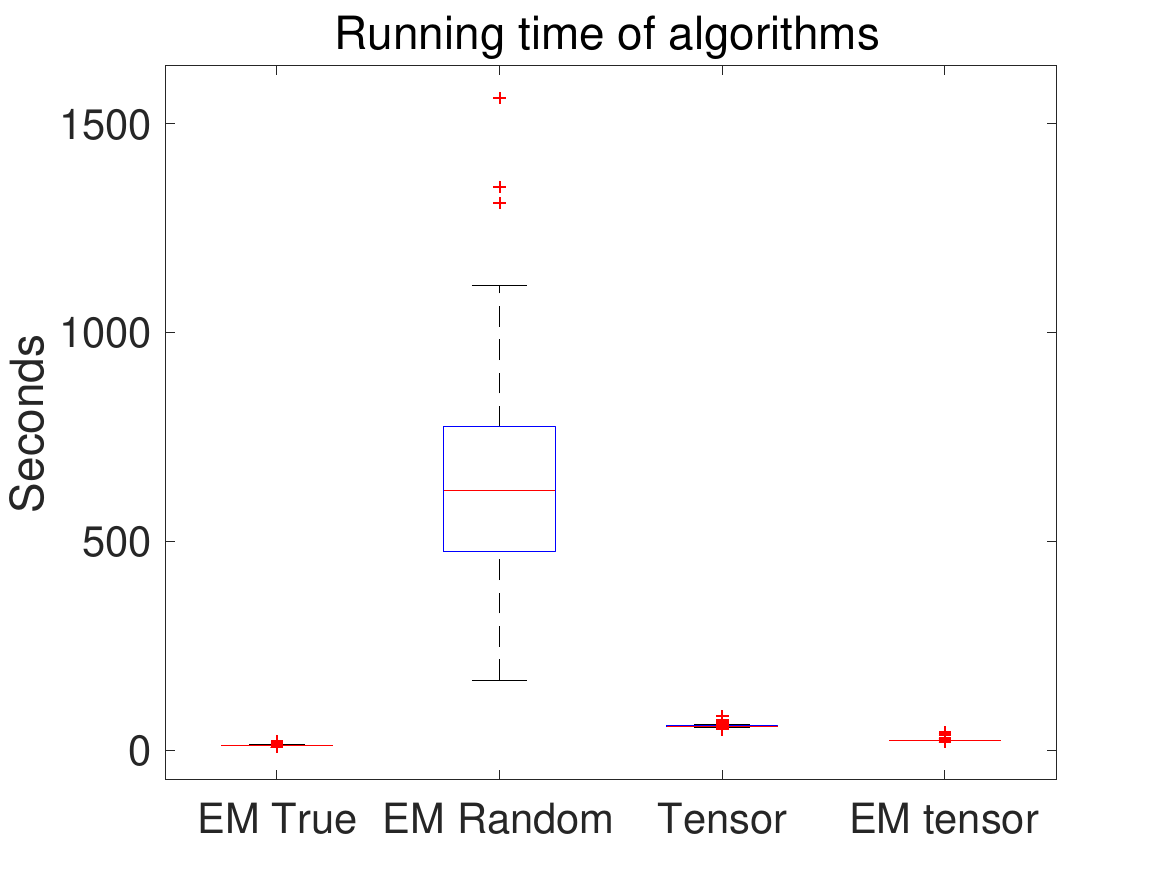}
		\end{minipage}%
	}%
	\centering
	\caption{$N = 10000, J= 100, L=10,$ item parameters $\in \{0.2,0.4,0.6,0.8\}$}
\end{figure}

\begin{figure}[H]
	\centering
	\subfigure[MSE of item parameters]{
		\begin{minipage}[t]{0.4\linewidth}
			\centering
			\includegraphics[width=2in]{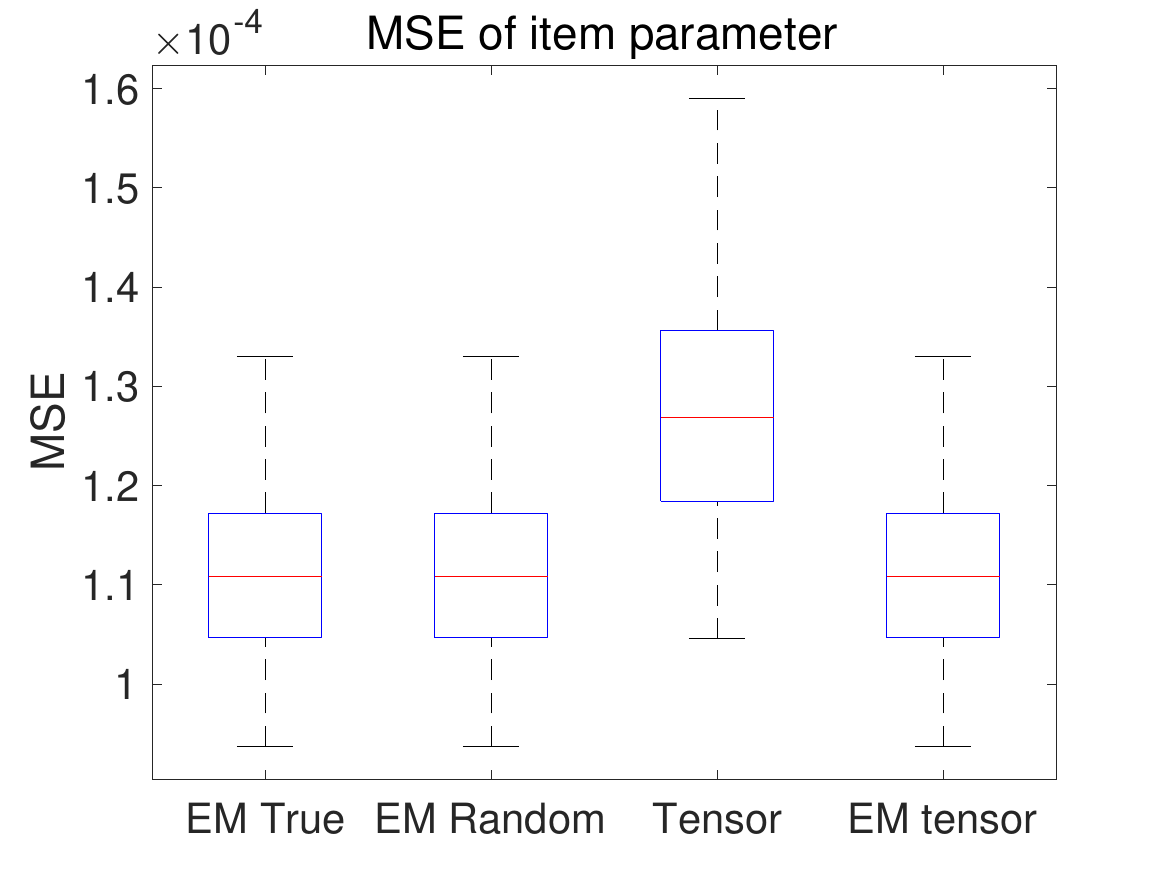}
		\end{minipage}%
	}%
	\subfigure[Running time of the algorithms]{
		\begin{minipage}[t]{0.4\linewidth}
			\centering
			\includegraphics[width=2in]{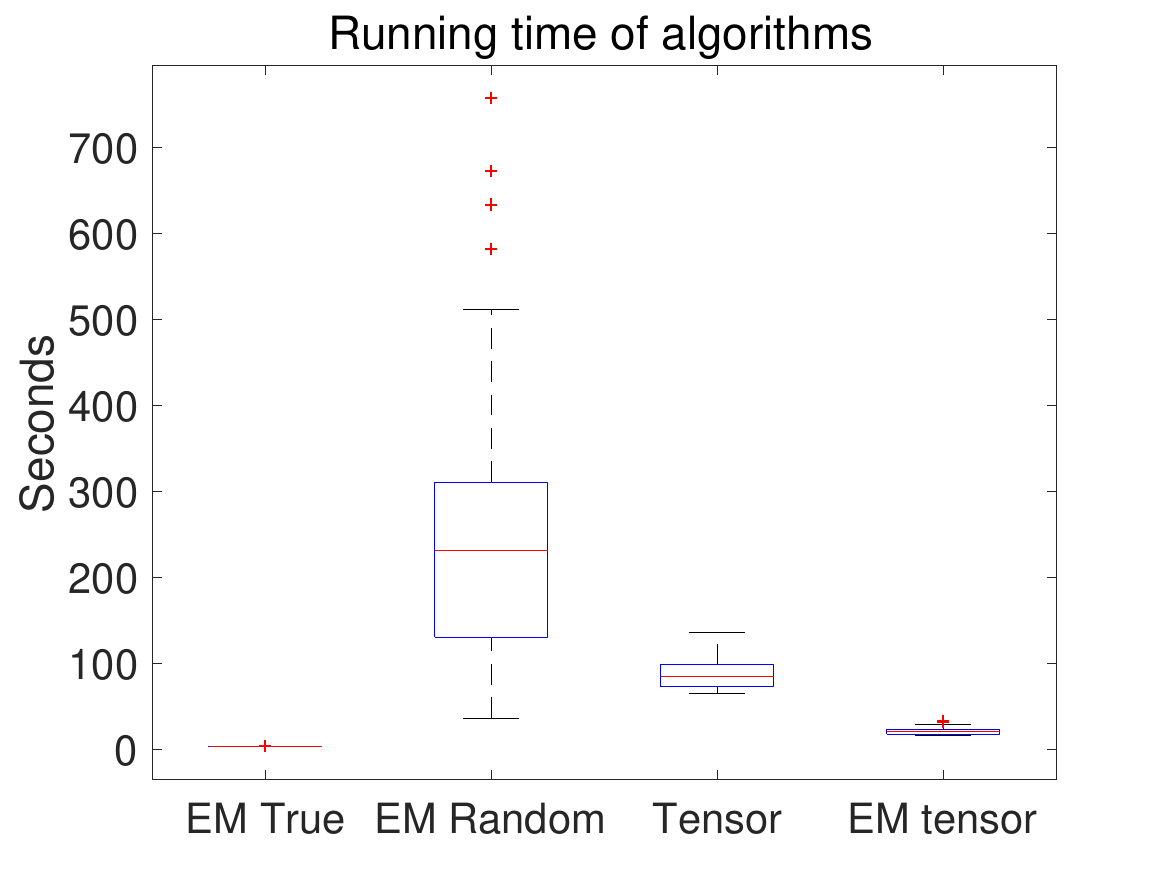}
		\end{minipage}%
	}%
	\centering
	\caption{$N = 10000, J= 200, L=5,$ item parameters $\in \{0.2,0.4,0.6,0.8\}$}
\end{figure}

\begin{figure}[H]
	\centering
	\subfigure[MSE of item parameters]{
		\begin{minipage}[t]{0.4\linewidth}
			\centering
			\includegraphics[width=2in]{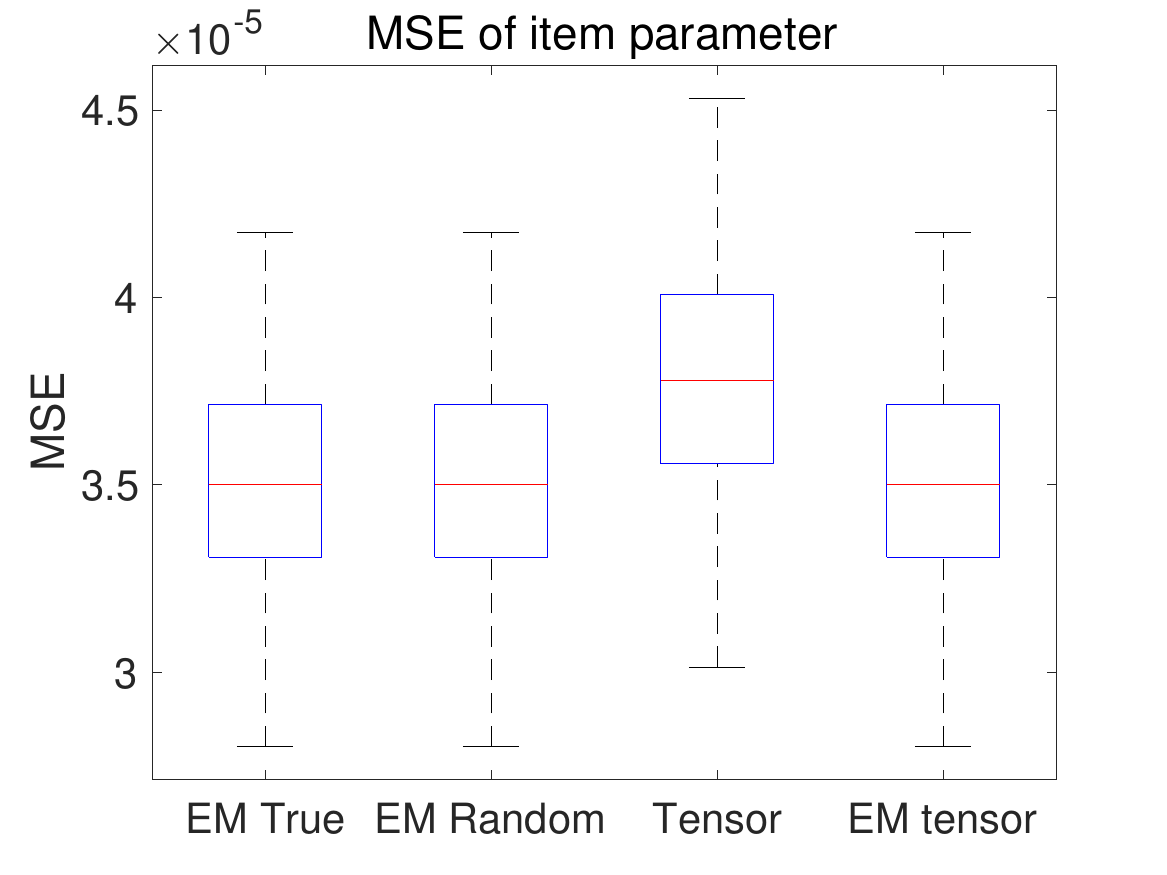}
		\end{minipage}%
	}%
	\subfigure[Running time of the algorithms]{
		\begin{minipage}[t]{0.4\linewidth}
			\centering
			\includegraphics[width=2in]{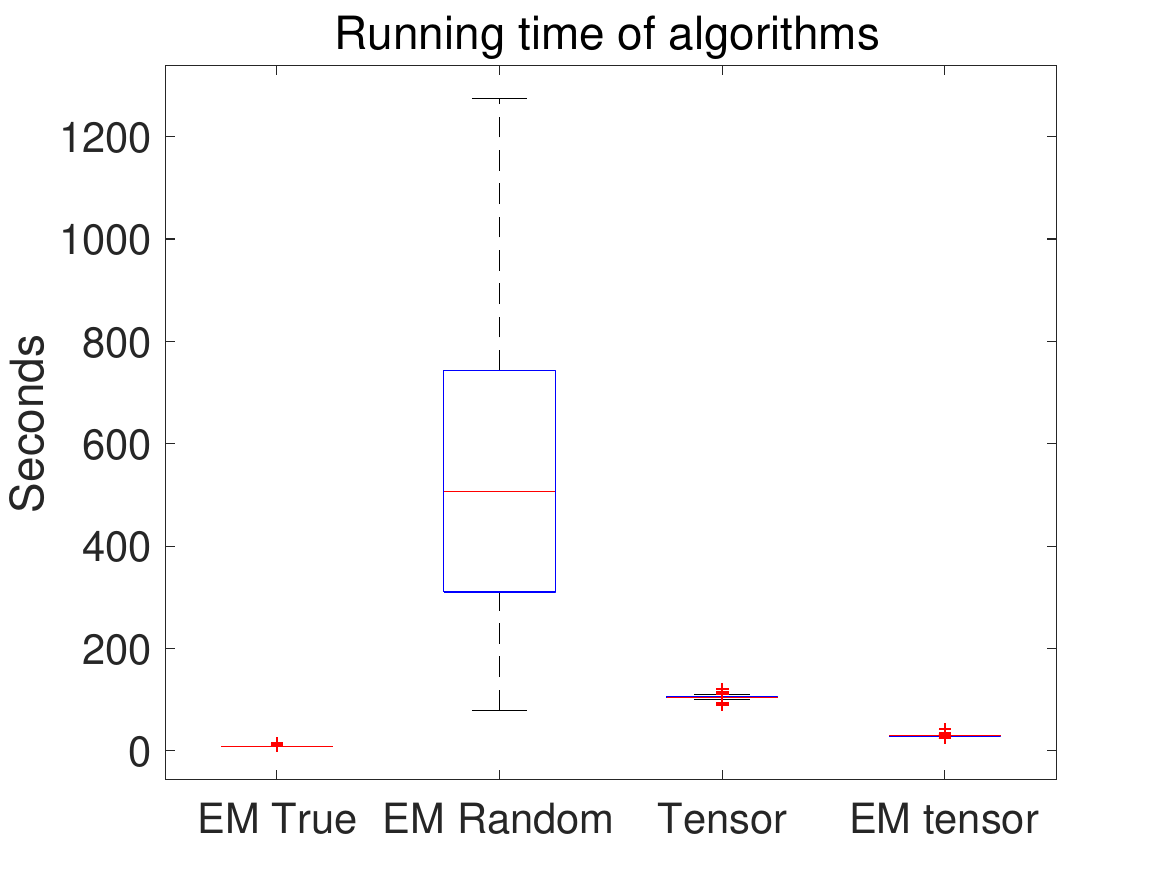}
		\end{minipage}%
	}%
	\centering
	\caption{$N = 20000, J= 100, L=5,$ item parameters $\in \{0.1,0.2,0.8,0.9\}$}
\end{figure}

\begin{figure}[H]
	\centering
	\subfigure[MSE of item parameters]{
		\begin{minipage}[t]{0.33\linewidth}
			\centering
			\includegraphics[width=2in]{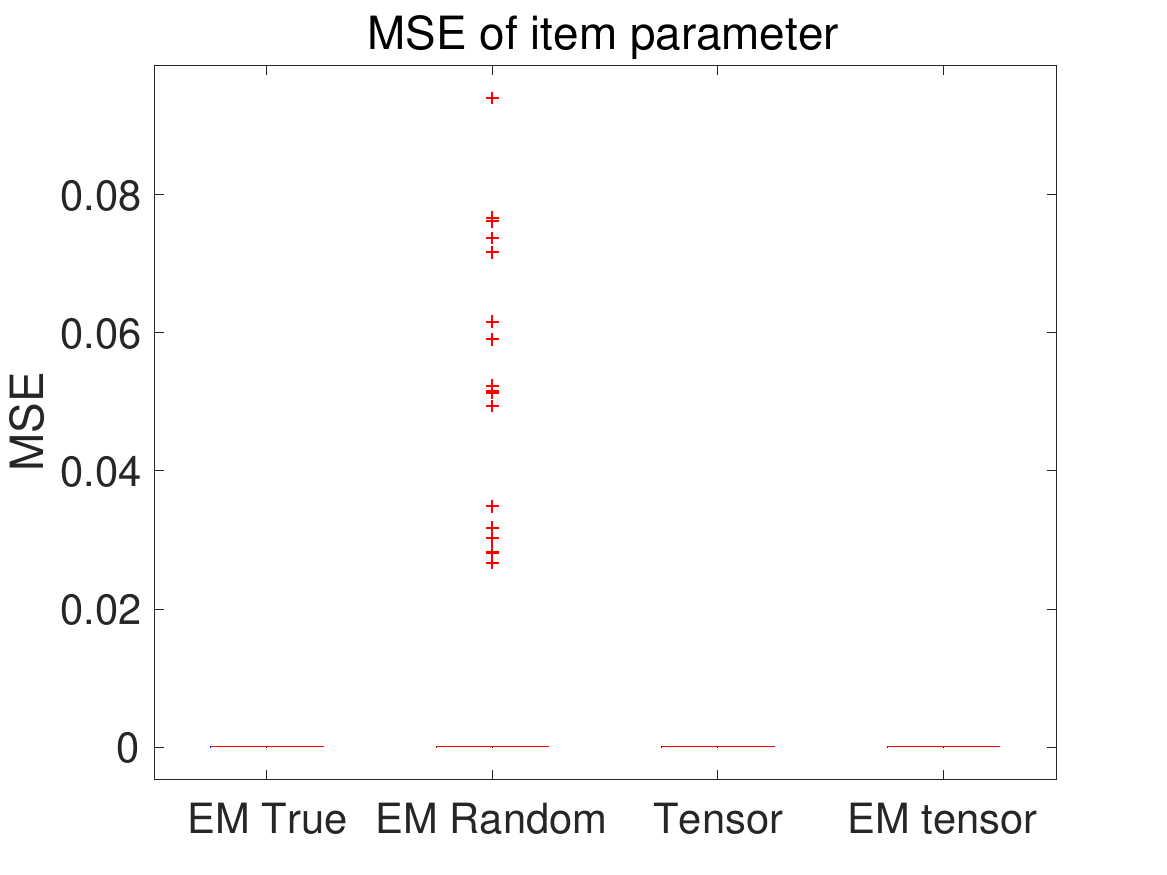}
		\end{minipage}%
	}%
		\subfigure[MSE without EM-random]{
	\begin{minipage}[t]{0.33\linewidth}
		\centering
		\includegraphics[width=2in]{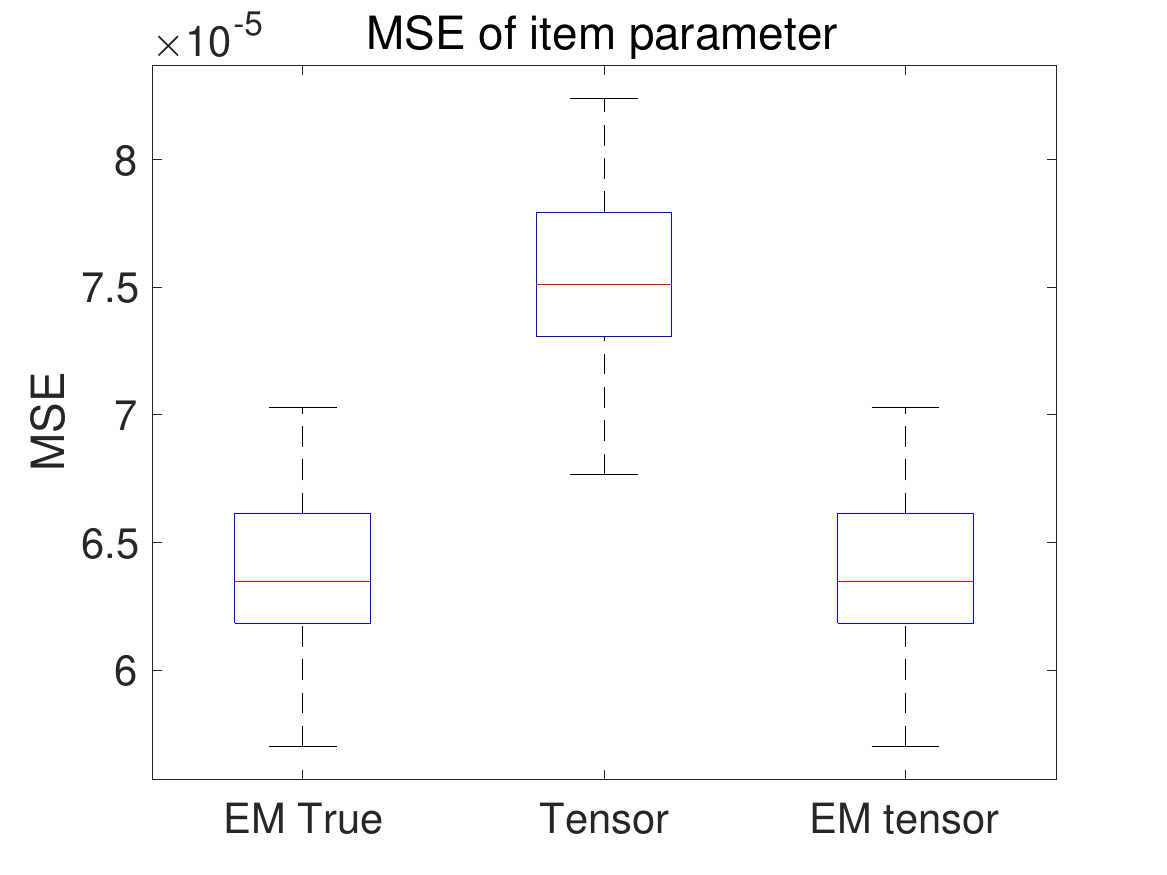}
	\end{minipage}%
}%
	\subfigure[Running time of the algorithms]{
		\begin{minipage}[t]{0.33\linewidth}
			\centering
			\includegraphics[width=2in]{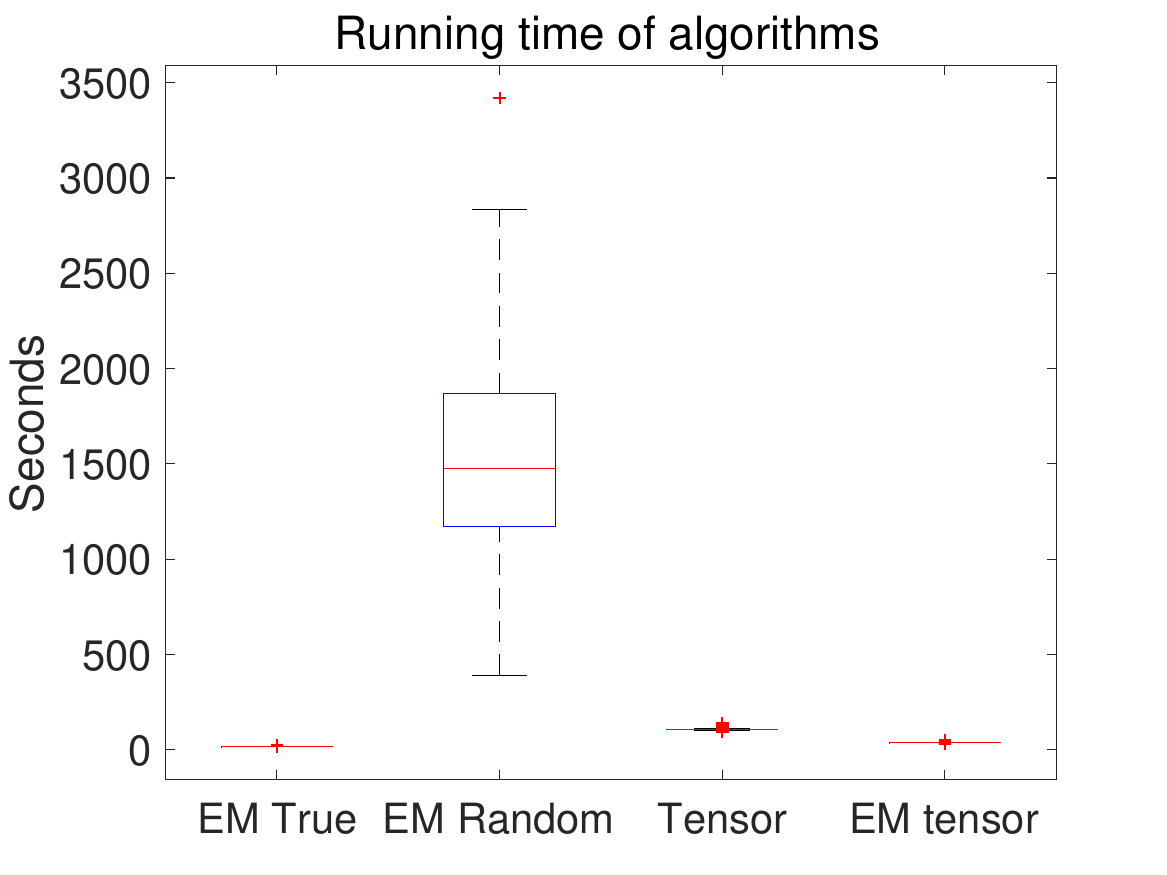}
		\end{minipage}%
	}%
	\centering
	\caption{$N = 20000, J= 100, L=10, $item parameters $\in \{0.1,0.2,0.8,0.9\}$}
\end{figure}

\begin{figure}[H]
	\centering
	\subfigure[MSE of item parameters]{
		\begin{minipage}[t]{0.4\linewidth}
			\centering
			\includegraphics[width=2in]{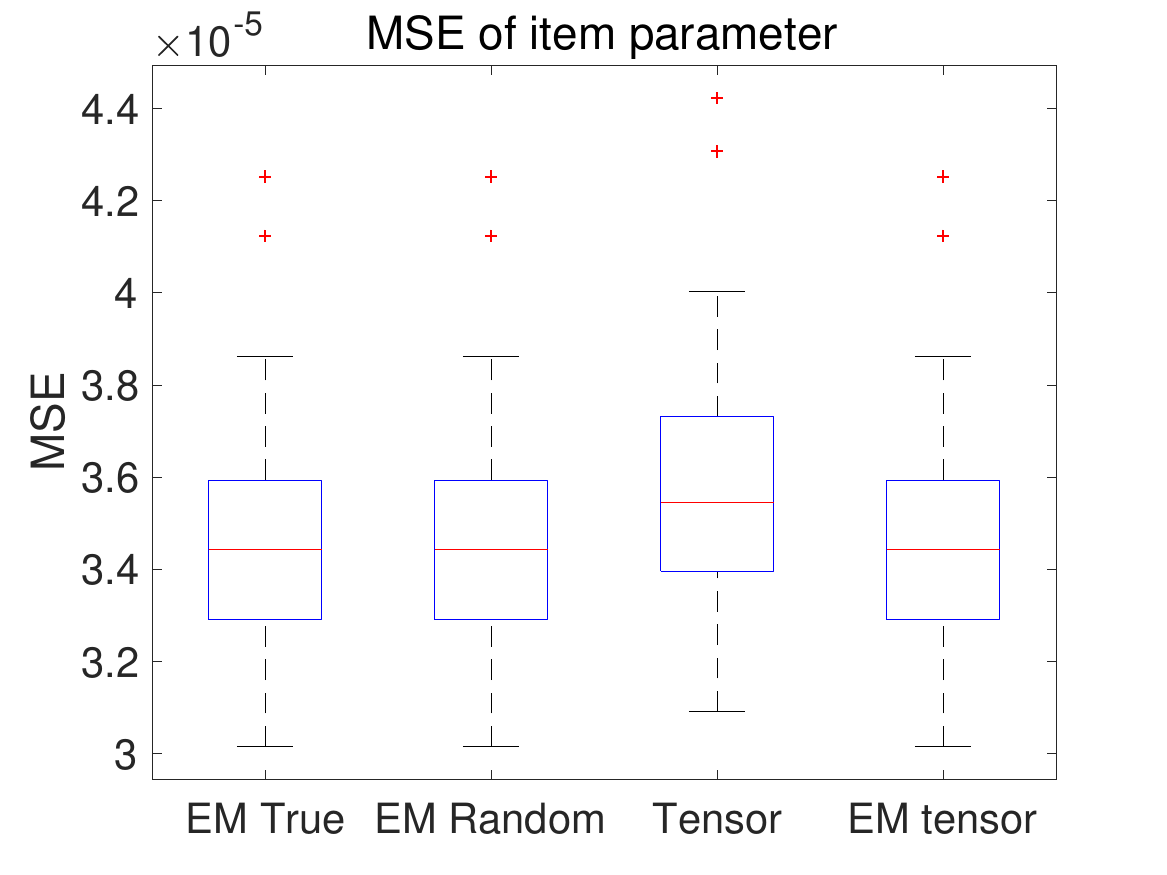}
		\end{minipage}%
	}%
	\subfigure[Running time of the algorithms]{
		\begin{minipage}[t]{0.4\linewidth}
			\centering
			\includegraphics[width=2in]{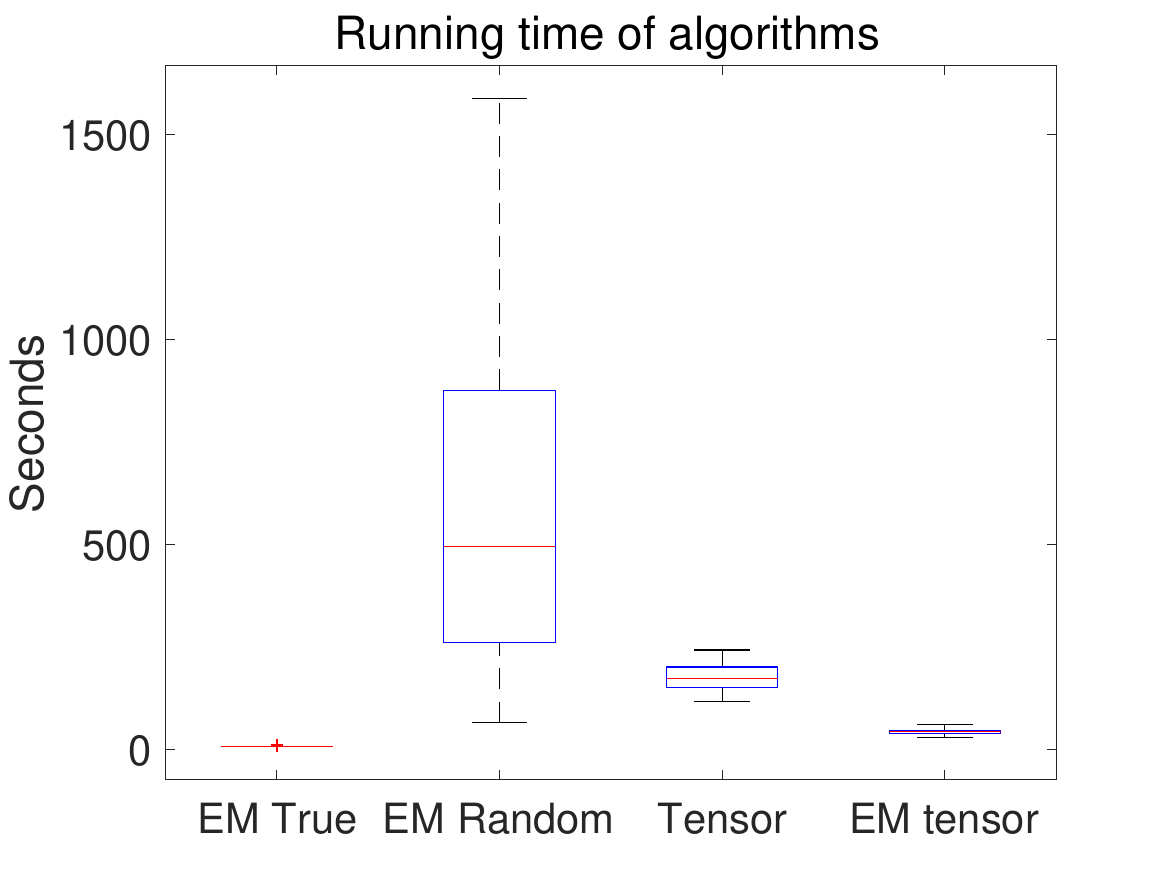}
		\end{minipage}%
	}%
	\centering
	\caption{$N = 20000, J= 200, L=5,$ item parameters $\in \{0.1,0.2,0.8,0.9\}$}
\end{figure}

\begin{figure}[H]
	\centering
	\subfigure[MSE of item parameters]{
		\begin{minipage}[t]{0.4\linewidth}
			\centering
			\includegraphics[width=2in]{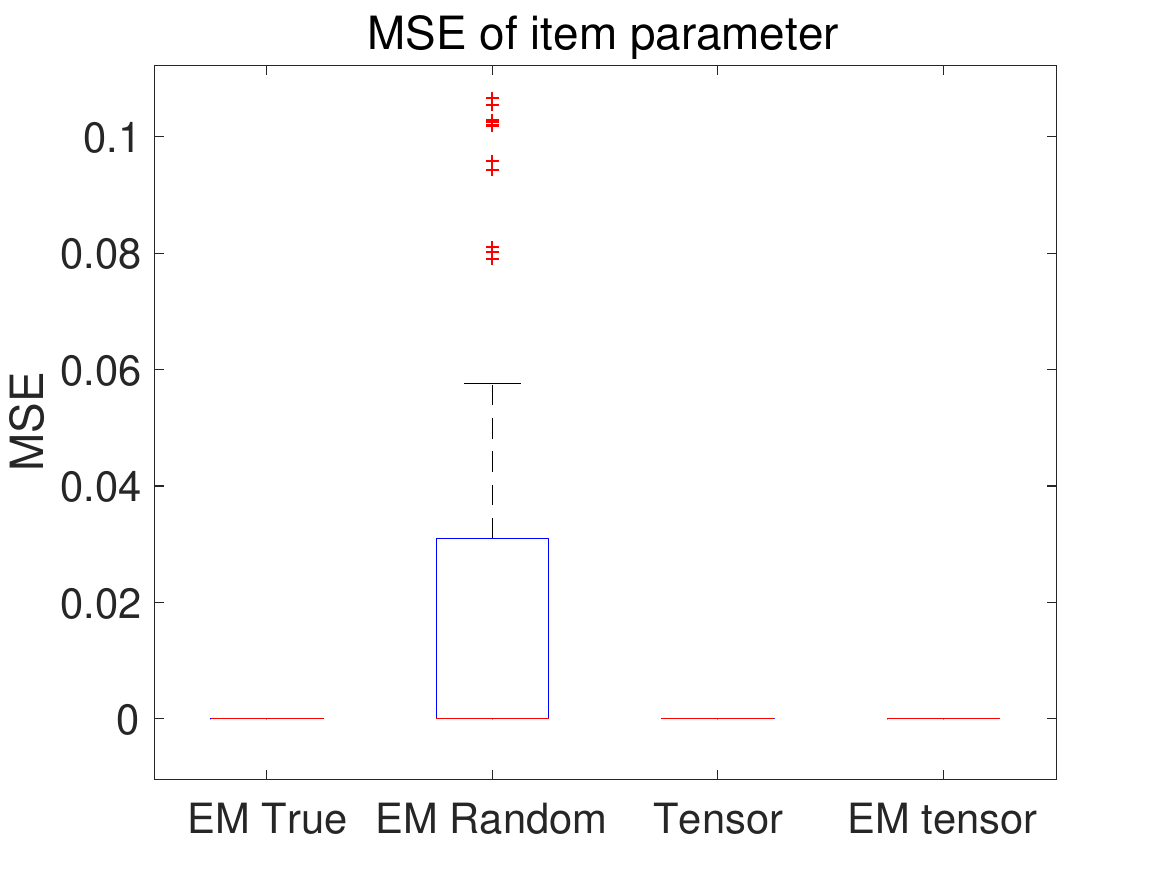}
		\end{minipage}%
	}%
	\subfigure[Running time of the algorithms]{
		\begin{minipage}[t]{0.4\linewidth}
			\centering
			\includegraphics[width=2in]{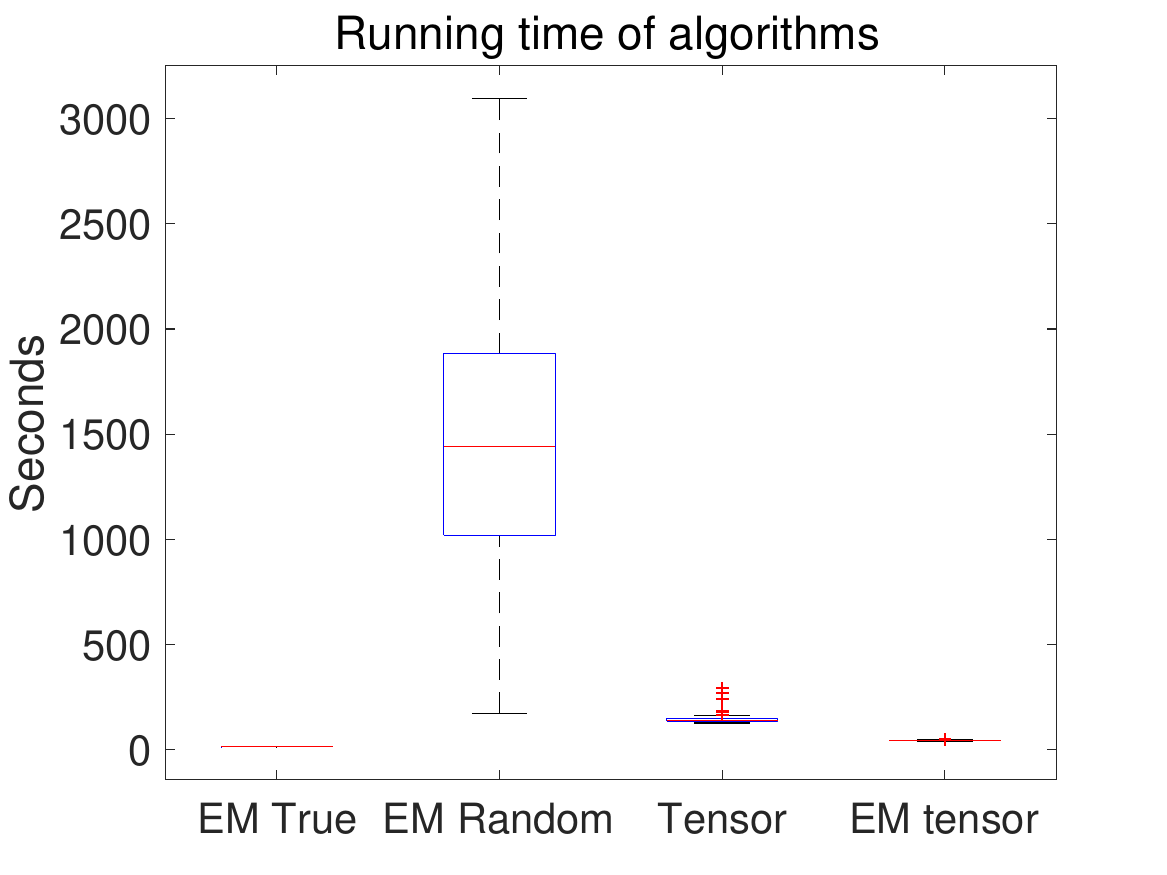}
		\end{minipage}%
	}%
	\centering
	\caption{$N = 20000, J= 200, L=10,$ item parameters $\in \{0.1,0.2,0.8,0.9\}$}
\end{figure}

\begin{figure}[H]
	\centering
	\subfigure[MSE of item parameters]{
		\begin{minipage}[t]{0.4\linewidth}
			\centering
			\includegraphics[width=2in]{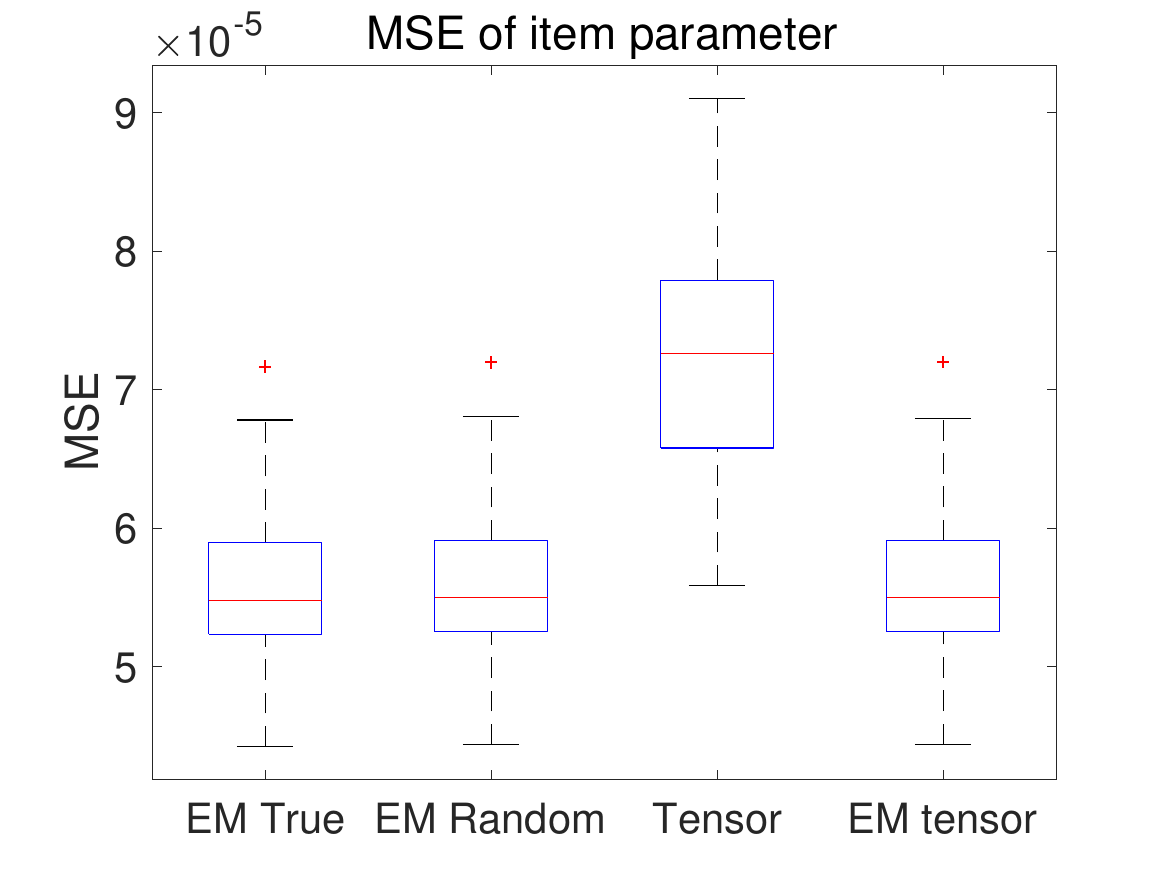}
		\end{minipage}%
	}%
	\subfigure[Running time of the algorithms]{
		\begin{minipage}[t]{0.4\linewidth}
			\centering
			\includegraphics[width=2in]{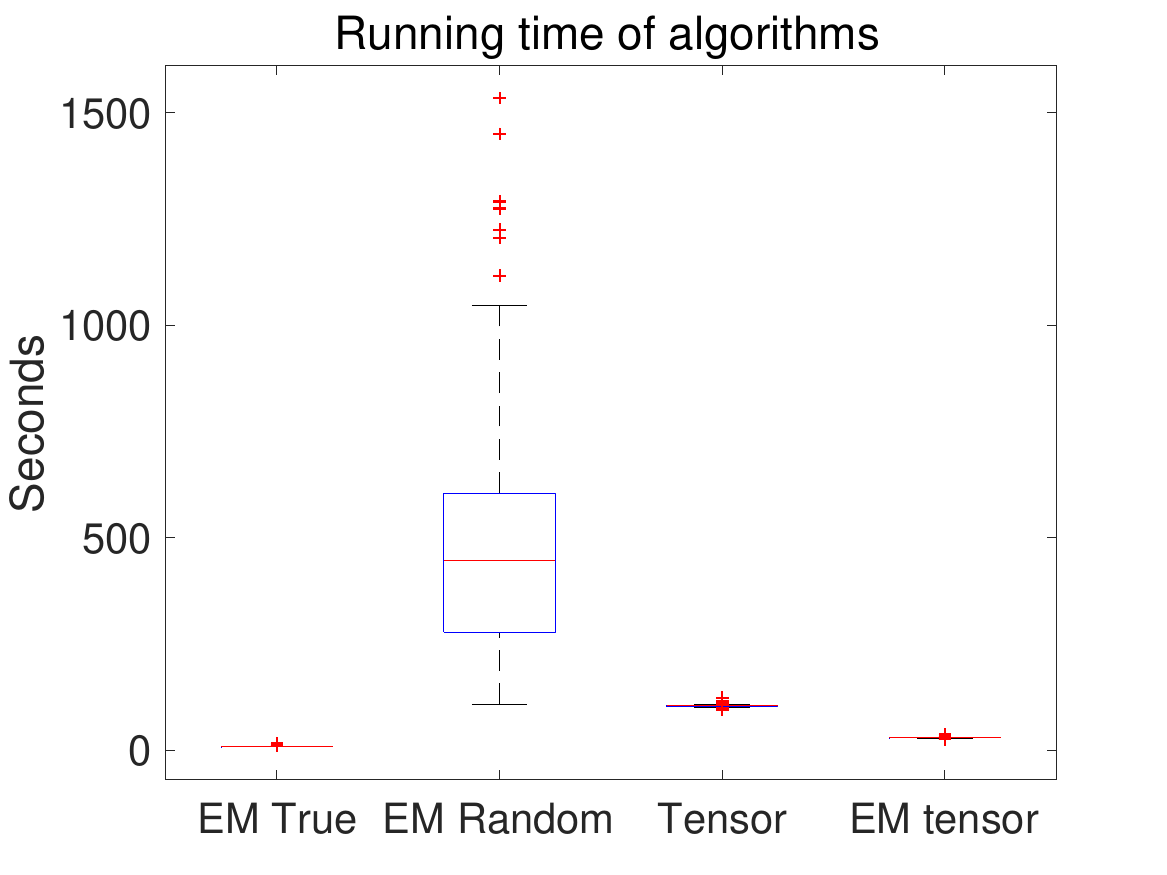}
		\end{minipage}%
	}%
	\centering
	\caption{$N = 20000, J= 100, L=5,$ item parameters $\in \{0.2,0.4,0.6,0.8\}$}
\end{figure}

\begin{figure}[H]
	\centering
	\subfigure[MSE of item parameters]{
		\begin{minipage}[t]{0.4\linewidth}
			\centering
			\includegraphics[width=2in]{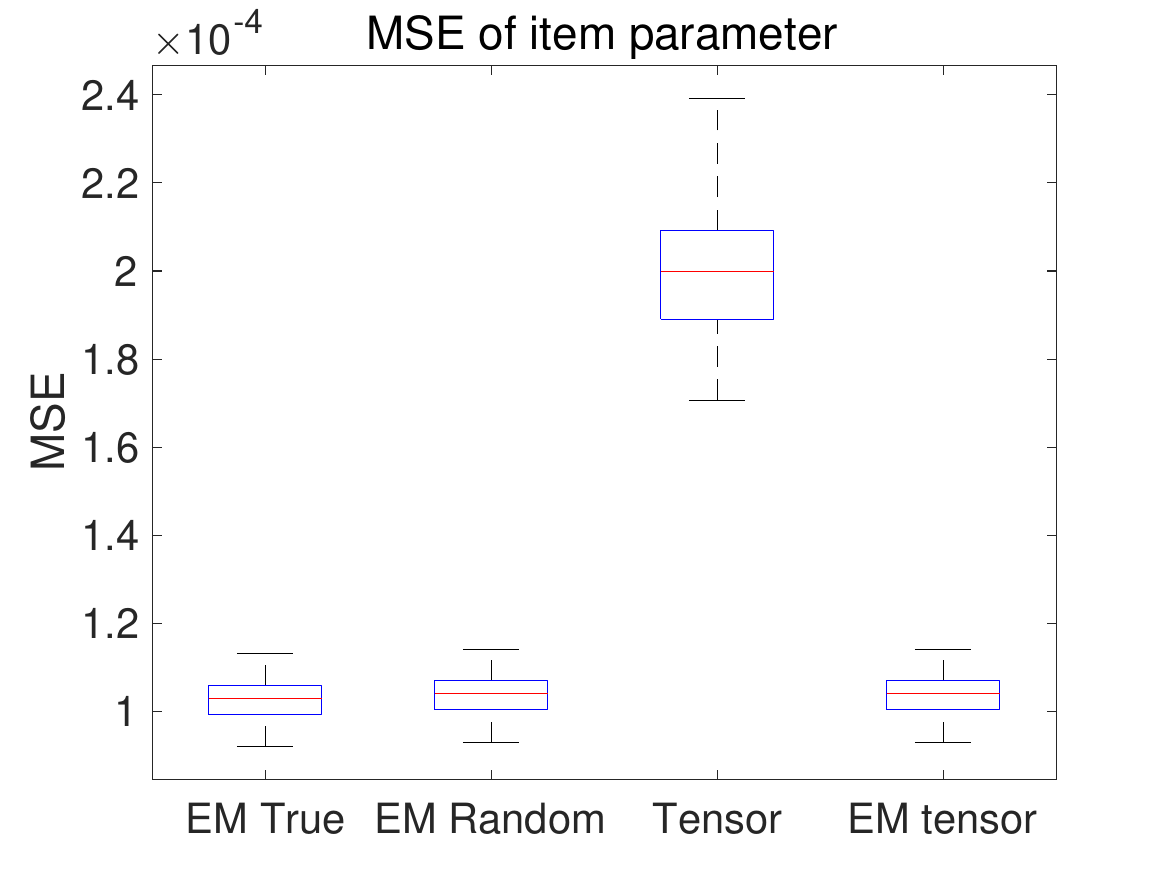}
		\end{minipage}%
	}%
	\subfigure[Running time of the algorithms]{
		\begin{minipage}[t]{0.4\linewidth}
			\centering
			\includegraphics[width=2in]{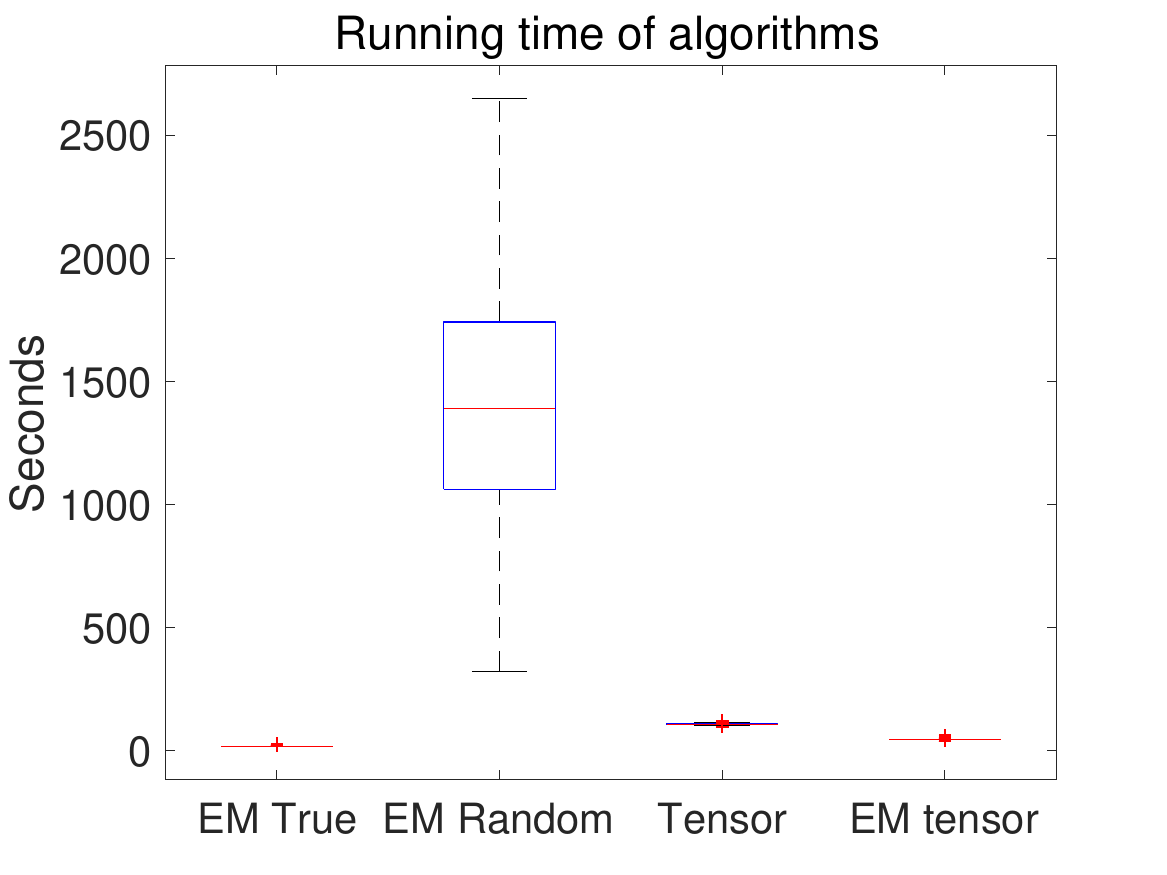}
		\end{minipage}%
	}%
	\centering
	\caption{$N = 20000, J= 100, L=10,$ item parameters $\in \{0.2,0.4,0.6,0.8\}$}
\end{figure}

\begin{figure}[H]
	\centering
	\subfigure[MSE of item parameters]{
		\begin{minipage}[t]{0.4\linewidth}
			\centering
			\includegraphics[width=2in]{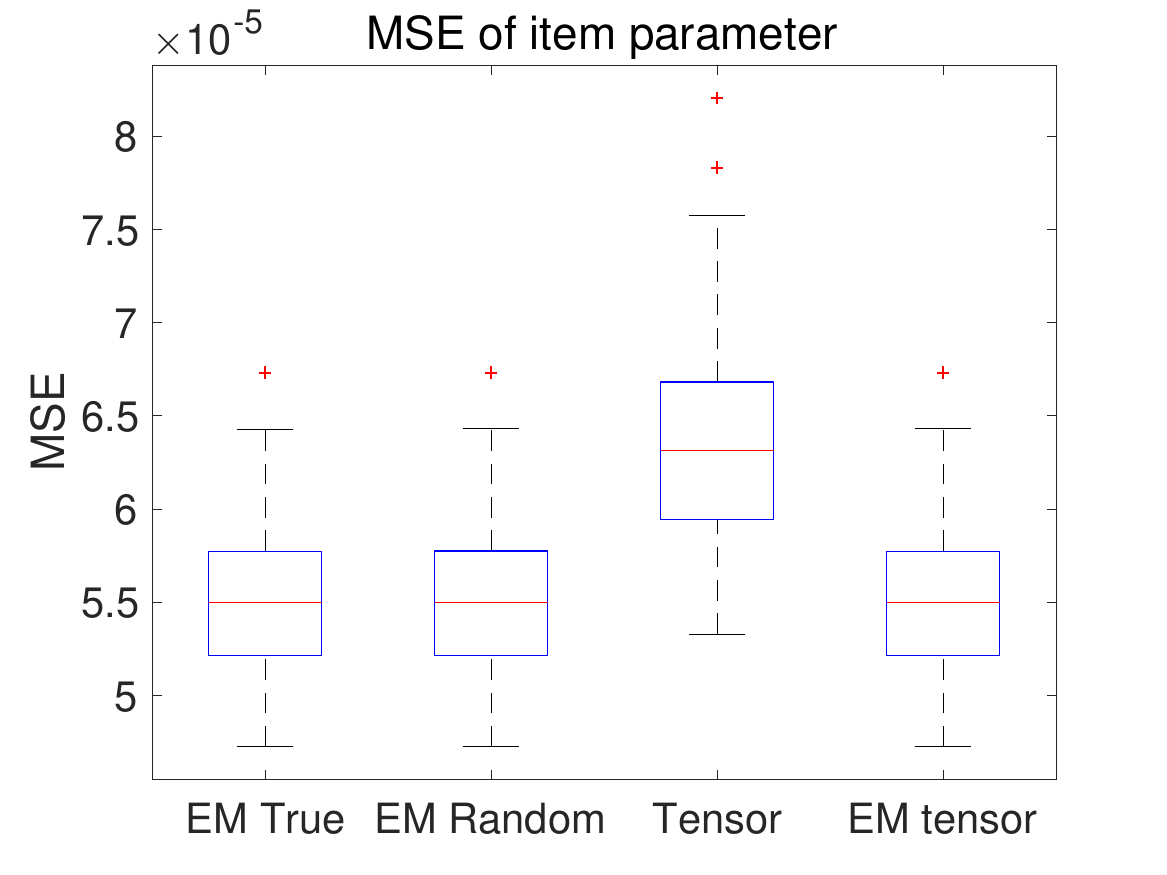}
		\end{minipage}%
	}%
	\subfigure[Running time of the algorithms]{
		\begin{minipage}[t]{0.4\linewidth}
			\centering
			\includegraphics[width=2in]{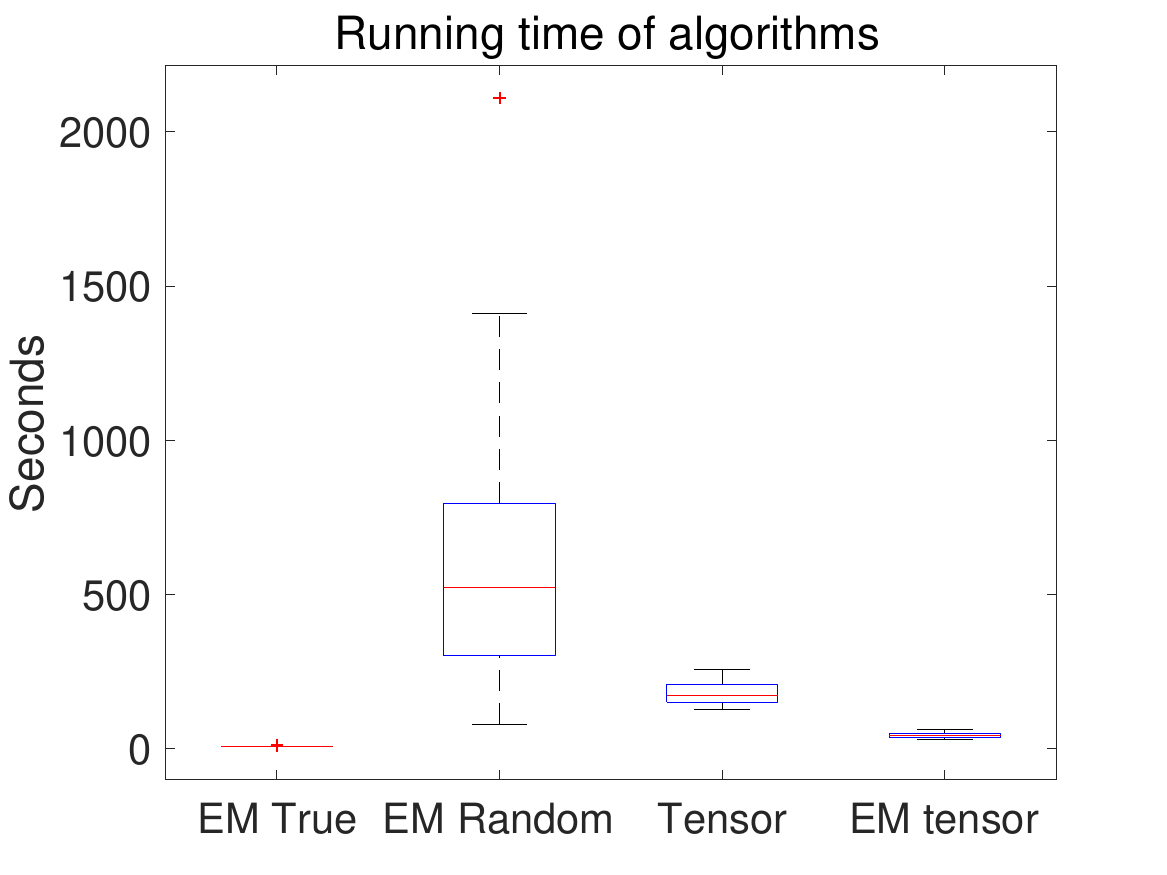}
		\end{minipage}%
	}%
	\centering
	\caption{$N = 20000, J= 200, L=5,$ item parameters $\in \{0.2,0.4,0.6,0.8\}$}
\end{figure}

\begin{figure}[H]
	\centering
	\subfigure[MSE of item parameters]{
		\begin{minipage}[t]{0.33\textwidth}
			\centering
			\includegraphics[width=2in]{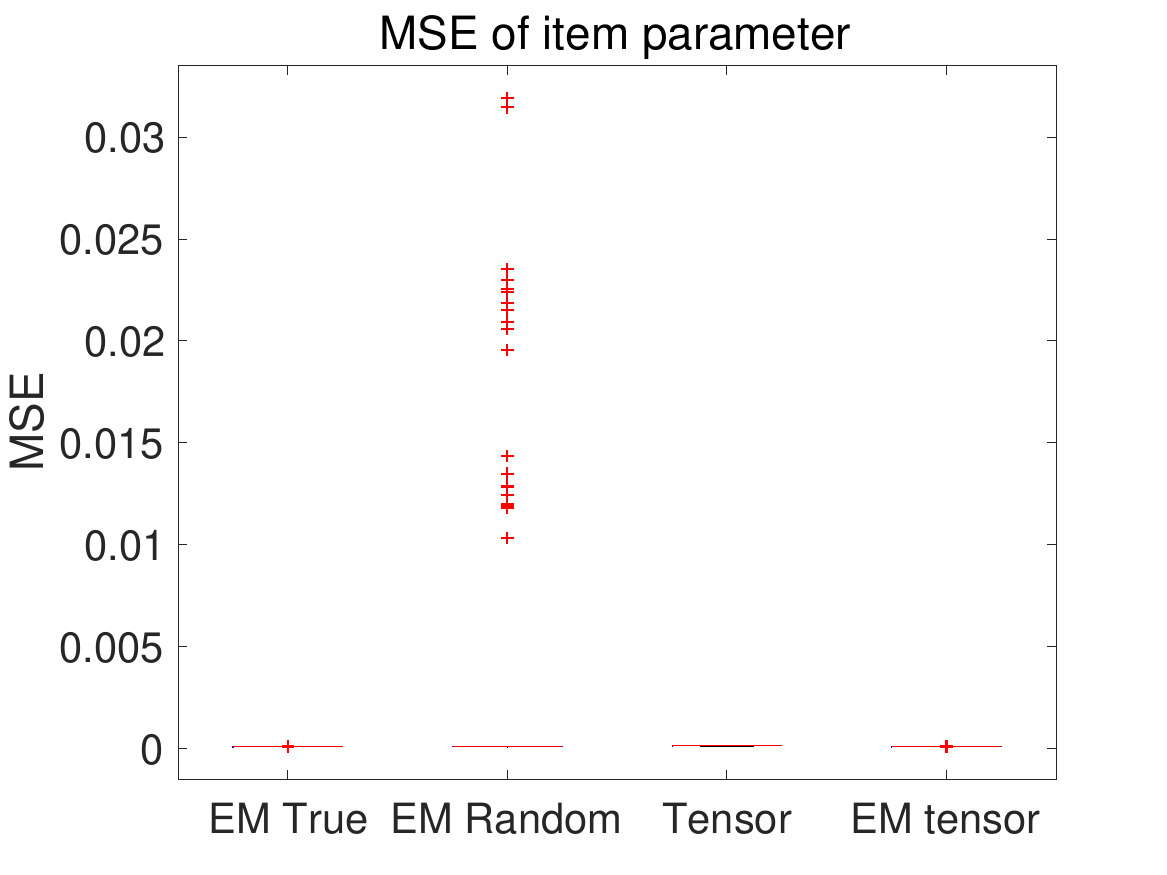}
		\end{minipage}%
	}%
		\subfigure[MSE without EM-random]{
	\begin{minipage}[t]{0.33\textwidth}
		\centering
		\includegraphics[width=2in]{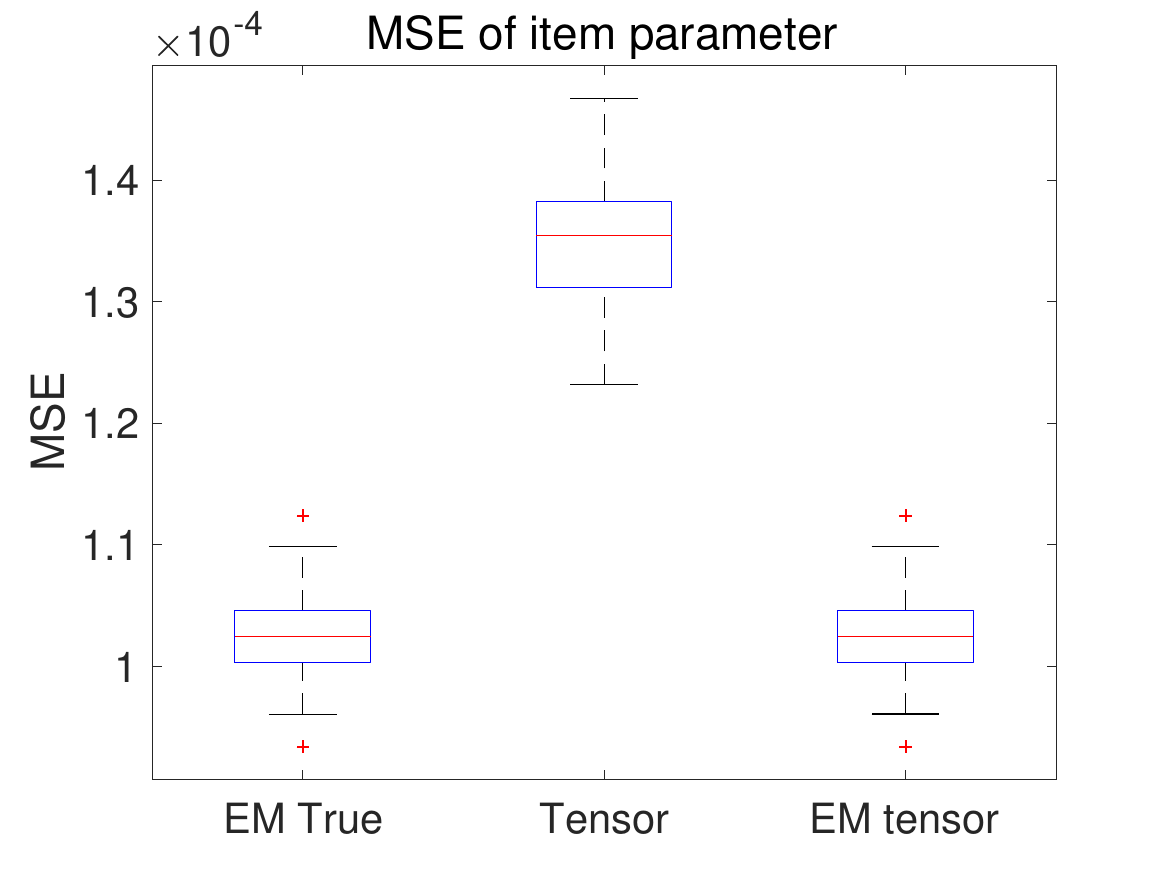}
	\end{minipage}%
}%
	\subfigure[Running time of the algorithms]{
		\begin{minipage}[t]{0.33\textwidth}
			\centering
			\includegraphics[width=2in]{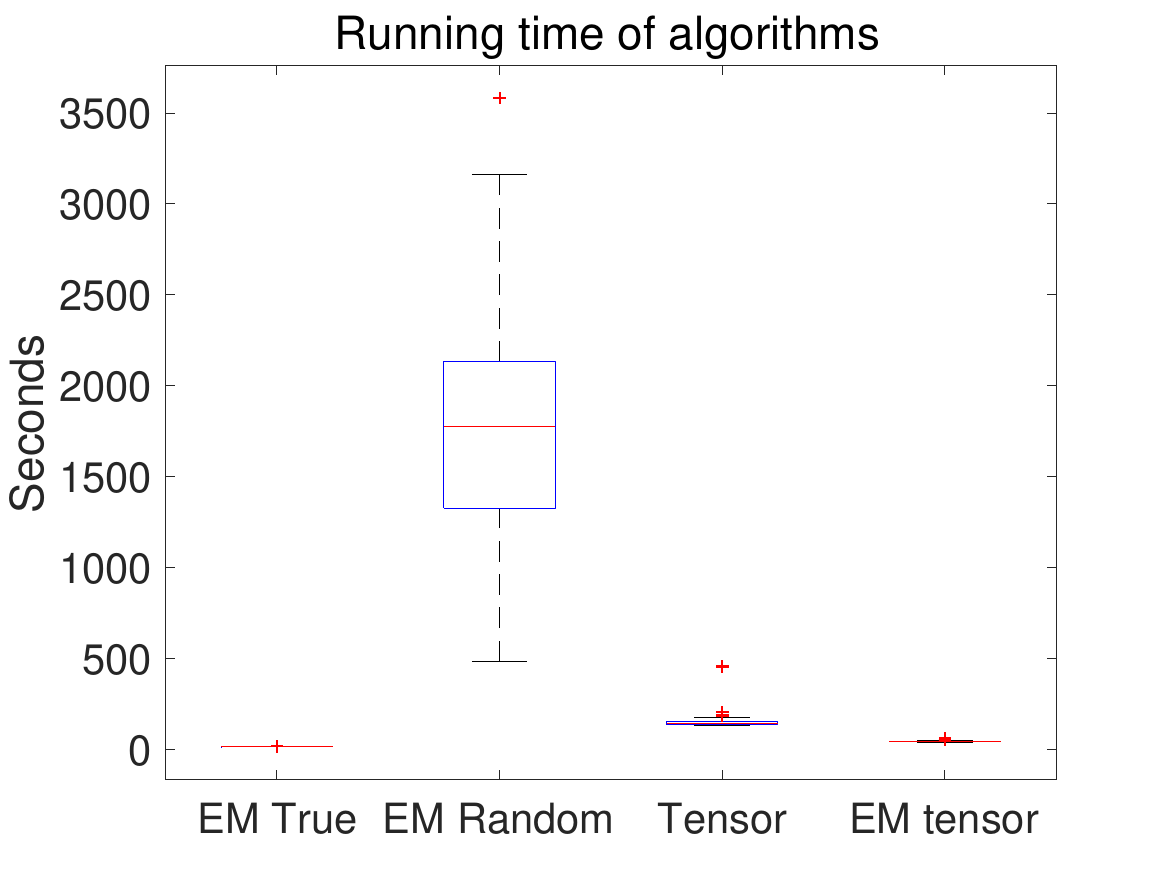}
		\end{minipage}%
	}%
	\centering
	\caption{$N = 20000, J= 200, L=10,$ item parameters $\in \{0.2,0.4,0.6,0.8\}$}
\end{figure}

\subsection*{Fixed-effect LCM}

Then the simulation results of fixed-effect LCM are presented.

\begin{figure}[H]
	\centering
	\subfigure[MSE of item parameters]{
		\begin{minipage}[t]{0.33\linewidth}
			\centering
			\includegraphics[width=2in]{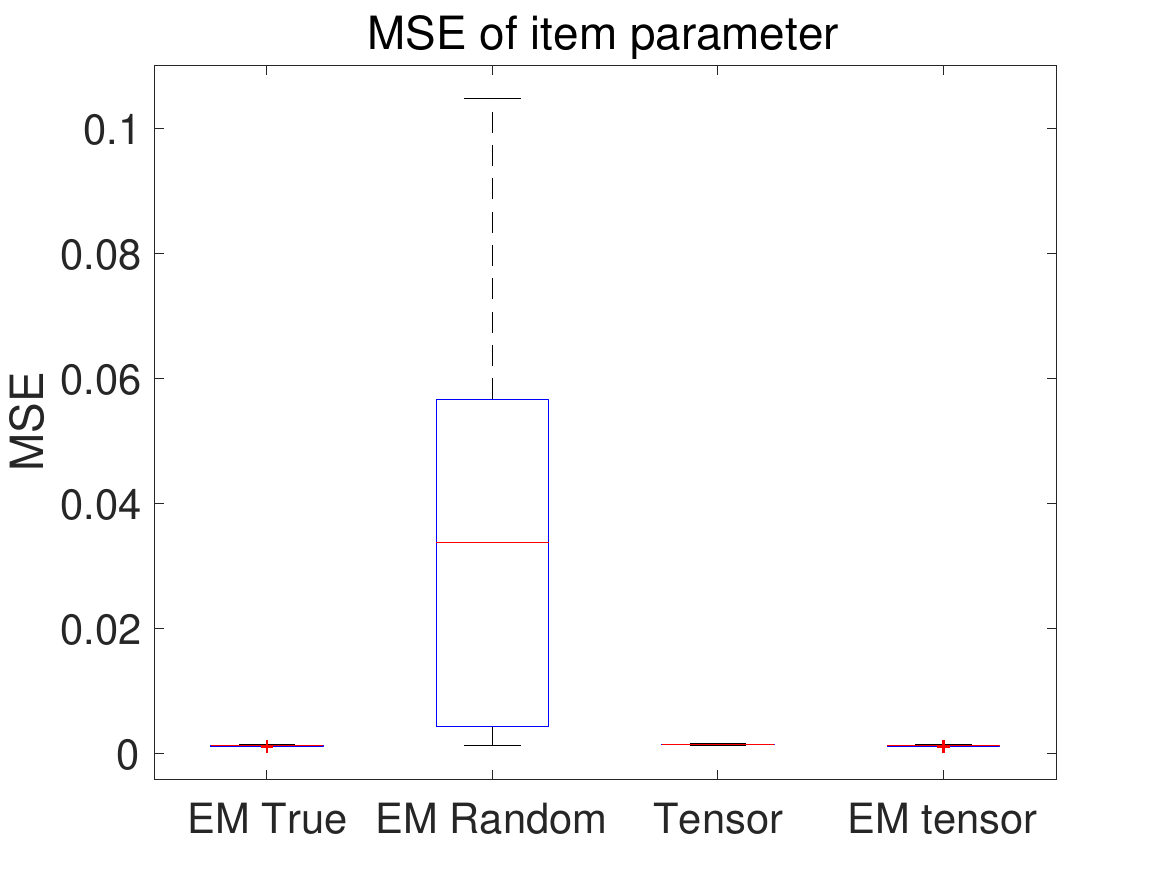}
		\end{minipage}%
	}%
	\subfigure[MSE without EM-random]{
		\begin{minipage}[t]{0.33\linewidth}
			\centering
			\includegraphics[width=2in]{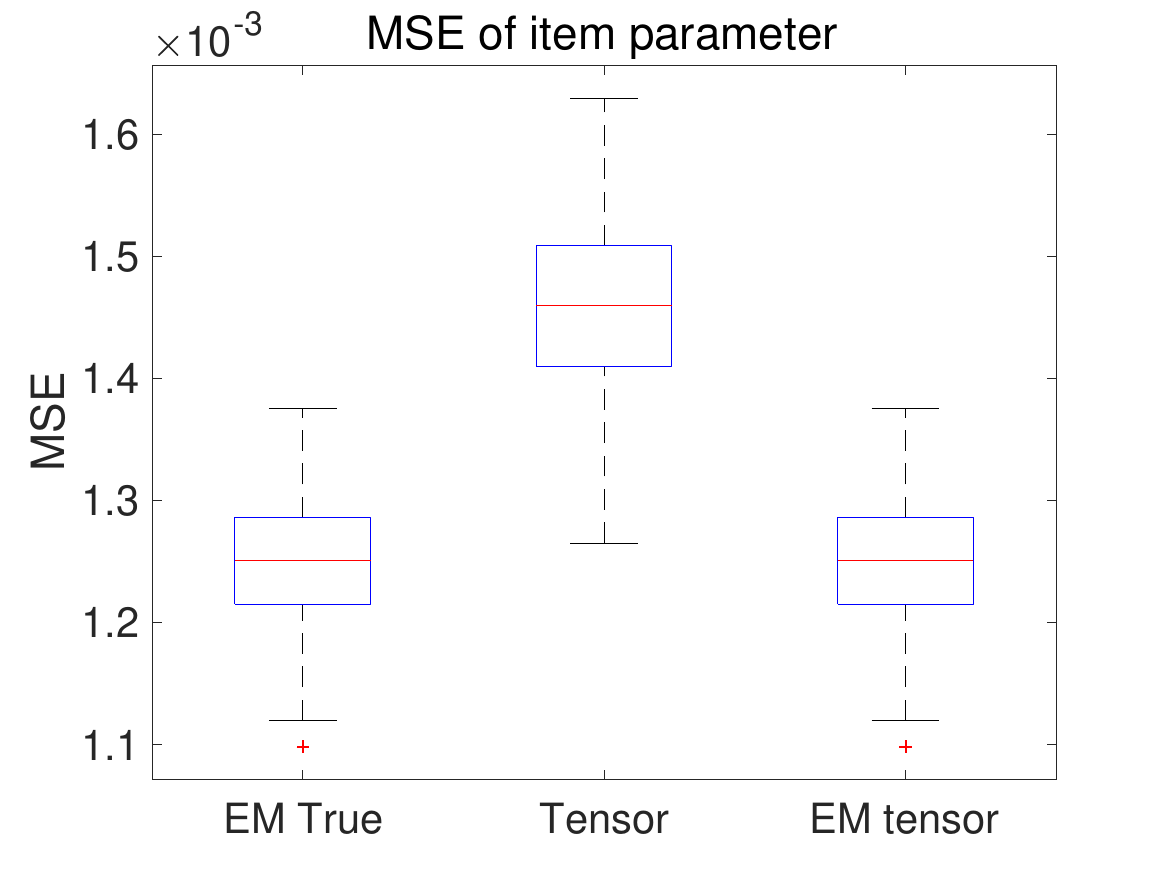}
		\end{minipage}%
	}%
	\subfigure[Running time of the algorithms]{
		\begin{minipage}[t]{0.33\linewidth}
			\centering
			\includegraphics[width=2in]{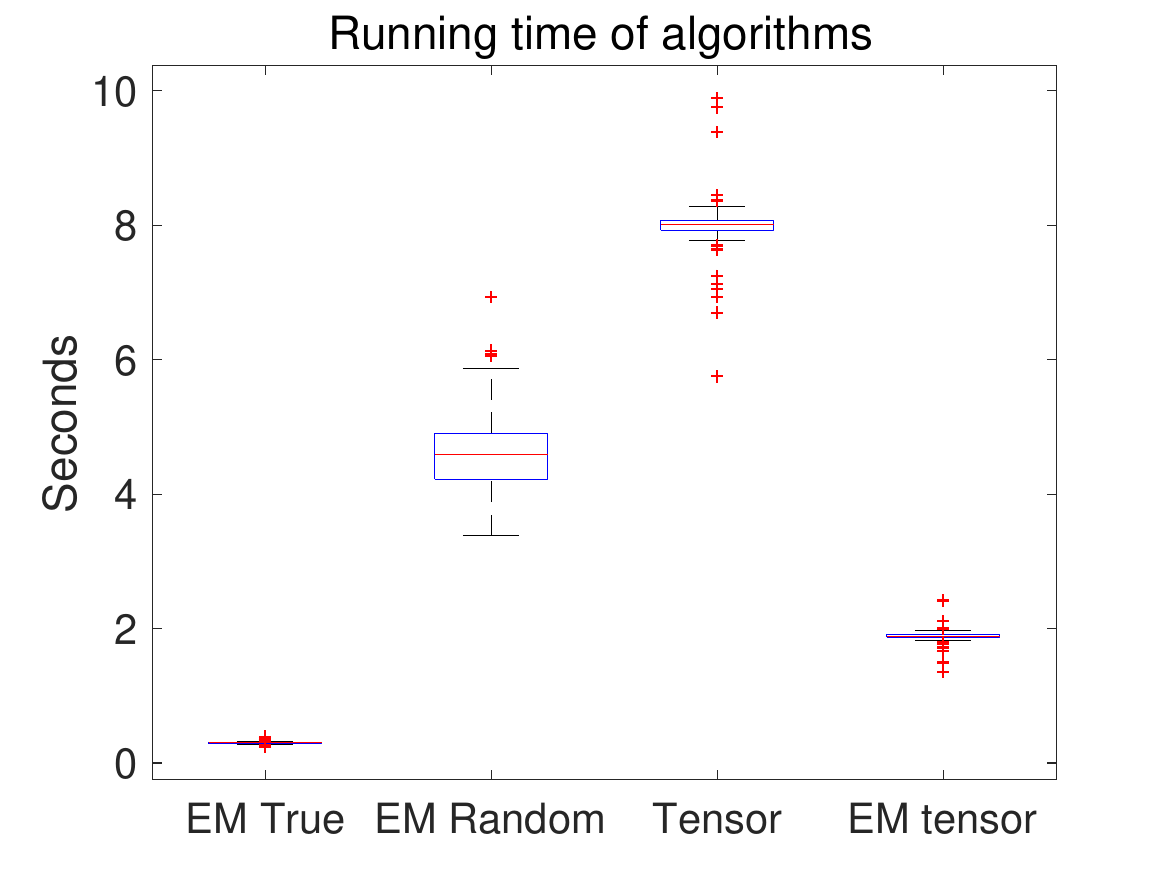}
		\end{minipage}%
	}%
	\centering
	\caption{$N = 1000, J= 100, L=10,$ item parameters $\in \{0.1,0.2,0.8,0.9\}$}
\end{figure}

\begin{figure}[H]
	\centering
	\subfigure[MSE of item parameters]{
		\begin{minipage}[t]{0.33\linewidth}
			\centering
			\includegraphics[width=2in]{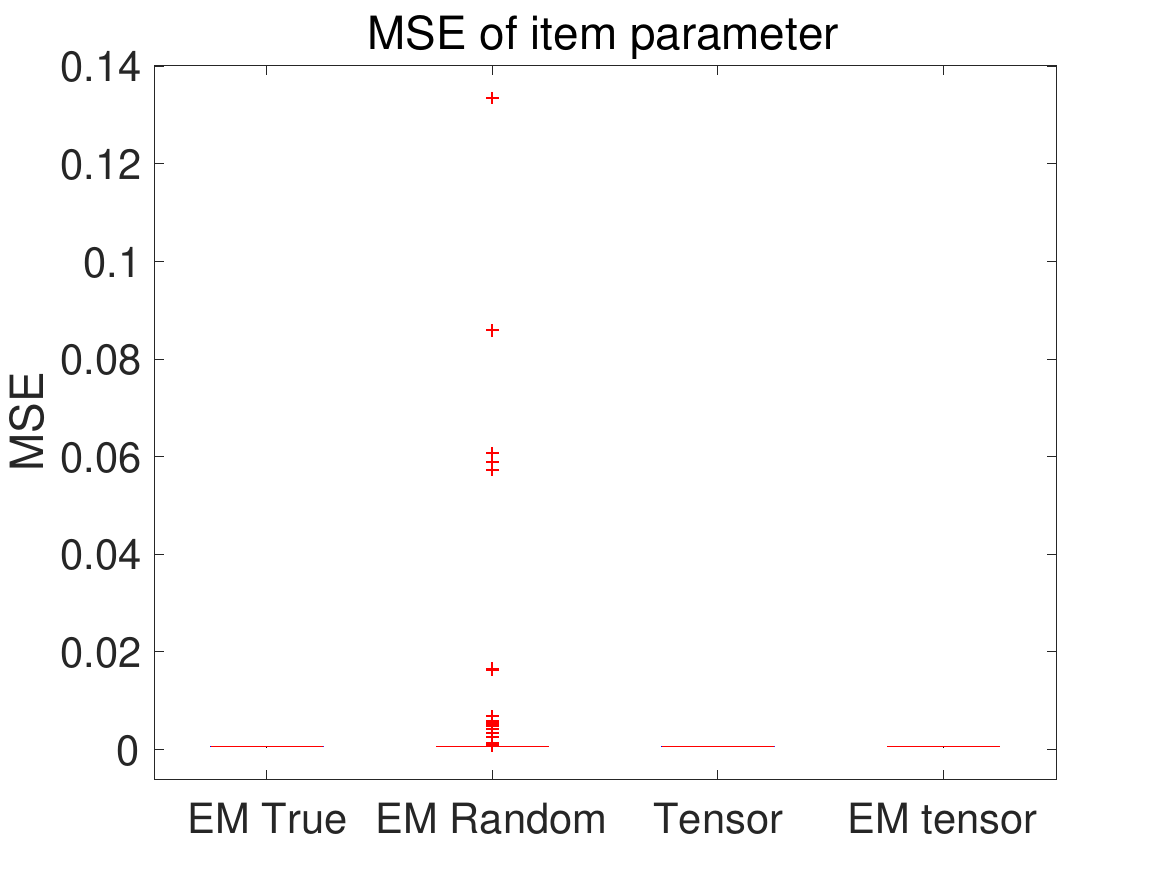}
		\end{minipage}%
	}%
	\subfigure[MSE without EM-random]{
		\begin{minipage}[t]{0.33\linewidth}
			\centering
			\includegraphics[width=2in]{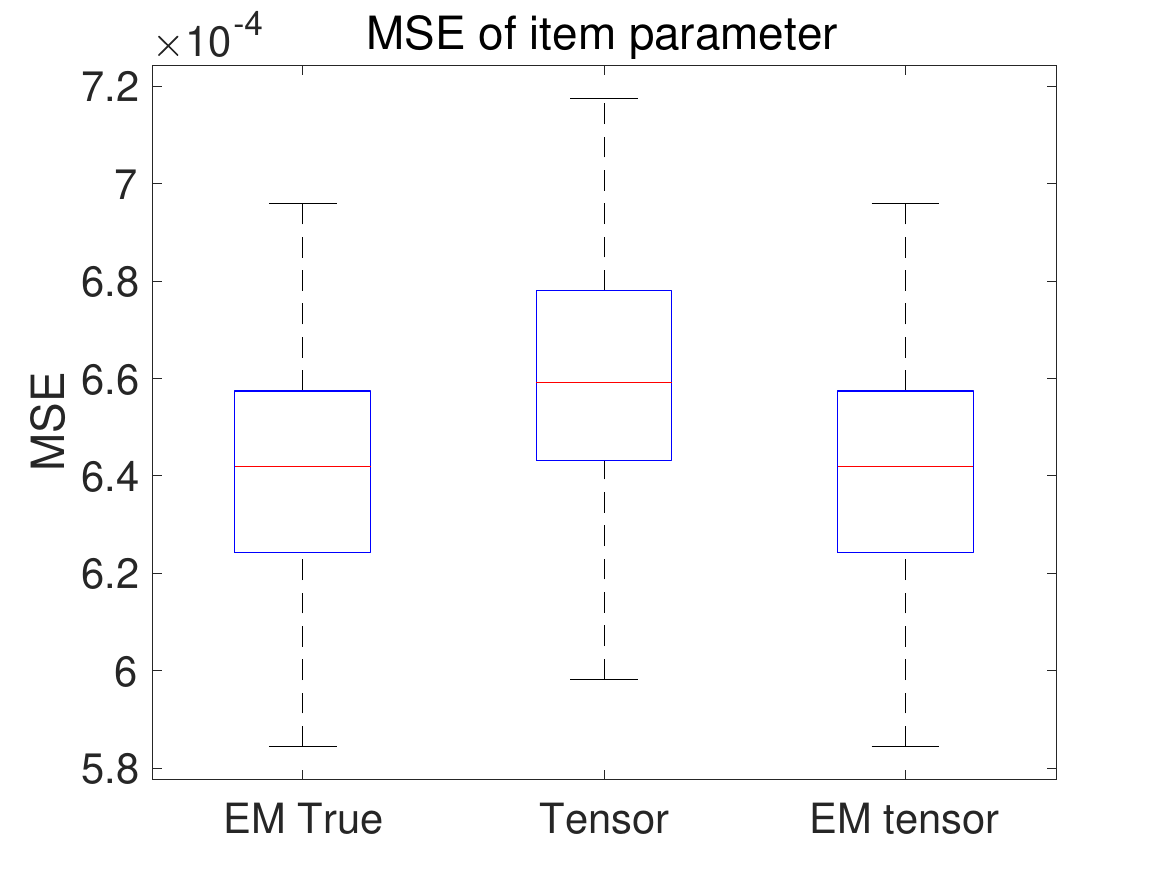}
		\end{minipage}%
	}%
	\subfigure[Running time of the algorithms]{
		\begin{minipage}[t]{0.33\linewidth}
			\centering
			\includegraphics[width=2in]{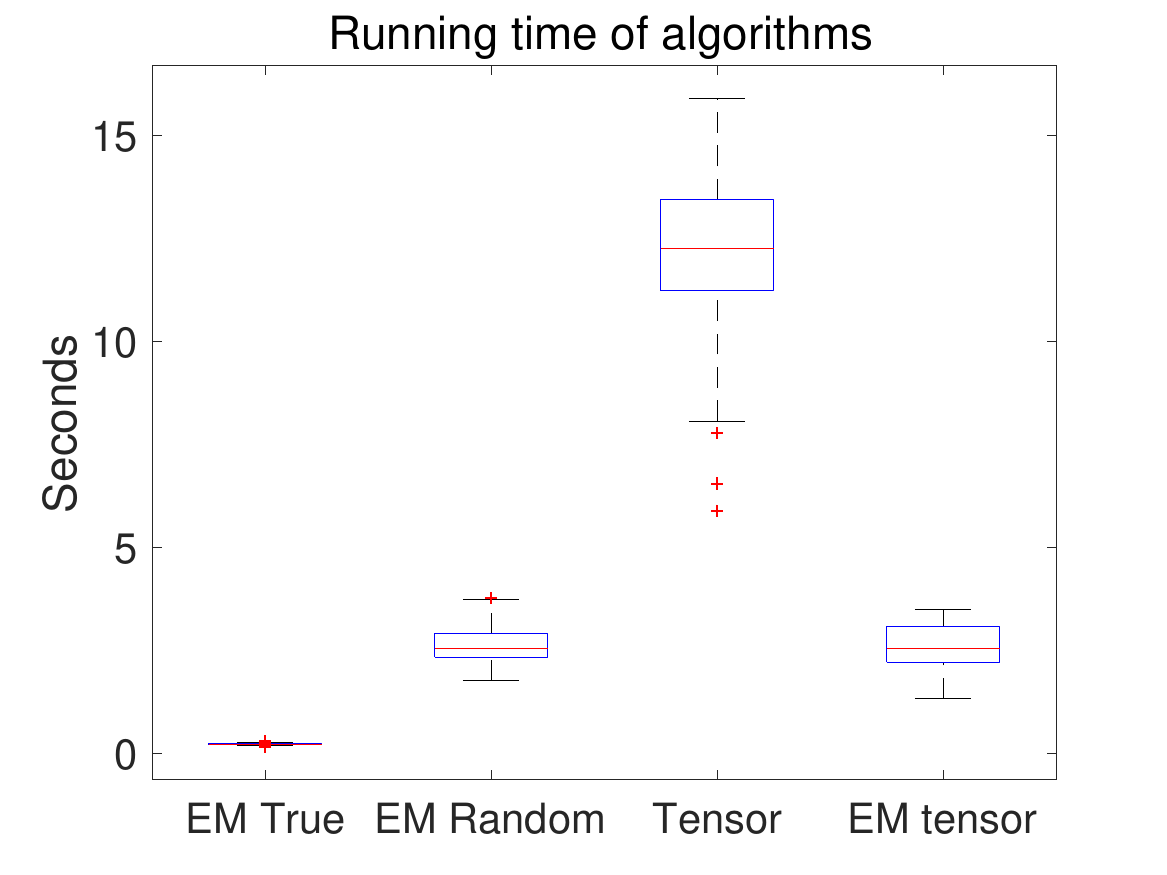}
		\end{minipage}%
	}%
	\centering
	\caption{$N = 1000, J= 200, L=5,$ item parameters $\in \{0.1,0.2,0.8,0.9\}$}
\end{figure}

\begin{figure}[H]
	\centering
	\subfigure[MSE of item parameters]{
		\begin{minipage}[t]{0.33\linewidth}
			\centering
			\includegraphics[width=2in]{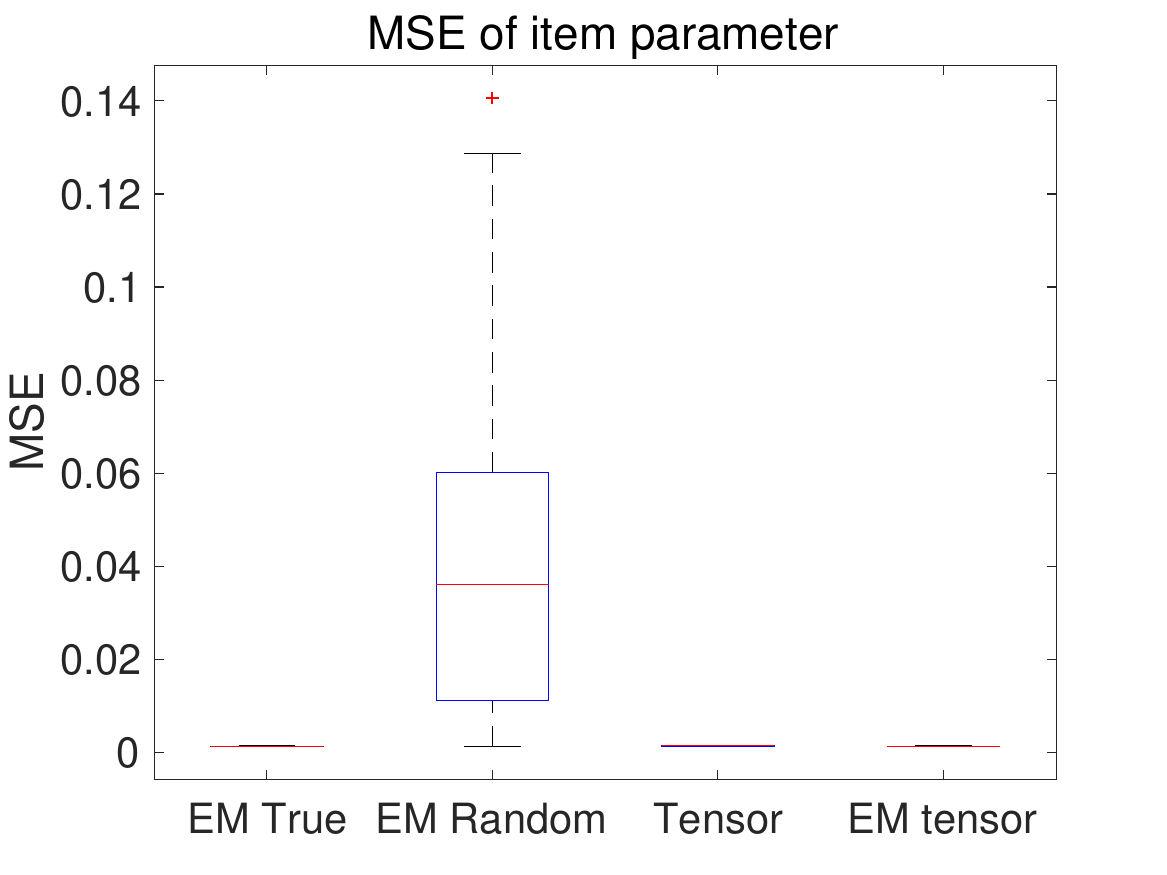}
		\end{minipage}%
	}%
	\subfigure[MSE without EM-random]{
		\begin{minipage}[t]{0.33\linewidth}
			\centering
			\includegraphics[width=2in]{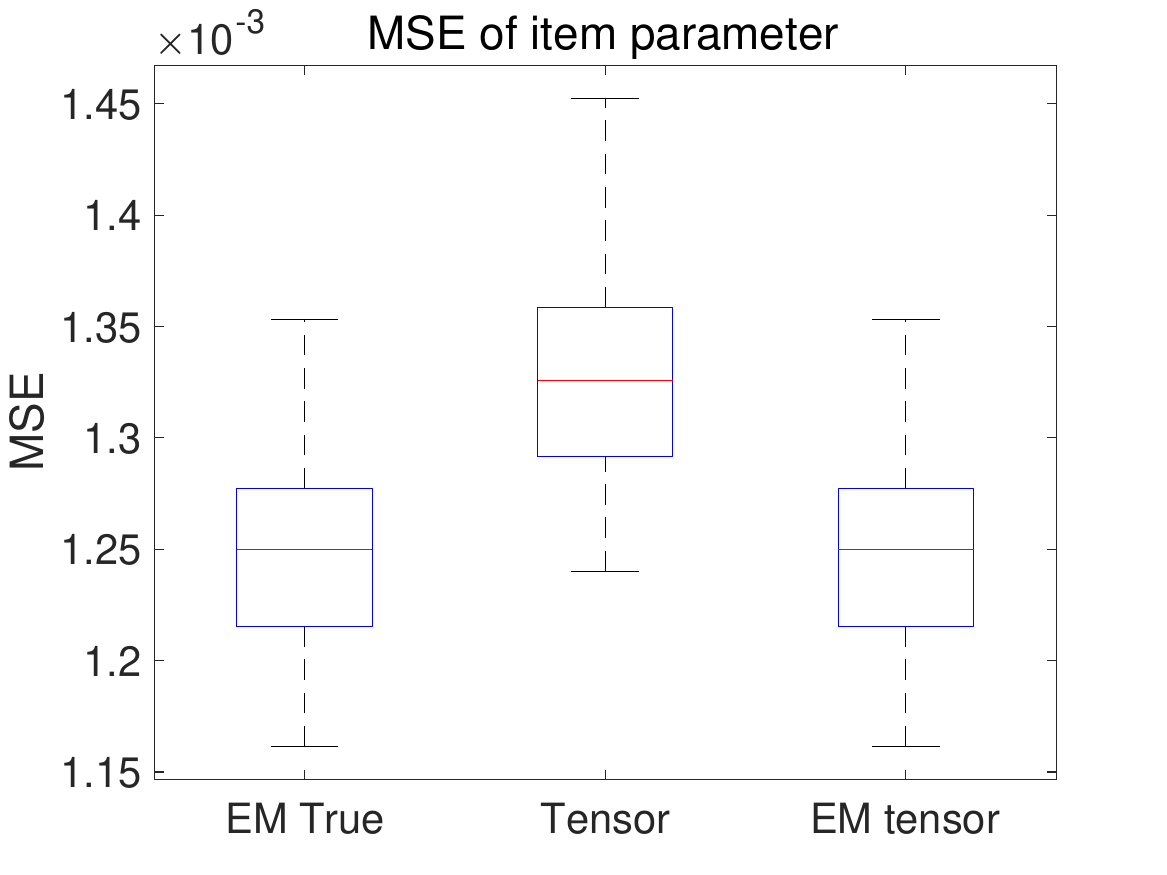}
		\end{minipage}%
	}%
	\subfigure[Running time of the algorithms]{
		\begin{minipage}[t]{0.33\linewidth}
			\centering
			\includegraphics[width=2in]{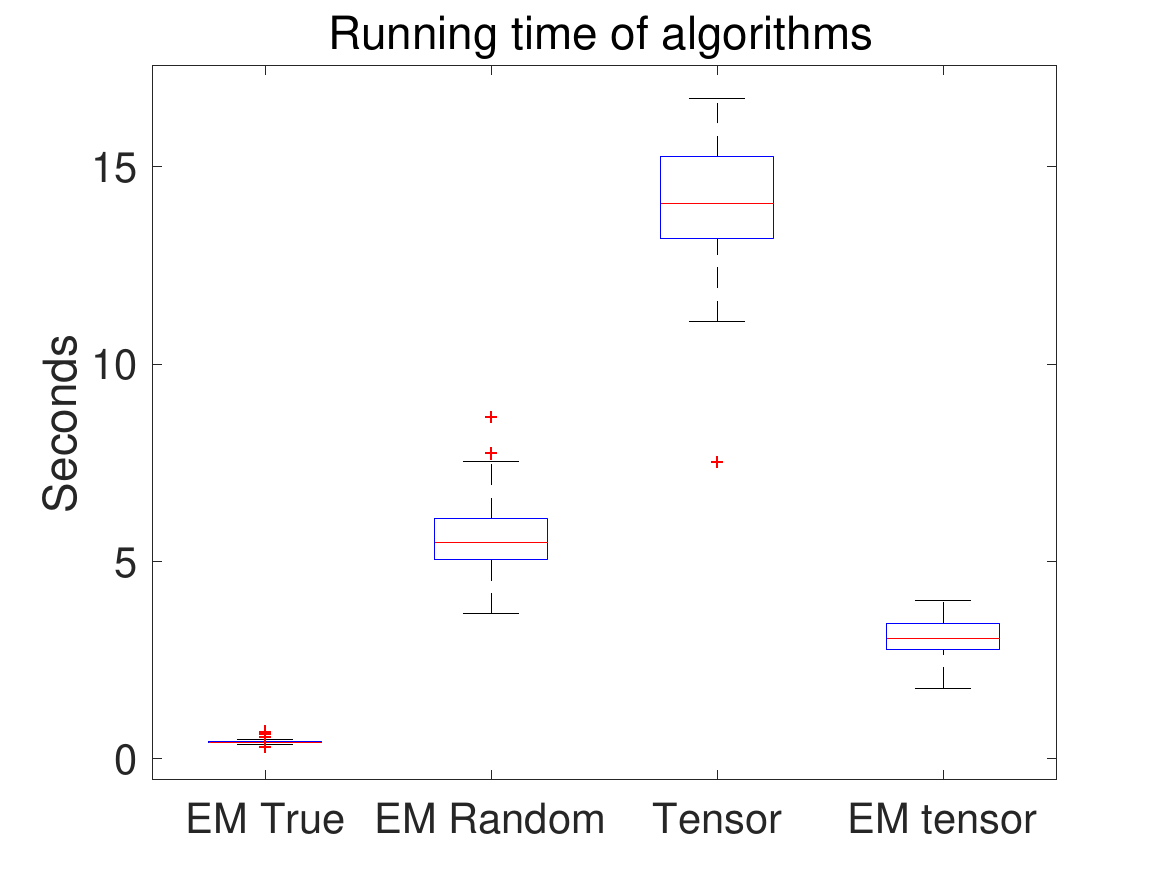}
		\end{minipage}%
	}%
	\centering
	\caption{$N = 1000, J= 200, L=10,$ item parameters $\in \{0.1,0.2,0.8,0.9\}$}
\end{figure}

\begin{figure}[H]
	\centering
	\subfigure[MSE of item parameters]{
		\begin{minipage}[t]{0.4\linewidth}
			\centering
			\includegraphics[width=2in]{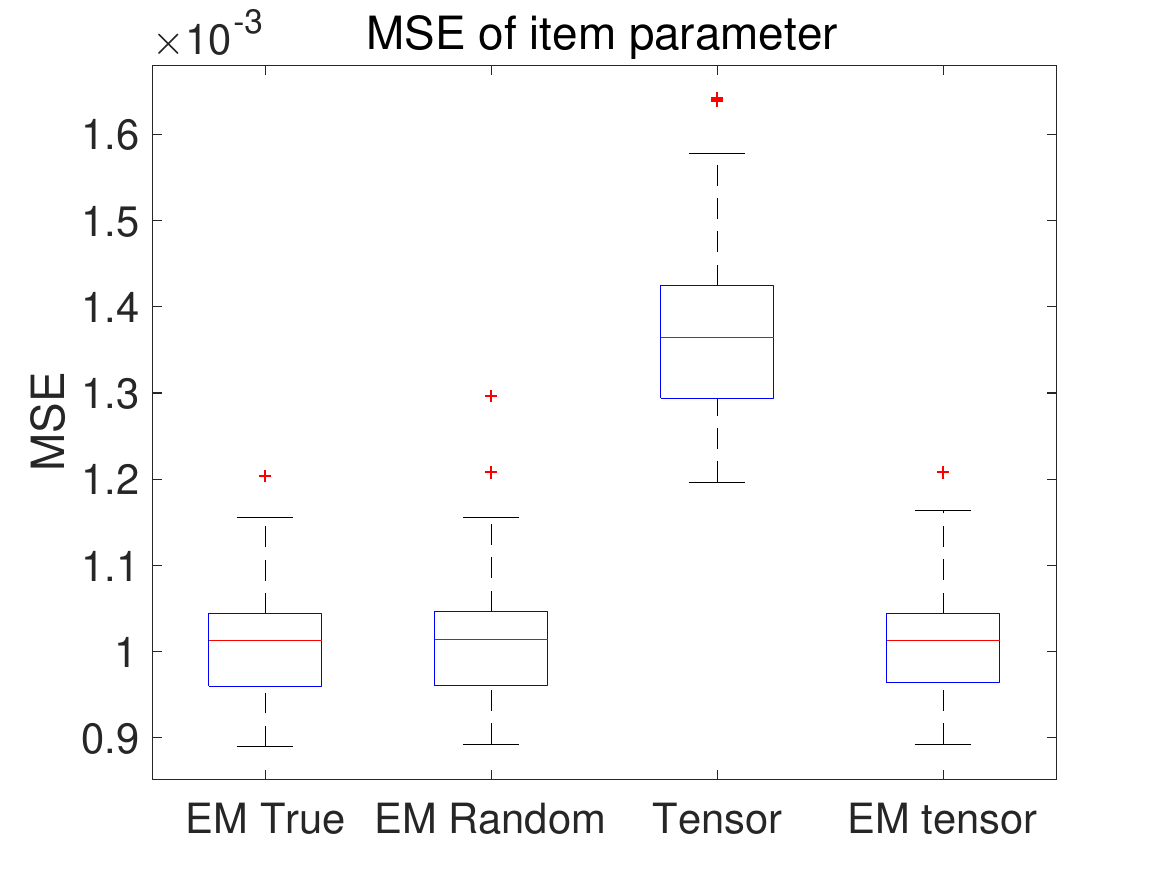}
		\end{minipage}%
	}%
	\subfigure[Running time of the algorithms]{
		\begin{minipage}[t]{0.4\linewidth}
			\centering
			\includegraphics[width=2in]{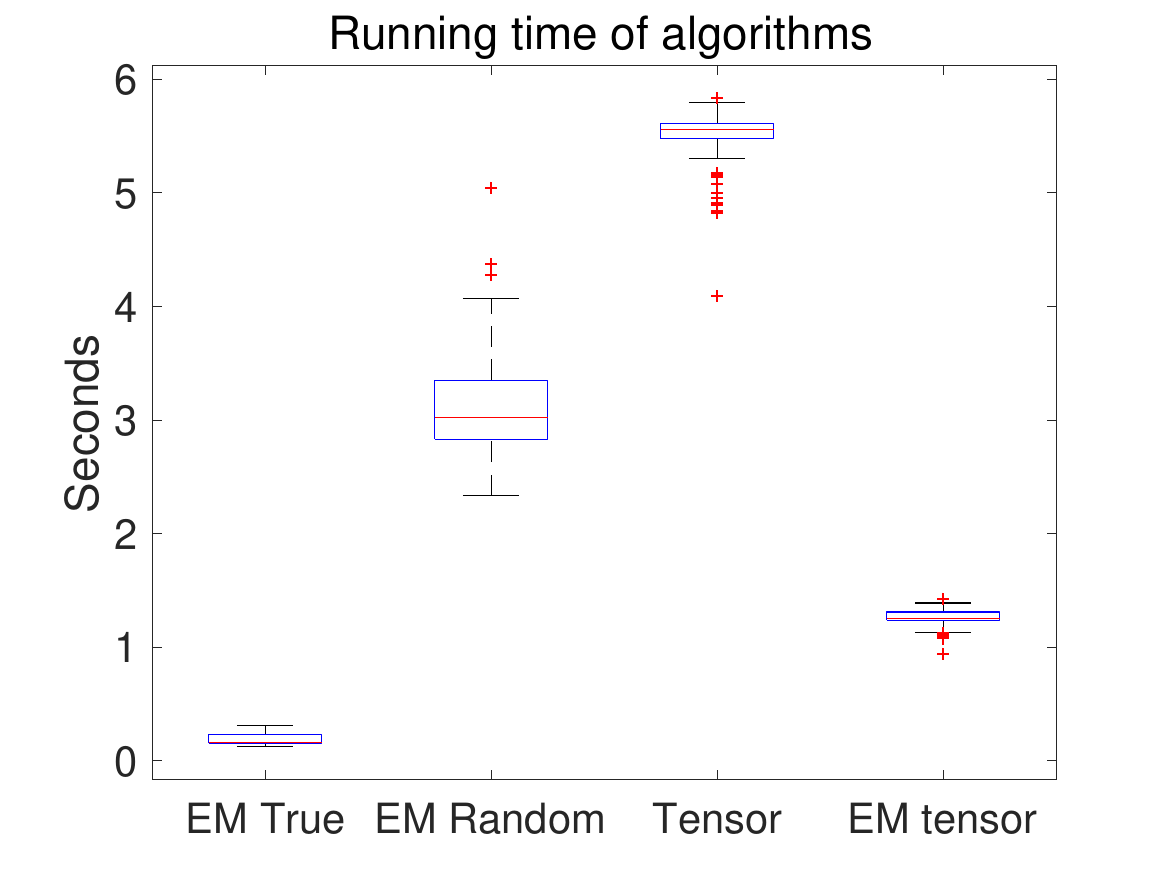}
		\end{minipage}%
	}%
	\centering
	\caption{$N = 1000, J= 100, L=5,$ item parameters $\in \{0.2,0.4,0.6,0.8\}$}
\end{figure}

\begin{figure}[H]
	\centering
	\subfigure[MSE of item parameters]{
		\begin{minipage}[t]{0.33\linewidth}
			\centering
			\includegraphics[width=2in]{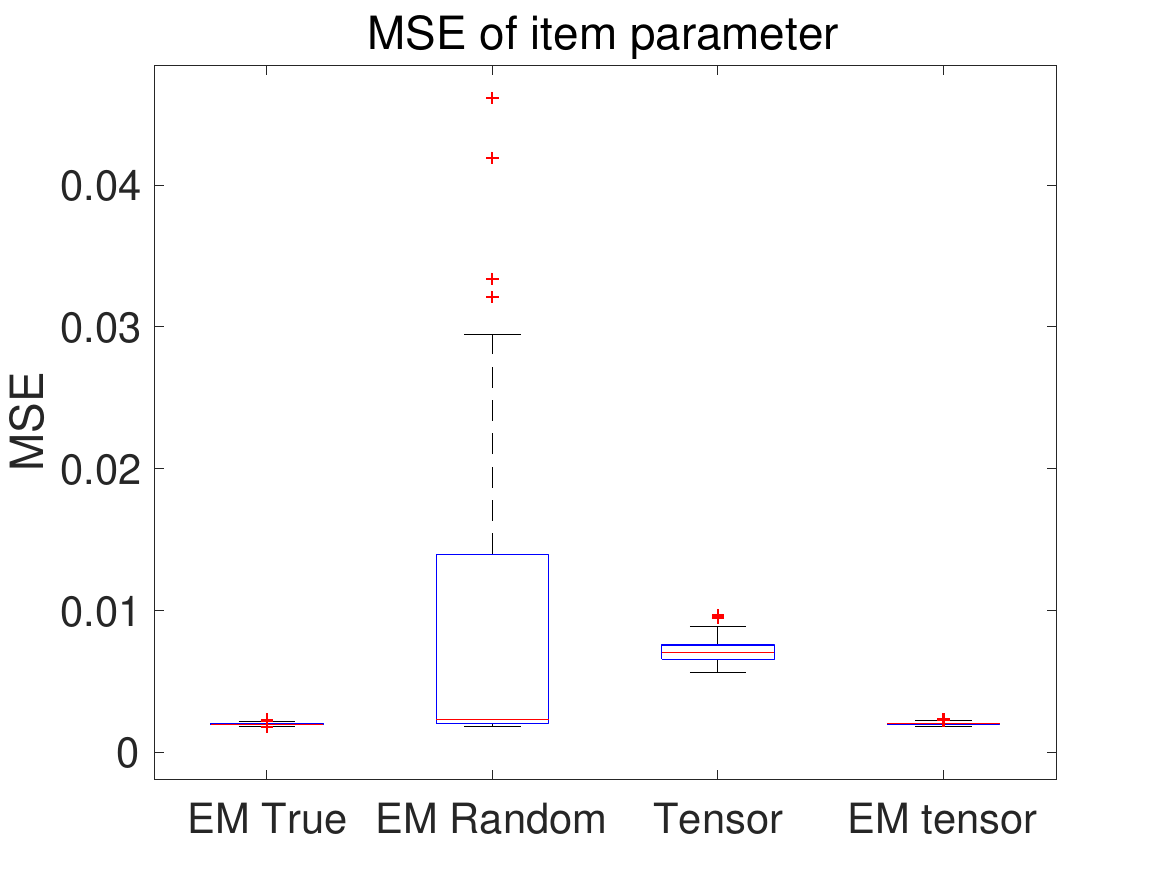}
		\end{minipage}%
	}%
	\subfigure[MSE without EM-random]{
		\begin{minipage}[t]{0.33\linewidth}
			\centering
			\includegraphics[width=2in]{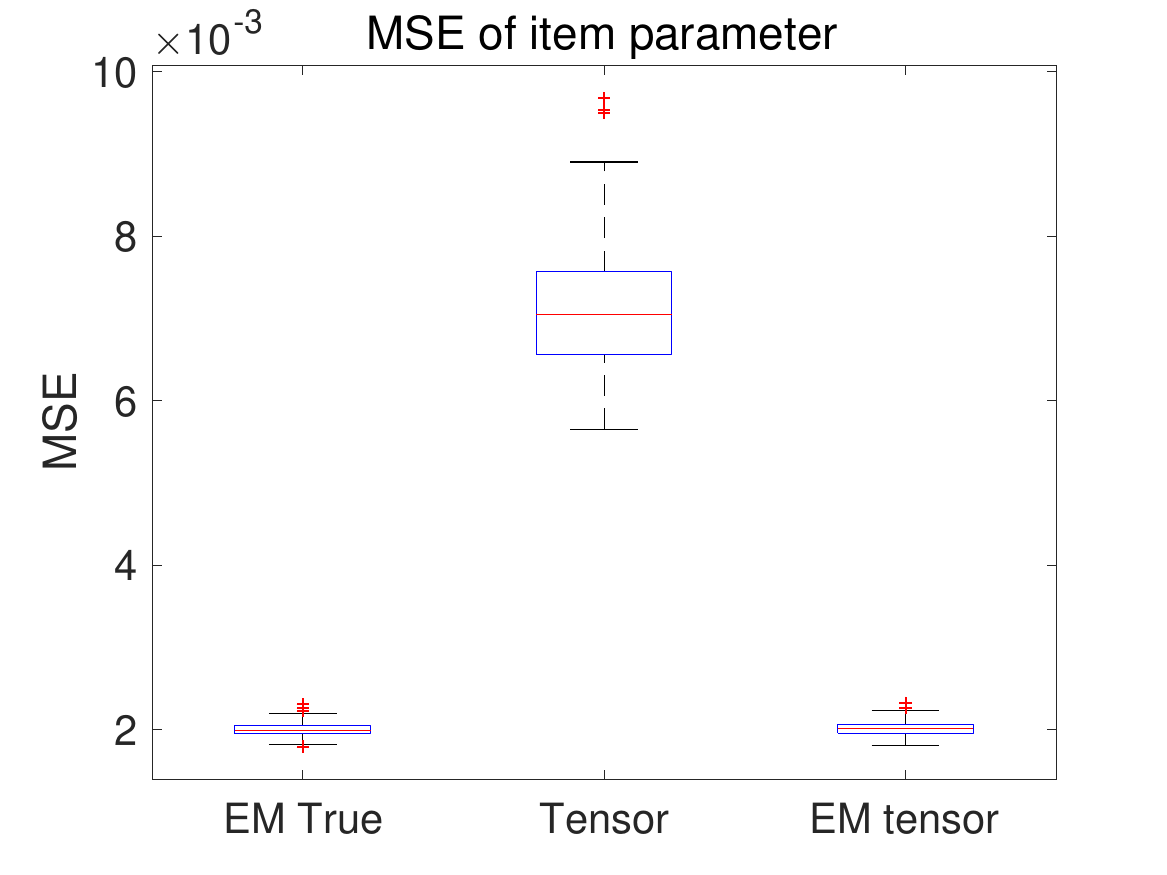}
		\end{minipage}%
	}%
	\subfigure[Running time of the algorithms]{
		\begin{minipage}[t]{0.33\linewidth}
			\centering
			\includegraphics[width=2in]{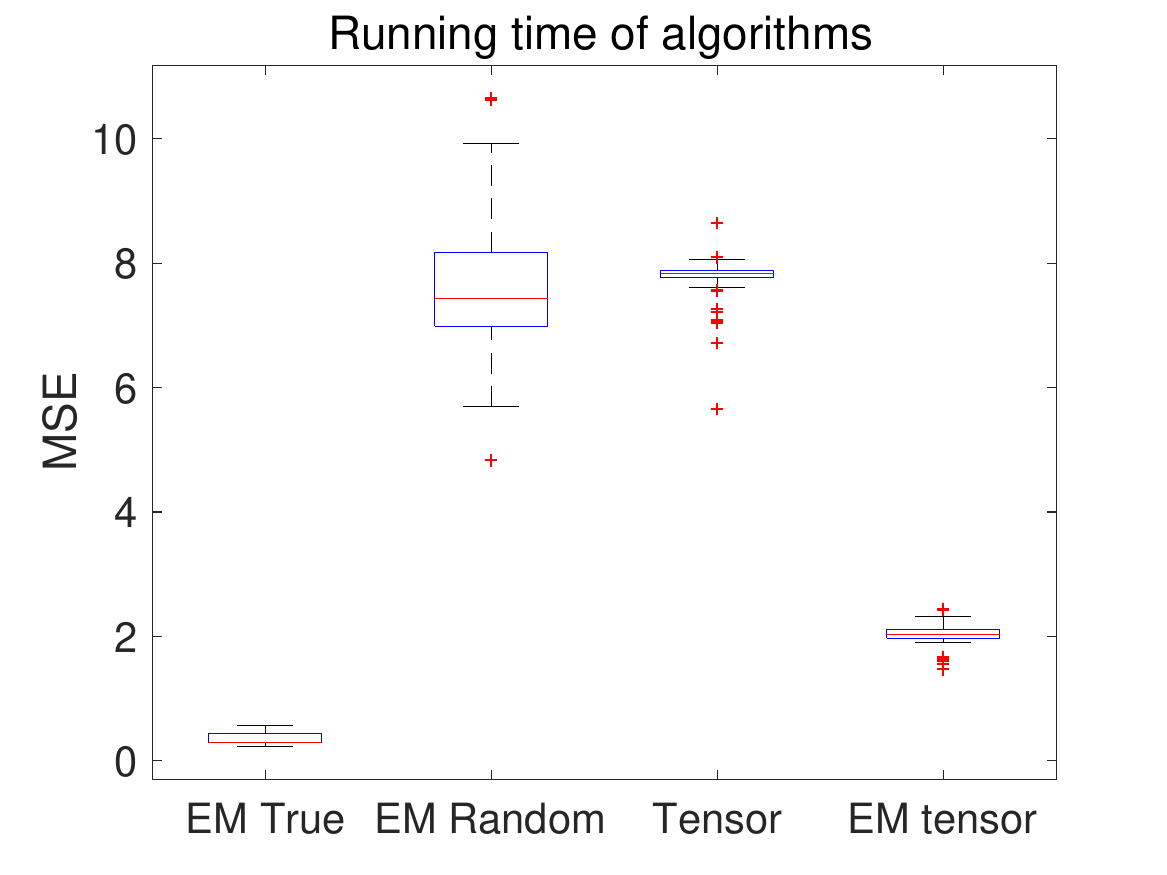}
		\end{minipage}%
	}%
	\centering
	\caption{$N = 1000, J= 100, L=10,$ item parameters $\in \{0.2,0.4,0.6,0.8\}$}
\end{figure}

\begin{figure}[H]
	\centering
	\subfigure[MSE of item parameters]{
		\begin{minipage}[t]{0.4\linewidth}
			\centering
			\includegraphics[width=2in]{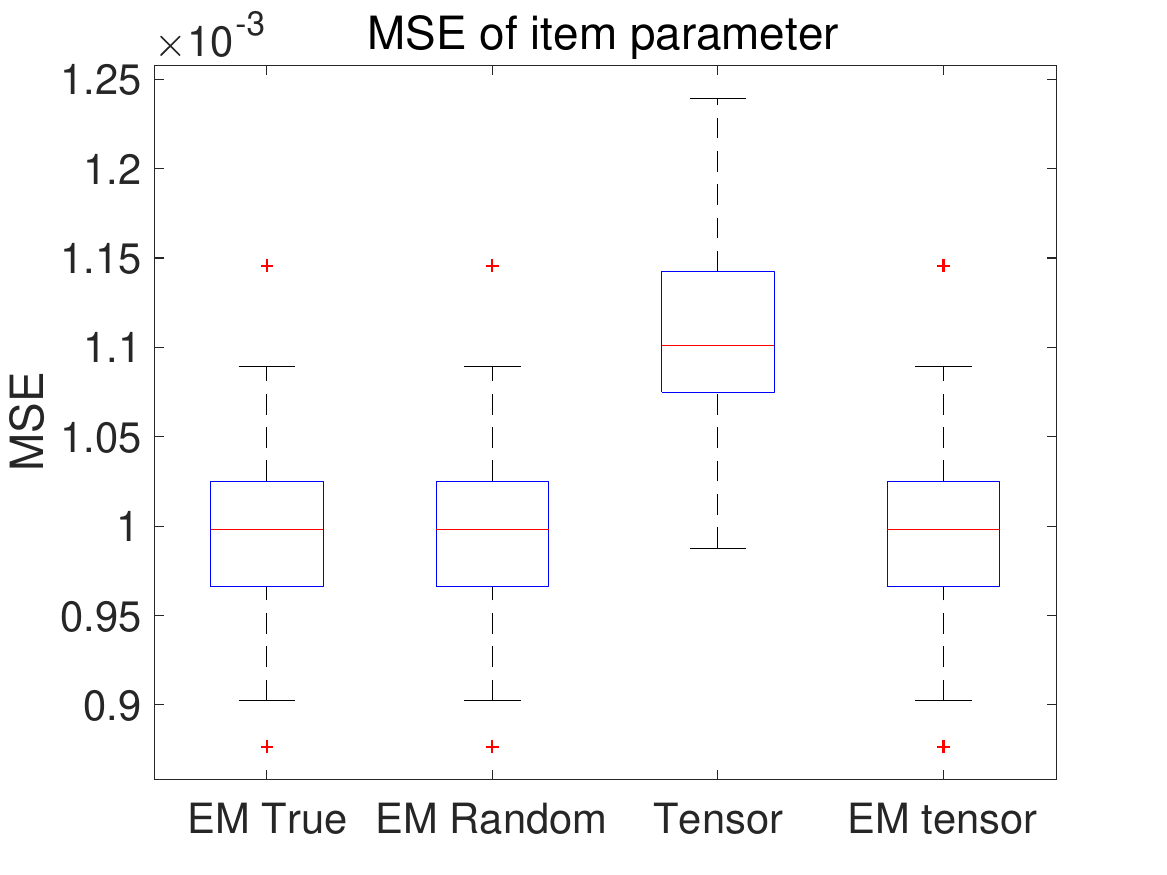}
		\end{minipage}%
	}%
	\subfigure[Running time of the algorithms]{
		\begin{minipage}[t]{0.4\linewidth}
			\centering
			\includegraphics[width=2in]{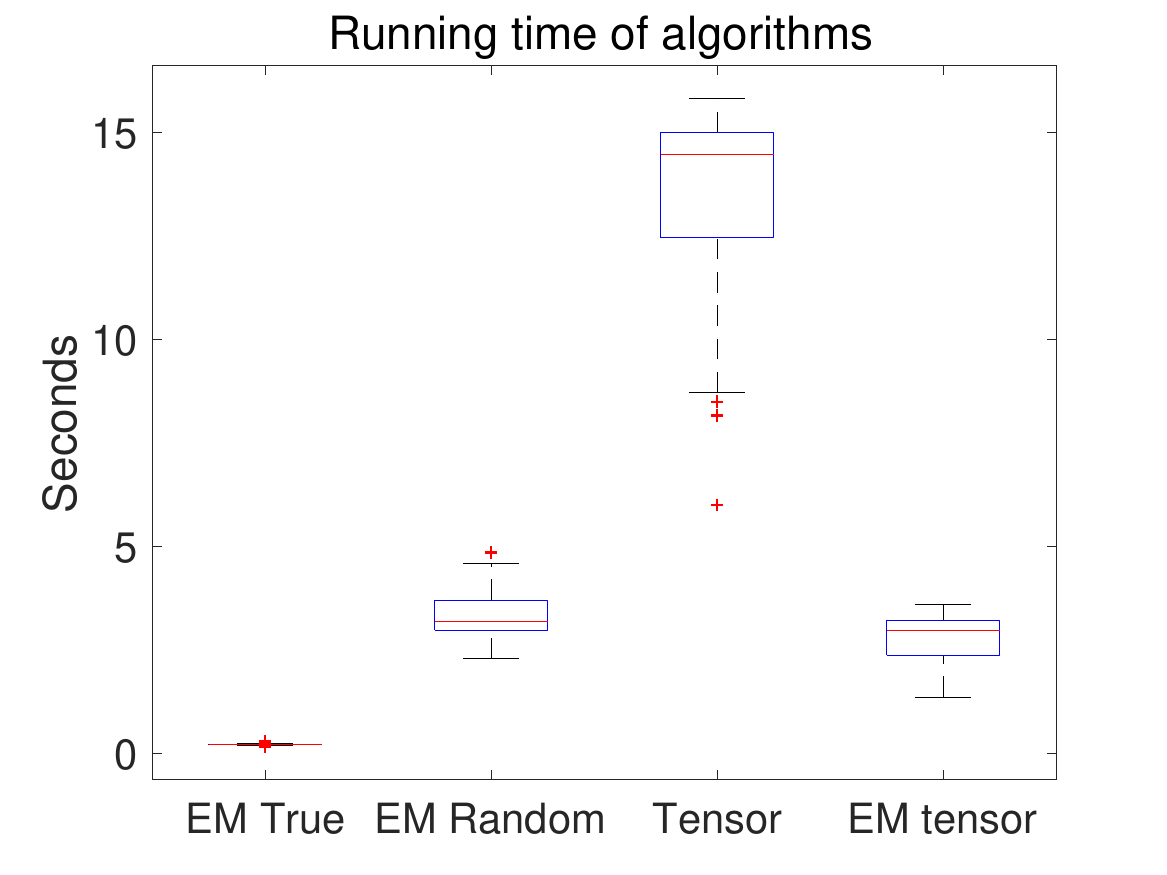}
		\end{minipage}%
	}%
	\centering
	\caption{$N = 1000, J= 200, L=5,$ item parameters $\in \{0.2,0.4,0.6,0.8\}$}
\end{figure}

\begin{figure}[H]
	\centering
	\subfigure[MSE of item parameters]{
		\begin{minipage}[t]{0.33\linewidth}
			\centering
			\includegraphics[width=2in]{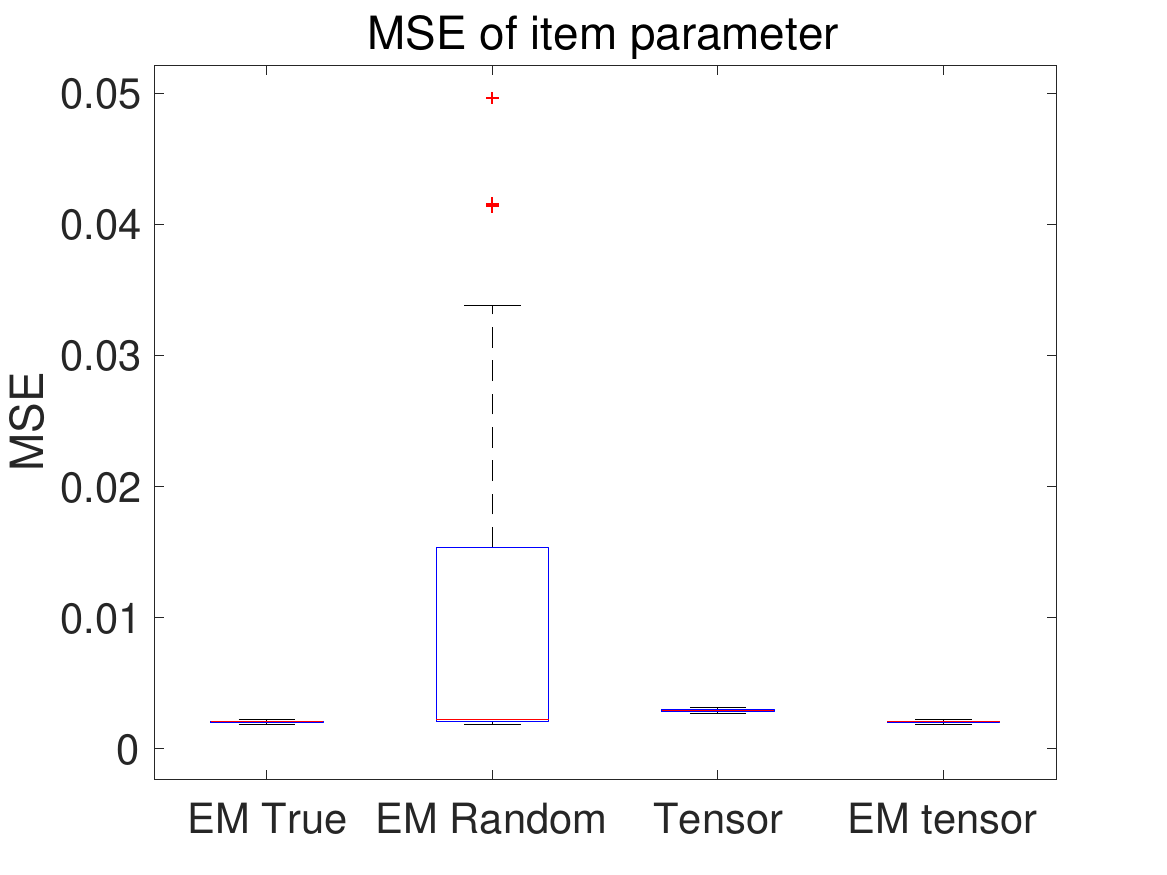}
		\end{minipage}%
	}%
	\subfigure[MSE without EM-random]{
		\begin{minipage}[t]{0.33\linewidth}
			\centering
			\includegraphics[width=2in]{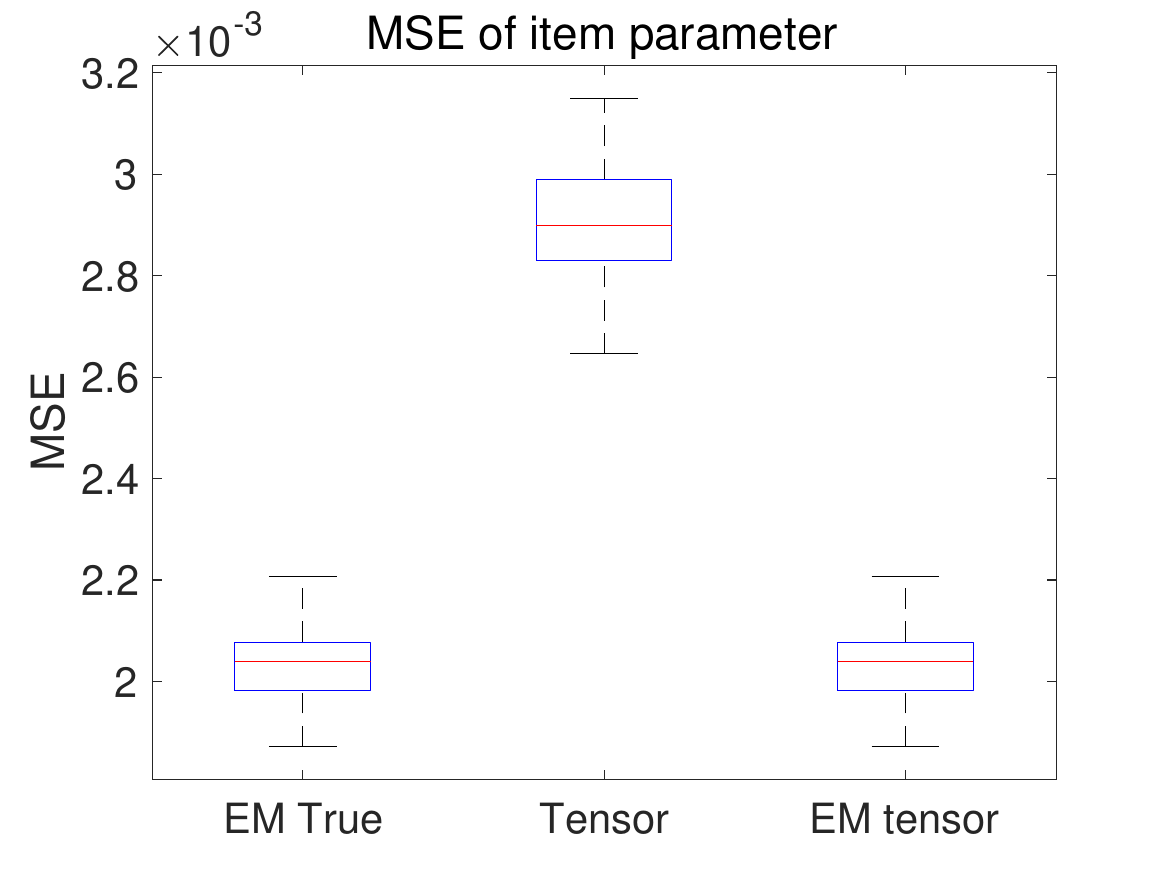}
		\end{minipage}%
	}%
	\subfigure[Running time of the algorithms]{
		\begin{minipage}[t]{0.33\linewidth}
			\centering
			\includegraphics[width=2in]{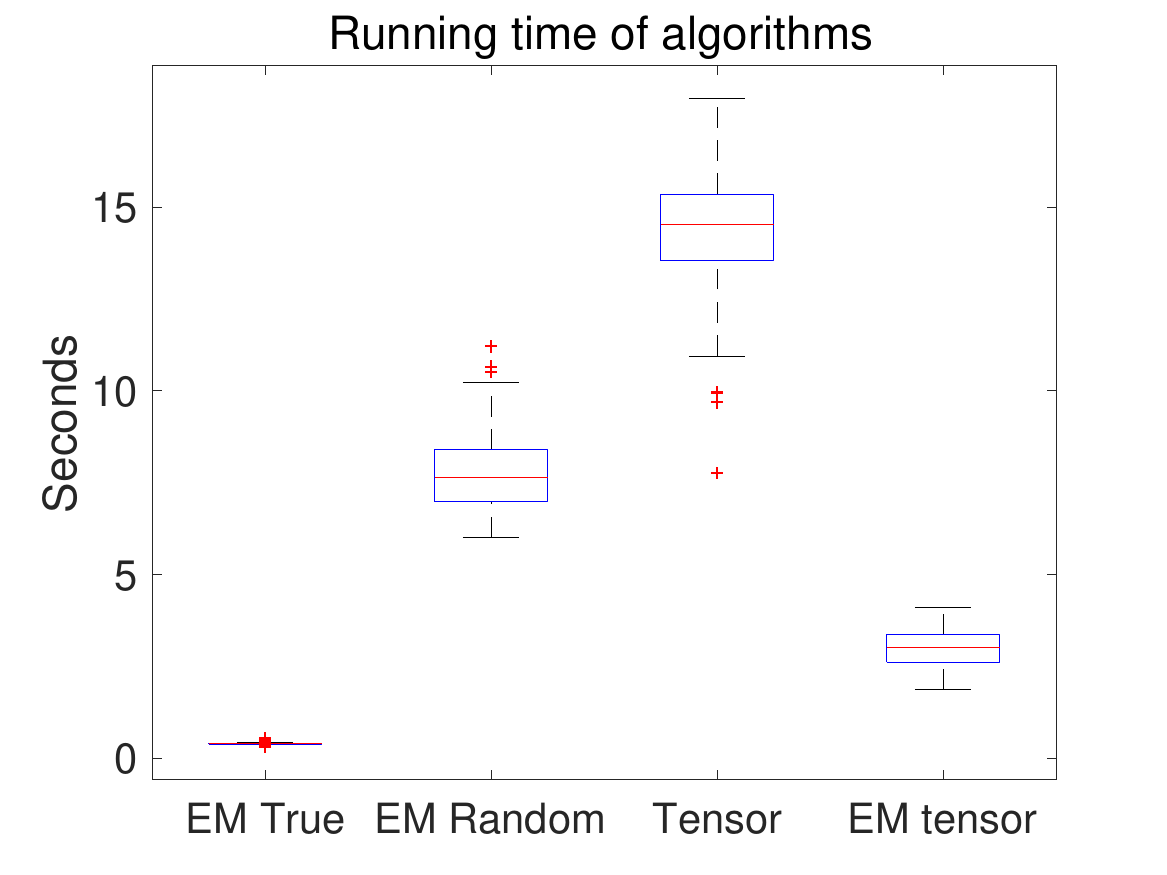}
		\end{minipage}%
	}%
	\centering
	\caption{$N = 1000, J= 200, L=10,$ item parameters $\in \{0.2,0.4,0.6,0.8\}$}
\end{figure}

\begin{figure}[H]
	\centering
	\subfigure[MSE of item parameters]{
		\begin{minipage}[t]{0.4\linewidth}
			\centering
			\includegraphics[width=2in]{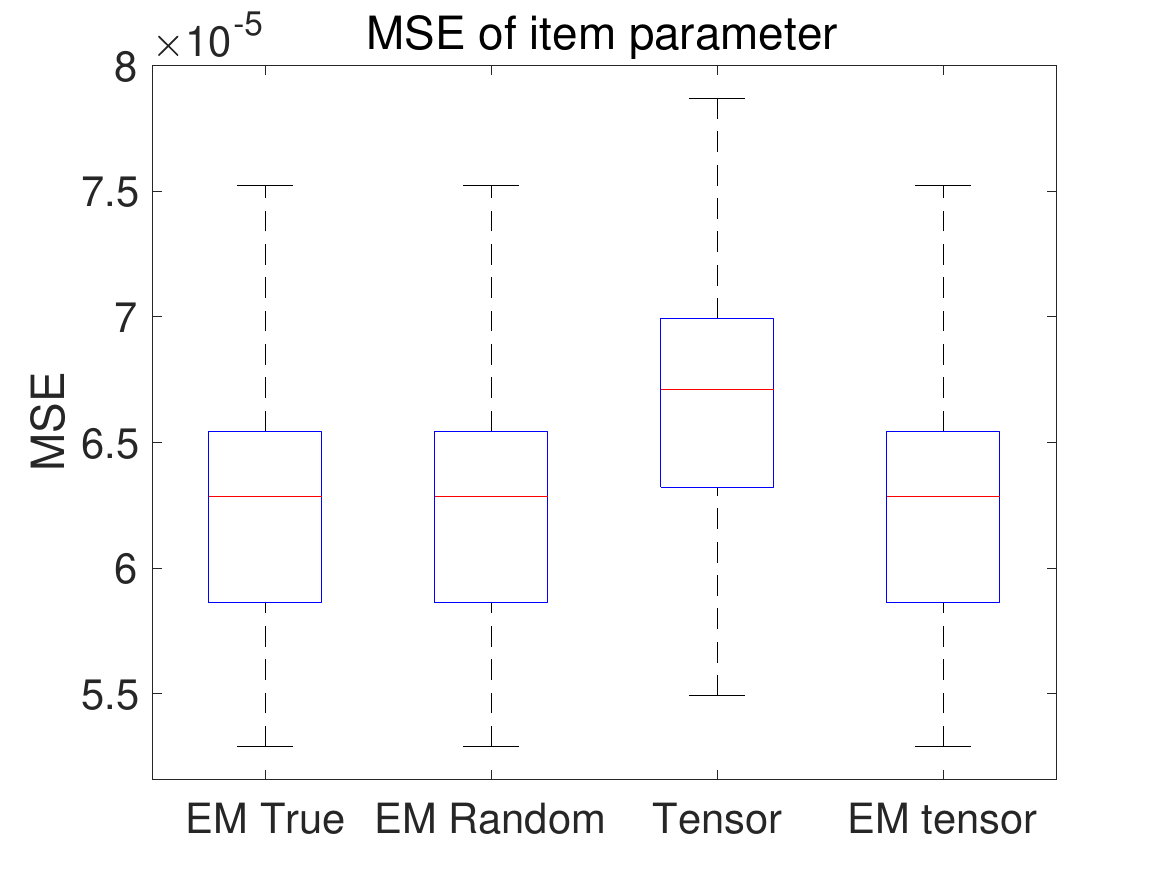}
		\end{minipage}%
	}%
	\subfigure[Running time of the algorithms]{
		\begin{minipage}[t]{0.4\linewidth}
			\centering
			\includegraphics[width=2in]{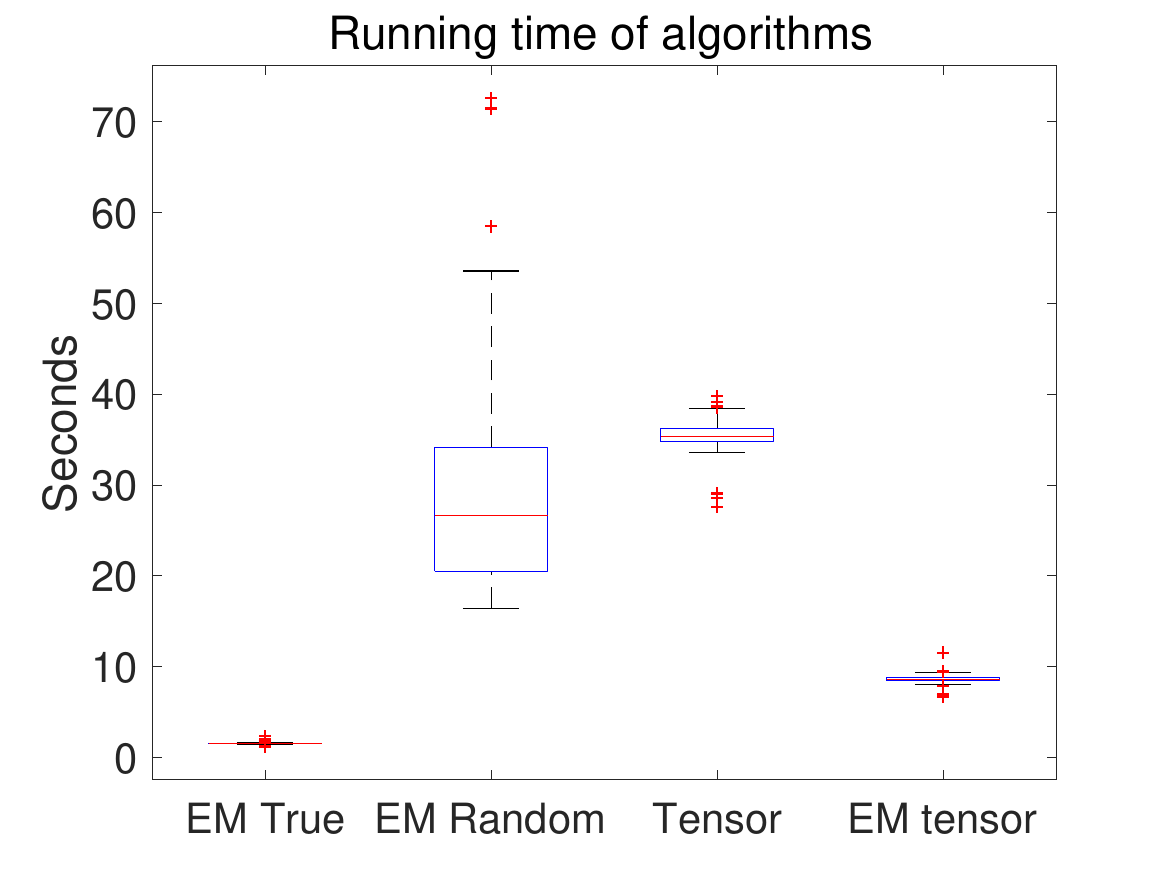}
		\end{minipage}%
	}%
	\centering
	\caption{$N = 10000, J= 100, L=5,$ item parameters $\in \{0.1,0.2,0.8,0.9\}$}
\end{figure}

\begin{figure}[H]
	\centering
	\subfigure[MSE of item parameters]{
		\begin{minipage}[t]{0.33\linewidth}
			\centering
			\includegraphics[width=2in]{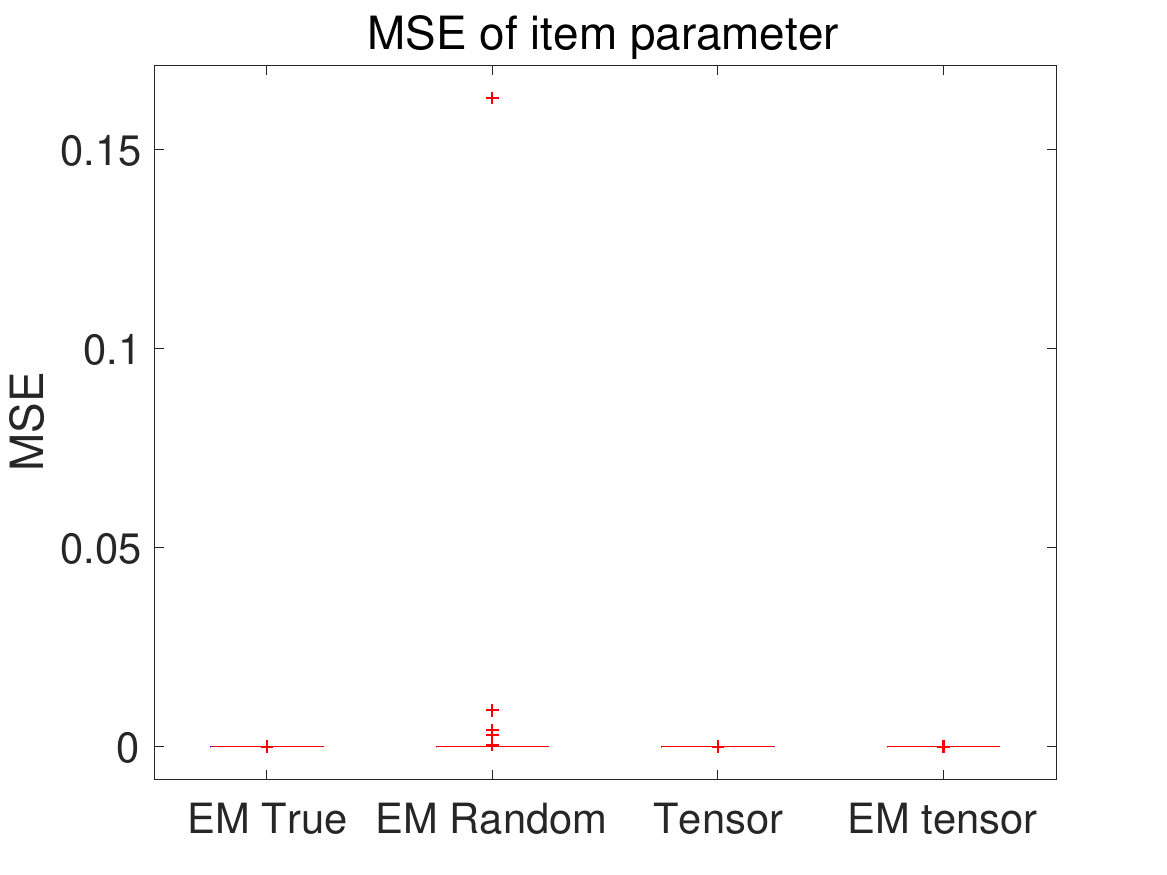}
		\end{minipage}%
	}%
	\subfigure[MSE without EM-random]{
		\begin{minipage}[t]{0.33\linewidth}
			\centering
			\includegraphics[width=2in]{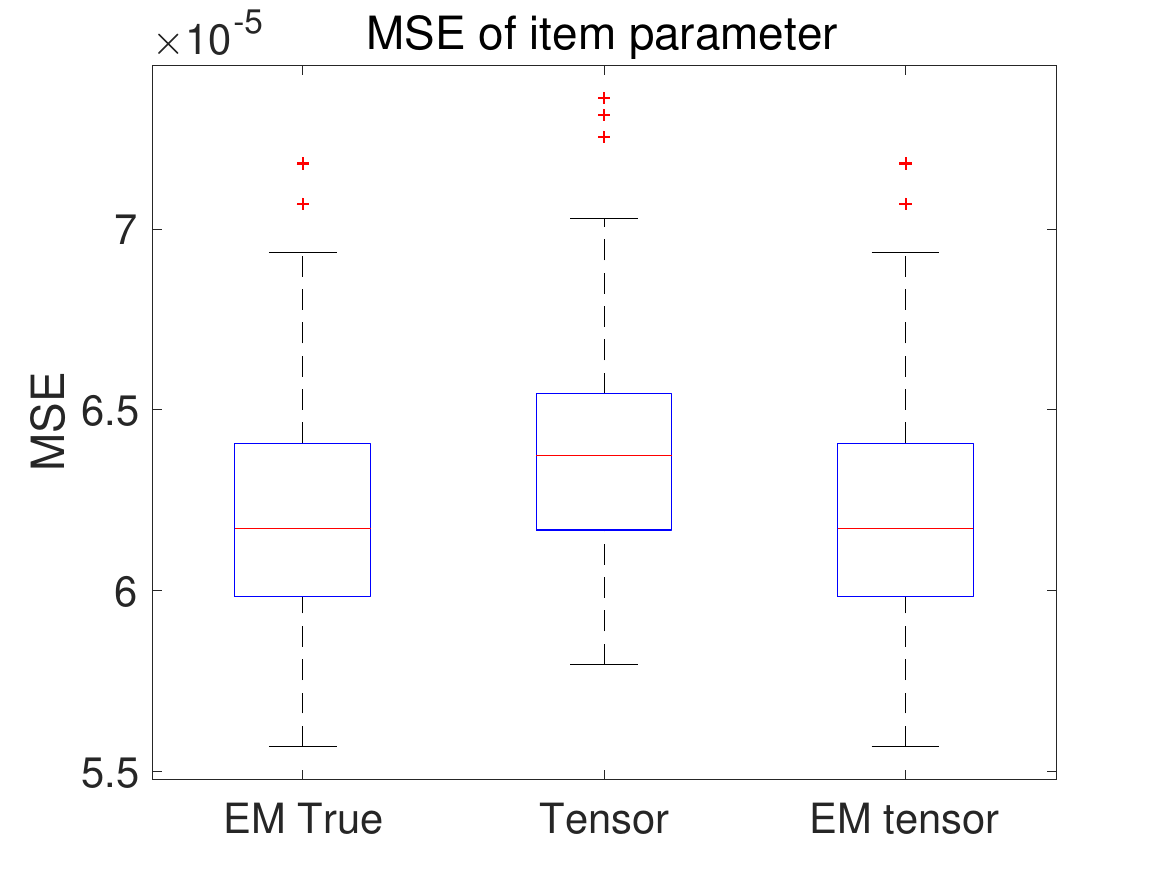}
		\end{minipage}%
	}%
	\subfigure[Running time of the algorithms]{
		\begin{minipage}[t]{0.33\linewidth}
			\centering
			\includegraphics[width=2in]{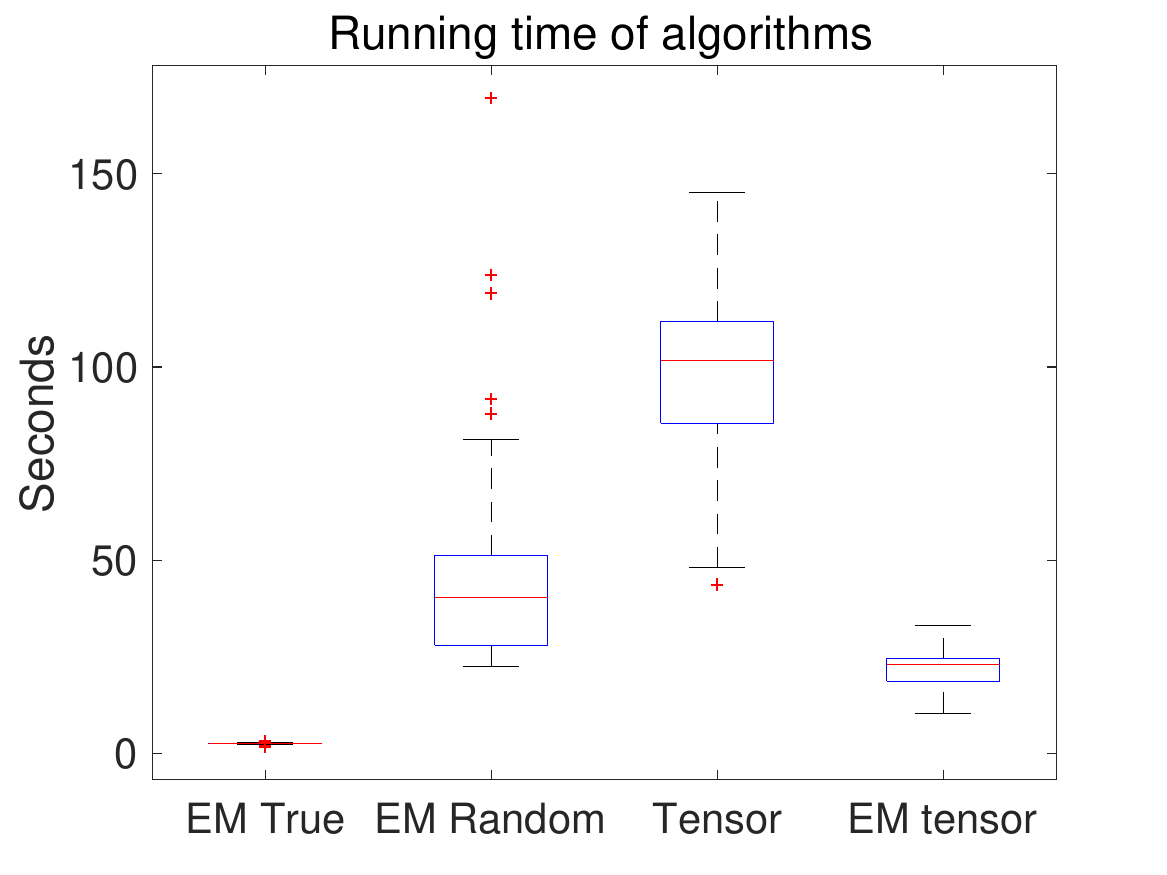}
		\end{minipage}%
	}%
	\centering
	\caption{$N = 10000, J= 200, L=5,$ item parameters $\in \{0.1,0.2,0.8,0.9\}$}
\end{figure}

\begin{figure}[H]
	\centering
	\subfigure[MSE of item parameters]{
		\begin{minipage}[t]{0.33\linewidth}
			\centering
			\includegraphics[width=2in]{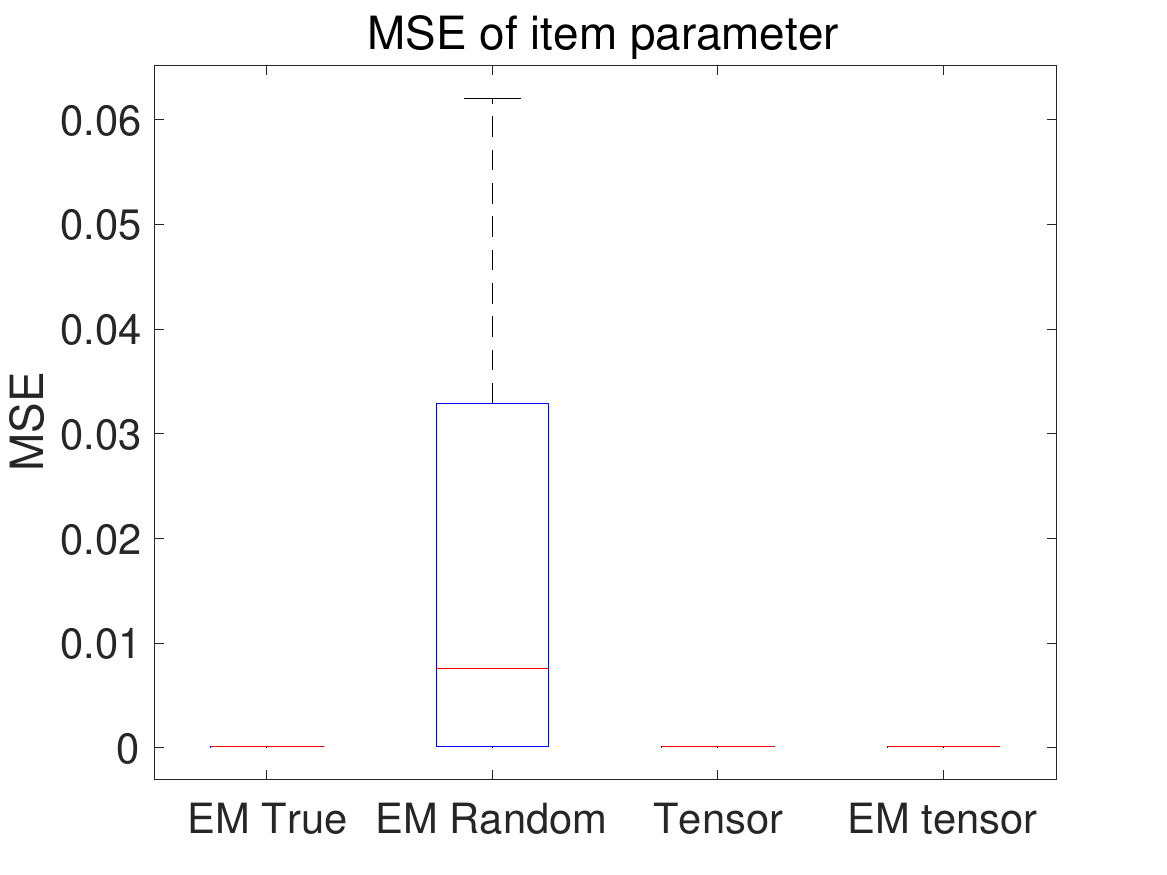}
		\end{minipage}%
	}%
	\subfigure[MSE without EM-random]{
		\begin{minipage}[t]{0.33\linewidth}
			\centering
			\includegraphics[width=2in]{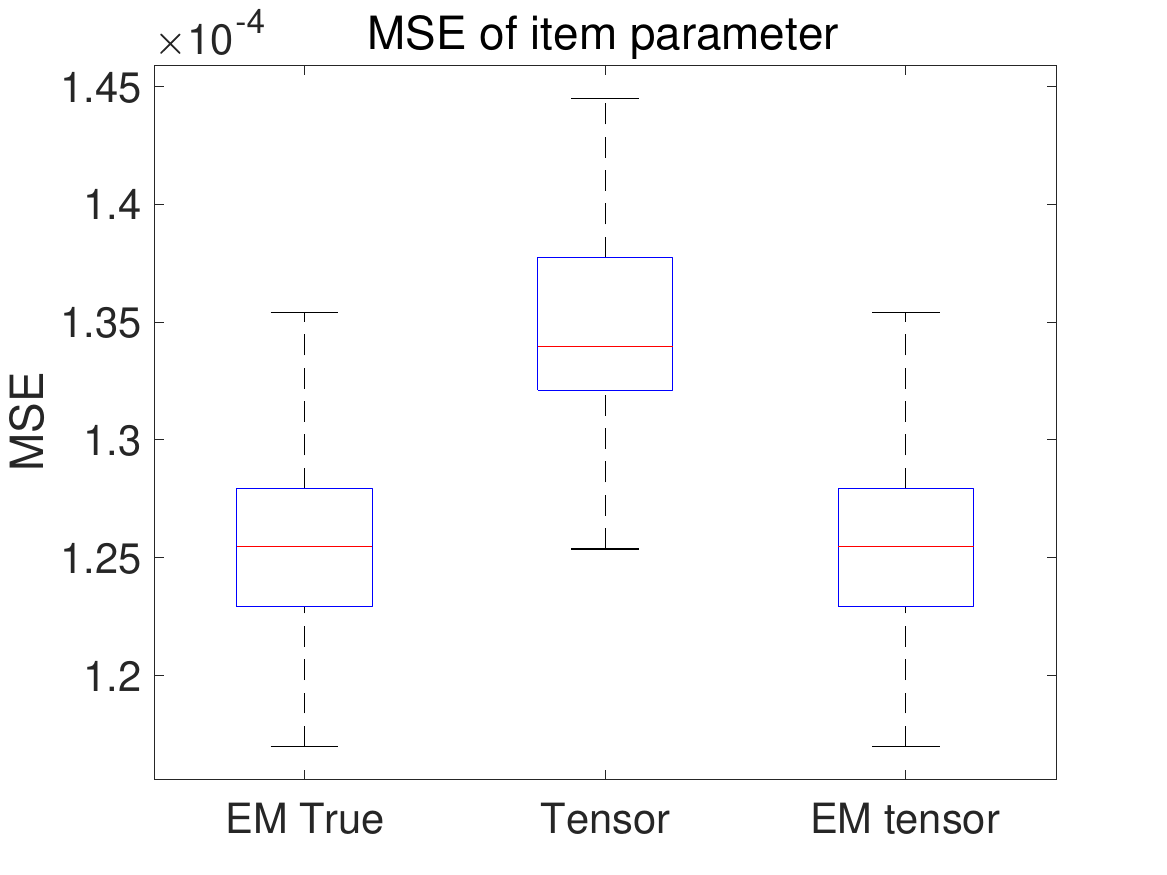}
		\end{minipage}%
	}%
	\subfigure[Running time of the algorithms]{
		\begin{minipage}[t]{0.33\linewidth}
			\centering
			\includegraphics[width=2in]{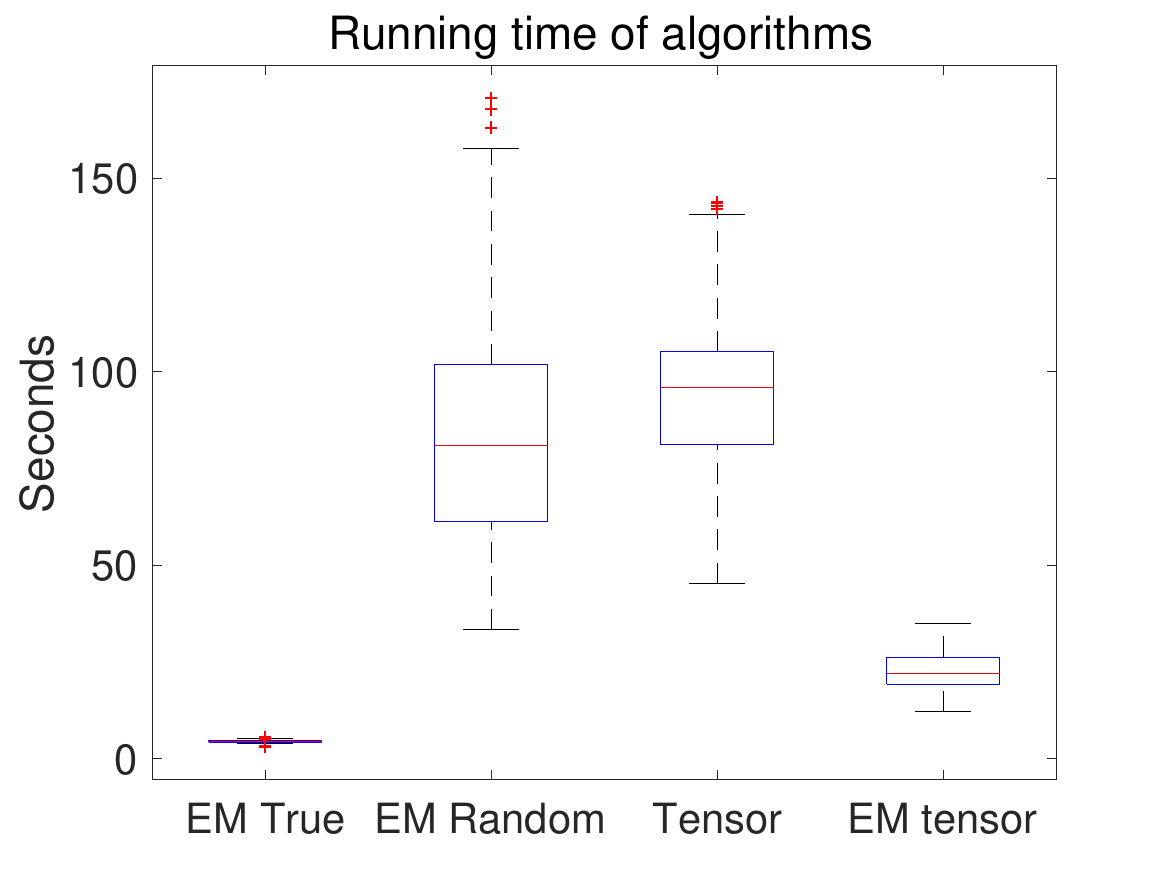}
		\end{minipage}%
	}%
	\centering
	\caption{$N = 10000, J= 200, L=10,$ item parameters $\in \{0.1,0.2,0.8,0.9\}$}
\end{figure}

\begin{figure}[H]
	\centering
	\subfigure[MSE of item parameters]{
		\begin{minipage}[t]{0.4\linewidth}
			\centering
			\includegraphics[width=2in]{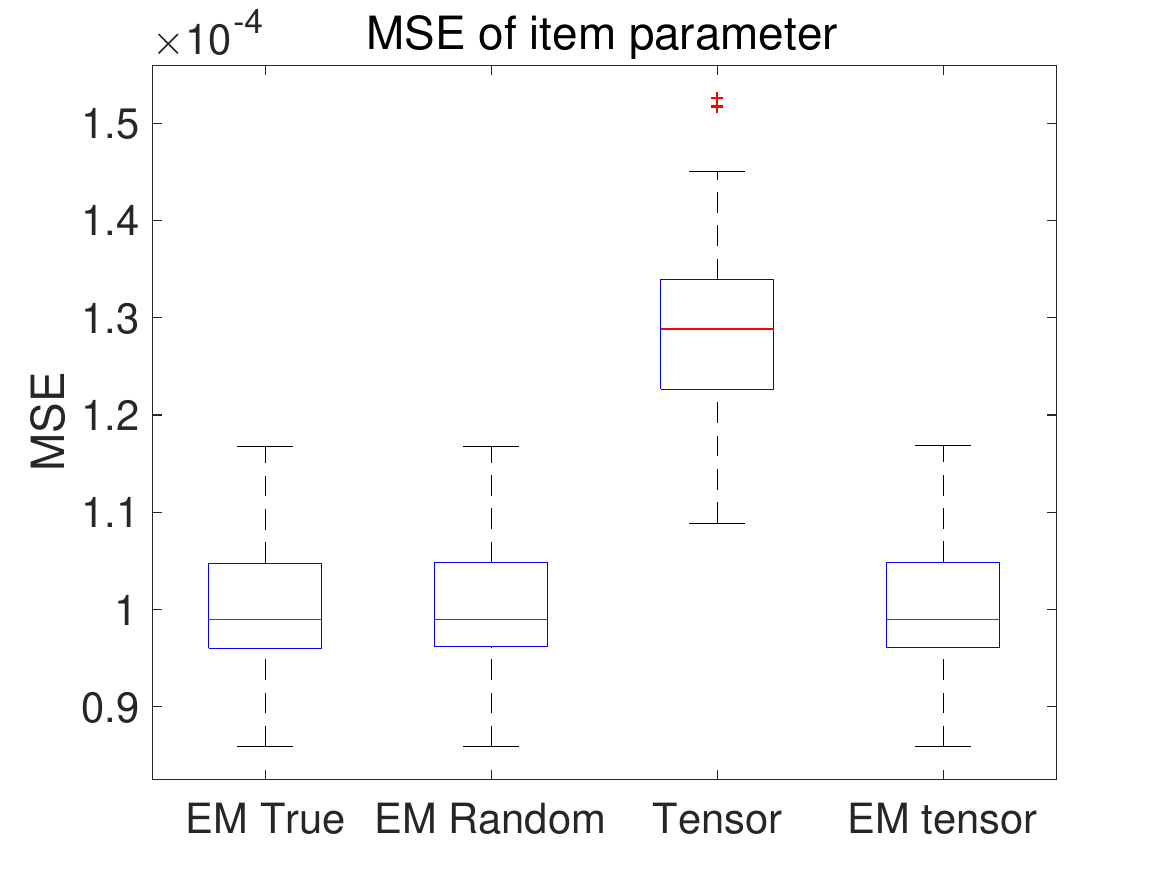}
		\end{minipage}%
	}%
	\subfigure[Running time of the algorithms]{
		\begin{minipage}[t]{0.4\linewidth}
			\centering
			\includegraphics[width=2in]{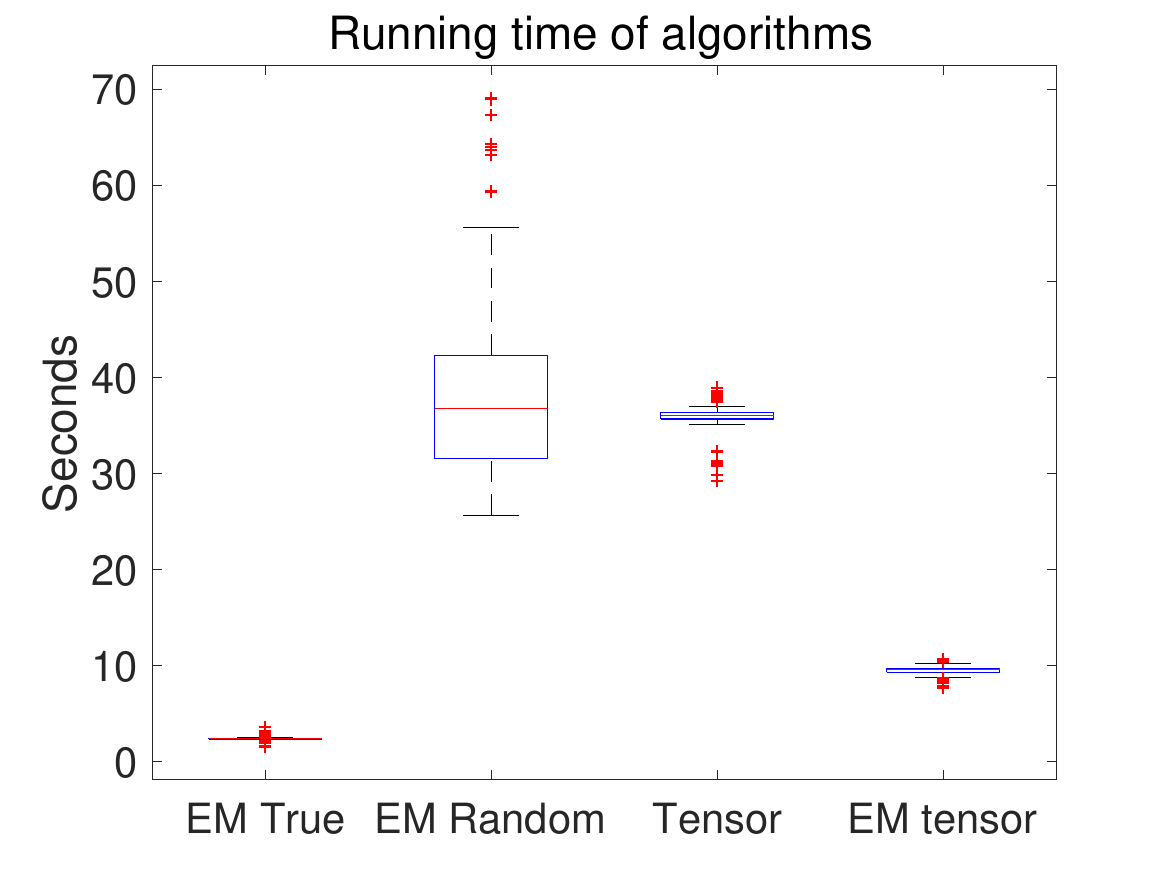}
		\end{minipage}%
	}%
	\centering
	\caption{$N = 10000, J= 100, L=5,$ item parameters $\in \{0.2,0.4,0.6,0.8\}$}
\end{figure}

\begin{figure}[H]
	\centering
	\subfigure[MSE of item parameters]{
		\begin{minipage}[t]{0.33\linewidth}
			\centering
			\includegraphics[width=2in]{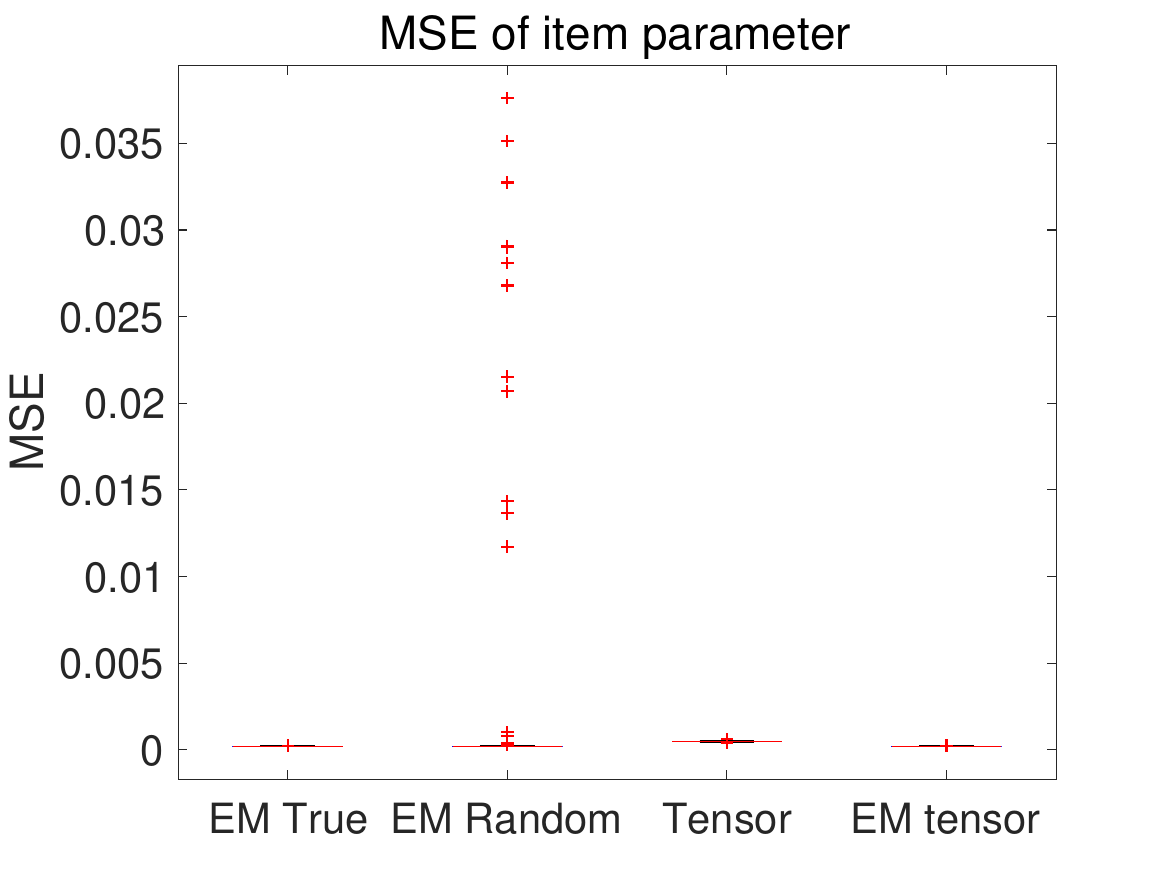}
		\end{minipage}%
	}%
	\subfigure[MSE without EM-random]{
		\begin{minipage}[t]{0.33\linewidth}
			\centering
			\includegraphics[width=2in]{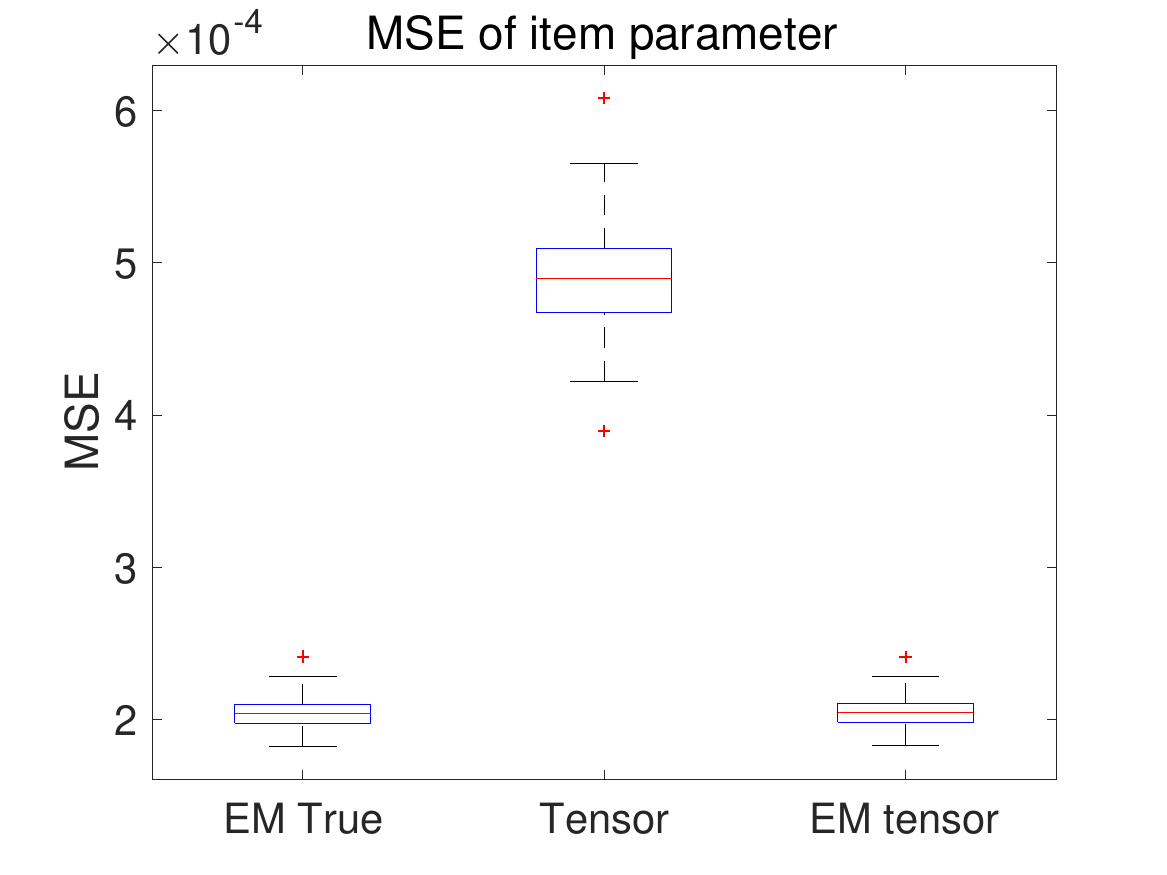}
		\end{minipage}%
	}%
	\subfigure[Running time of the algorithms]{
		\begin{minipage}[t]{0.33\linewidth}
			\centering
			\includegraphics[width=2in]{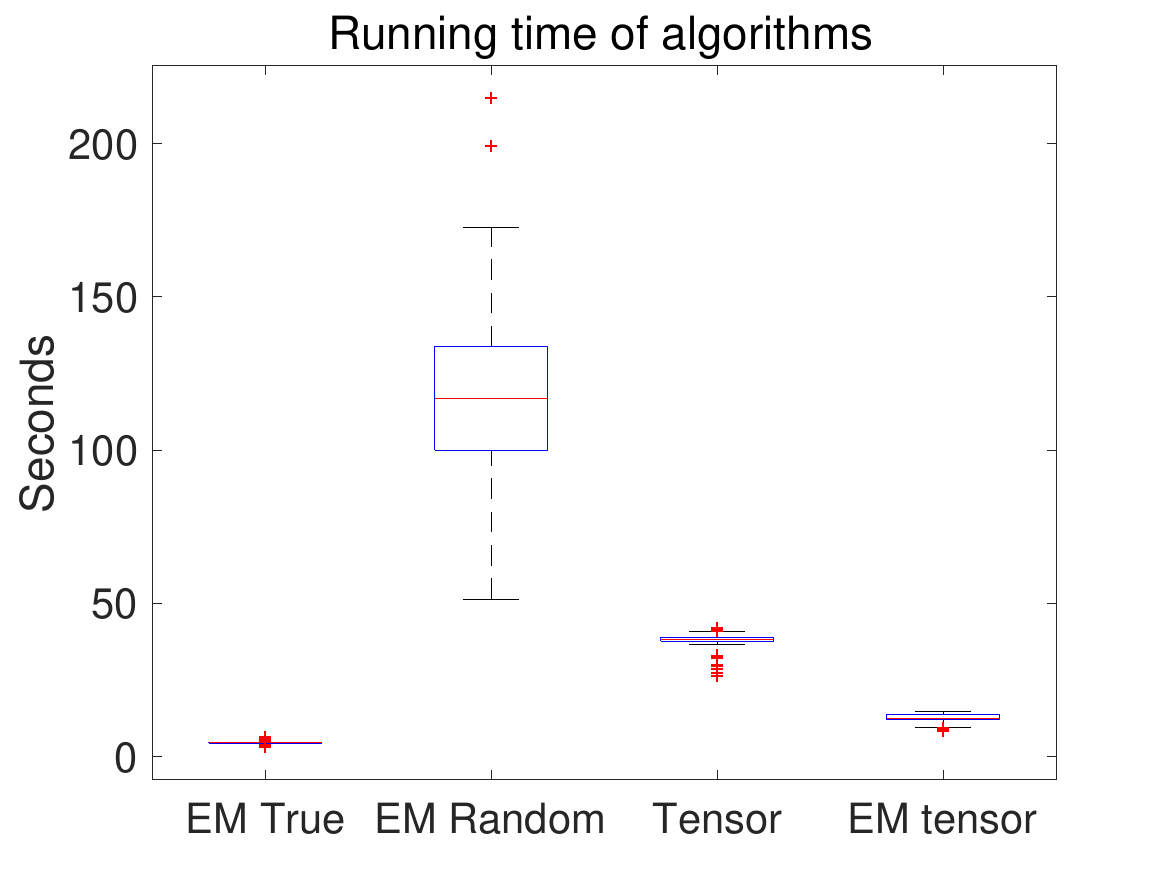}
		\end{minipage}%
	}%
	\centering
	\caption{$N = 10000, J= 100, L=10, $item parameters $\in \{0.2,0.4,0.6,0.8\}$}
\end{figure}

\begin{figure}[H]
	\centering
	\subfigure[MSE of item parameters]{
		\begin{minipage}[t]{0.4\linewidth}
			\centering
			\includegraphics[width=2in]{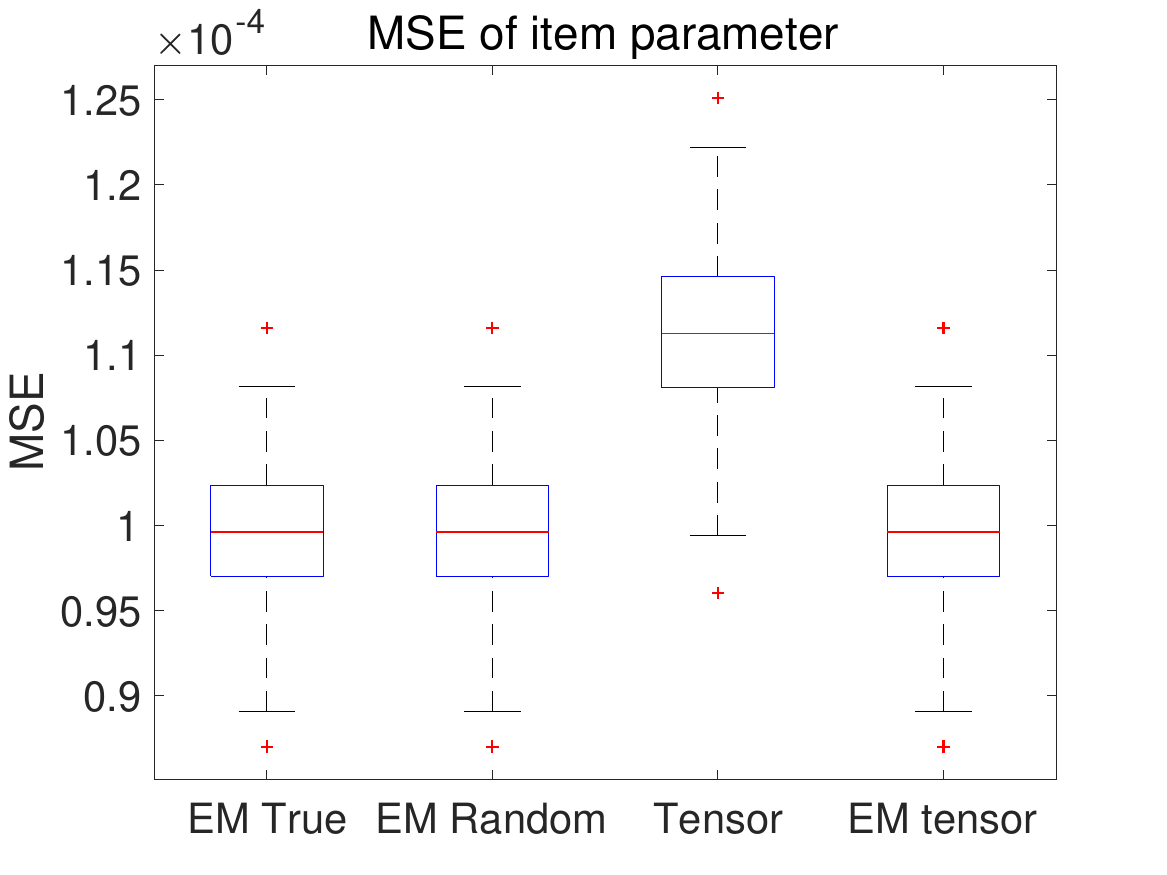}
		\end{minipage}%
	}%
	\subfigure[Running time of the algorithms]{
		\begin{minipage}[t]{0.4\linewidth}
			\centering
			\includegraphics[width=2in]{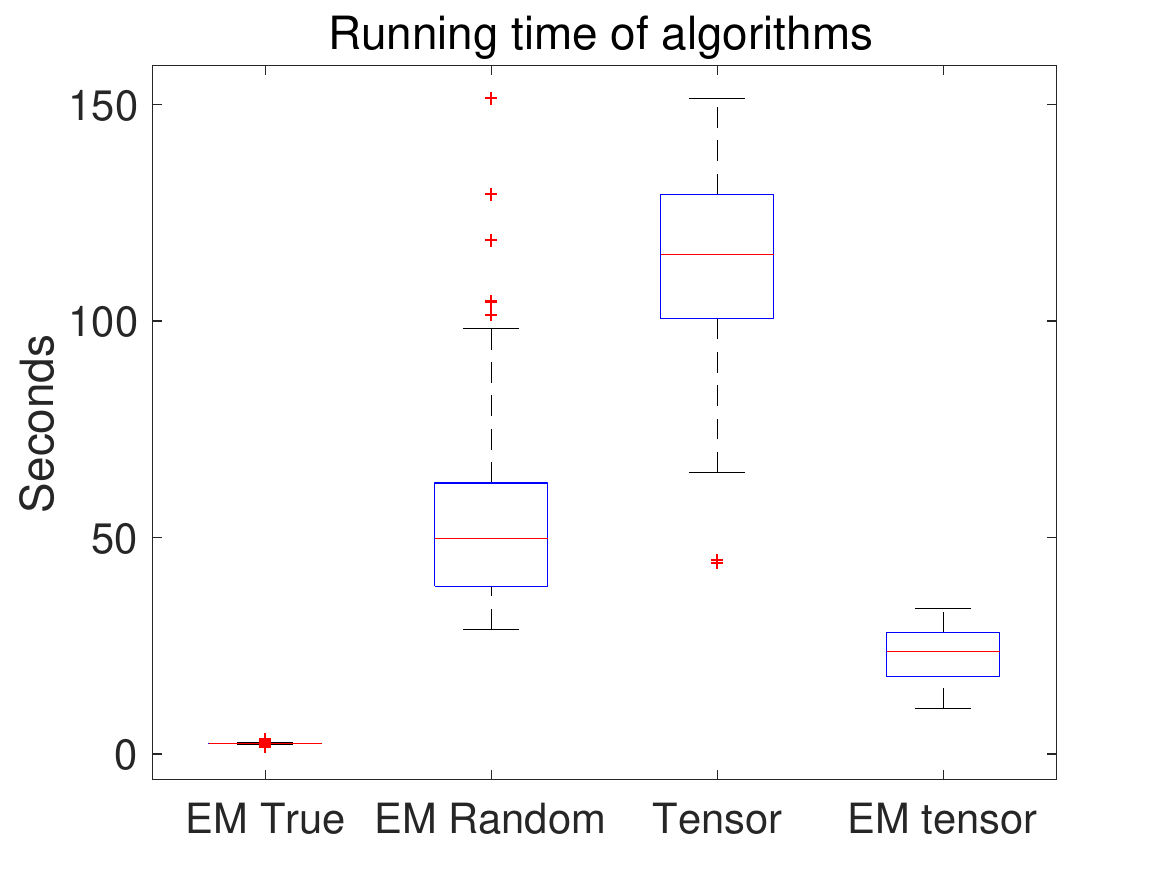}
		\end{minipage}%
	}%
	\centering
	\caption{$N = 10000, J= 200, L=5,$ item parameters $\in \{0.2,0.4,0.6,0.8\}$}
\end{figure}

\begin{figure}[H]
		\centering
		\subfigure[MSE of item parameters]{
			\begin{minipage}[t]{0.33\linewidth}
				\centering
				\includegraphics[width=2in]{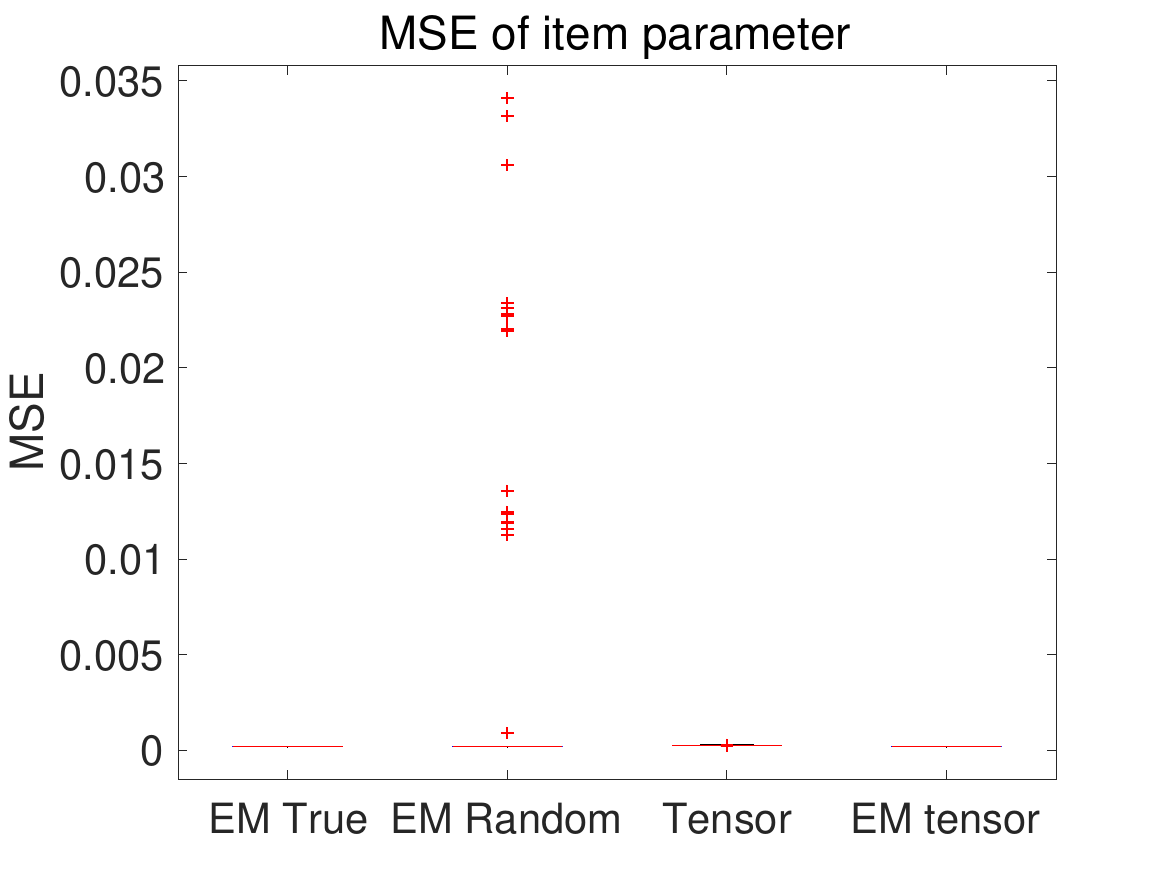}
			\end{minipage}%
		}%
		\subfigure[MSE without EM-random]{
		\begin{minipage}[t]{0.33\linewidth}
			\centering
			\includegraphics[width=2in]{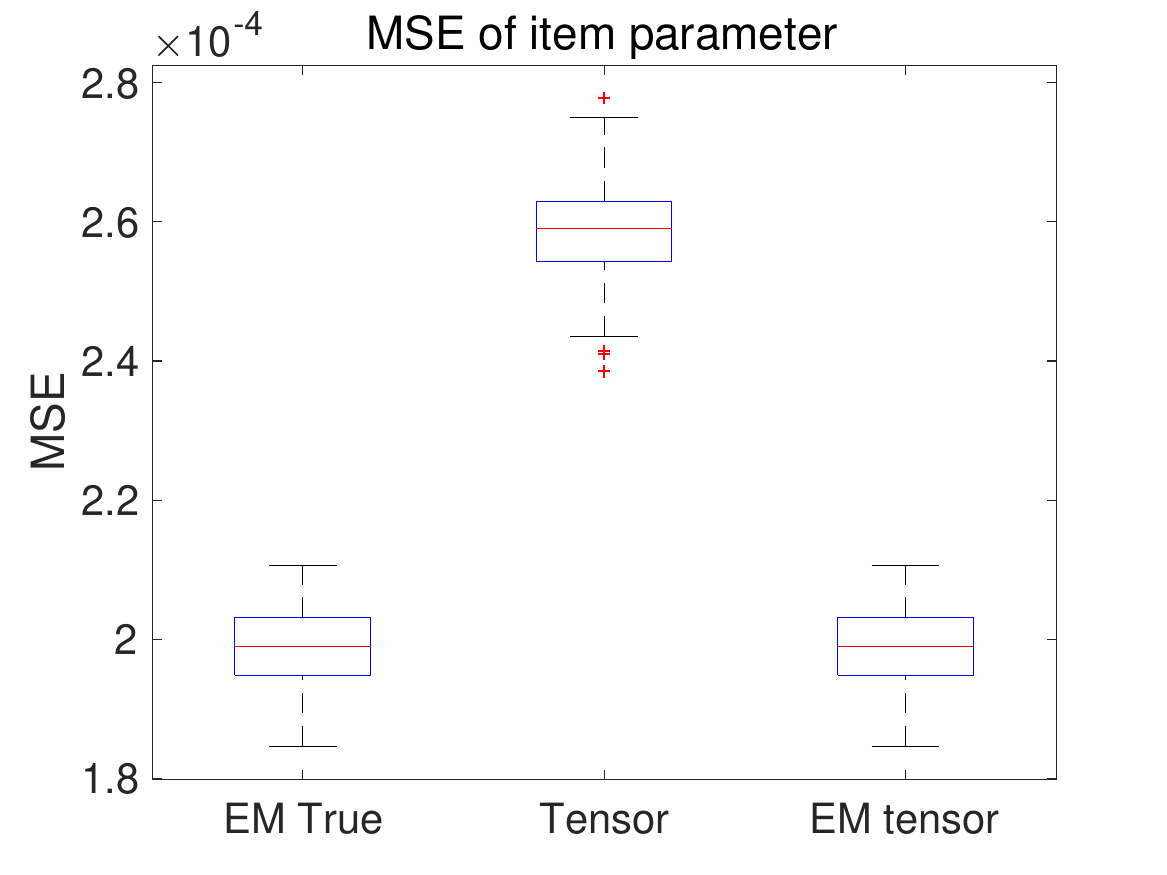}
		\end{minipage}%
	}%
		\subfigure[Running time of the algorithms]{
			\begin{minipage}[t]{0.33\linewidth}
				\centering
				\includegraphics[width=2in]{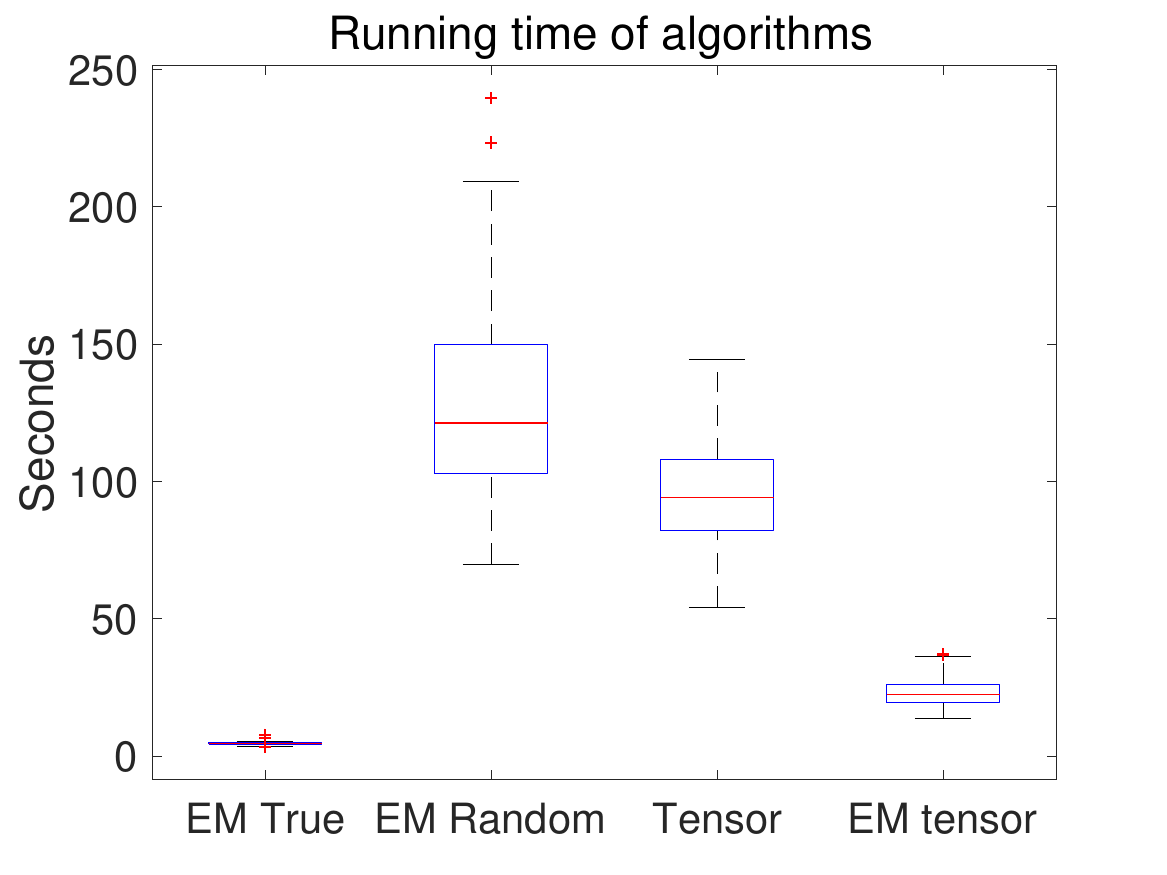}
			\end{minipage}%
		}%
		\centering
		\caption{$N = 10000, J= 200, L=10,$ item parameters $\in \{0.2,0.4,0.6,0.8\}$}
	\label{fig-fix-big2}
	\end{figure}
	
\begin{figure}[H]
	\centering
	\subfigure[MSE of item parameters]{
		\begin{minipage}[t]{0.4\linewidth}
			\centering
			\includegraphics[width=2in]{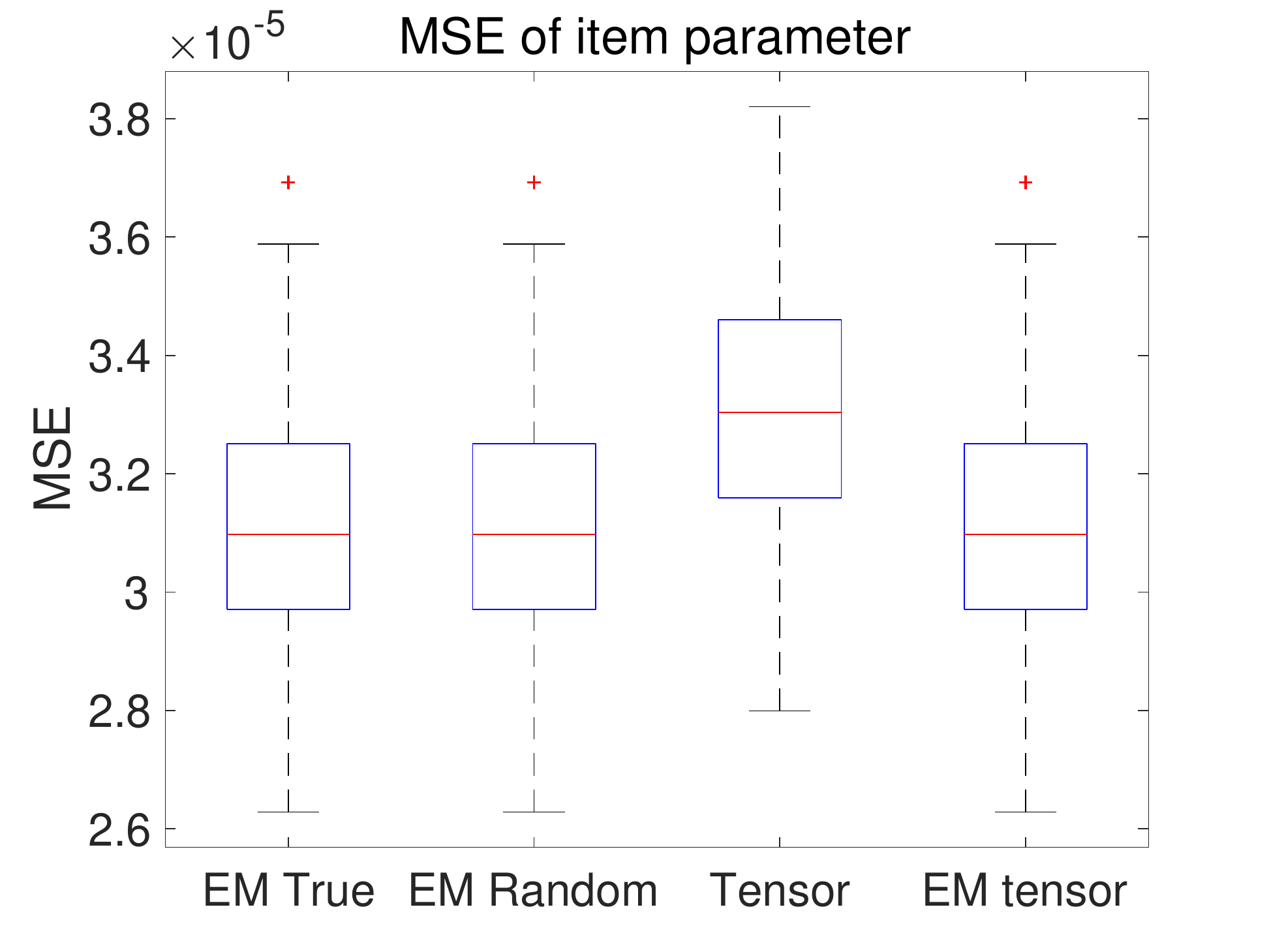}
		\end{minipage}%
	}%
	\subfigure[Running time of the algorithms]{
		\begin{minipage}[t]{0.4\linewidth}
			\centering
			\includegraphics[width=2in]{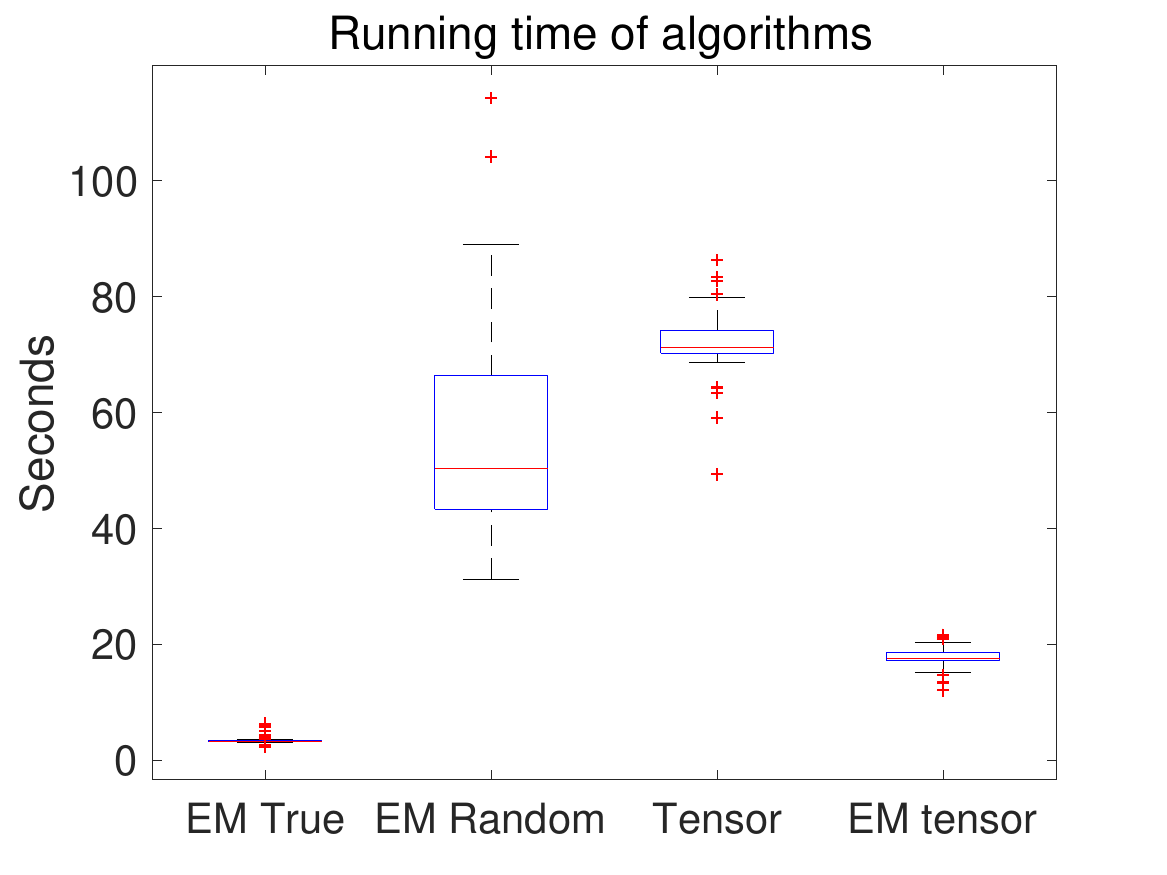}
		\end{minipage}%
	}%
	\centering
	\caption{$N = 20000, J= 100, L=5,$ item parameters $\in \{0.1,0.2,0.8,0.9\}$}
\end{figure}

\begin{figure}[H]
	\centering
	\subfigure[MSE of item parameters]{
		\begin{minipage}[t]{0.33\linewidth}
			\centering
			\includegraphics[width=2in]{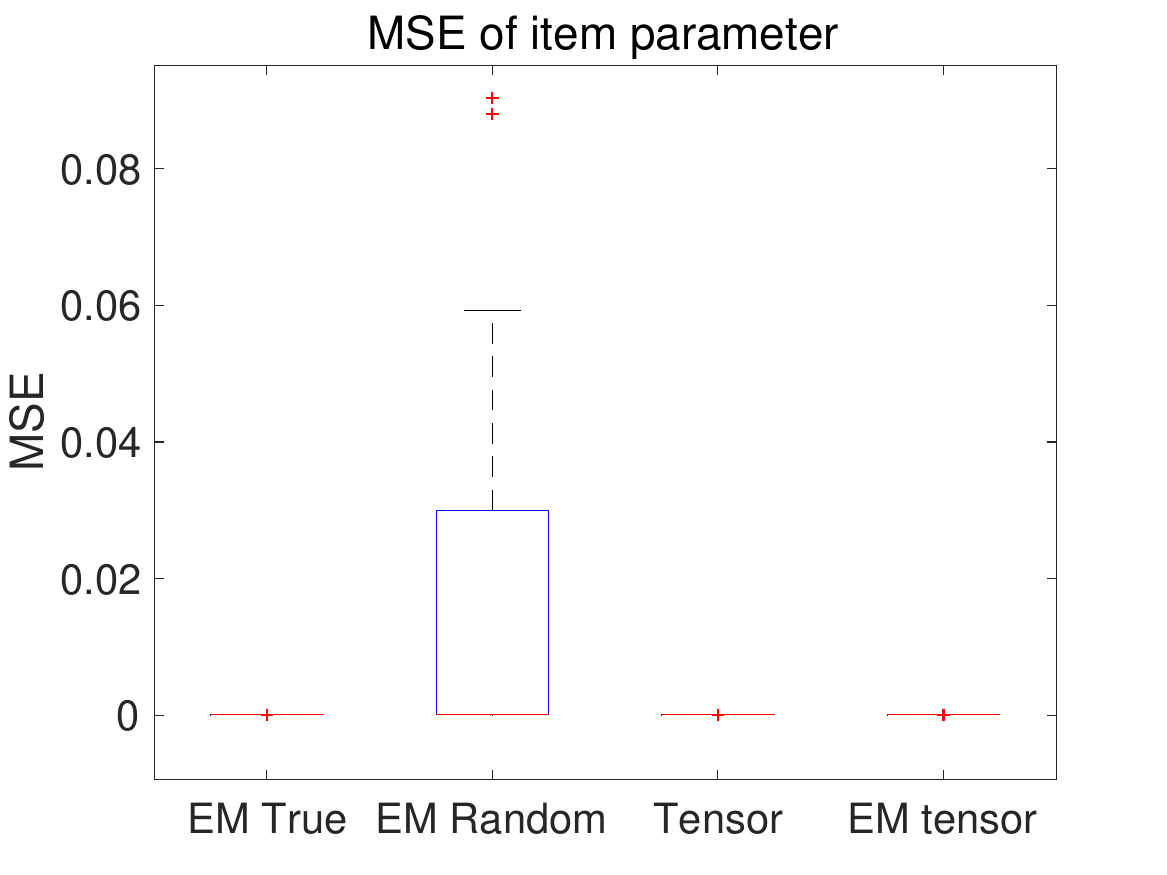}
		\end{minipage}%
	}%
	\subfigure[MSE without EM-random]{
		\begin{minipage}[t]{0.33\linewidth}
			\centering
			\includegraphics[width=2in]{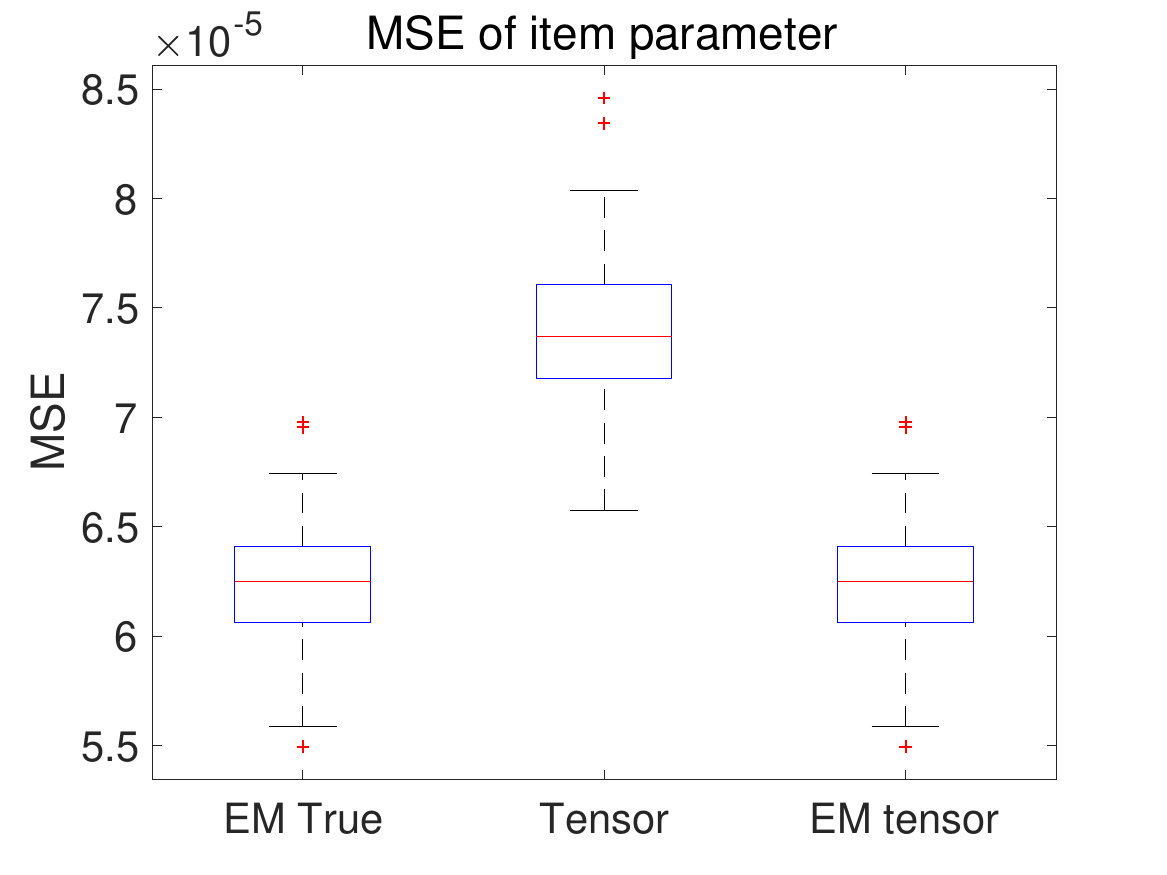}
		\end{minipage}%
	}%
		\subfigure[MSE without EM-random]{
	\begin{minipage}[t]{0.33\linewidth}
		\centering
		\includegraphics[width=2in]{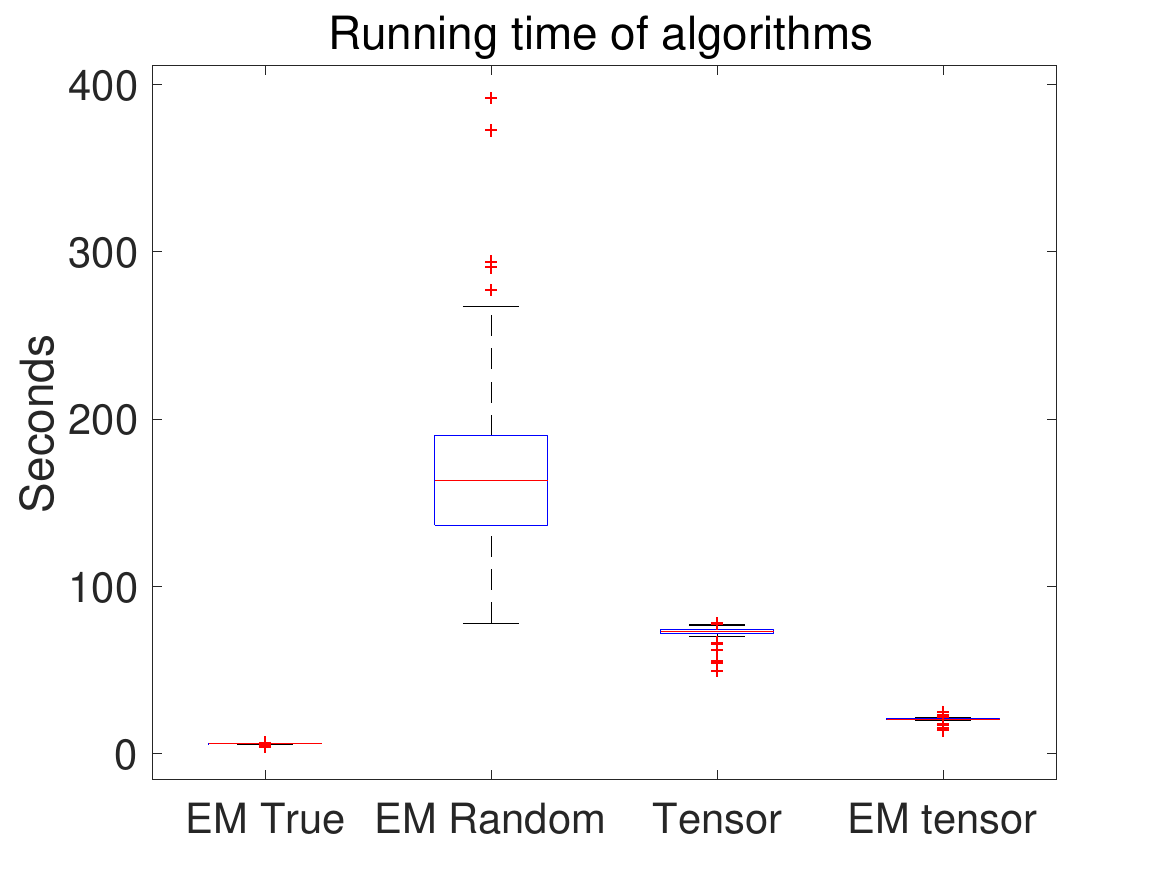}
	\end{minipage}%
}%
	\centering
	\caption{$N = 20000, J= 100, L=10, $ item parameters $\in \{0.1,0.2,0.8,0.9\}$}
\end{figure}

\begin{figure}[H]
	\centering
	\subfigure[MSE of item parameters]{
		\begin{minipage}[t]{0.4\linewidth}
			\centering
			\includegraphics[width=2in]{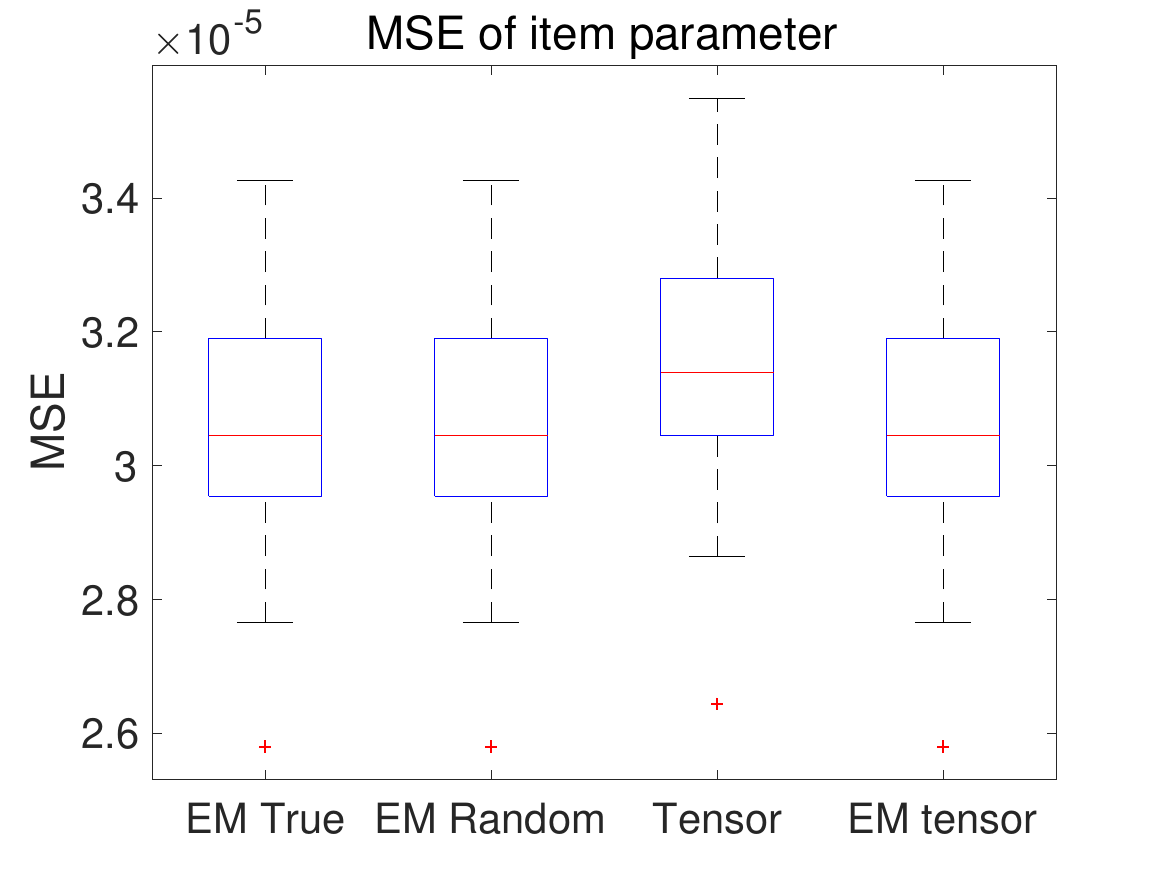}
		\end{minipage}%
	}%
	\subfigure[Running time of the algorithms]{
		\begin{minipage}[t]{0.4\linewidth}
			\centering
			\includegraphics[width=2in]{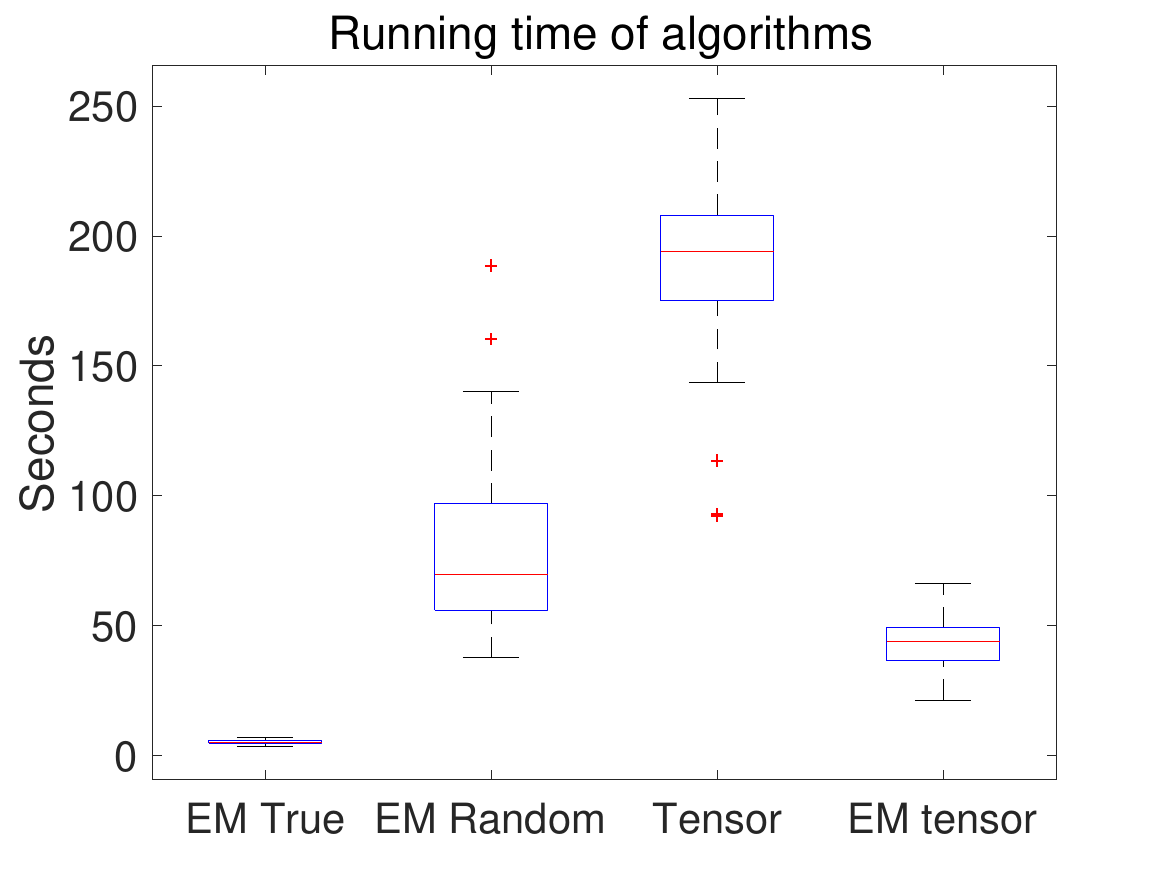}
		\end{minipage}%
	}%
	\centering
	\caption{$N = 20000, J= 200, L=5,$ item parameters $\in \{0.1,0.2,0.8,0.9\}$}
\end{figure}

\begin{figure}[H]
	\centering
	\subfigure[MSE of item parameters]{
		\begin{minipage}[t]{0.33\linewidth}
			\centering
			\includegraphics[width=2in]{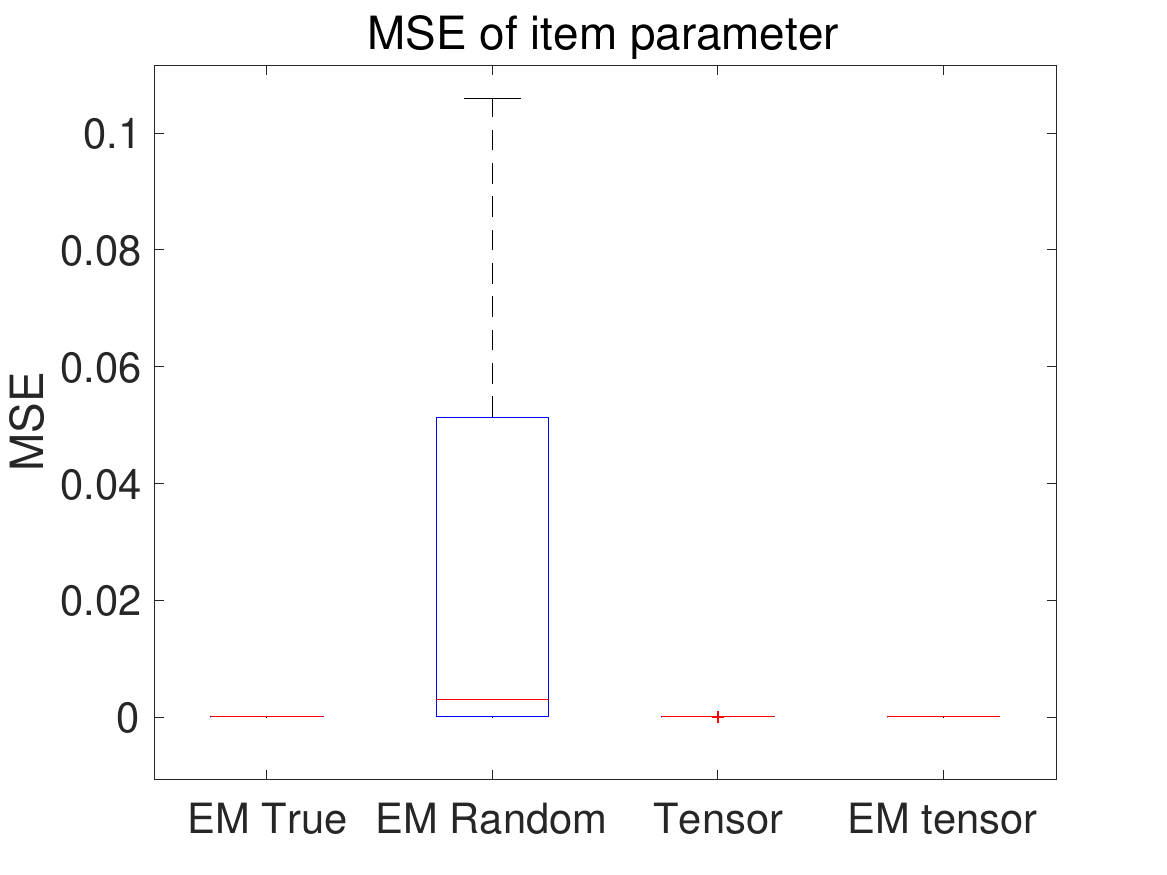}
		\end{minipage}%
	}%
	\subfigure[MSE without EM-random]{
		\begin{minipage}[t]{0.33\linewidth}
			\centering
			\includegraphics[width=2in]{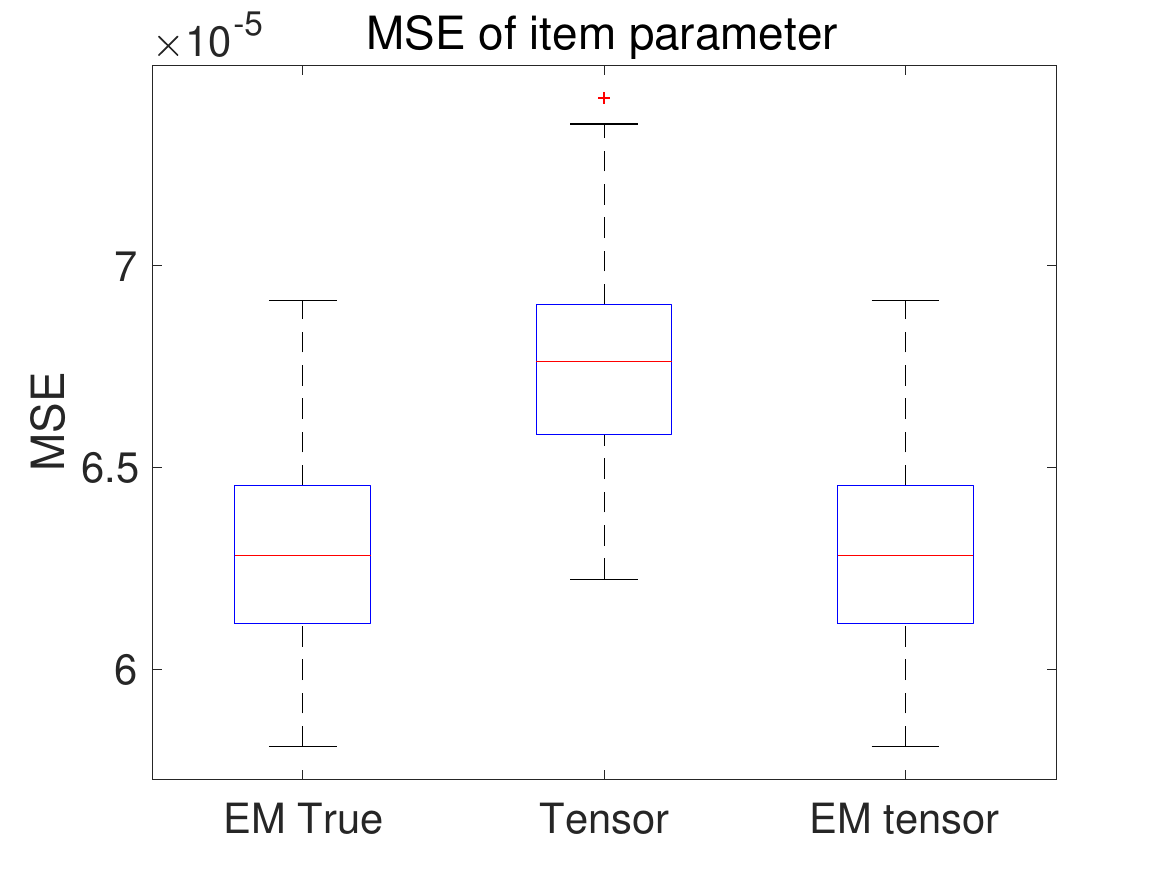}
		\end{minipage}%
	}%
	\subfigure[Running time of the algorithms]{
		\begin{minipage}[t]{0.33\linewidth}
			\centering
			\includegraphics[width=2in]{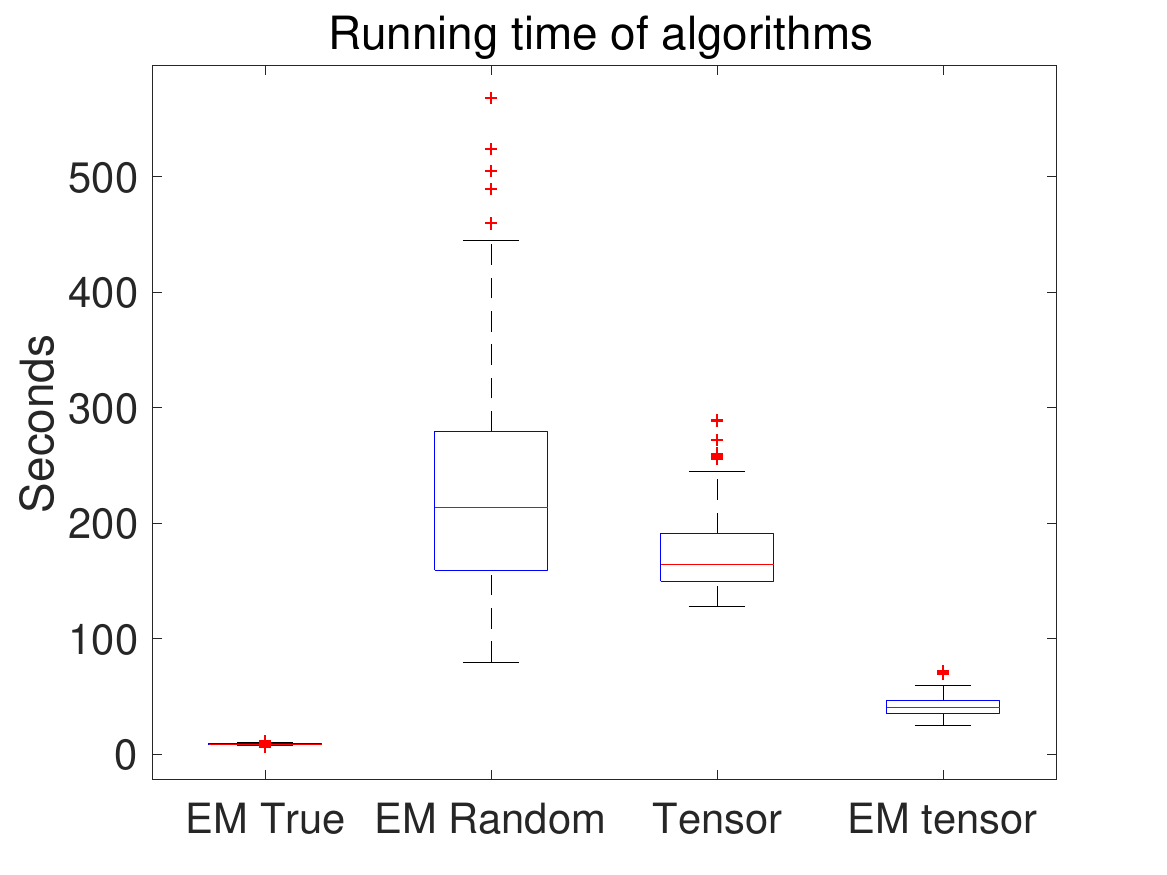}
		\end{minipage}%
	}%
	\centering
	\caption{$N = 20000, J= 200, L=10,$ item parameters $\in \{0.1,0.2,0.8,0.9\}$}
\end{figure}

\begin{figure}[H]
	\centering
	\subfigure[MSE of item parameters]{
		\begin{minipage}[t]{0.4\linewidth}
			\centering
			\includegraphics[width=2in]{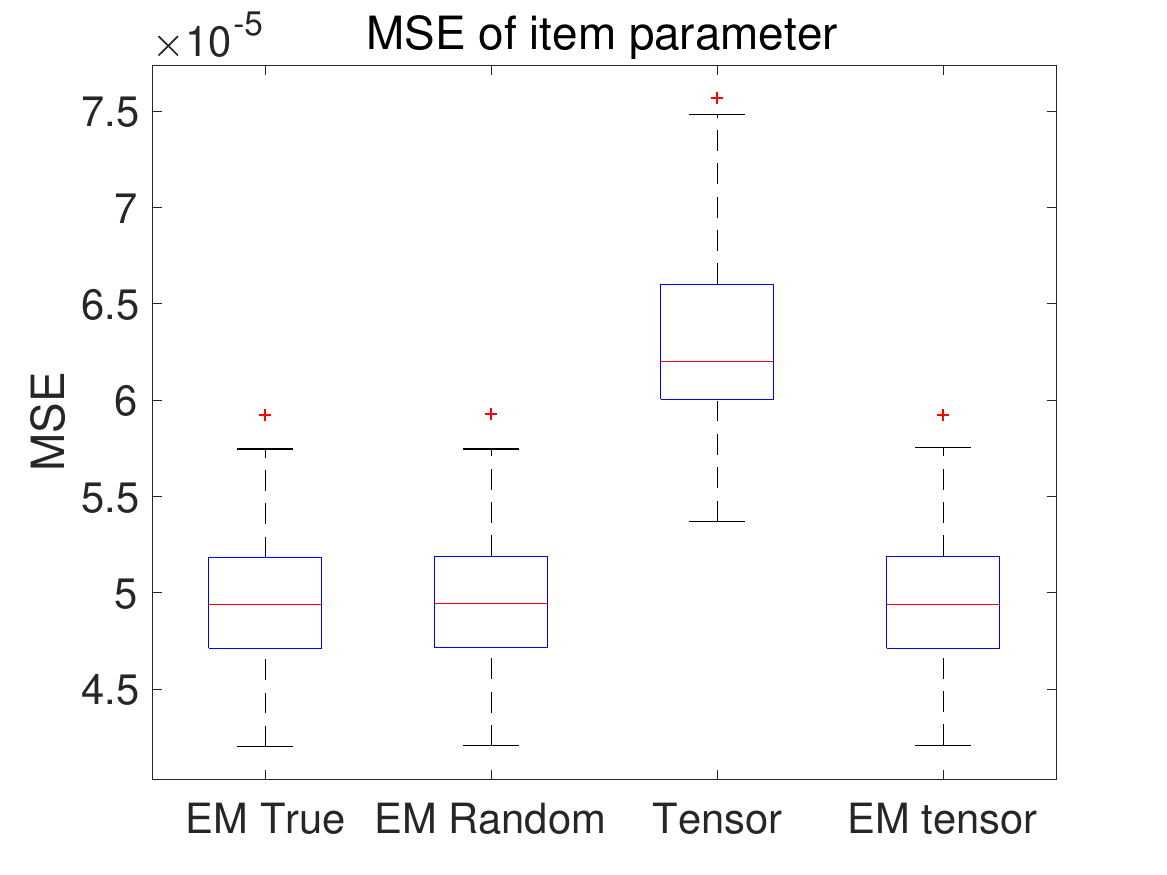}
		\end{minipage}%
	}%
	\subfigure[Running time of the algorithms]{
		\begin{minipage}[t]{0.4\linewidth}
			\centering
			\includegraphics[width=2in]{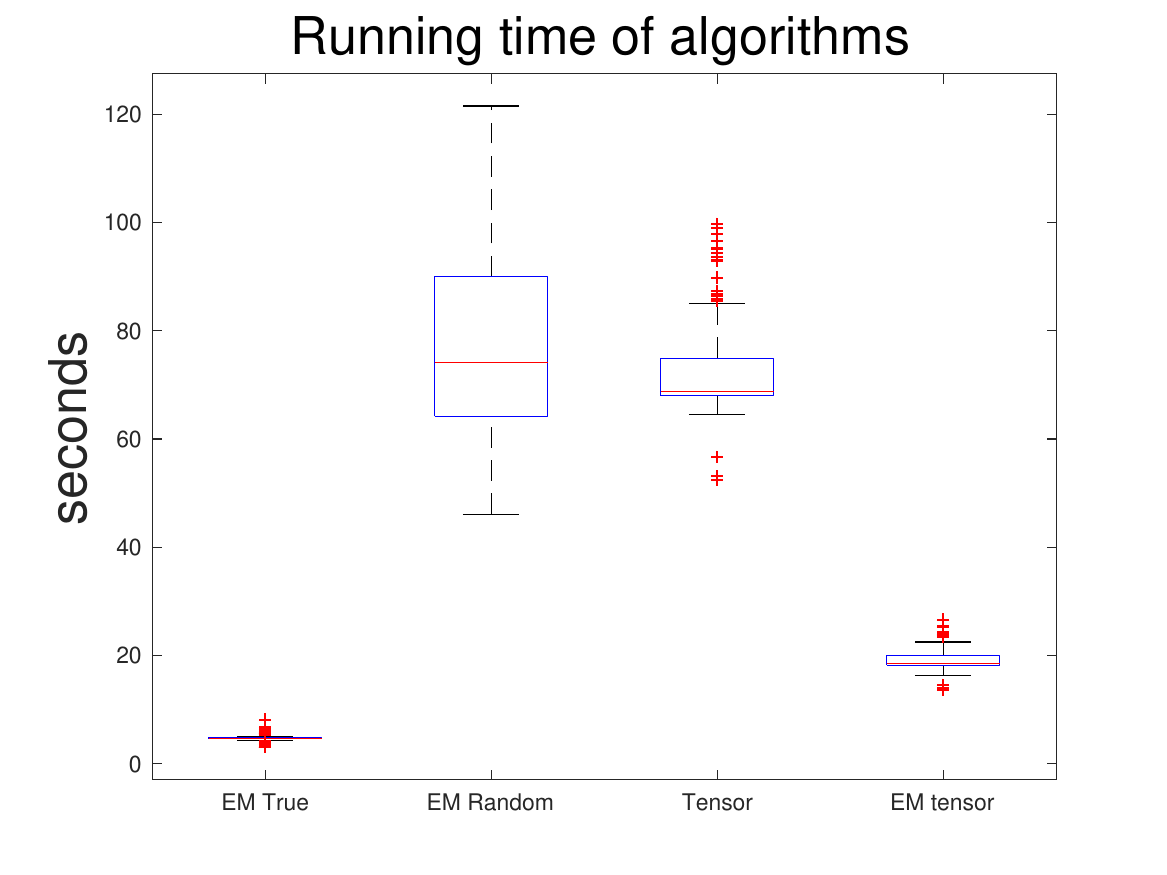}
		\end{minipage}%
	}%
	\centering
	\caption{$N = 20000, J= 100, L=5,$ item parameters $\in \{0.2,0.4,0.6,0.8\}$}
\end{figure}

\begin{figure}[H]
	\centering
	\subfigure[MSE of item parameters]{
		\begin{minipage}[t]{0.33\linewidth}
			\centering
			\includegraphics[width=2in]{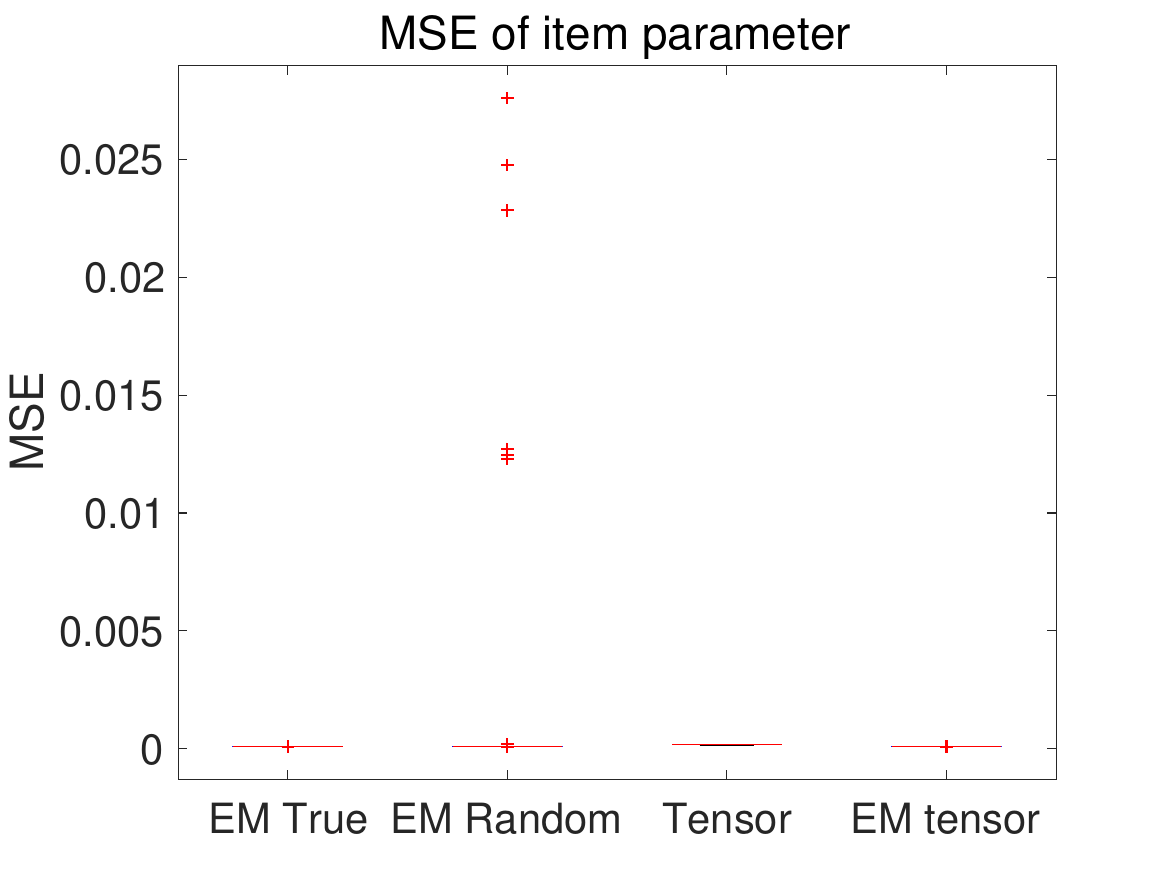}
		\end{minipage}%
	}%
	\subfigure[MSE without EM-random]{
		\begin{minipage}[t]{0.33\linewidth}
			\centering
			\includegraphics[width=2in]{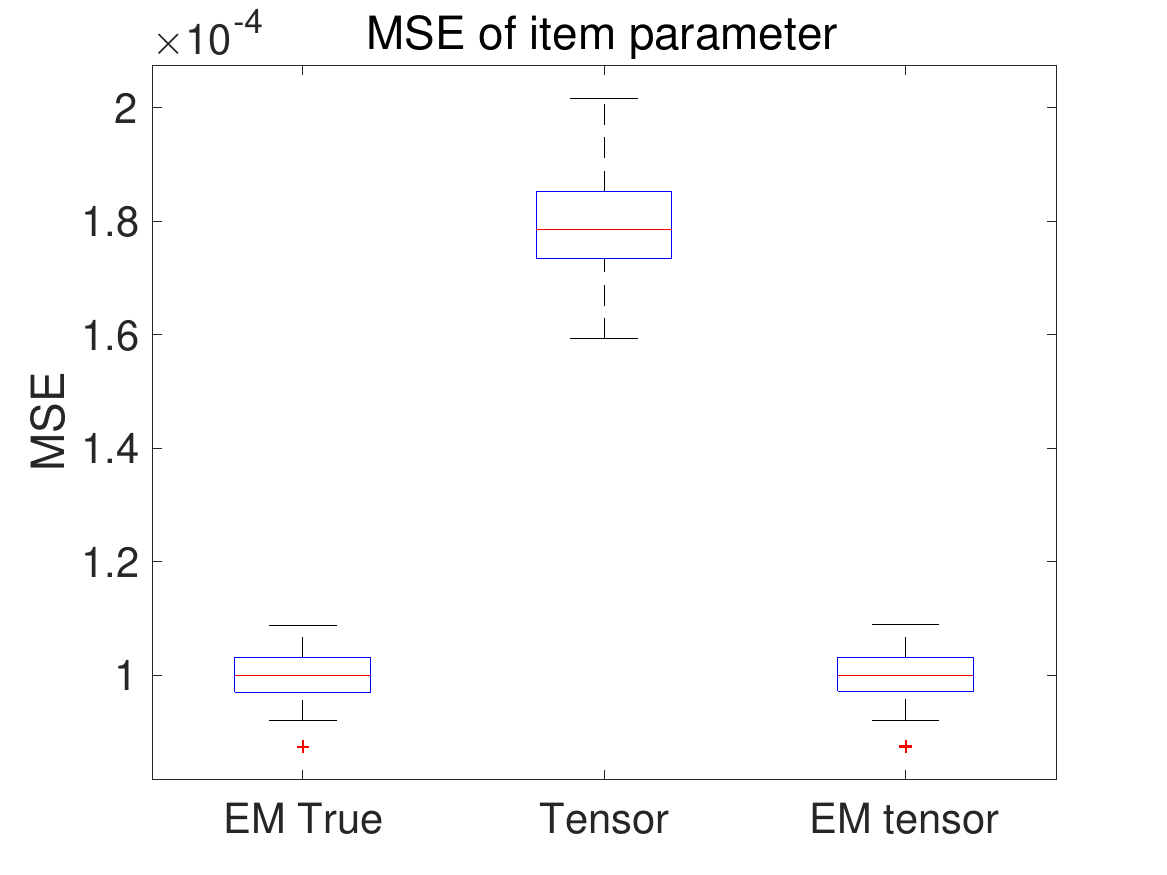}
		\end{minipage}%
	}%
	\subfigure[Running time of the algorithms]{
		\begin{minipage}[t]{0.33\linewidth}
			\centering
			\includegraphics[width=2in]{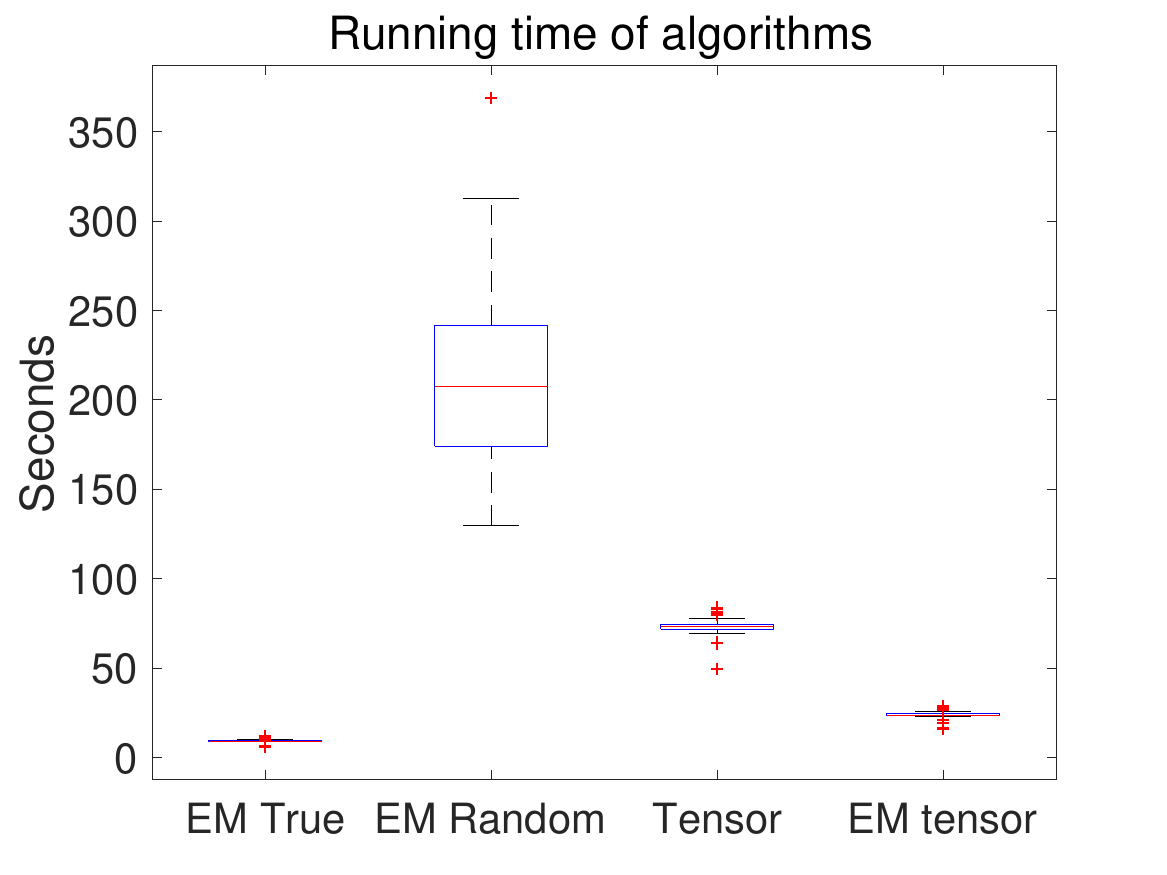}
		\end{minipage}%
	}%
	\centering
	\caption{$N = 20000, J= 100, L=10,$ item parameters $\in \{0.2,0.4,0.6,0.8\}$}
\end{figure}

\begin{figure}[H]
	\centering
	\subfigure[MSE of item parameters]{
		\begin{minipage}[t]{0.4\linewidth}
			\centering
			\includegraphics[width=2in]{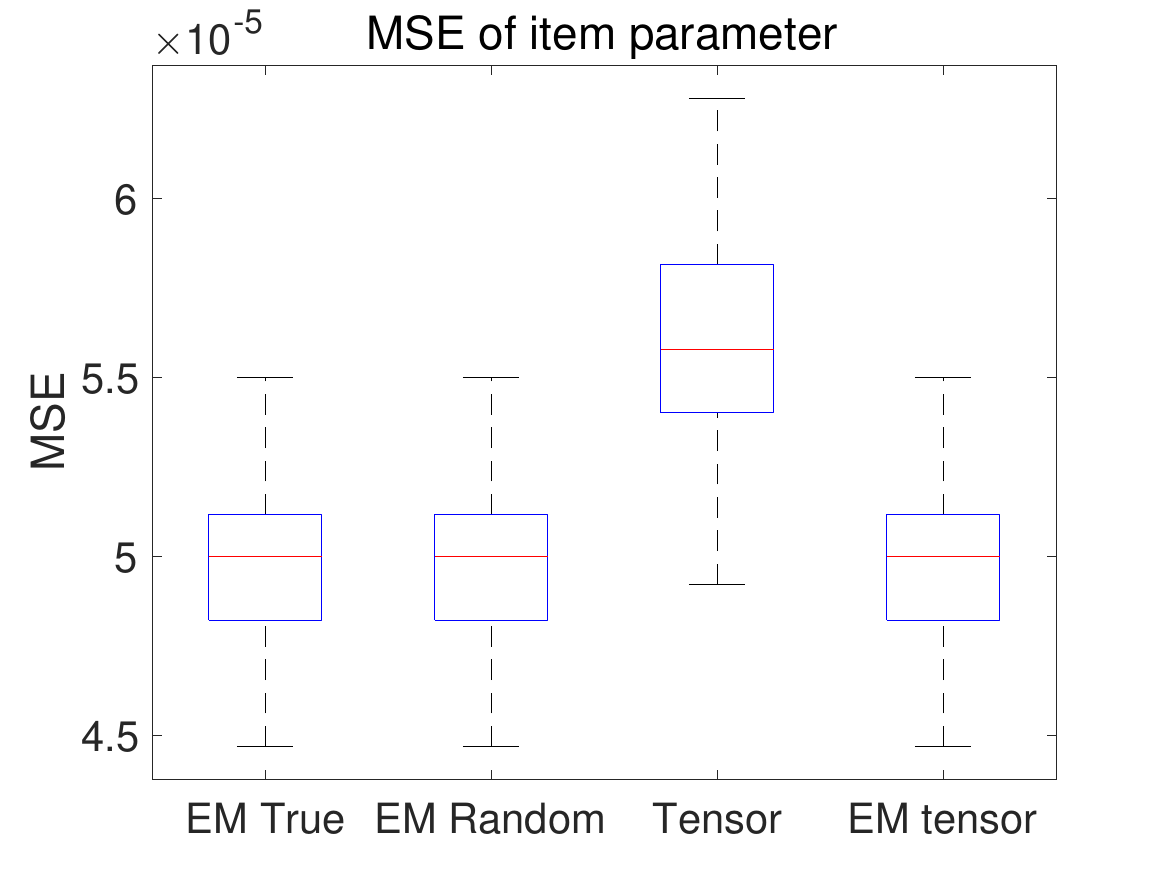}
		\end{minipage}%
	}%
	\subfigure[Running time of the algorithms]{
		\begin{minipage}[t]{0.4\linewidth}
			\centering
			\includegraphics[width=2in]{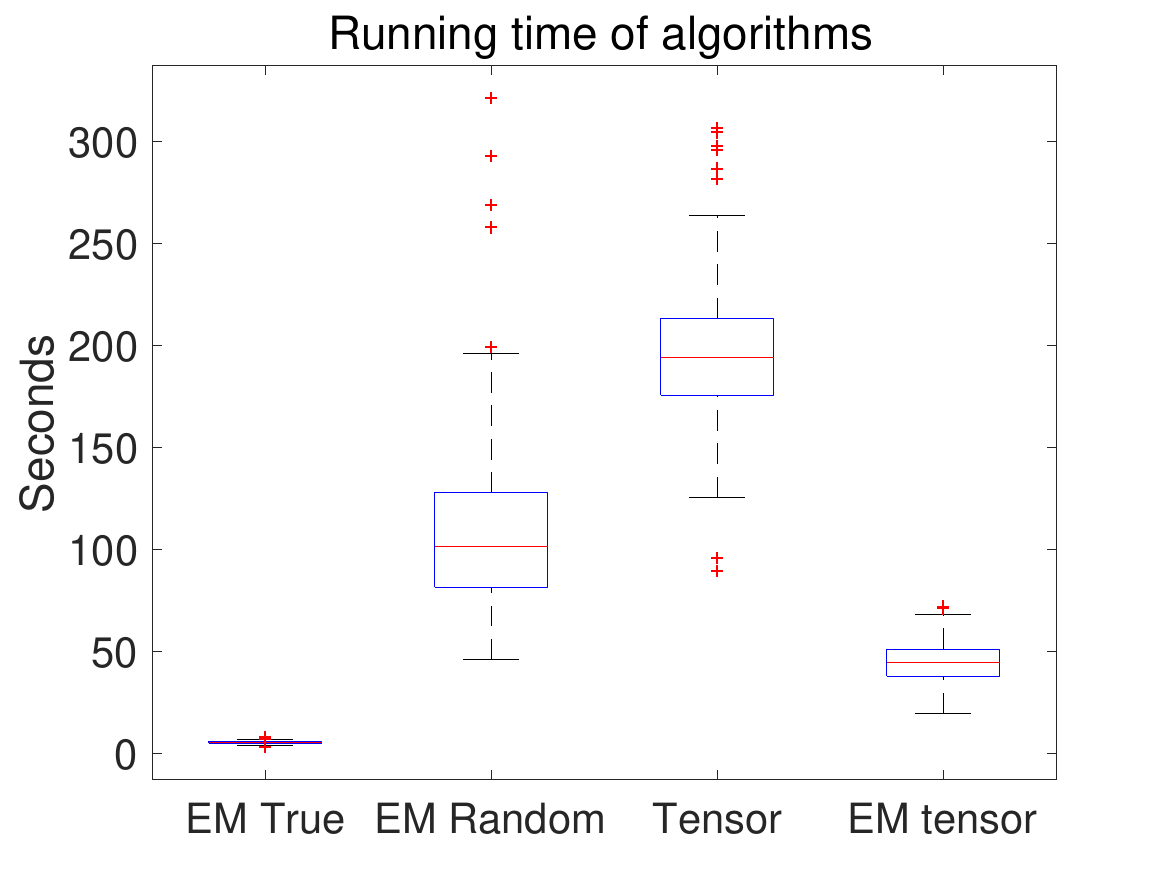}
		\end{minipage}%
	}%
	\centering
	\caption{$N = 20000, J= 200, L=5,$ item parameters $\in \{0.2,0.4,0.6,0.8\}$}
\end{figure}

\begin{figure}[H]
	\centering
	\subfigure[MSE of item parameters]{
		\begin{minipage}[t]{0.33\textwidth}
			\centering
			\includegraphics[width=2in]{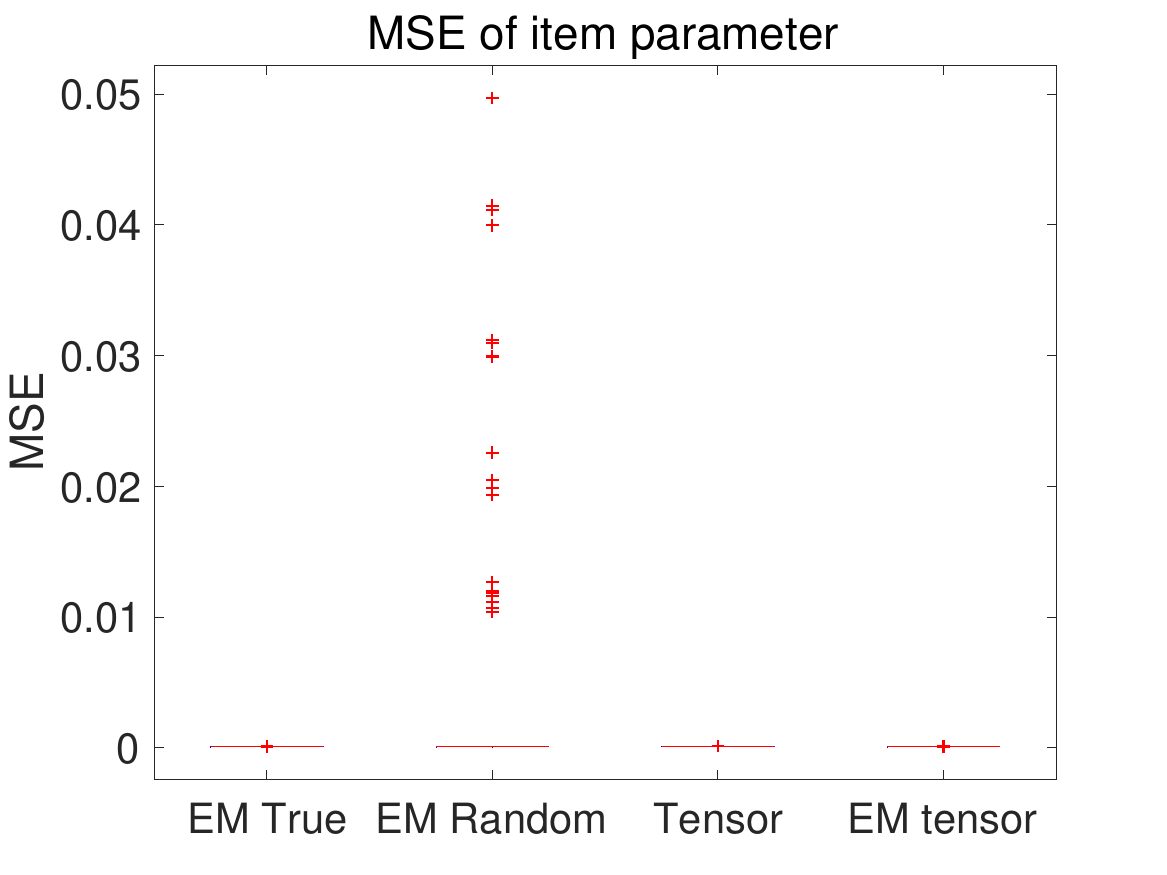}
		\end{minipage}%
	}%
		\subfigure[MSE without EM-random]{
	\begin{minipage}[t]{0.33\textwidth}
		\centering
		\includegraphics[width=2in]{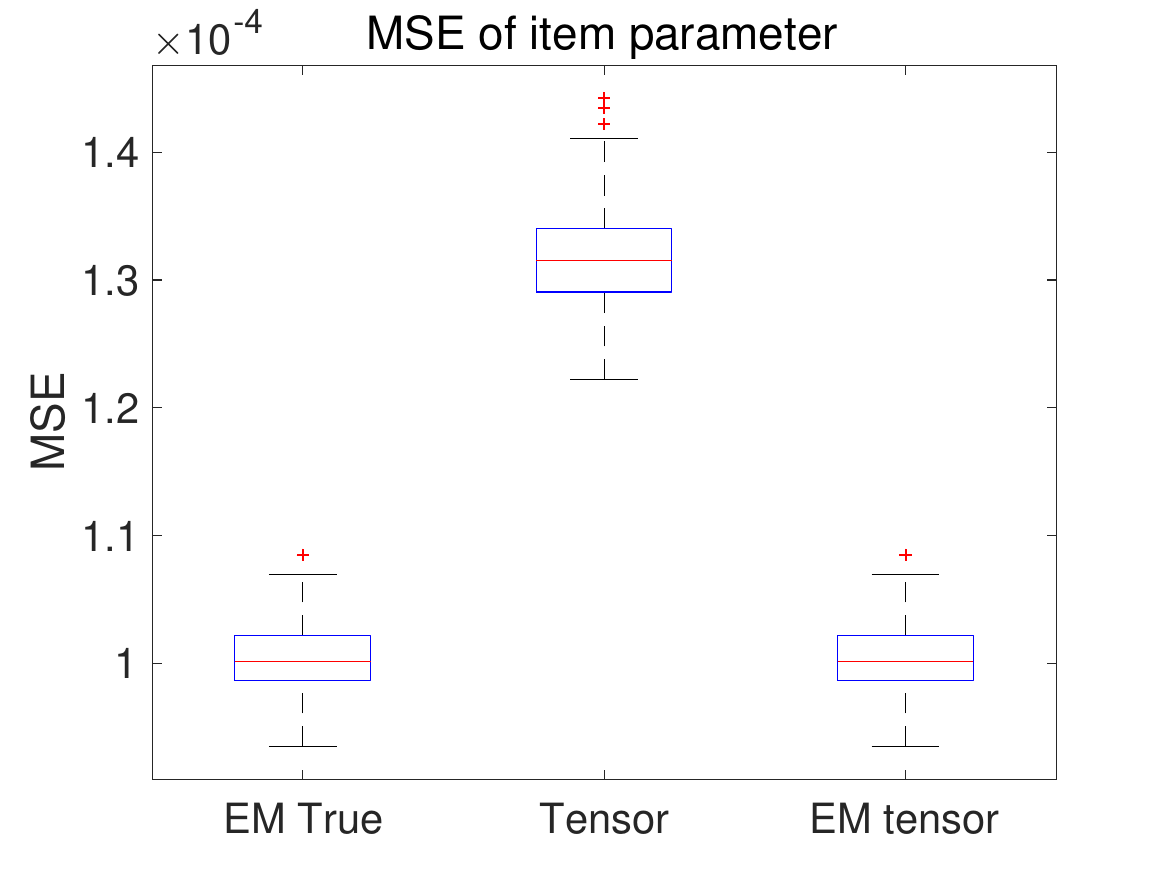}
	\end{minipage}%
}%
	\subfigure[Running time of the algorithms]{
		\begin{minipage}[t]{0.33\textwidth}
			\centering
			\includegraphics[width=2in]{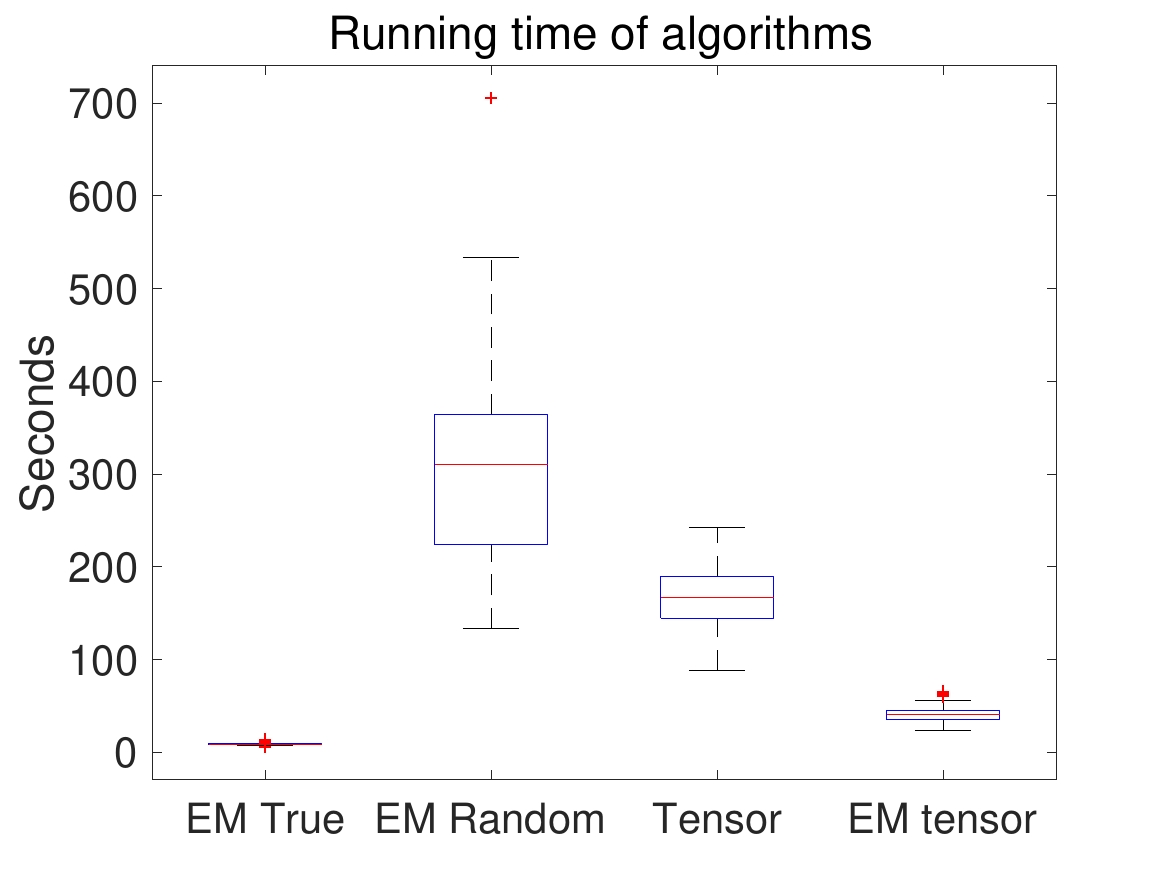}
		\end{minipage}%
	}%
	\centering
	\caption{$N = 20000, J= 200, L=10,$ item parameters $\in \{0.2,0.4,0.6,0.8\}$}
\end{figure}

\subsection*{EM-random and Tensor-EM with same initializations}

We then present the simulation results when EM-random and Tensor-EM are implemented with same initializations together with the ``smarter" version of EM-random.

\begin{figure}[H]
	\centering
	\subfigure[MSE of item parameters]{
		\begin{minipage}[t]{0.4\linewidth}
			\centering
			\includegraphics[width=2in]{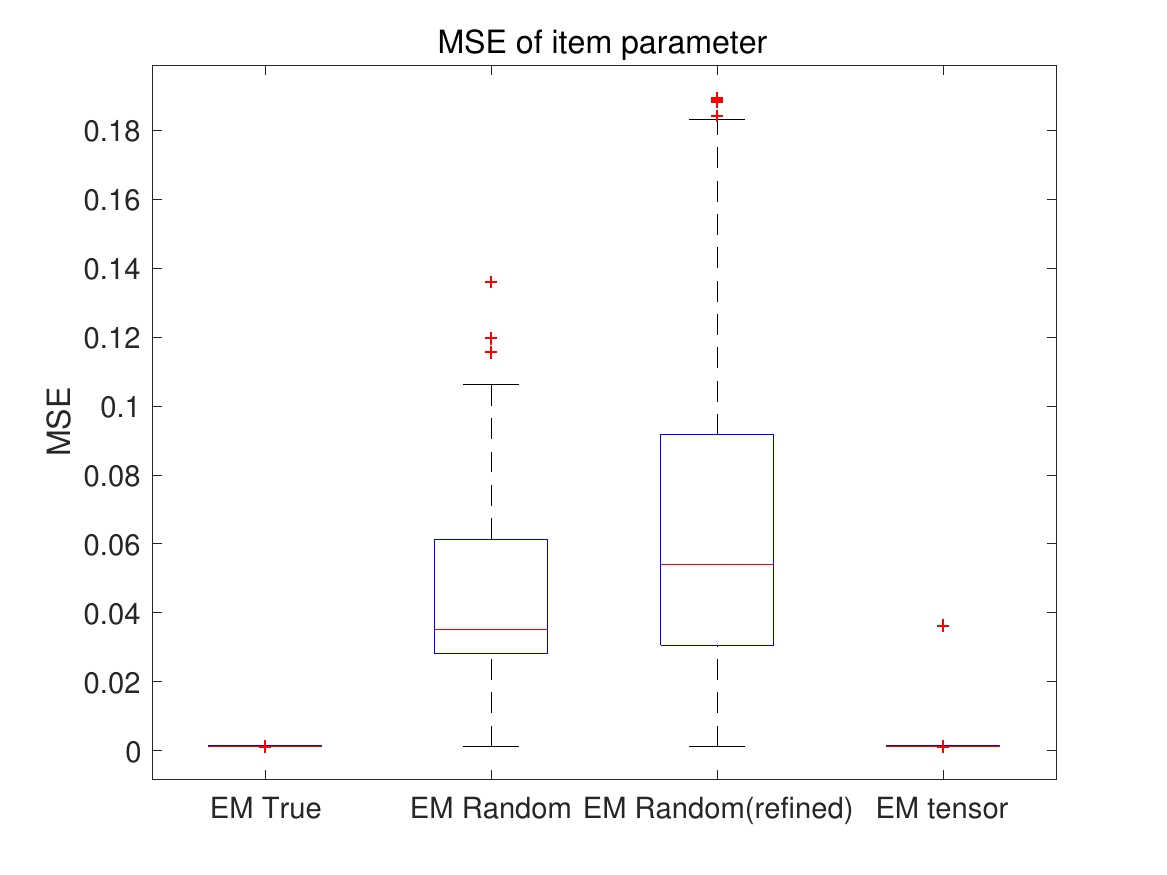}
		\end{minipage}%
	}%
	\subfigure[Running time of the algorithms]{
		\begin{minipage}[t]{0.4\linewidth}
			\centering
			\includegraphics[width=2in]{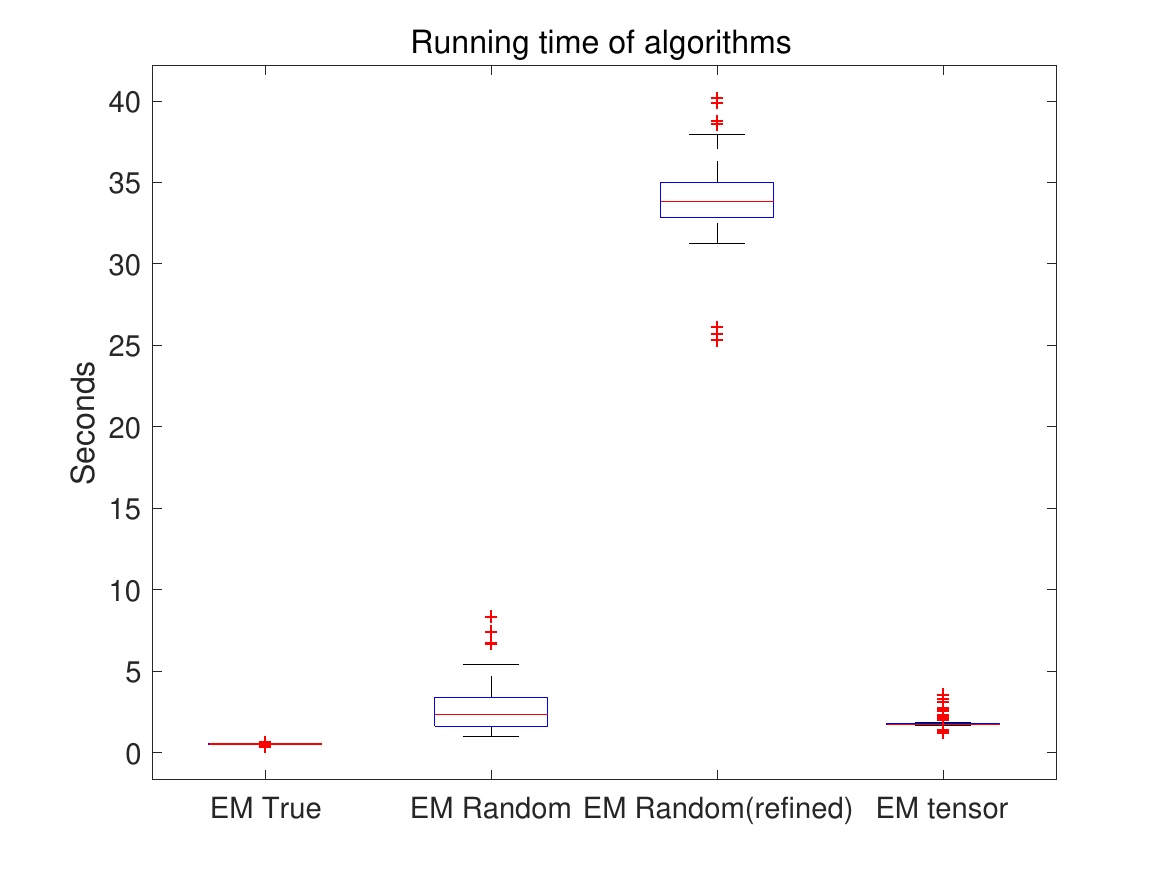}
		\end{minipage}%
	}%
	\centering
	\caption{$N = 1000, J= 100, L=10,$ item parameters $\in \{0.1,0.2,0.8,0.9\}$}
\end{figure}

\begin{figure}[H]
	\centering
	\subfigure[MSE of item parameters]{
		\begin{minipage}[t]{0.4\linewidth}
			\centering
			\includegraphics[width=2in]{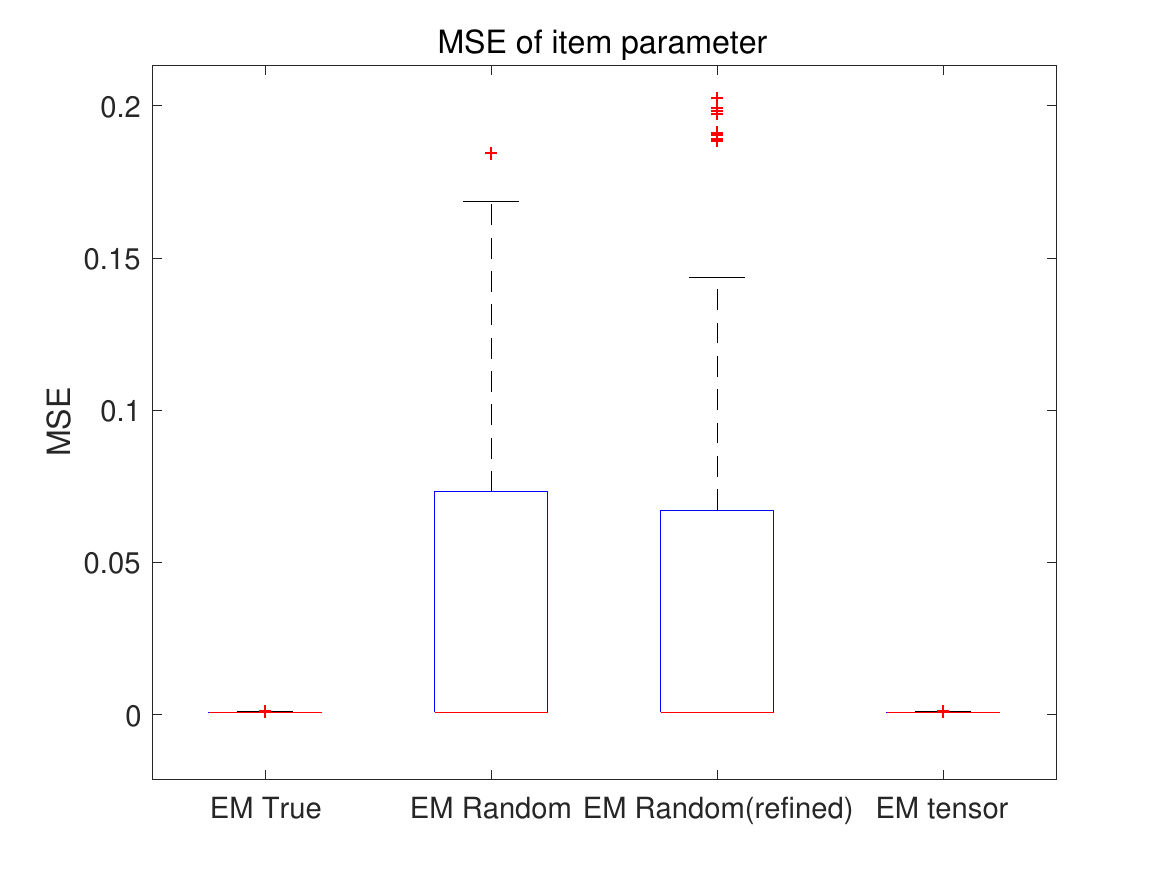}
		\end{minipage}%
	}%
	\subfigure[Running time of the algorithms]{
		\begin{minipage}[t]{0.4\linewidth}
			\centering
			\includegraphics[width=2in]{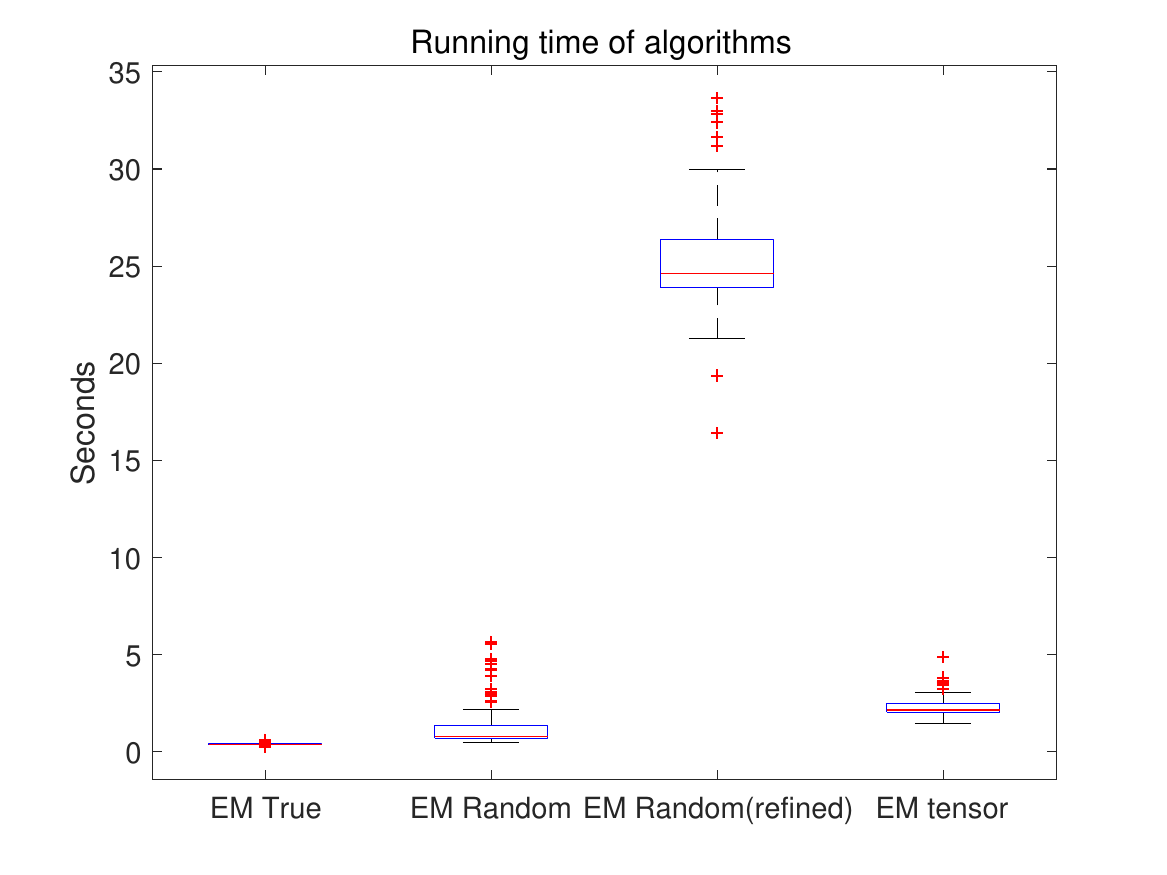}
		\end{minipage}%
	}%
	\centering
	\caption{$N = 1000, J= 200, L=5,$ item parameters $\in \{0.1,0.2,0.8,0.9\}$}
\end{figure}

\begin{figure}[H]
	\centering
	\subfigure[MSE of item parameters]{
		\begin{minipage}[t]{0.4\linewidth}
			\centering
			\includegraphics[width=2in]{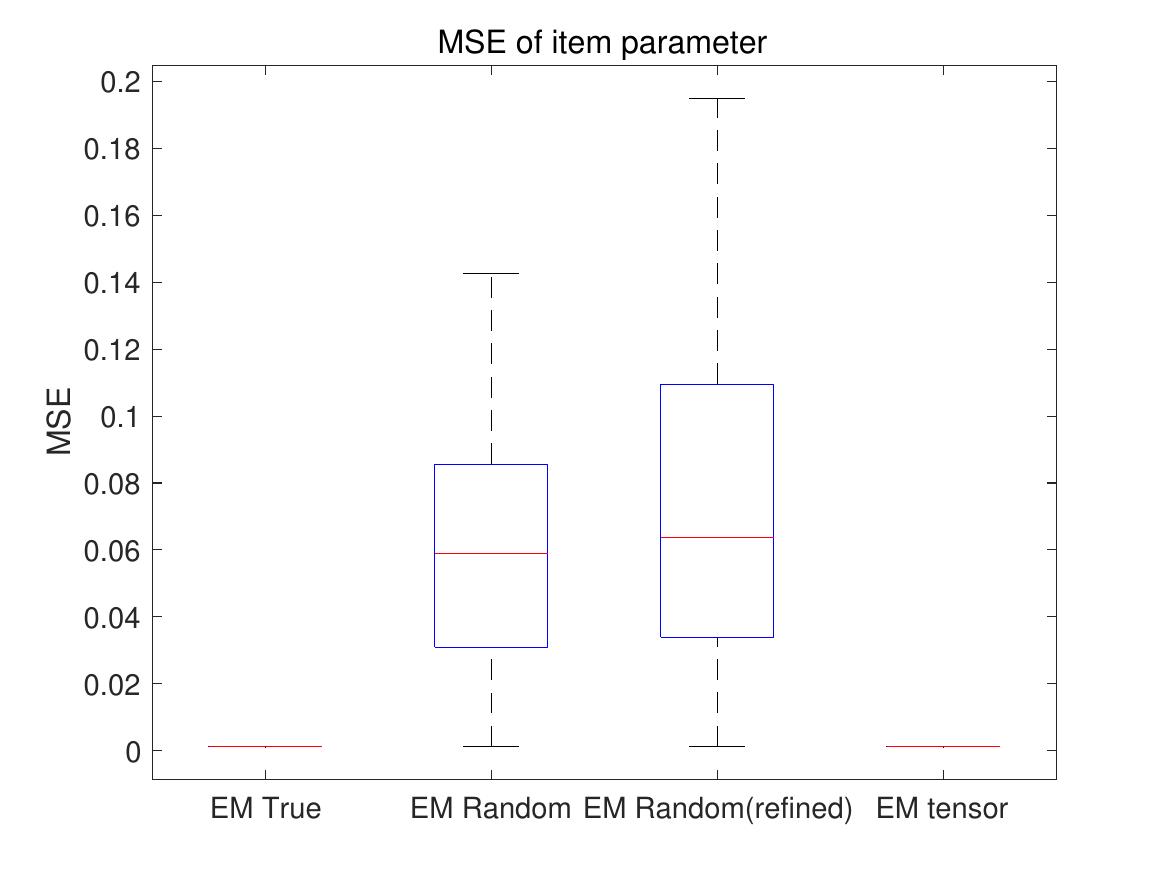}
		\end{minipage}%
	}%
	\subfigure[Running time of the algorithms]{
		\begin{minipage}[t]{0.4\linewidth}
			\centering
			\includegraphics[width=2in]{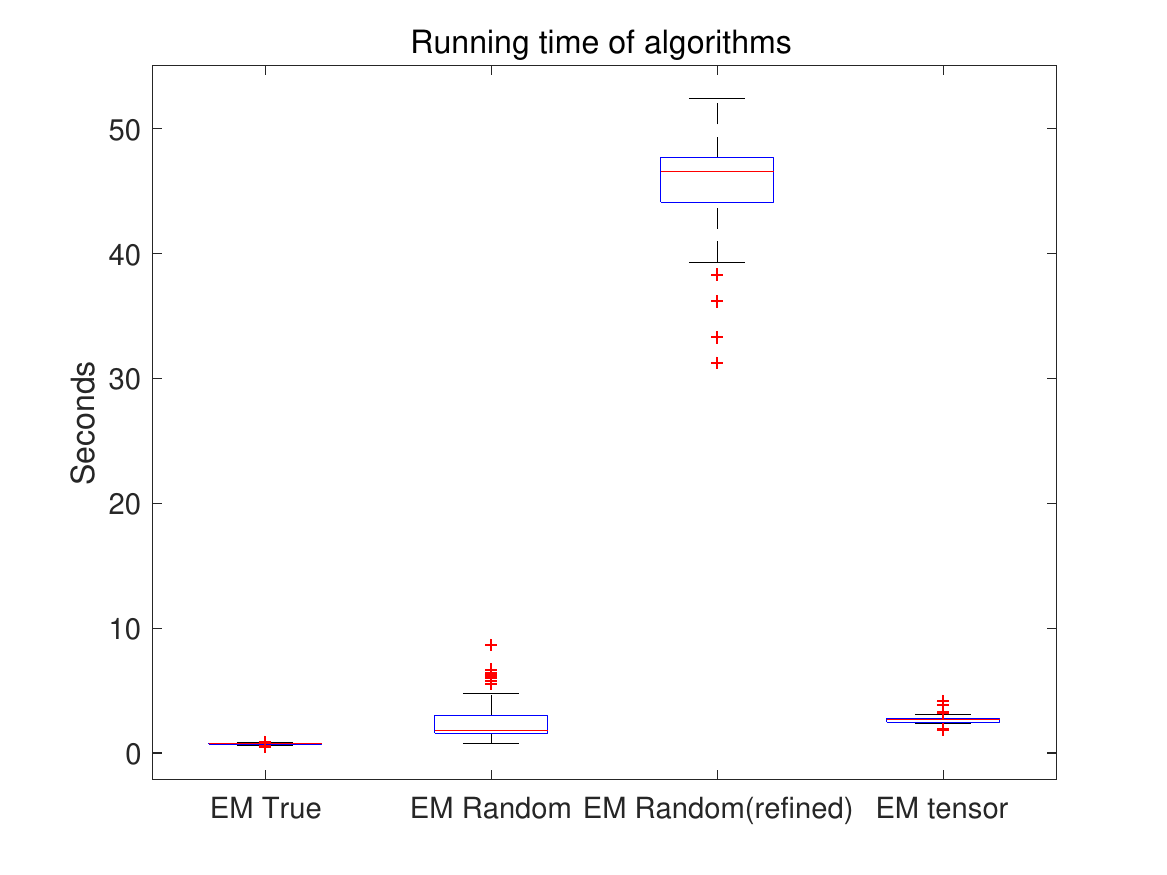}
		\end{minipage}%
	}%
	\centering
	\caption{$N = 1000, J= 200, L=10,$ item parameters $\in \{0.1,0.2,0.8,0.9\}$}
\end{figure}

\begin{figure}[H]
	\centering
	\subfigure[MSE of item parameters]{
		\begin{minipage}[t]{0.4\linewidth}
			\centering
			\includegraphics[width=2in]{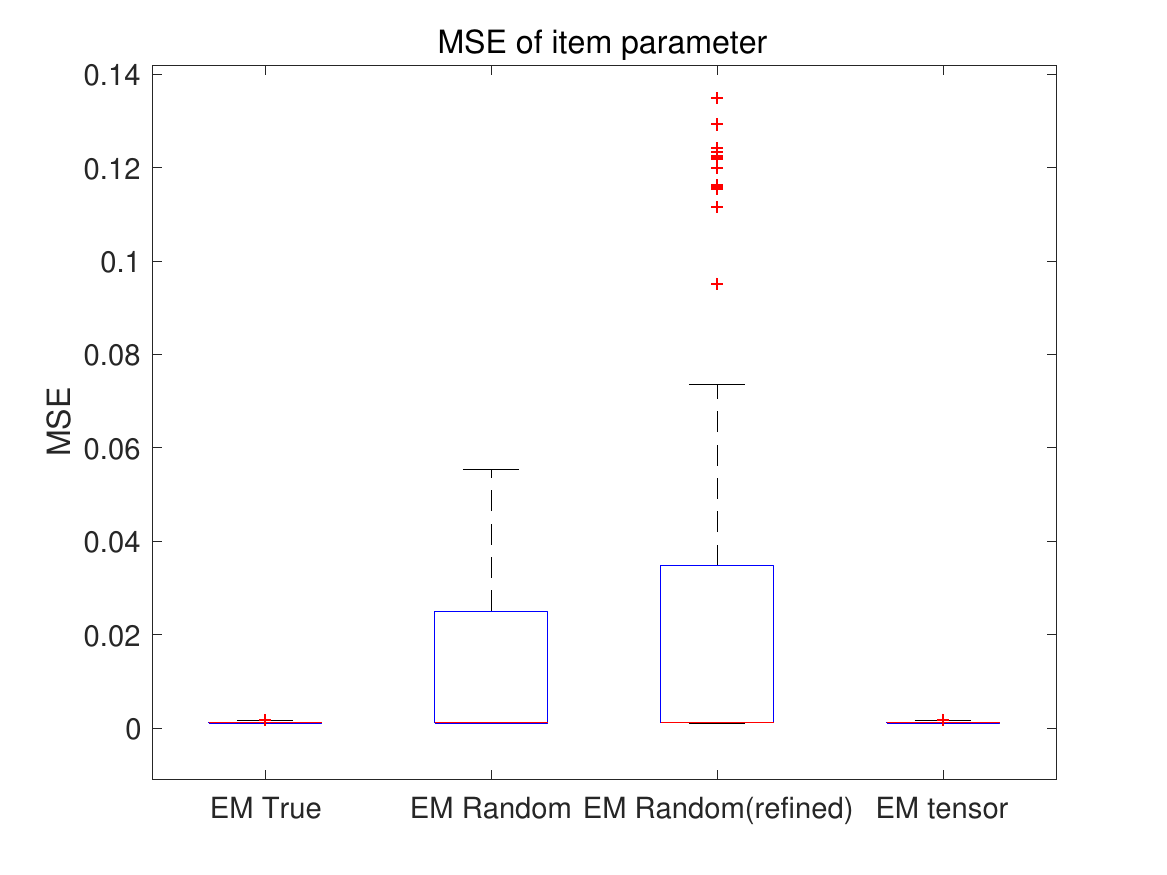}
		\end{minipage}%
	}%
	\subfigure[Running time of the algorithms]{
		\begin{minipage}[t]{0.4\linewidth}
			\centering
			\includegraphics[width=2in]{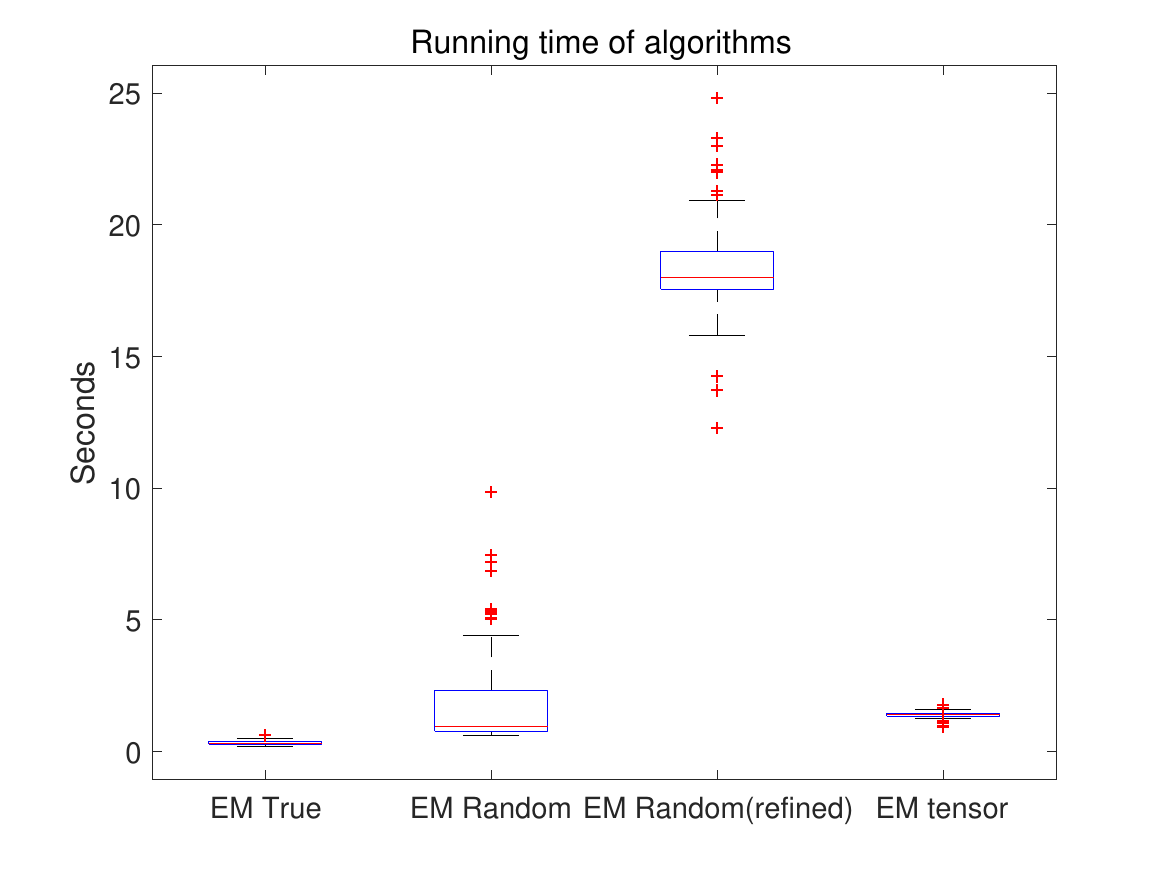}
		\end{minipage}%
	}%
	\centering
	\caption{$N = 1000, J= 100, L=5,$ item parameters $\in \{0.2,0.4,0.6,0.8\}$}
\end{figure}

\begin{figure}[H]
	\centering
	\subfigure[MSE of item parameters]{
		\begin{minipage}[t]{0.4\linewidth}
			\centering
			\includegraphics[width=2in]{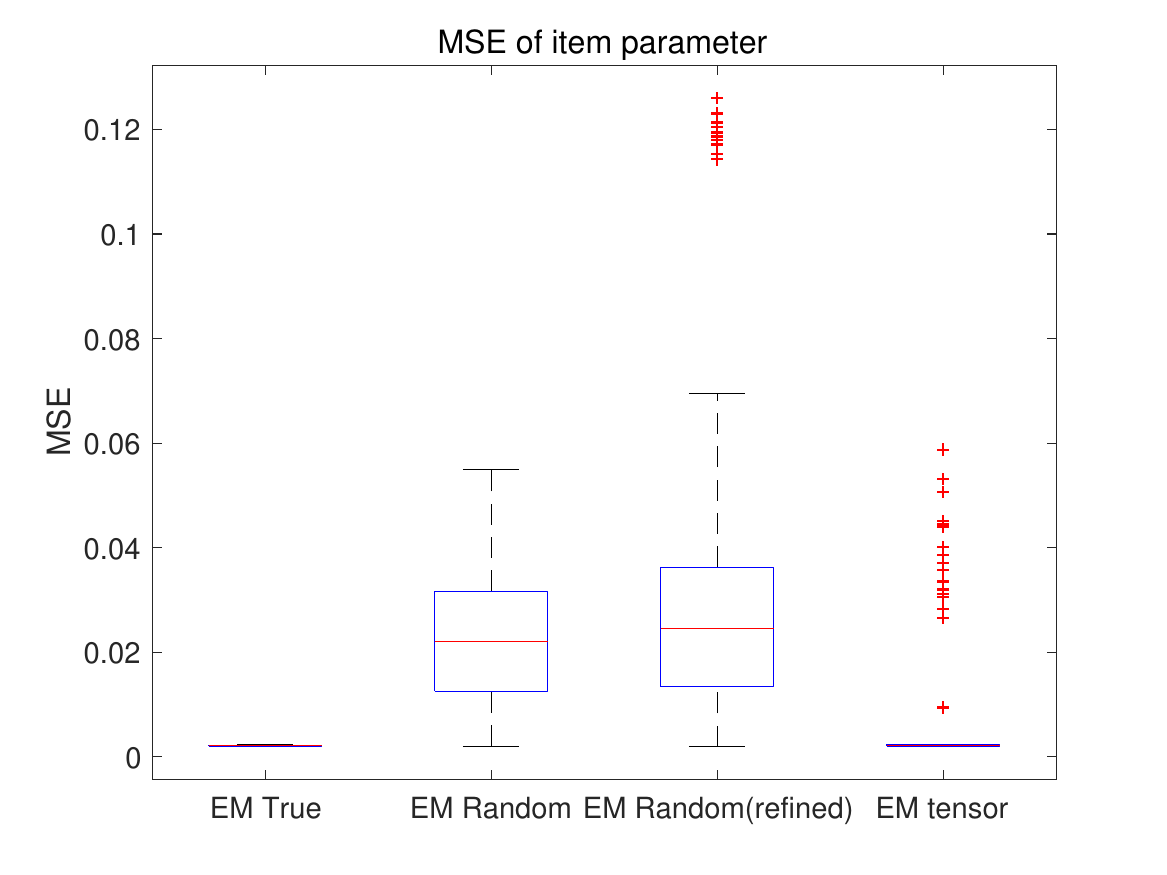}
		\end{minipage}%
	}%
	\subfigure[Running time of the algorithms]{
		\begin{minipage}[t]{0.4\linewidth}
			\centering
			\includegraphics[width=2in]{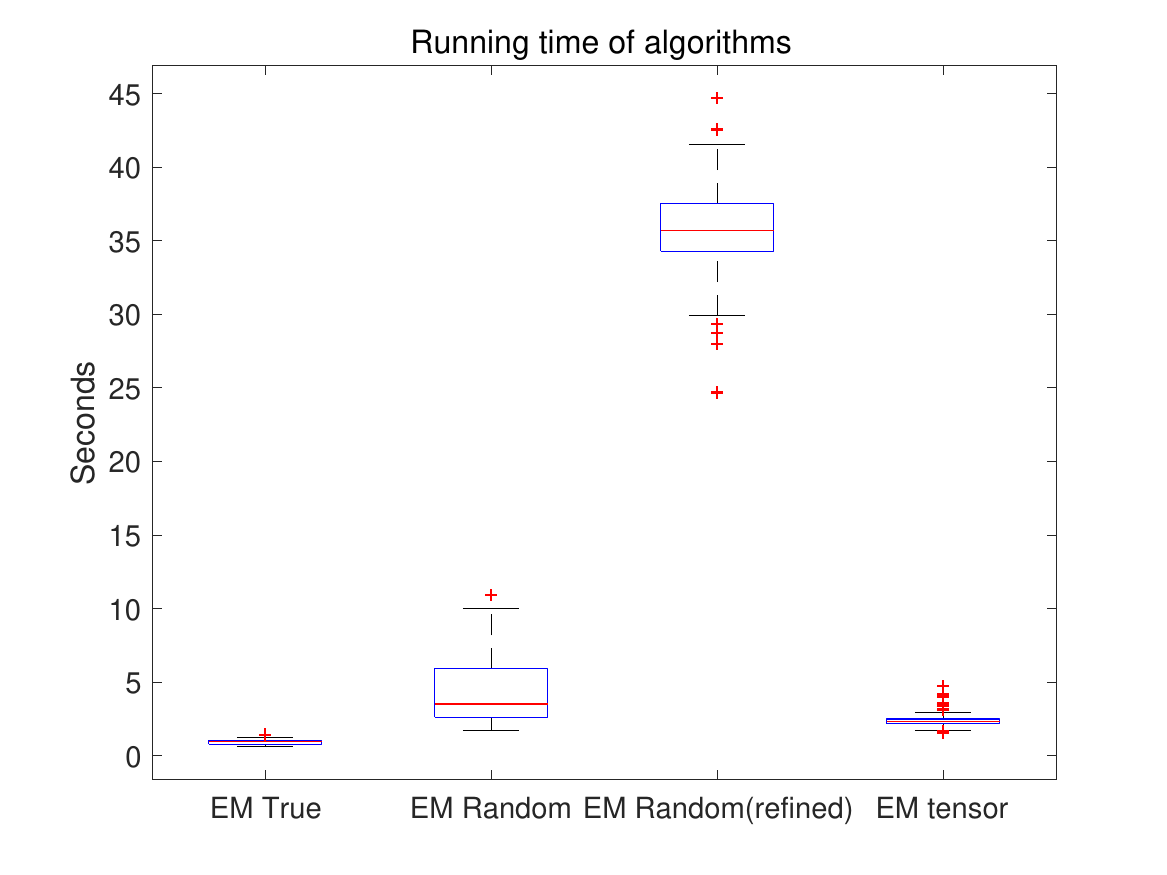}
		\end{minipage}%
	}%
	\centering
	\caption{$N = 1000, J= 100, L=10,$ item parameters $\in \{0.2,0.4,0.6,0.8\}$}
\end{figure}

\begin{figure}[H]
	\centering
	\subfigure[MSE of item parameters]{
		\begin{minipage}[t]{0.4\linewidth}
			\centering
			\includegraphics[width=2in]{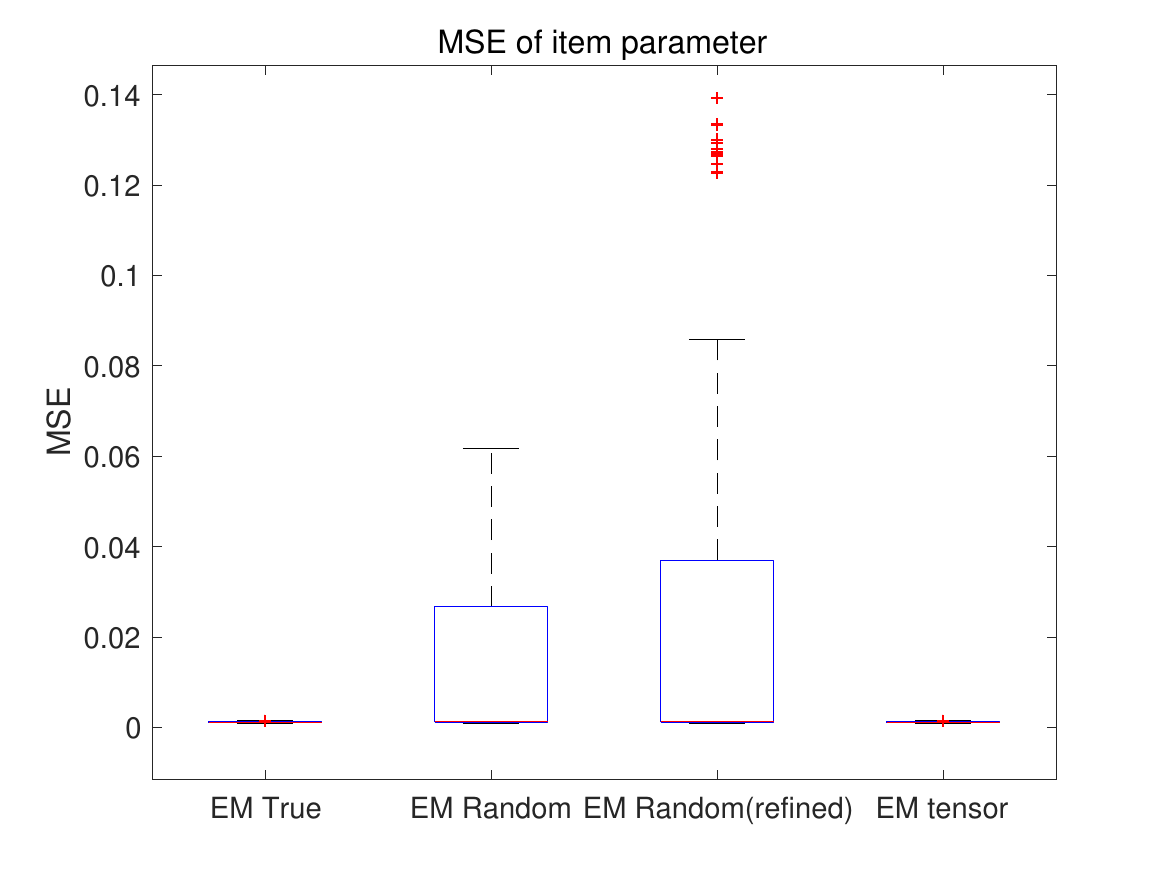}
		\end{minipage}%
	}%
	\subfigure[Running time of the algorithms]{
		\begin{minipage}[t]{0.4\linewidth}
			\centering
			\includegraphics[width=2in]{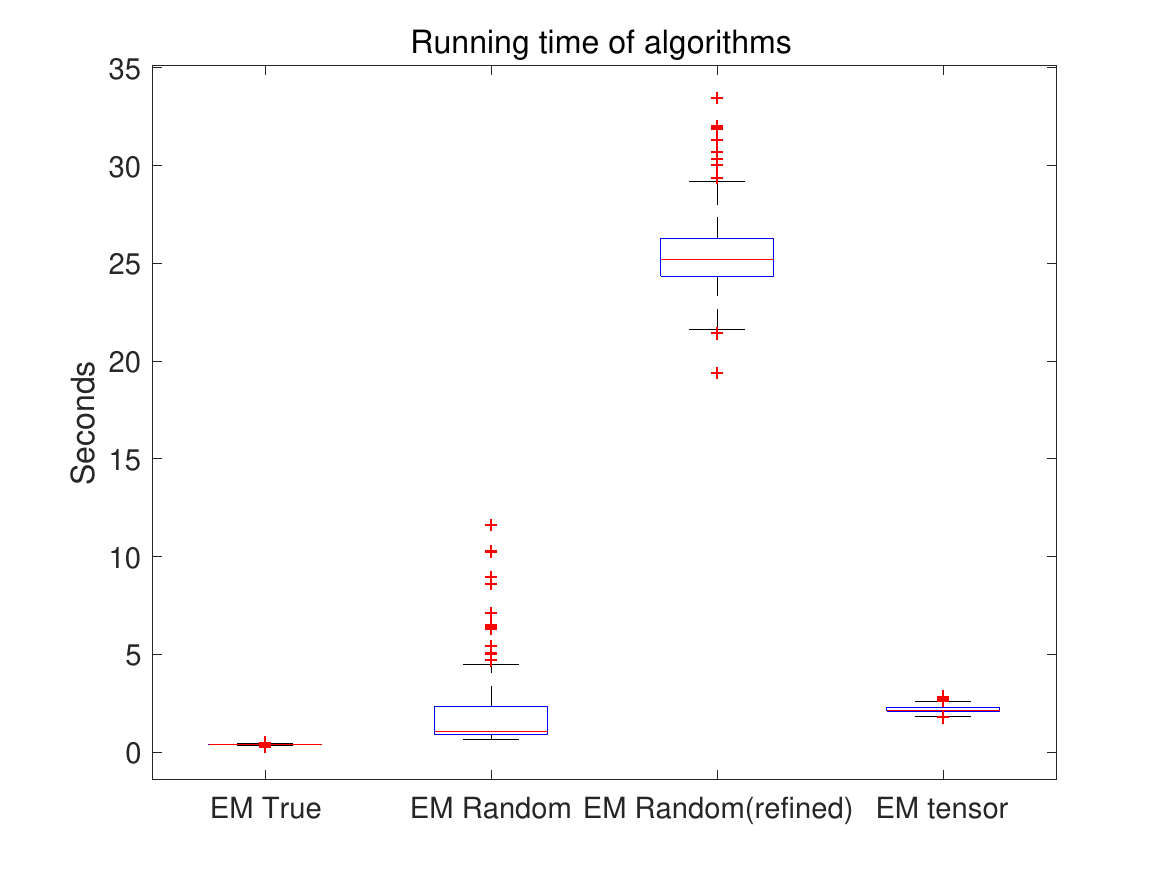}
		\end{minipage}%
	}%
	\centering
	\caption{$N = 1000, J= 200, L=5,$ item parameters $\in \{0.2,0.4,0.6,0.8\}$}
\end{figure}

\begin{figure}[H]
	\centering
	\subfigure[MSE of item parameters]{
		\begin{minipage}[t]{0.4\linewidth}
			\centering
			\includegraphics[width=2in]{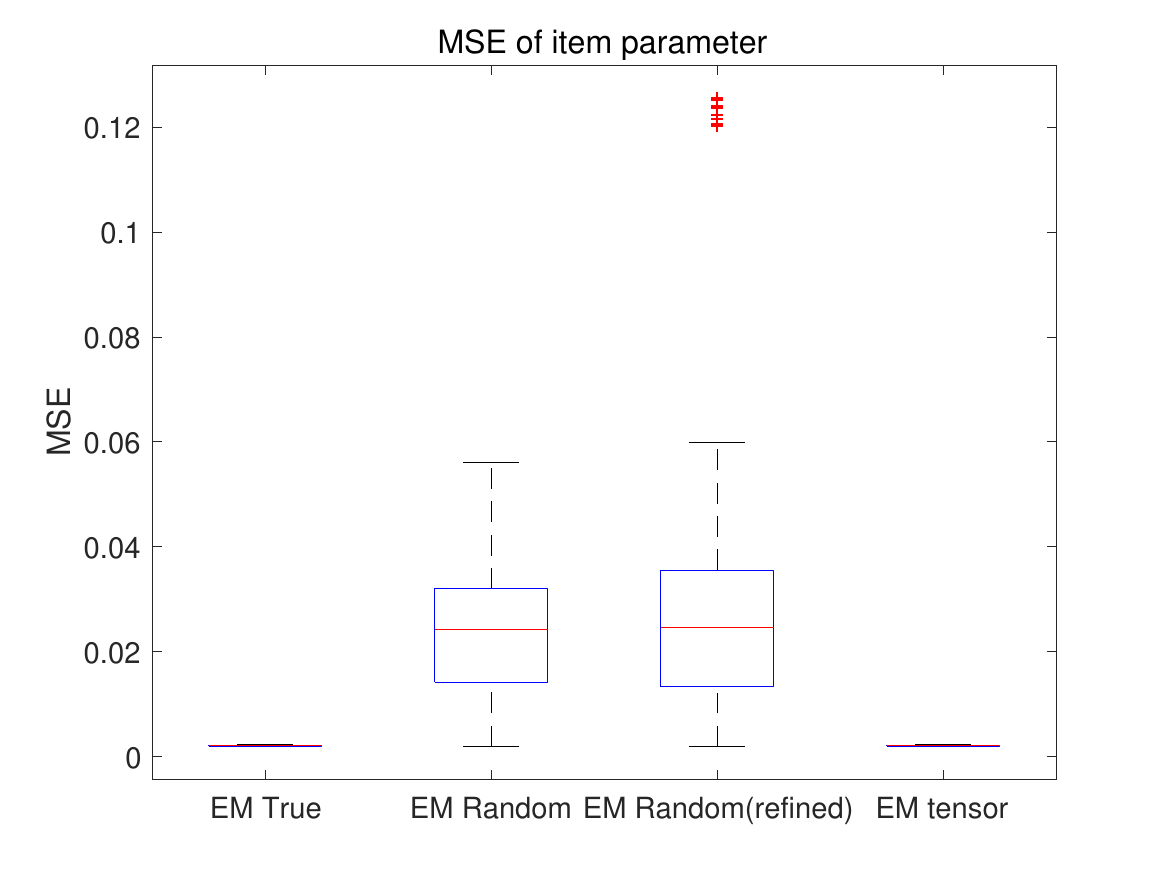}
		\end{minipage}%
	}%
	\subfigure[Running time of the algorithms]{
		\begin{minipage}[t]{0.4\linewidth}
			\centering
			\includegraphics[width=2in]{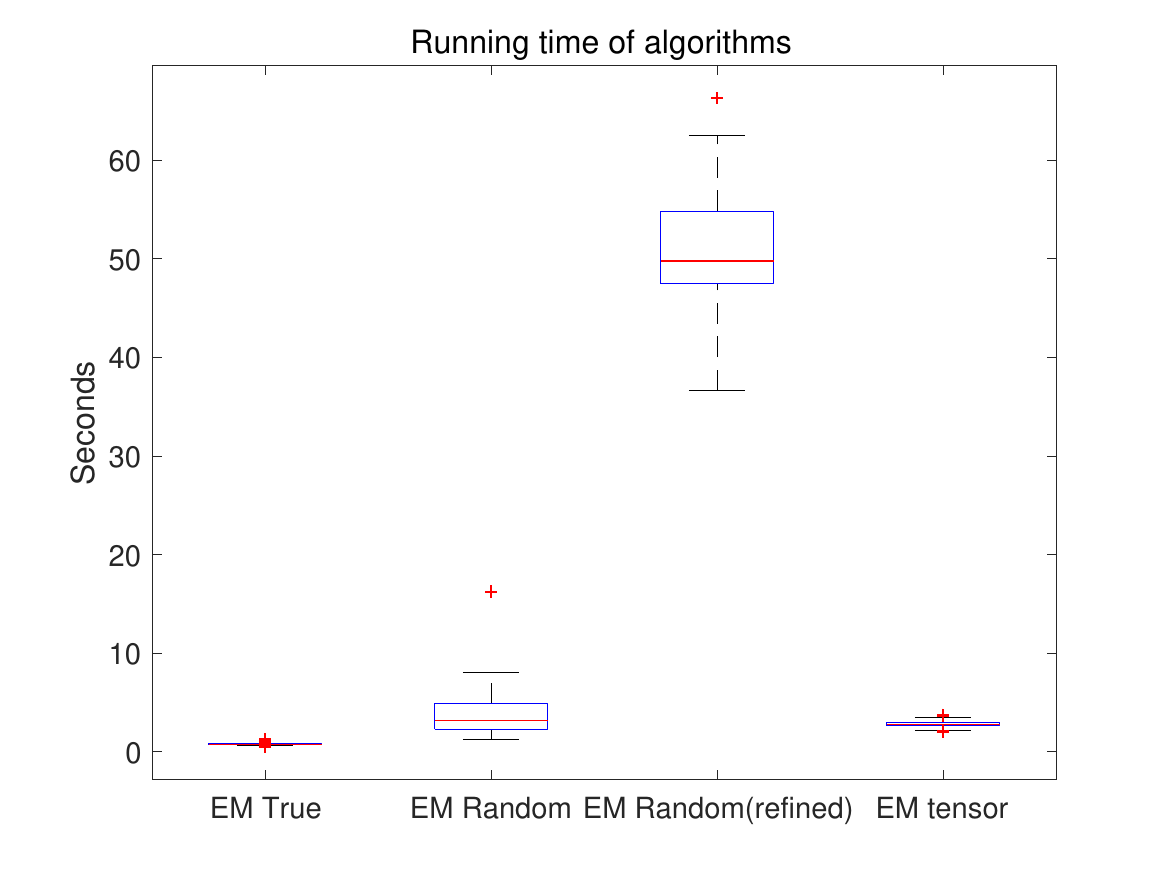}
		\end{minipage}%
	}%
	\centering
	\caption{$N = 1000, J= 200, L=10,$ item parameters $\in \{0.2,0.4,0.6,0.8\}$}
\end{figure}

\begin{figure}[H]
	\centering
	\subfigure[MSE of item parameters]{
		\begin{minipage}[t]{0.4\linewidth}
			\centering
			\includegraphics[width=2in]{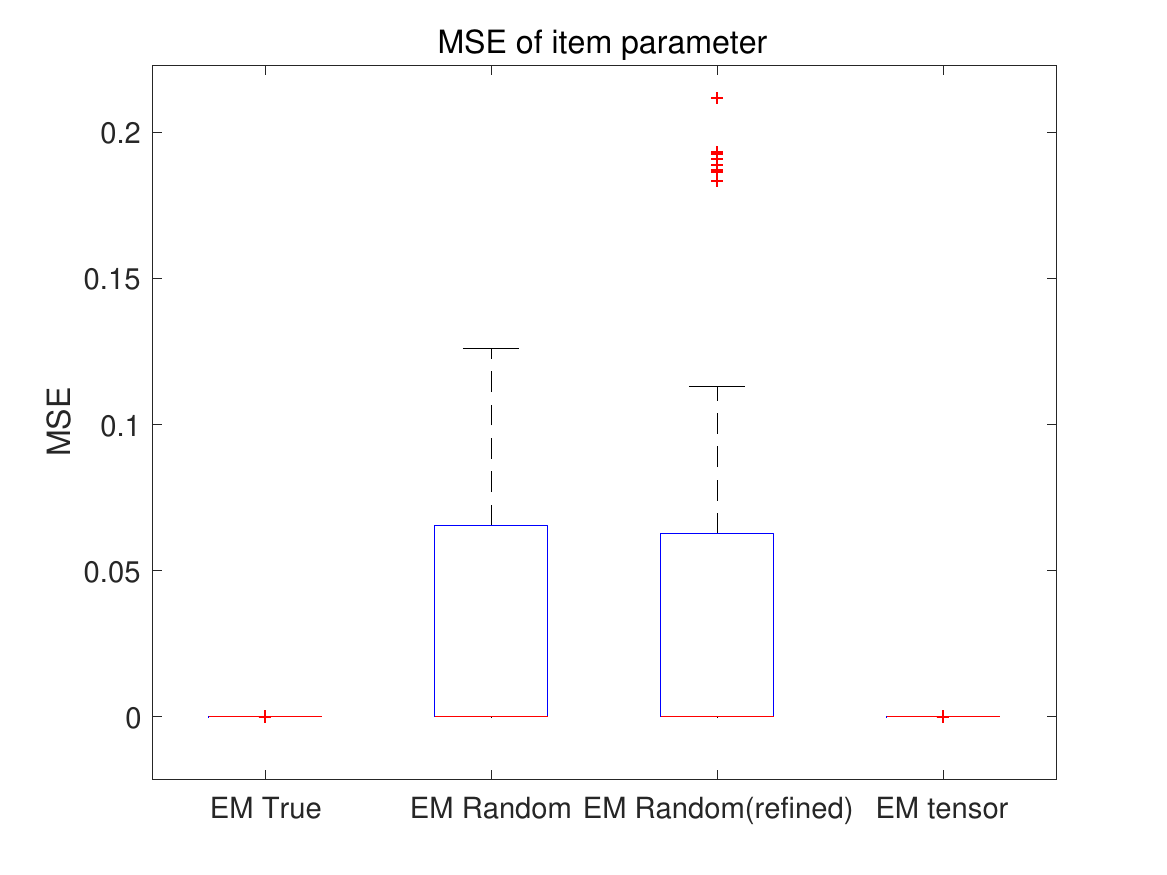}
		\end{minipage}%
	}%
	\subfigure[Running time of the algorithms]{
		\begin{minipage}[t]{0.4\linewidth}
			\centering
			\includegraphics[width=2in]{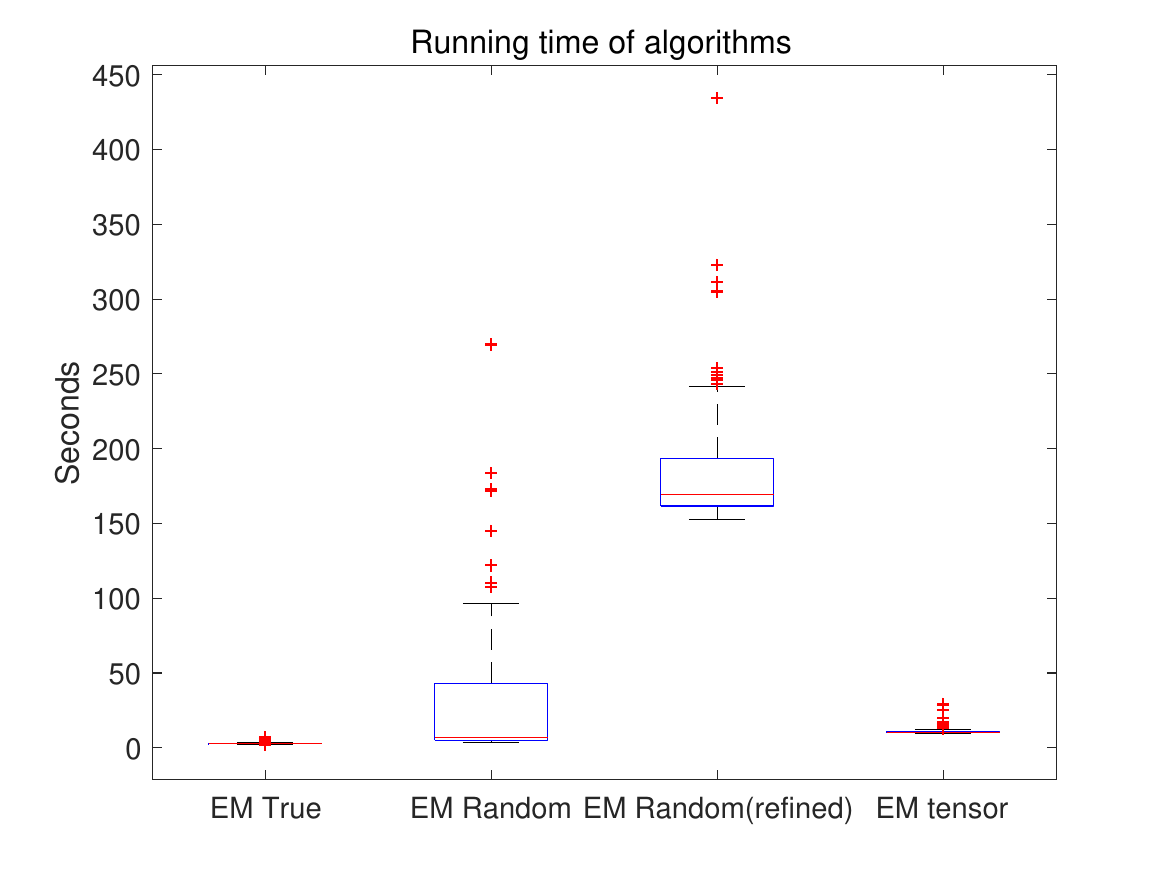}
		\end{minipage}%
	}%
	\centering
	\caption{$N = 10000, J= 100, L=5,$ item parameters $\in \{0.1,0.2,0.8,0.9\}$}
\end{figure}

\begin{figure}[H]
	\centering
	\subfigure[MSE of item parameters]{
		\begin{minipage}[t]{0.4\linewidth}
			\centering
			\includegraphics[width=2in]{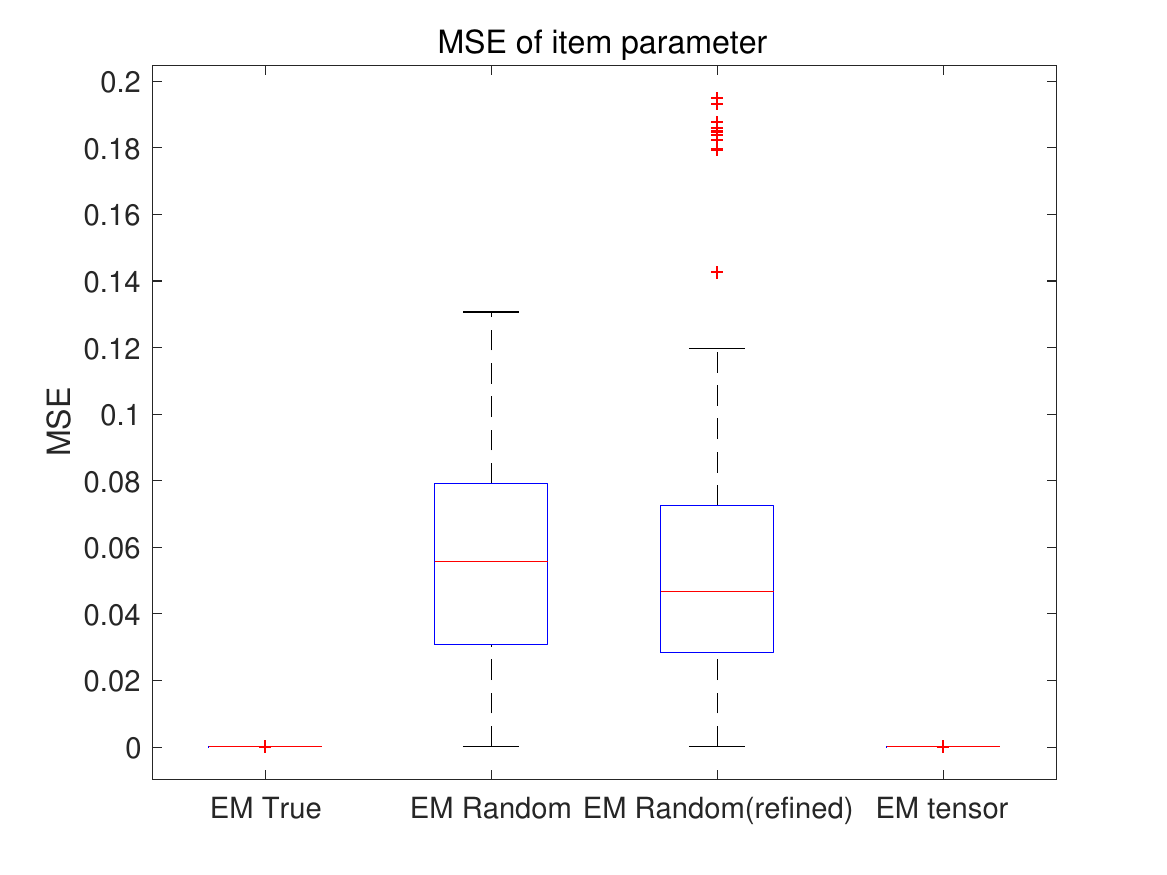}
		\end{minipage}%
	}%
	\subfigure[Running time of the algorithms]{
		\begin{minipage}[t]{0.4\linewidth}
			\centering
			\includegraphics[width=2in]{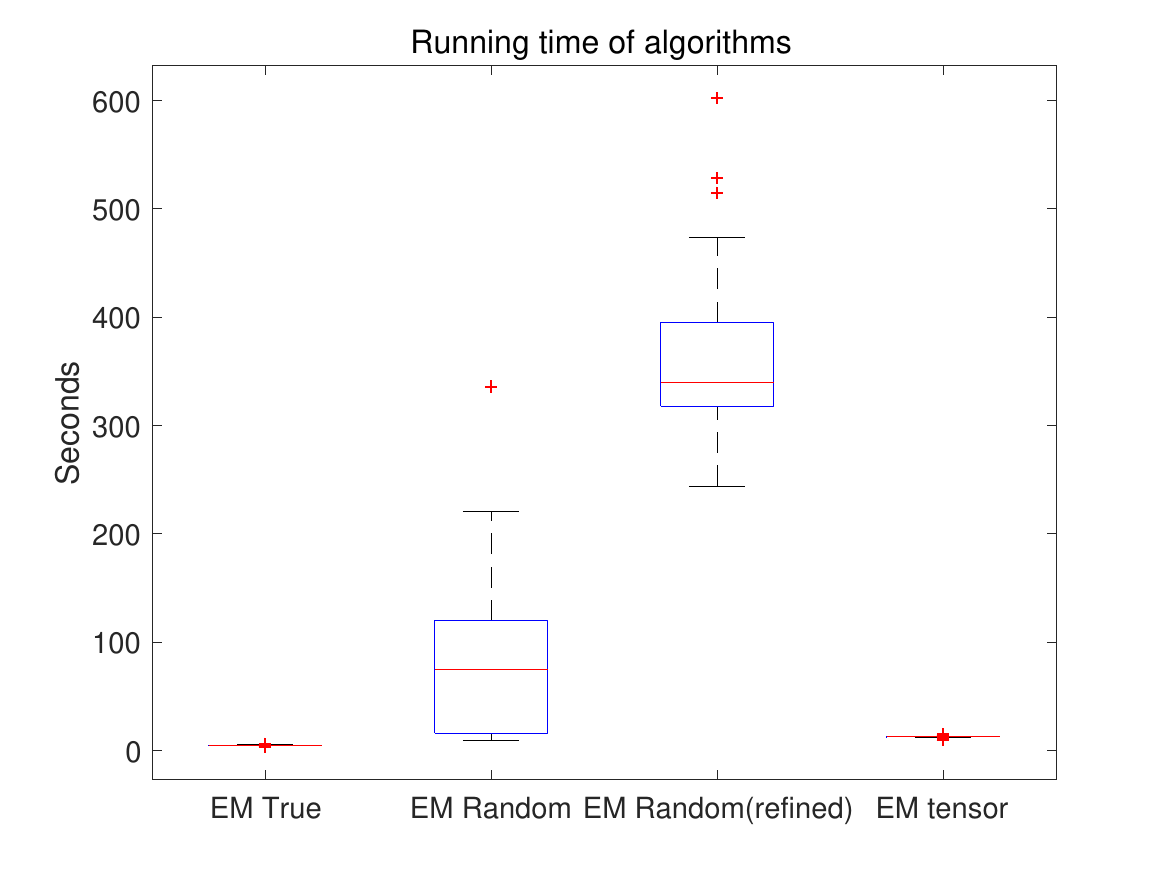}
		\end{minipage}%
	}%
	\centering
	\caption{$N = 10000, J= 100, L=10,$ item parameters $\in \{0.1,0.2,0.8,0.9\}$}
\end{figure}

\begin{figure}[H]
	\centering
	\subfigure[MSE of item parameters]{
		\begin{minipage}[t]{0.4\linewidth}
			\centering
			\includegraphics[width=2in]{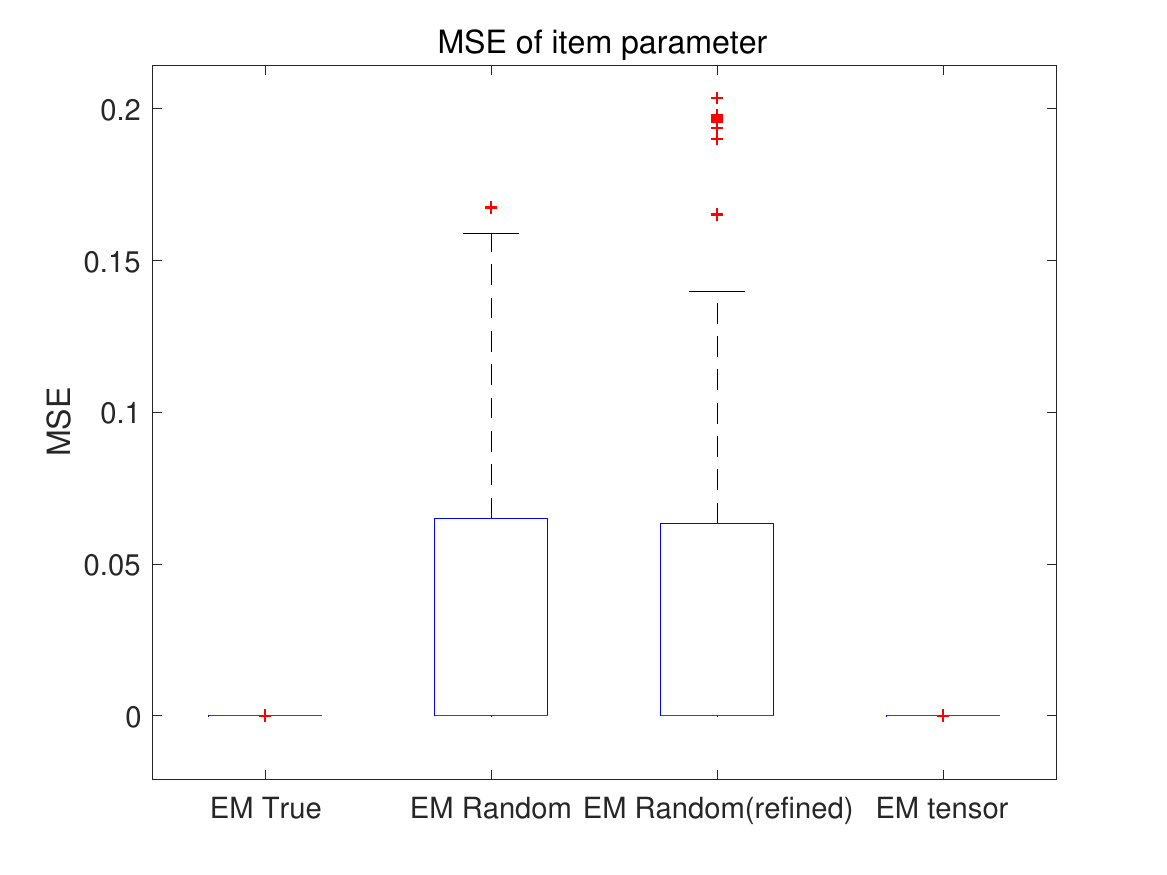}
		\end{minipage}%
	}%
	\subfigure[Running time of the algorithms]{
		\begin{minipage}[t]{0.4\linewidth}
			\centering
			\includegraphics[width=2in]{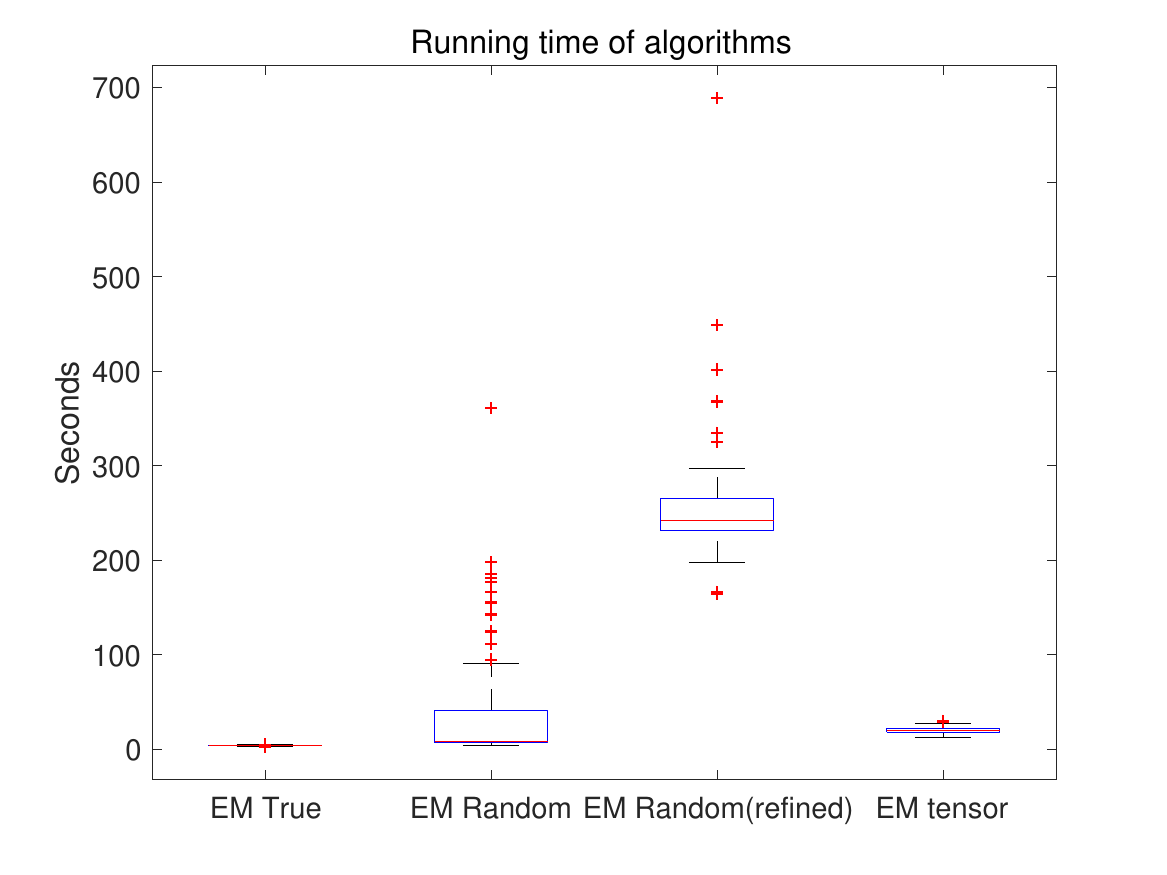}
		\end{minipage}%
	}%
	\centering
	\caption{$N = 10000, J= 200, L=5,$ item parameters $\in \{0.1,0.2,0.8,0.9\}$}
\end{figure}

\begin{figure}[H]
	\centering
	\subfigure[MSE of item parameters]{
		\begin{minipage}[t]{0.4\linewidth}
			\centering
			\includegraphics[width=2in]{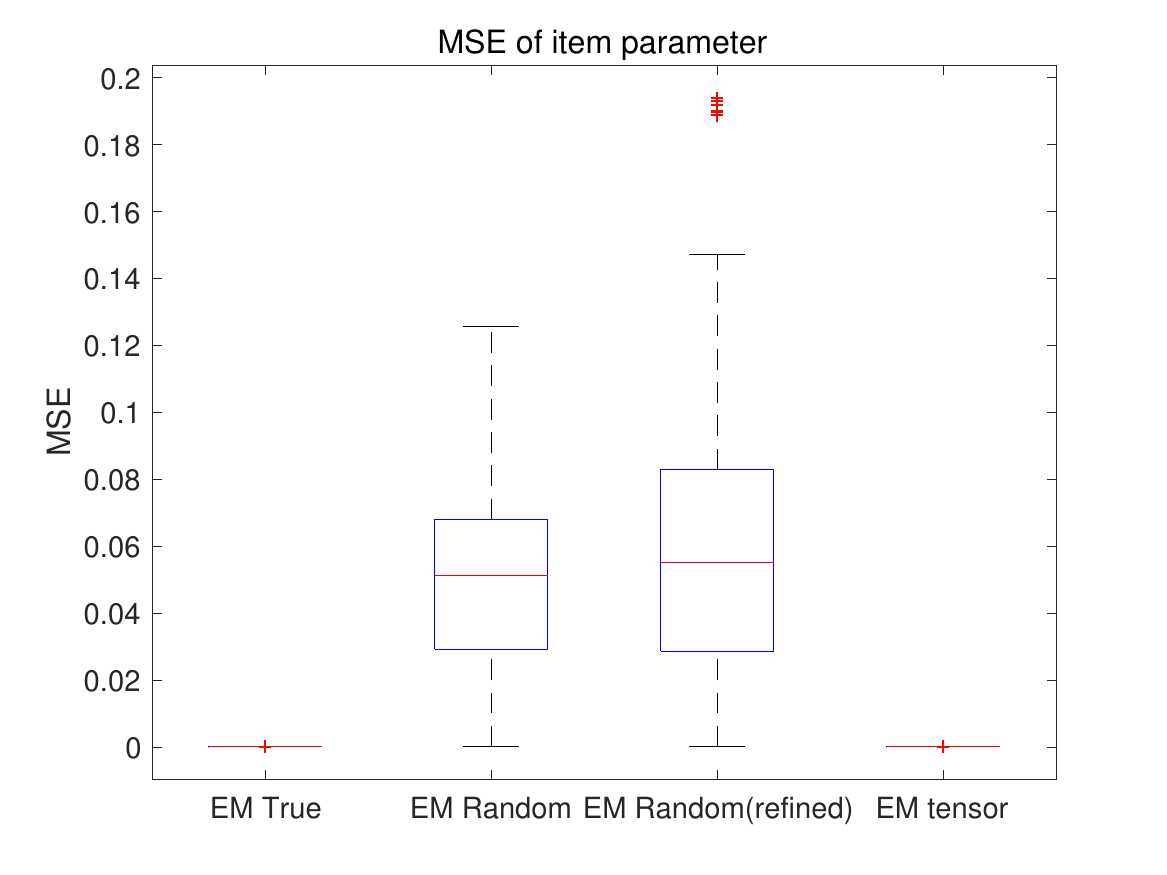}
		\end{minipage}%
	}%
	\subfigure[Running time of the algorithms]{
		\begin{minipage}[t]{0.4\linewidth}
			\centering
			\includegraphics[width=2in]{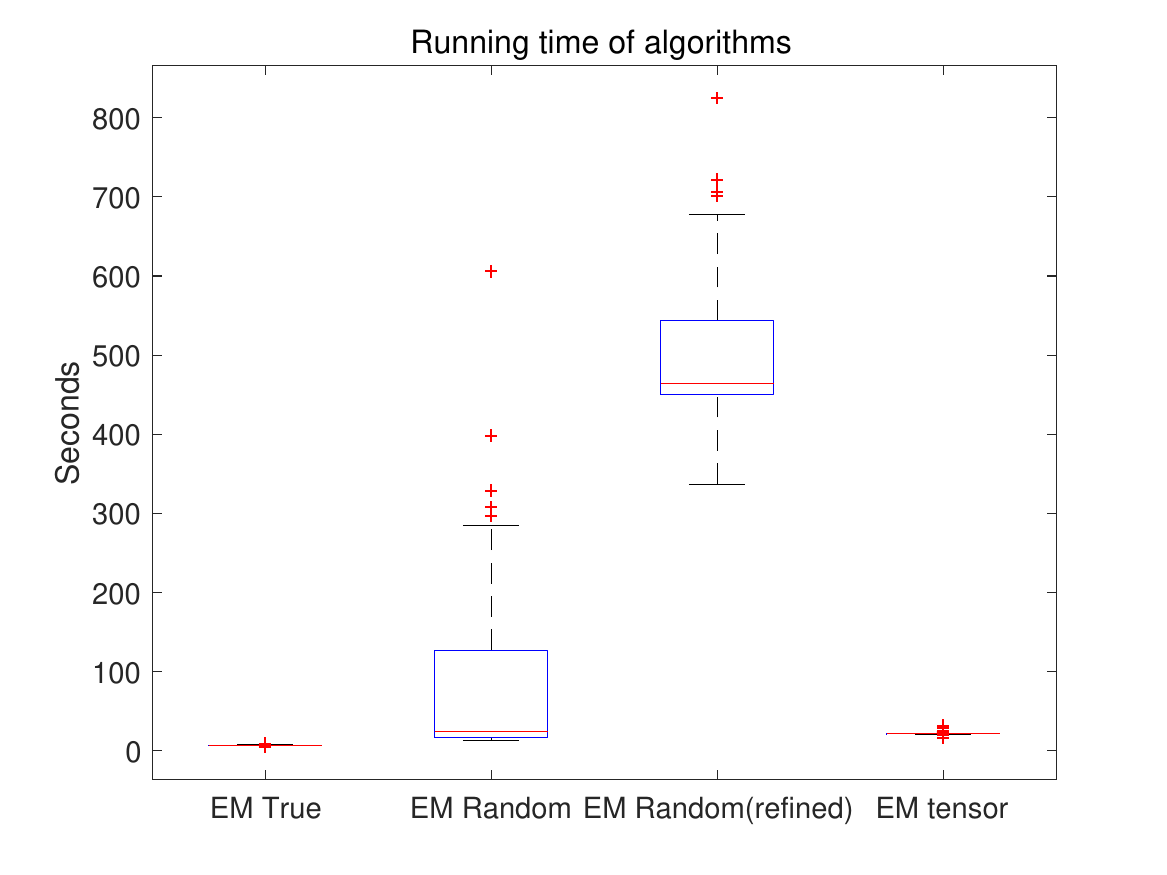}
		\end{minipage}%
	}%
	\centering
	\caption{$N = 10000, J= 200, L=10,$ item parameters $\in \{0.1,0.2,0.8,0.9\}$}
\end{figure}

\begin{figure}[H]
	\centering
	\subfigure[MSE of item parameters]{
		\begin{minipage}[t]{0.4\linewidth}
			\centering
			\includegraphics[width=2in]{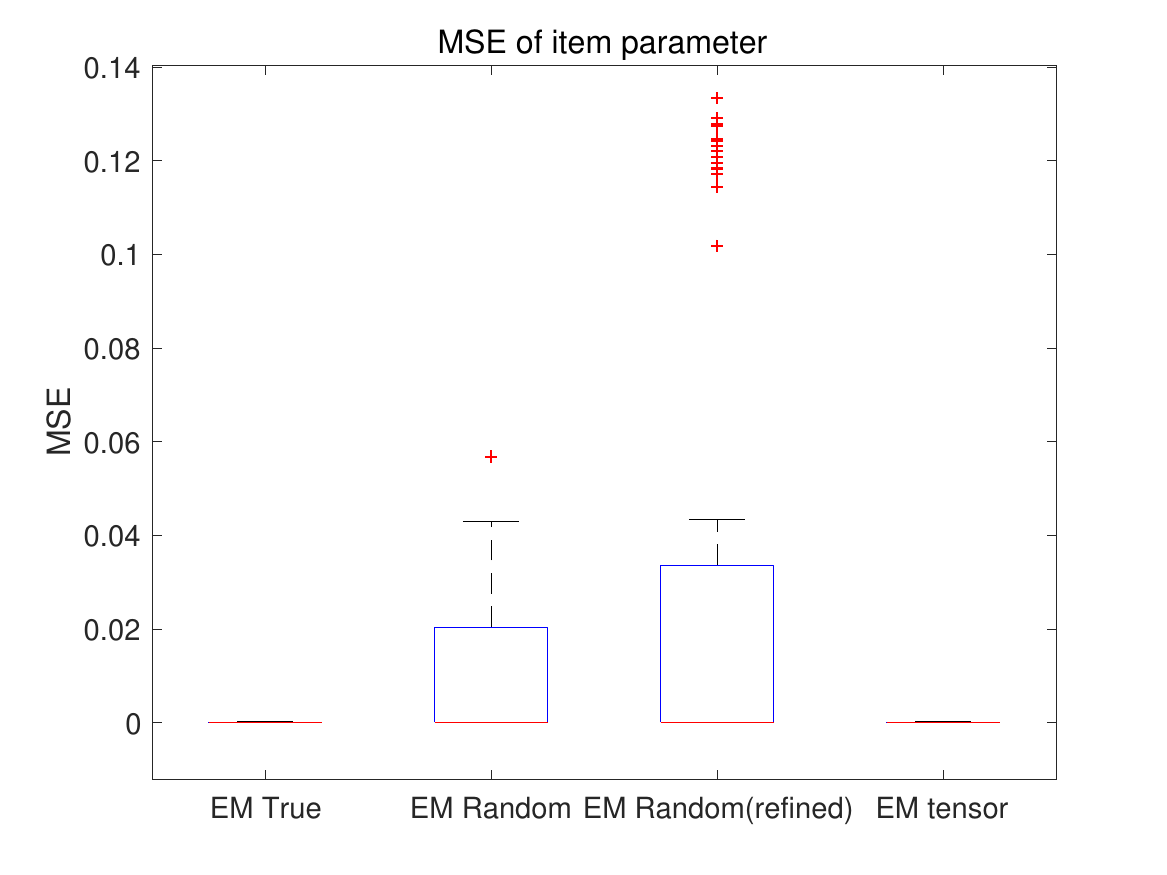}
		\end{minipage}%
	}%
	\subfigure[Running time of the algorithms]{
		\begin{minipage}[t]{0.4\linewidth}
			\centering
			\includegraphics[width=2in]{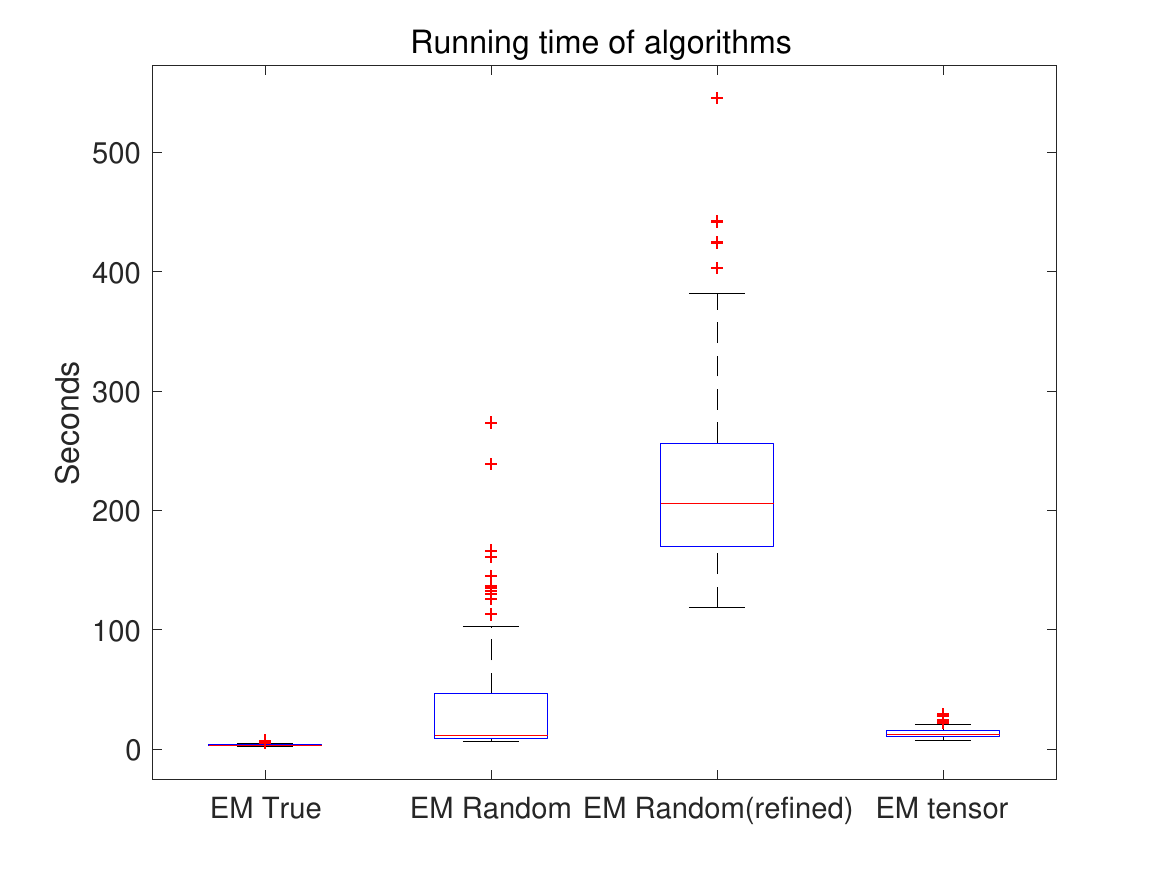}
		\end{minipage}%
	}%
	\centering
	\caption{$N = 10000, J= 100, L=5,$ item parameters $\in \{0.2,0.4,0.6,0.8\}$}
\end{figure}

\begin{figure}[H]
	\centering
	\subfigure[MSE of item parameters]{
		\begin{minipage}[t]{0.4\linewidth}
			\centering
			\includegraphics[width=2in]{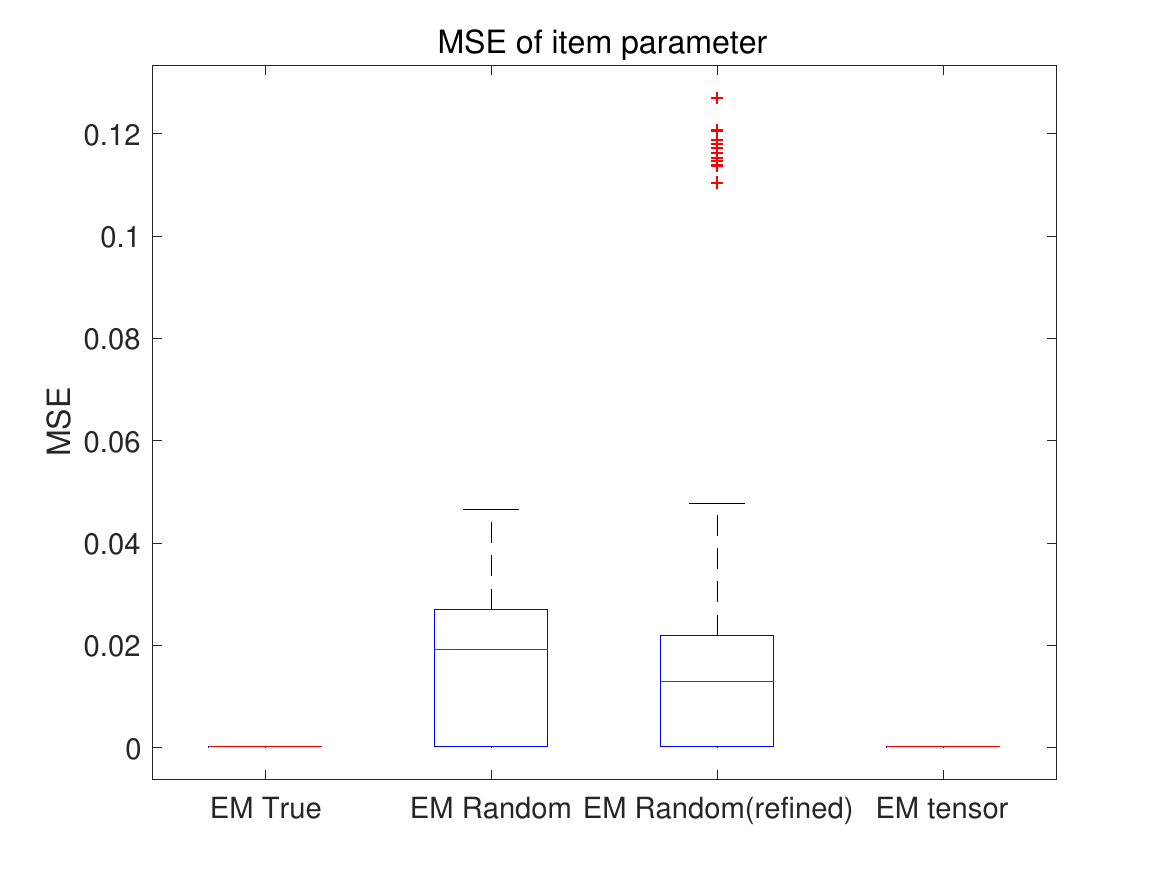}
		\end{minipage}%
	}%
	\subfigure[Running time of the algorithms]{
		\begin{minipage}[t]{0.4\linewidth}
			\centering
			\includegraphics[width=2in]{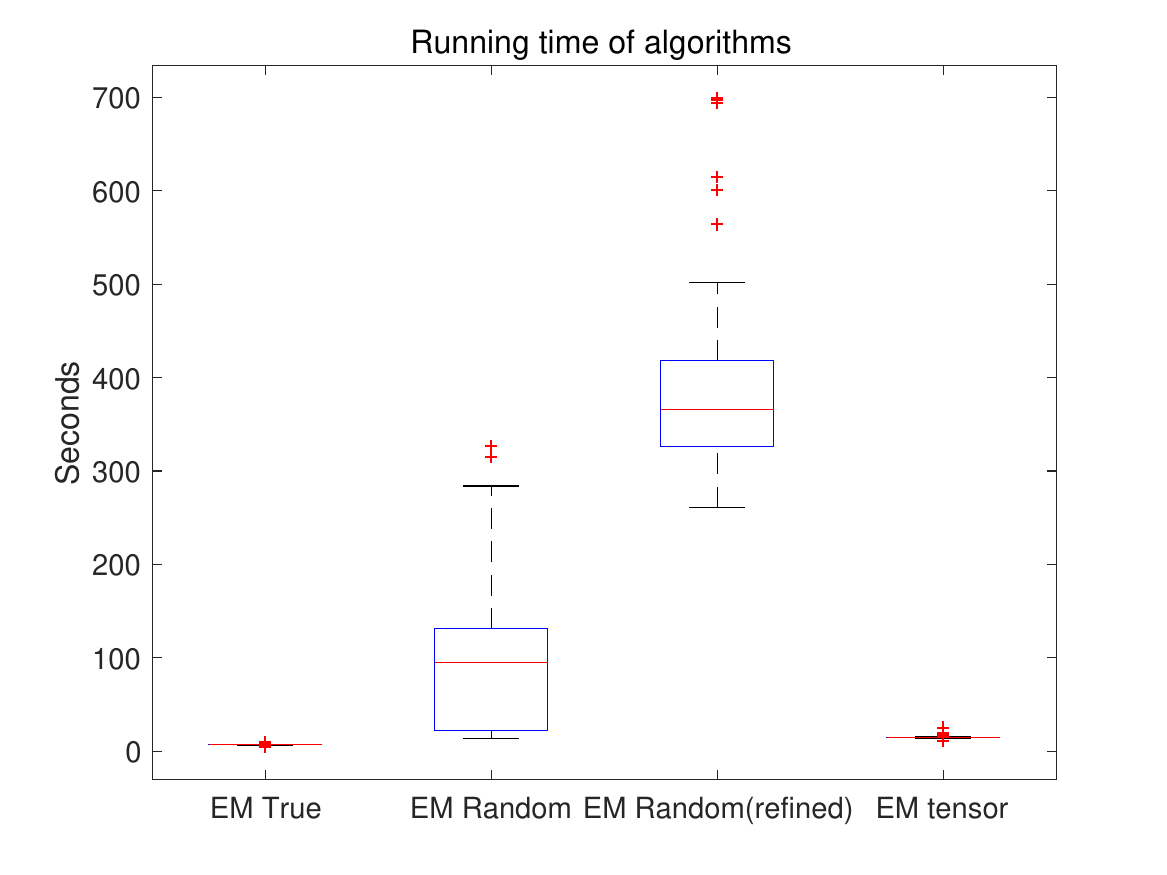}
		\end{minipage}%
	}%
	\centering
	\caption{$N = 10000, J= 100, L=10,$ item parameters $\in \{0.2,0.4,0.6,0.8\}$}
\end{figure}

\begin{figure}[H]
	\centering
	\subfigure[MSE of item parameters]{
		\begin{minipage}[t]{0.4\linewidth}
			\centering
			\includegraphics[width=2in]{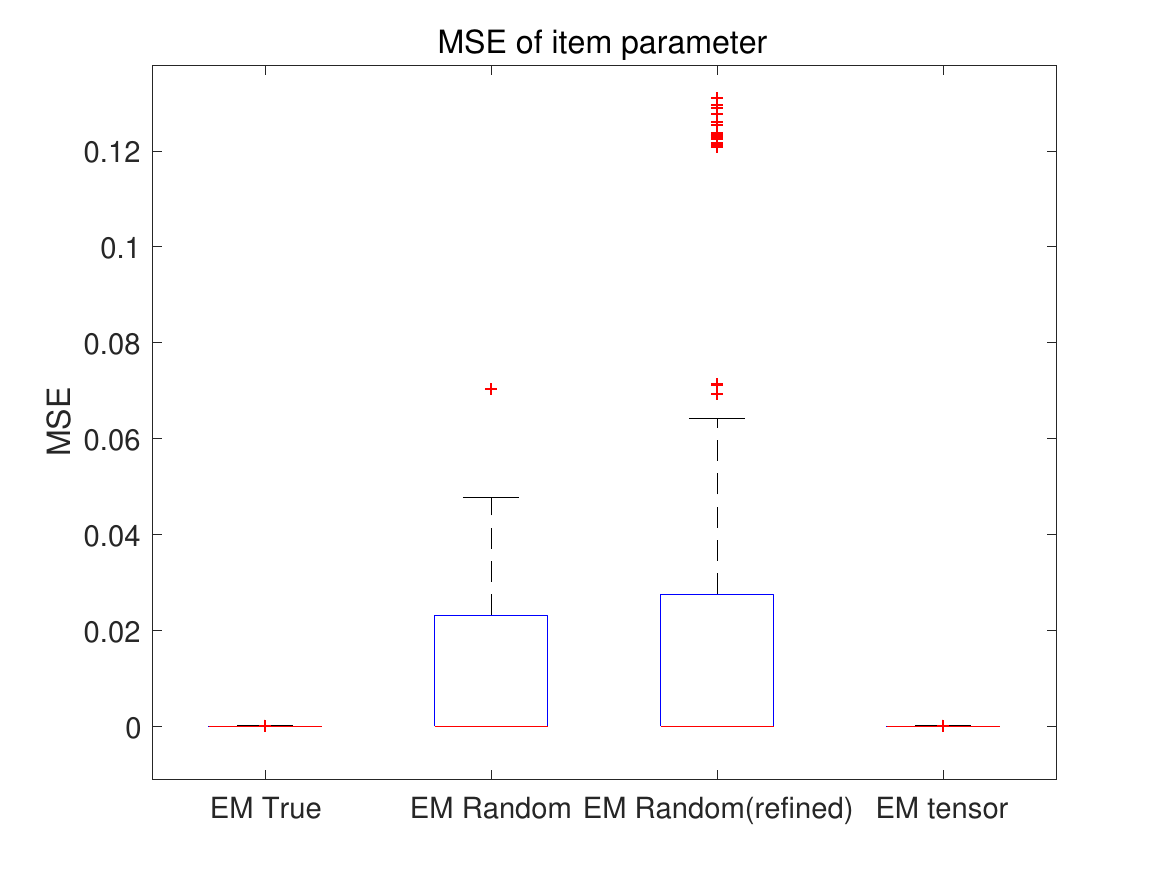}
		\end{minipage}%
	}%
	\subfigure[Running time of the algorithms]{
		\begin{minipage}[t]{0.4\linewidth}
			\centering
			\includegraphics[width=2in]{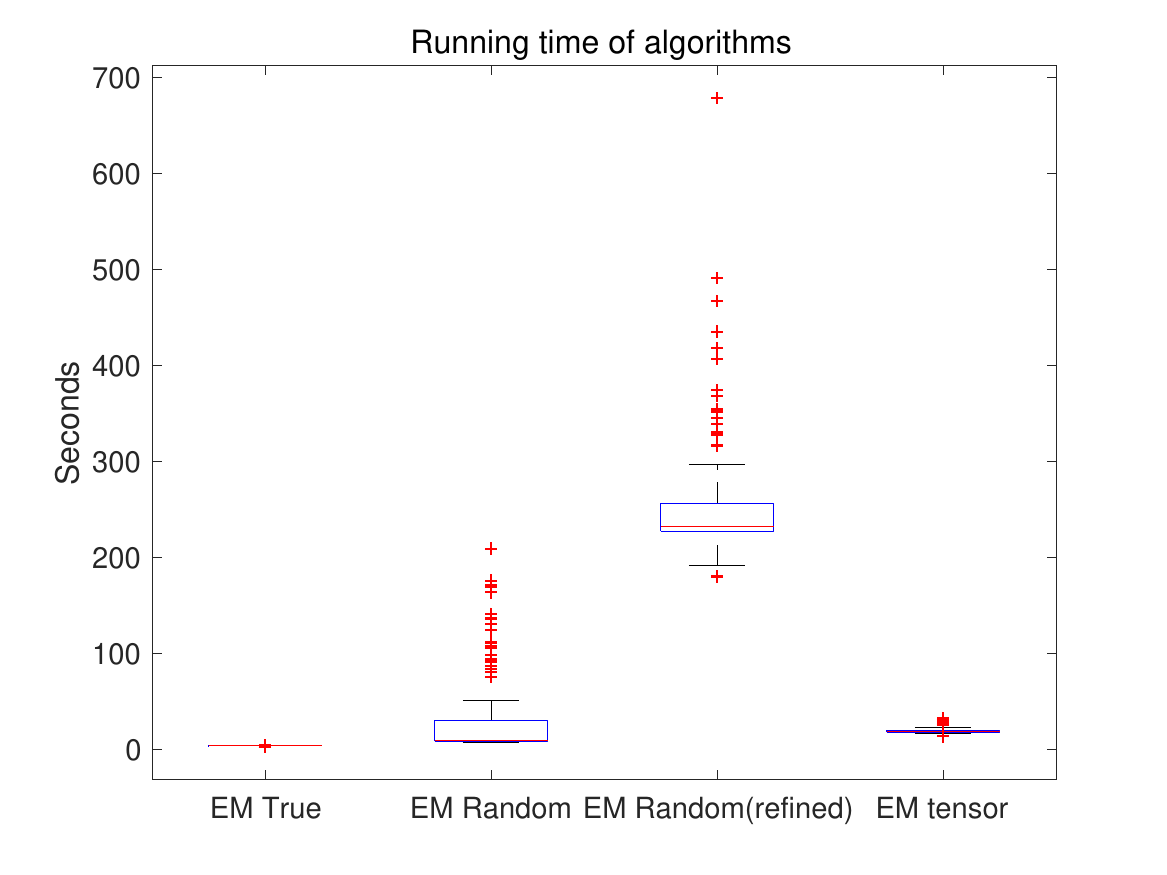}
		\end{minipage}%
	}%
	\centering
	\caption{$N = 10000, J= 200, L=5,$ item parameters $\in \{0.2,0.4,0.6,0.8\}$}
\end{figure}

\subsection*{Simulations under local dependence}

\begin{figure}[H]
	\centering
	\subfigure[MSE of item parameters $\rho=0.3$]{
		\begin{minipage}[t]{0.45\linewidth}
			\centering
			\includegraphics[width=2in]{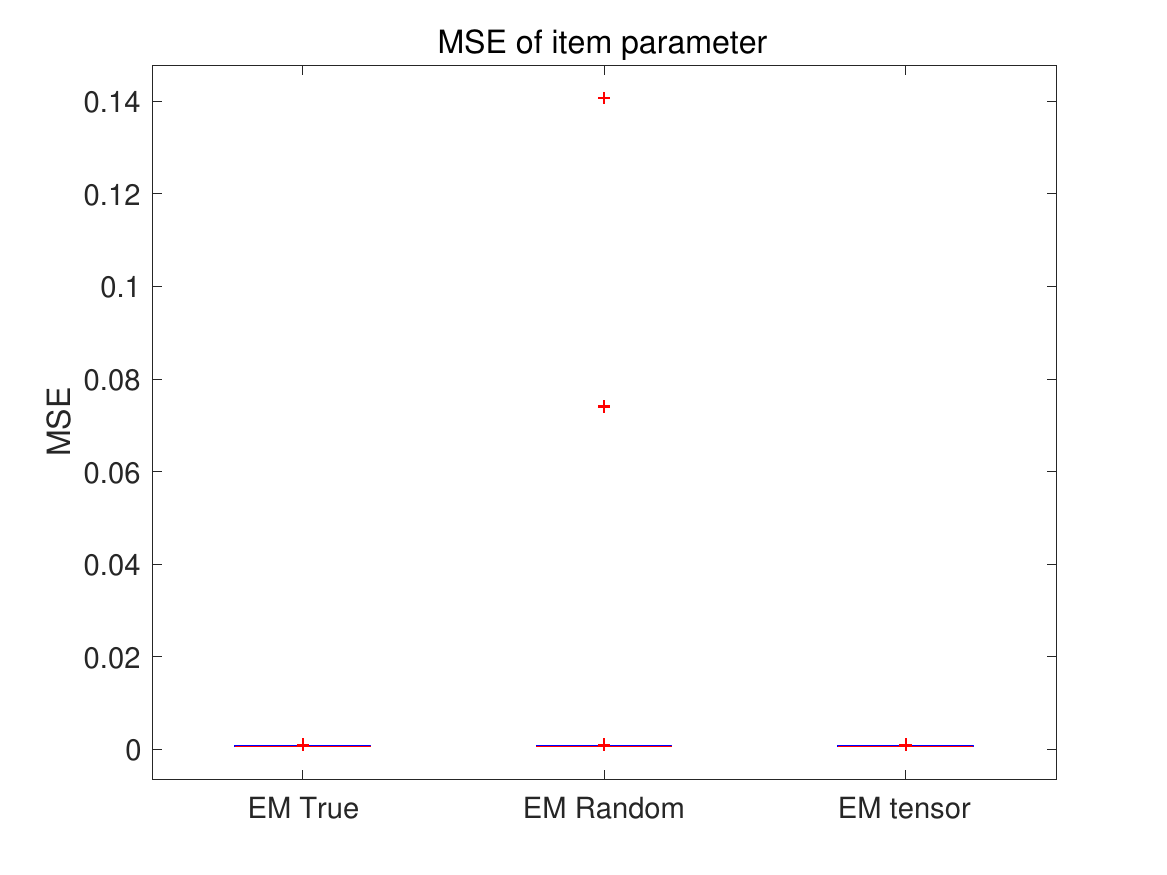}
		\end{minipage}%
	}%
	\subfigure[MSE of item parameters $\rho=0.7$]{
		\begin{minipage}[t]{0.45\linewidth}
			\centering
			\includegraphics[width=2in]{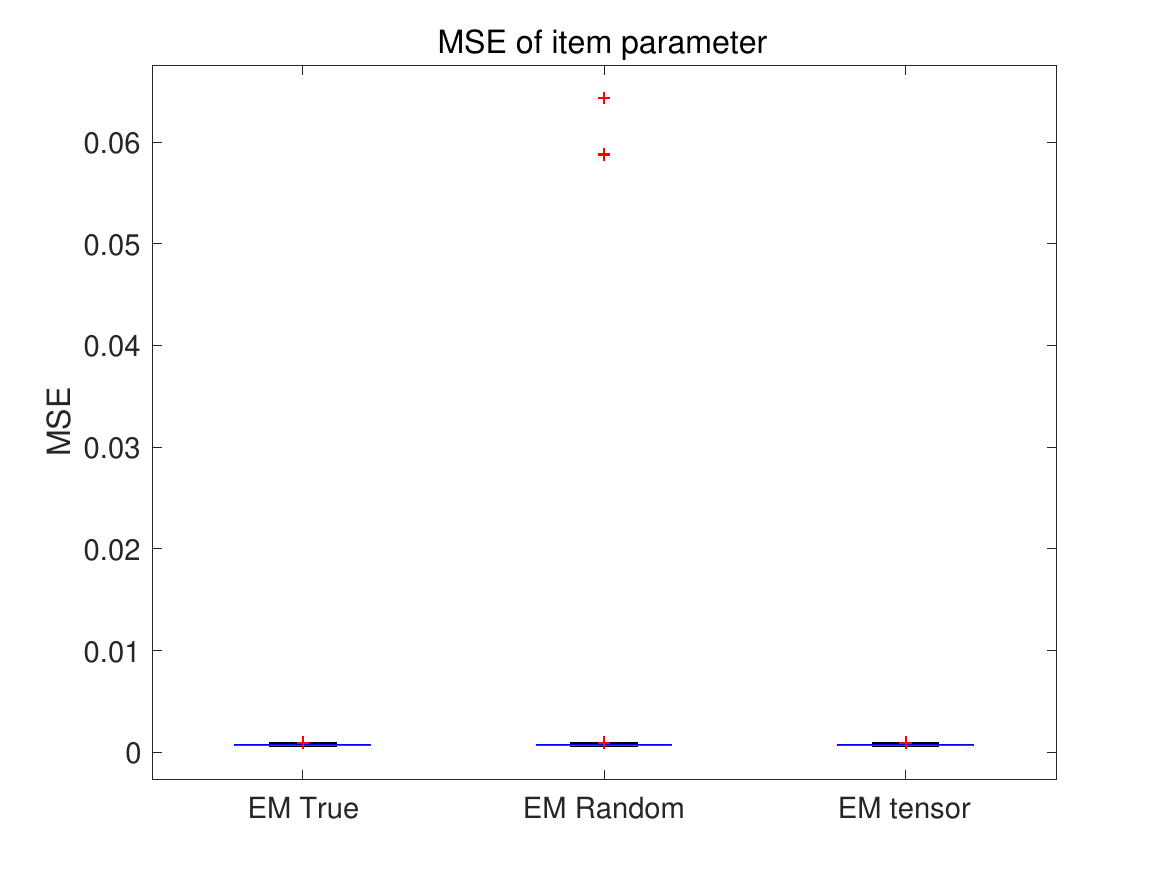}
		\end{minipage}%
	}%
	\centering
	\caption{Random-effect LCM, $N = 1000, J= 100, L=5, \theta_{j,a}\in \{0.1,0.2,0.8,0.9\}$}
\end{figure}

\begin{figure}[H]
	\centering
	\subfigure[MSE of item parameters $\rho=0.3$]{
		\begin{minipage}[t]{0.45\linewidth}
			\centering
			\includegraphics[width=2in]{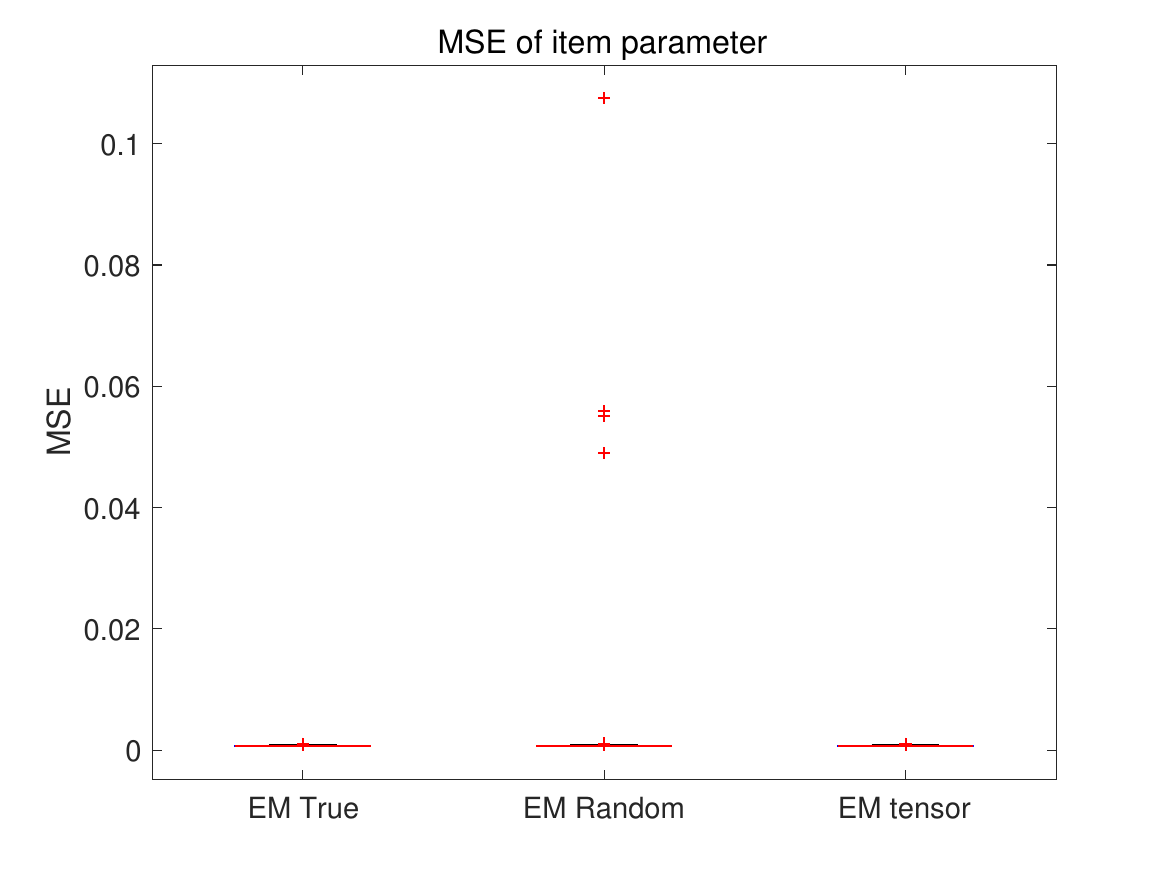}
		\end{minipage}%
	}%
	\subfigure[MSE of item parameters $\rho=0.7$]{
		\begin{minipage}[t]{0.45\linewidth}
			\centering
			\includegraphics[width=2in]{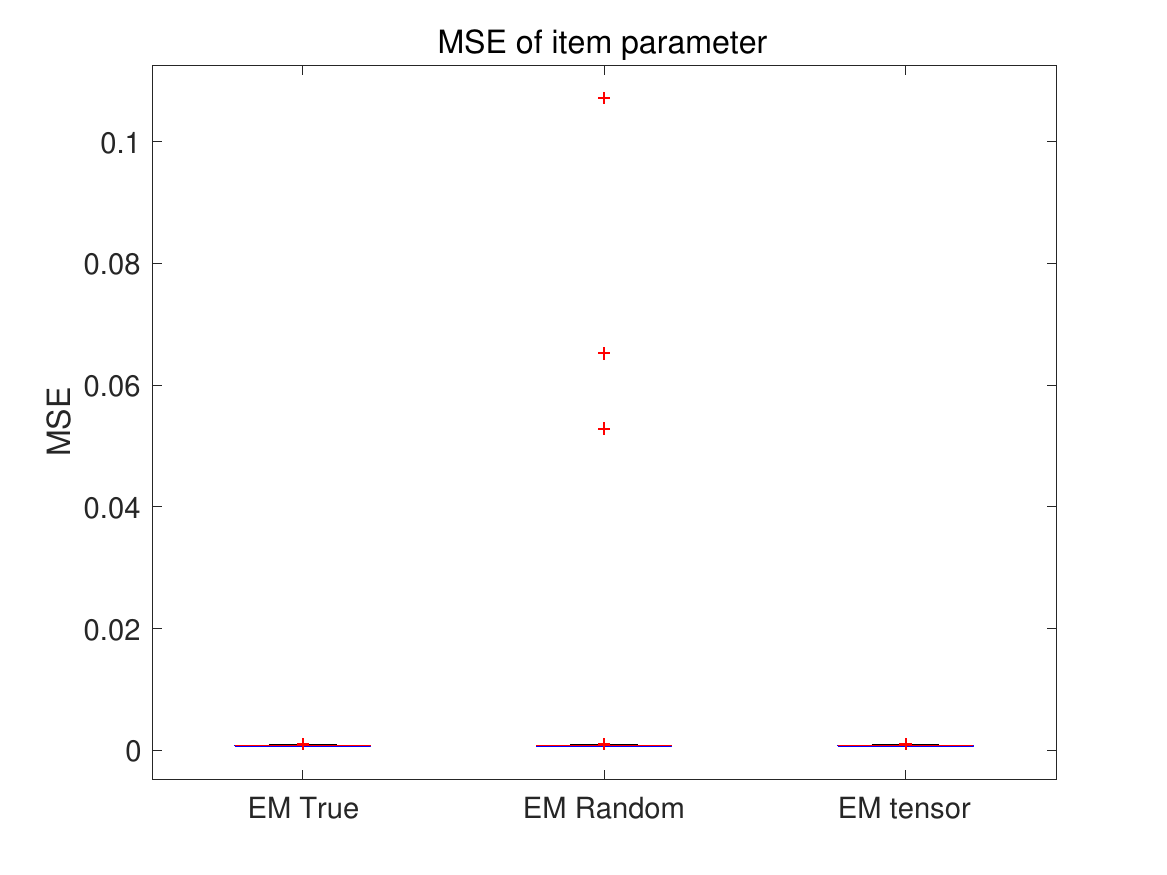}
		\end{minipage}%
	}%
	\centering
	\caption{Random-effect LCM, $N = 1000, J= 200, L=5, \theta_{j,a}\in \{0.1,0.2,0.8,0.9\}$}
\end{figure}

\begin{figure}[H]
	\centering
	\subfigure[MSE of item parameters $\rho=0.3$]{
		\begin{minipage}[t]{0.45\linewidth}
			\centering
			\includegraphics[width=2in]{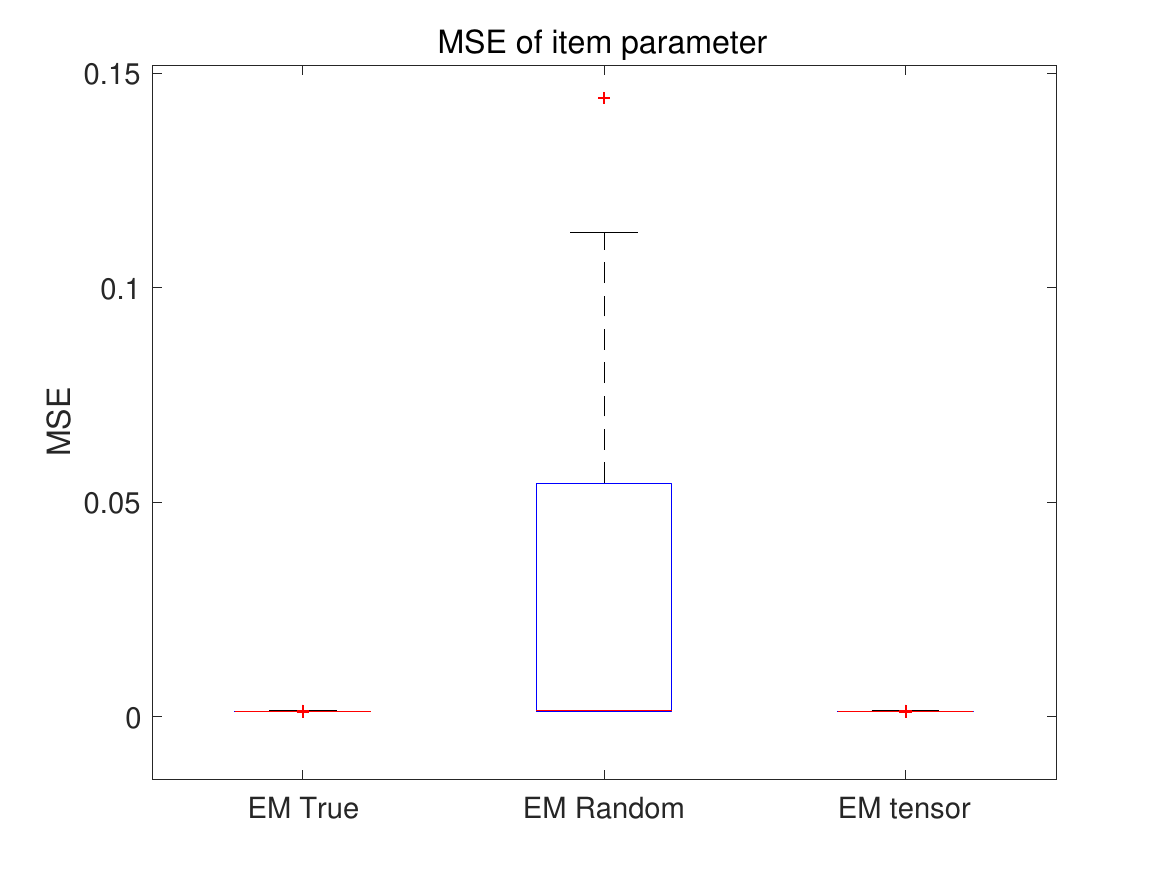}
		\end{minipage}%
	}%
	\subfigure[MSE of item parameters $\rho=0.7$]{
		\begin{minipage}[t]{0.45\linewidth}
			\centering
			\includegraphics[width=2in]{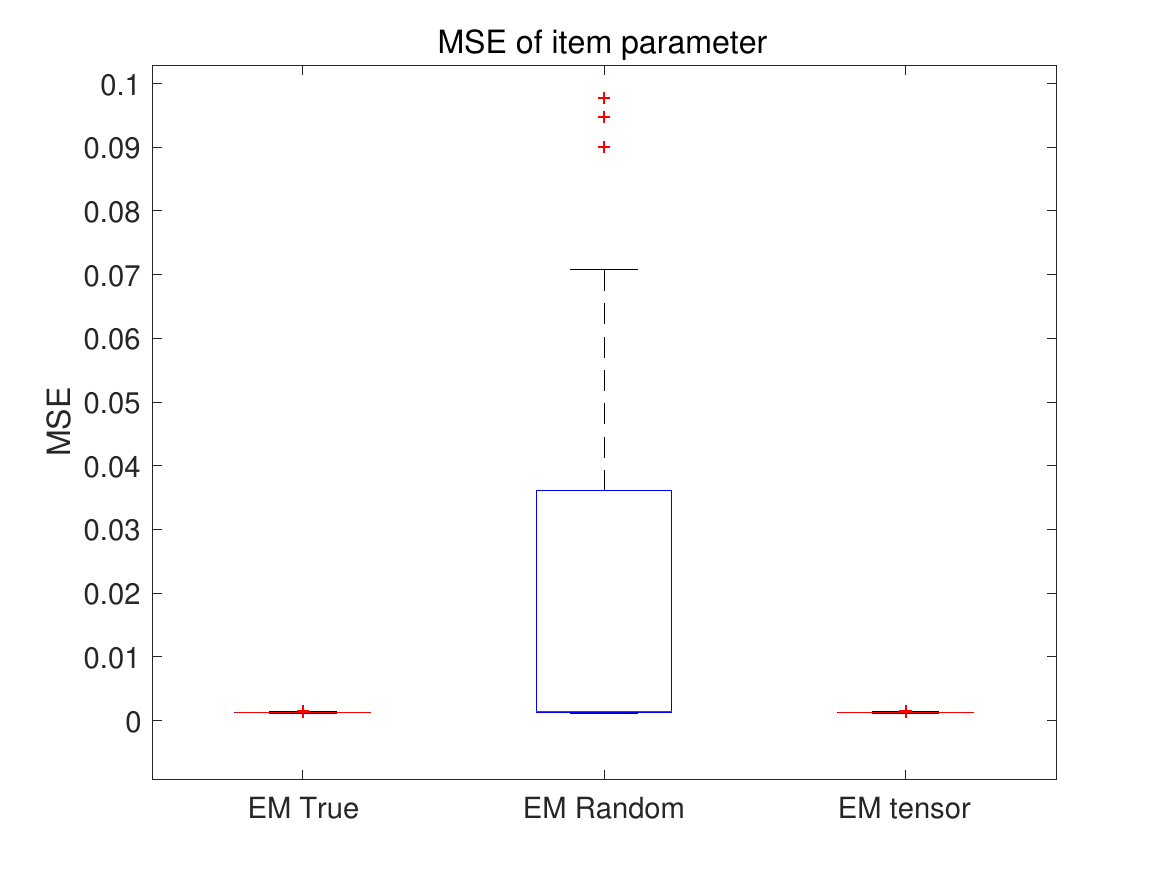}
		\end{minipage}%
	}%
	\centering
	\caption{Random-effect LCM, $N = 1000, J= 200, L=10, \theta_{j,a}\in \{0.1,0.2,0.8,0.9\}$}
\end{figure}

\begin{figure}[H]
	\centering
	\subfigure[MSE of item parameters $\rho=0.3$]{
		\begin{minipage}[t]{0.45\linewidth}
			\centering
			\includegraphics[width=2in]{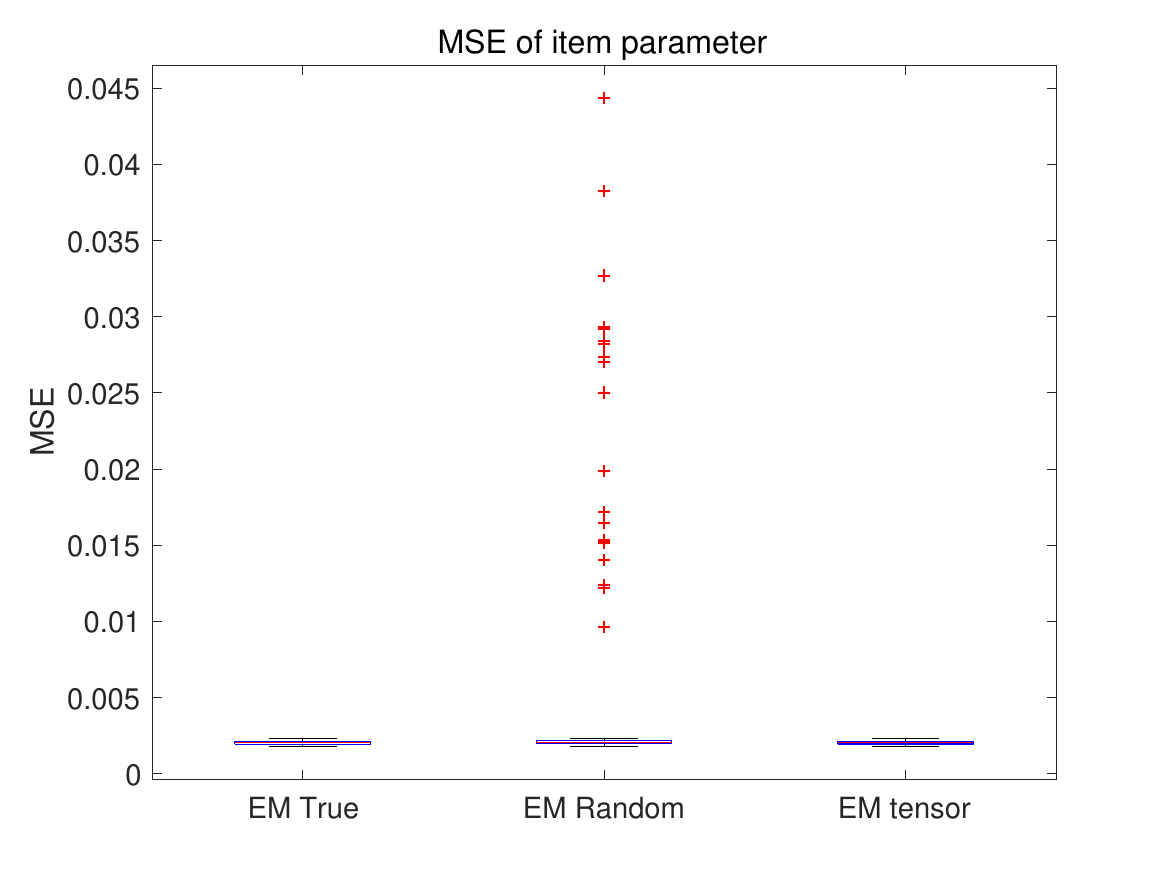}
		\end{minipage}%
	}%
	\subfigure[MSE of item parameters $\rho=0.7$]{
		\begin{minipage}[t]{0.45\linewidth}
			\centering
			\includegraphics[width=2in]{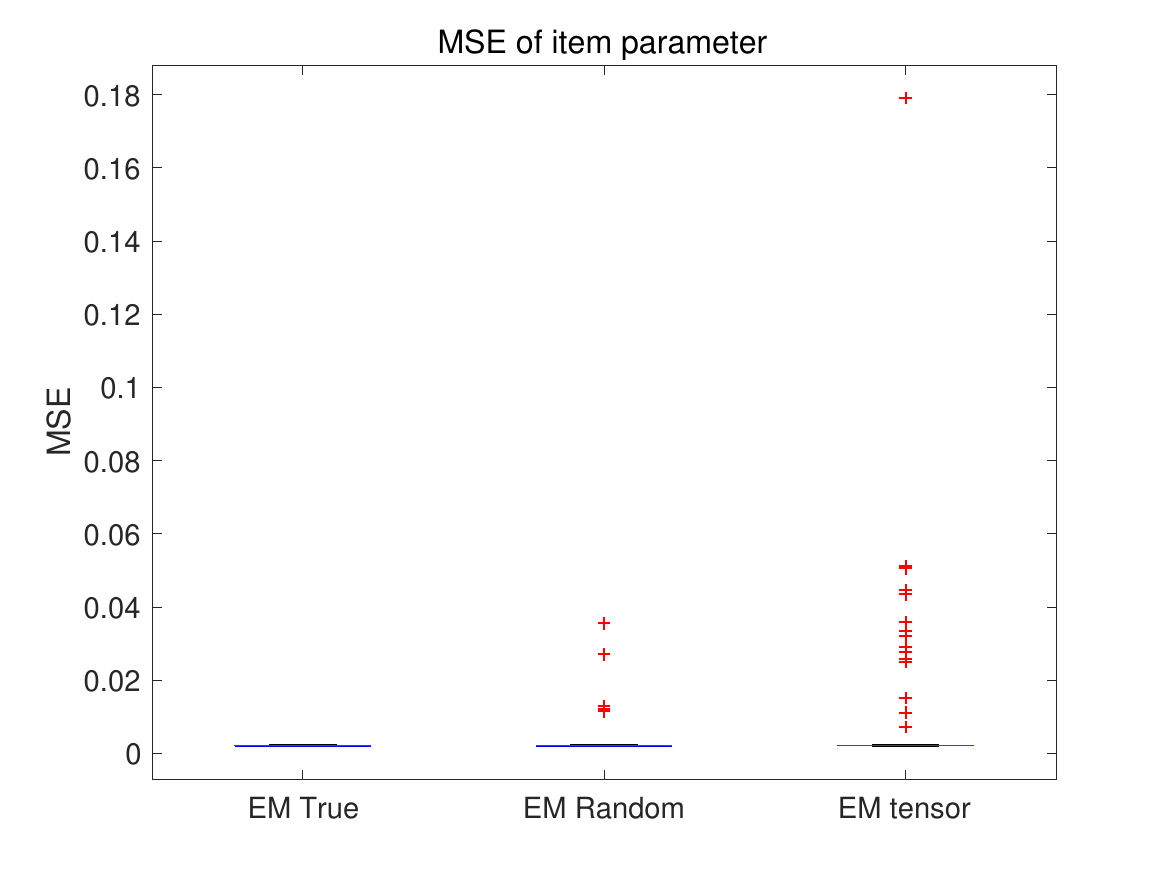}
		\end{minipage}%
	}%
	\centering
	\caption{Random-effect LCM, $N = 1000, J= 100, L=10, \theta_{j,a}\in \{0.2,0.4,0.6,0.8\}$}
\end{figure}

\begin{figure}[H]
	\centering
	\subfigure[MSE of item parameters $\rho=0.3$]{
		\begin{minipage}[t]{0.45\linewidth}
			\centering
			\includegraphics[width=2in]{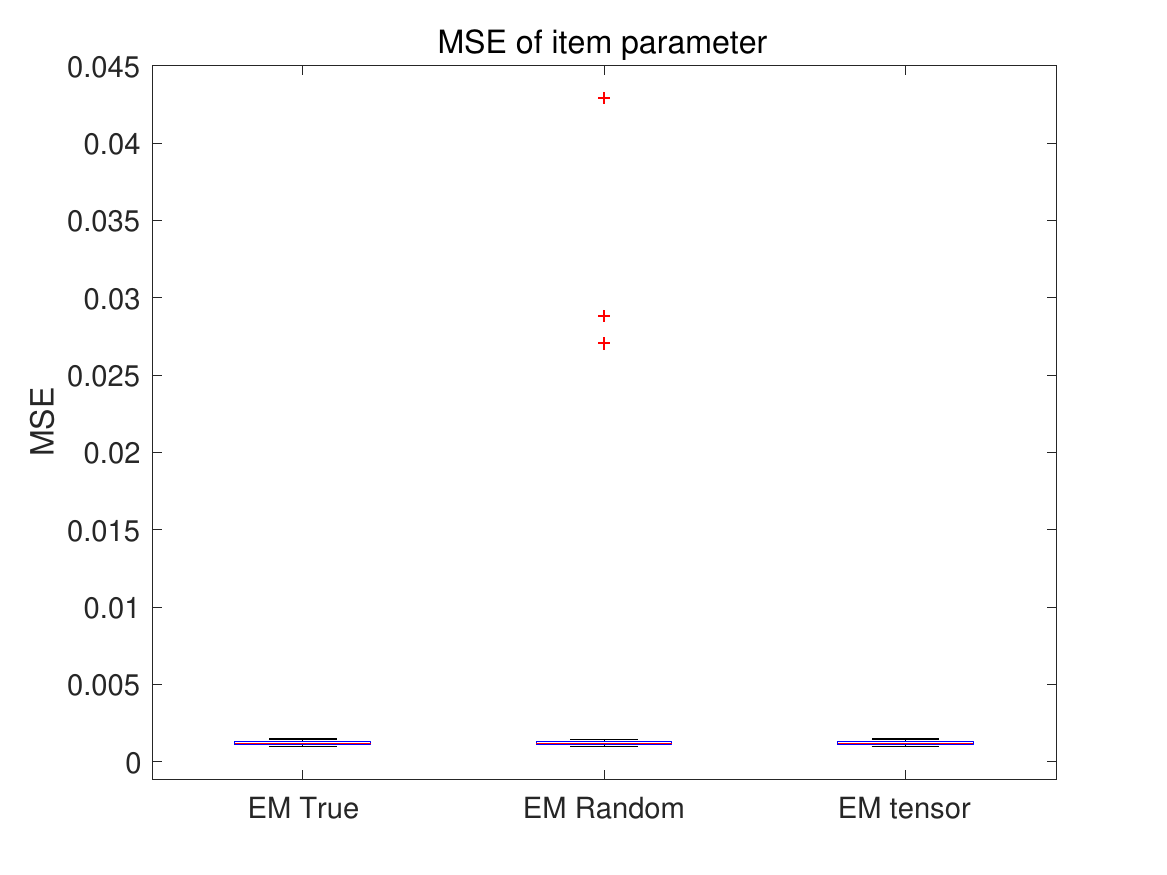}
		\end{minipage}%
	}%
	\subfigure[MSE of item parameters $\rho=0.7$]{
		\begin{minipage}[t]{0.45\linewidth}
			\centering
			\includegraphics[width=2in]{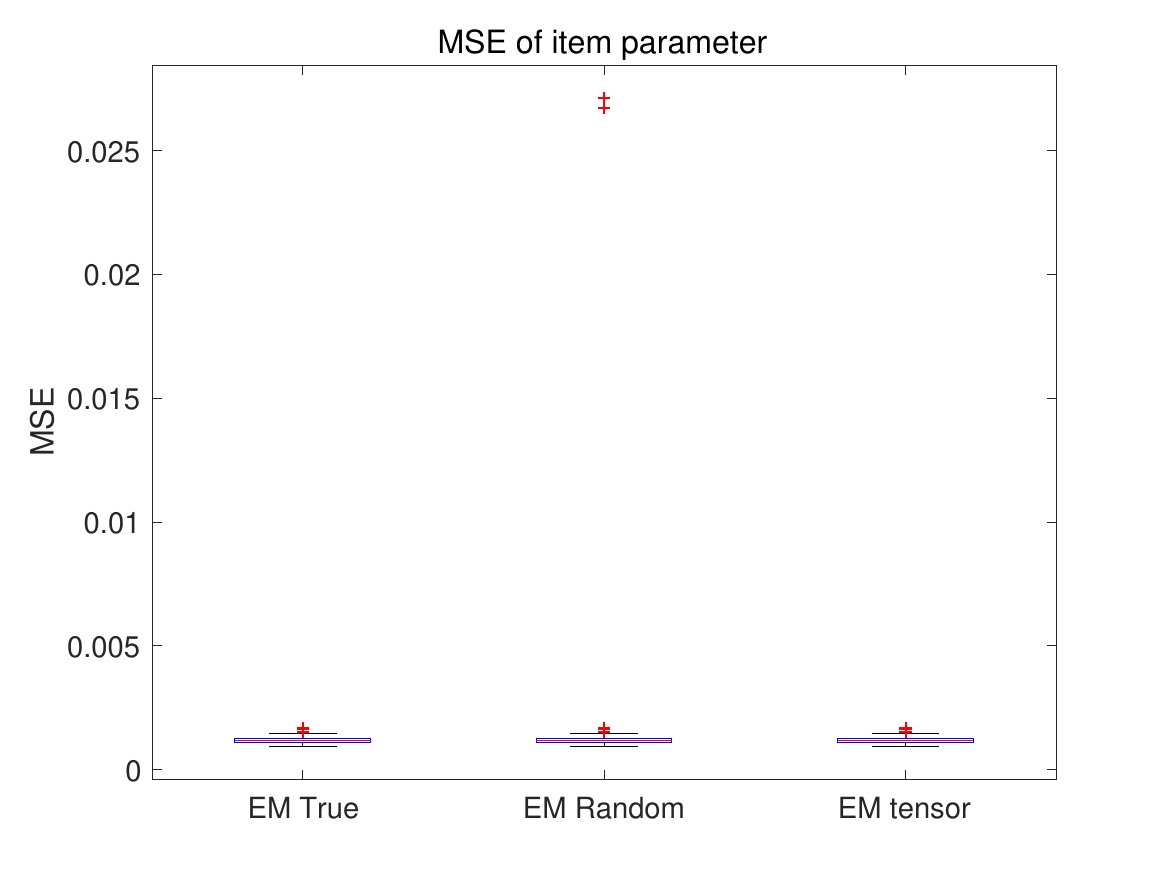}
		\end{minipage}%
	}%
	\centering
	\caption{Random-effect LCM, $N = 1000, J= 200, L=5, \theta_{j,a}\in \{0.2,0.4,0.6,0.8\}$}
\end{figure}

\begin{figure}[H]
	\centering
	\subfigure[MSE of item parameters $\rho=0.3$]{
		\begin{minipage}[t]{0.45\linewidth}
			\centering
			\includegraphics[width=2in]{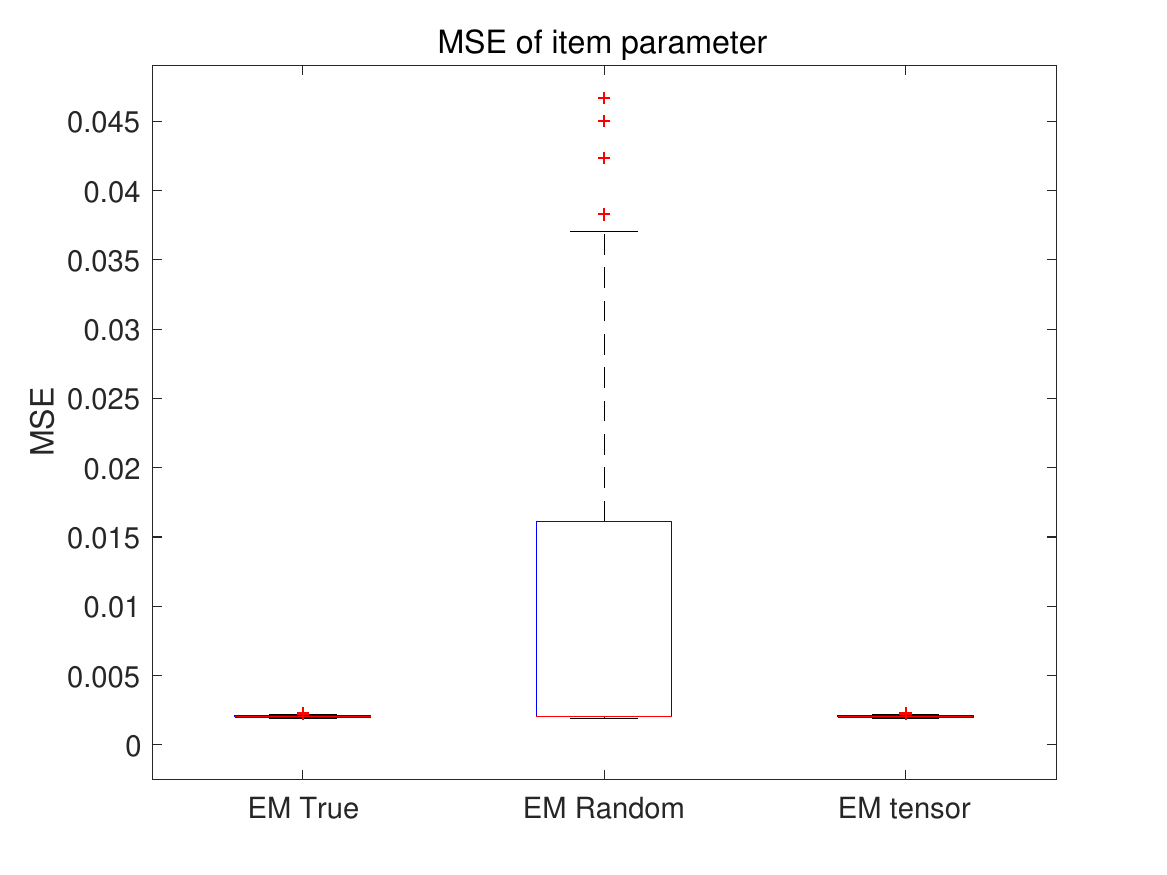}
		\end{minipage}%
	}%
	\subfigure[MSE of item parameters $\rho=0.7$]{
		\begin{minipage}[t]{0.45\linewidth}
			\centering
			\includegraphics[width=2in]{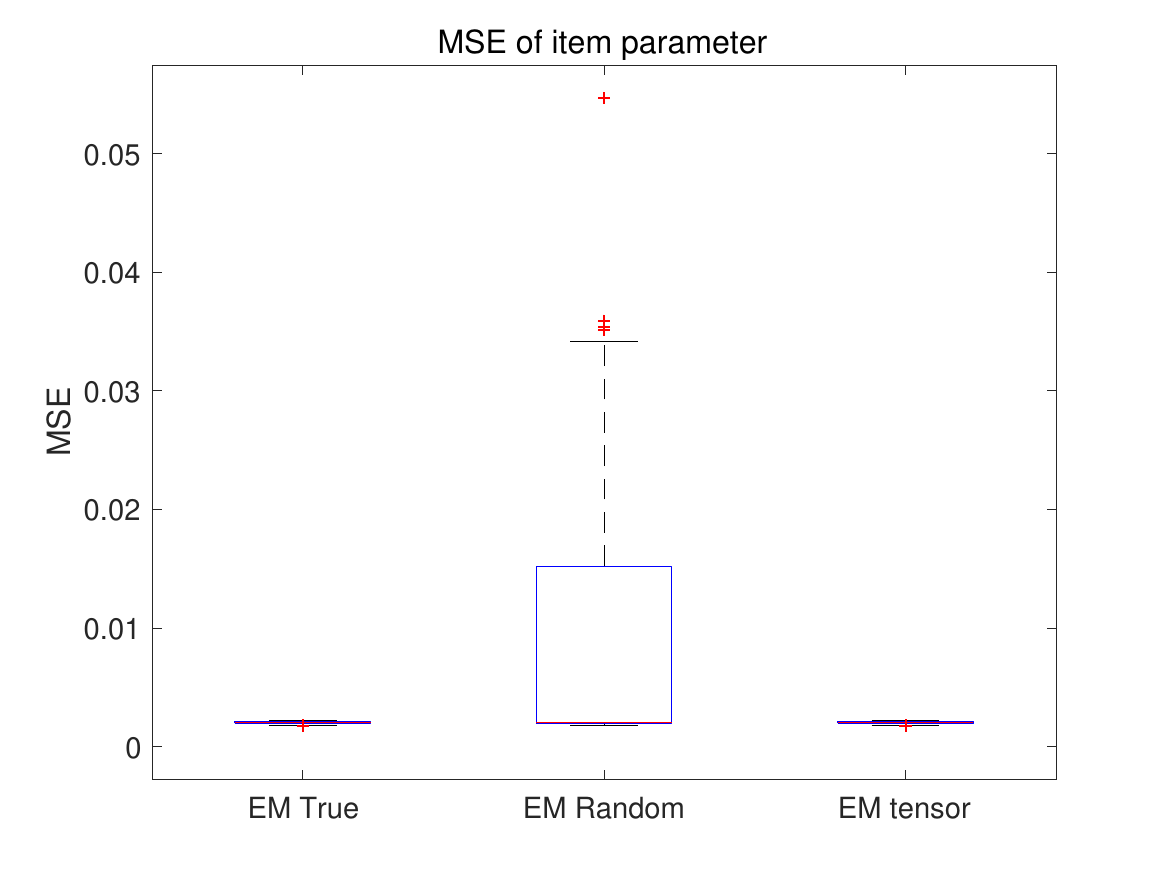}
		\end{minipage}%
	}%
	\centering
	\caption{Random-effect LCM, $N = 1000, J= 200, L=10, \theta_{j,a}\in \{0.2,0.4,0.6,0.8\}$}
\end{figure}
\end{document}